\newcommand{\N}{\mathbb{N}}
\newcommand{\R}{\mathbb{R}}
\newcommand{\eps}{\varepsilon}
\renewcommand{\P}{\mathbb{P}}
\newcommand\E{\mathbb{E}}
\newcommand{\mc}{\mathcal}
\newcommand{\indep}{\perp \!\!\! \perp}
\DeclareMathOperator*{\argmax}{arg\,max}
\theoremstyle{plain}
\newtheorem{theorem}{Theorem}[section]
\newtheorem{lemma}[theorem]{Lemma}
\newtheorem{corollary}[theorem]{Corollary}
\theoremstyle{remark}
\newtheorem{remark}{Remark}[section]
\newtheorem{assumption}{Assumption}[section]
\begin{document}

\title{Enhanced power enhancements for testing many moment equalities: Beyond the~$2$- and~$\infty$-norm
%Testing many moment restrictions by combining many norms
}

\author{
\begin{tabular}{c}
Anders Bredahl Kock \\ 
\small	University of Oxford \\
\small Department of Economics\\
\small	10 Manor Rd, Oxford OX1 3UQ
\\
\small	{\small	\href{mailto:anders.kock@economics.ox.ac.uk}{anders.kock@economics.ox.ac.uk}} 
\end{tabular}
\and
\begin{tabular}{c}
David Preinerstorfer \\ 
{\small	WU Vienna University of Economics and Business} \\
{\small Institute for Statistics and Mathematics} \\
{\small	Welthandelsplatz 1, 1020 Vienna} \\ 
{\small	 \href{mailto:david.preinerstorfer@wu.ac.at}{david.preinerstorfer@wu.ac.at}}
\end{tabular}
}

\date{October 2024}

\maketitle	

\begin{abstract}
Contemporary testing problems in statistics are increasingly complex, i.e., high-dimensional. Tests based on the~$2$- and~$\infty$-norm have received considerable attention in such settings, as they are powerful against dense and sparse alternatives, respectively. The power enhancement principle of~\cite{fan2015power} combines these two norms to construct improved tests that are powerful against both types of alternatives. In the context of testing whether a candidate parameter satisfies a large number of moment equalities, we construct a test that harnesses the strength of \emph{all}~$p$-norms with $p\in[2, \infty]$. As a result, this test is consistent against strictly more alternatives than \emph{any} test based on a single~$p$-norm. In particular, our test is consistent against more alternatives than tests based on the~$2$- and~$\infty$-norm, which is what most implementations of the power enhancement principle target. 

We illustrate the scope of our general results by using them to construct a test that simultaneously dominates the Anderson-Rubin test (based on~$p=2$), tests based on the~$\infty$-norm and power enhancement based combinations of these in terms of consistency in the linear instrumental variable model with many instruments.

\end{abstract}

\section{Introduction}
In the era of big data, target parameters in statistical models are often (partially) identified by a large number of moment equalities. Then, one frequently wishes to test whether a candidate (structural) parameter $\bm{\beta}^*_n$ is an element of the identified set, i.e., whether~$\bm{\beta}^*_n$ satisfies~$\E h_n(\bm{X},\bm{\beta}^*_n)=\bm{0}_d$.\footnote{Here~$\bm{\beta}^*_n$ may or may not be point-identified, cf., e.g.,~\cite{molinari2020microeconometrics} for many examples of how lack of point-identification may arise.}  An important example which can be cast within this framework and that has received much recent attention is testing restrictions on coefficients in regression models in the presence of endogenous regressors and many (weak) instruments; recent references include \cite{mikusheva2020inference},  \cite{matsushita2022jackknife},~\cite{boot2023identification}, and \cite{dovi2022ridge}. Indeed, this is our running example and further details and references can be found in~Section \ref{sec:IVexample}. Testing whether a treatment has an effect on one of many outcome variables or one of several groups of individuals is another example of high practical relevance, cf.~Section~\ref{sec:exmanyoutcomes}. More generally, true parameter(s) in~$M$-,~$Z$-, and GMM-estimation problems satisfy such moment restrictions, and hypotheses on the mean vector(s) or covariance matrices in classic one- and two-sample testing problems can be accommodated too.

%The number of moment restrictions,~$d$, can be large for several reasons: First, if~$\bm{\beta}^*_n\in\R^d$ with~$d$ large is defined by maximizing a population objective function, it often satisfies~$d$ first-order conditions of the form~$\E h_n(\bm{X},\bm{\beta}^*_n)=\bm{0}_d$. Second, even if the dimension of the parameter vector itself is low, it may be (partially) identified by satisfying \emph{many} moment restrictions. 
%For example, when~$\bm{\beta}^*_n$ satisfies a set of conditional moment equalities~$\E[h_n(\bm{X},\bm{\beta}^*_n) \mid \bm{W}]=\bm{0}$, these are often transformed into unconditional ones of the form~$\E h_n(\bm{X},\bm{\beta}^*_n)g(\bm{W})=\bm{0}_d$ for~$g$ in a family of functions~$\mc{G}_n$. To avoid increasing the size of the identified set in the course of this transformation,~$\mc{G}_n$ may have to be large. This results in~$d$ being large. Similarly, as detailed in Section~\ref{sec:IVexample}, in the linear IV model~$d$ equals the --- nowadays often large --- number of available instruments irrespectively of the dimension of~$\bm{\beta}^*_n$. 

Given a sample $\bm{X}_{1,n}, \hdots, \bm{X}_{n,n}$ of random variables with the same marginal distribution as $\bm{X}$, denote the (scaled) empirical counterpart of the population moments~$\E h_n(\bm{X},\bm{\beta}^*_n)$ by
\begin{align*}
\bm{H}_{n}(\bm{\beta}^*_n)&:=\frac{1}{\sqrt{n}}\sum_{i=1}^nh_{n}(\bm{X}_{i,n},\bm{\beta}^*_n),
\end{align*}
and let~$\hat{\bm{\Sigma}}_n(\bm{\beta}^*_n)$ be an estimator of the covariance matrix~$\bm{\Sigma}_n(\bm{\beta}^*_n)$ of~$\bm{H}_n(\bm{\beta}^*_n)$. It is then natural to base a test of~$\E h_n(\bm{X},\bm{\beta}^*_n)=\bm{0}_d$ on the distance of~$\hat{\bm{\Sigma}}^{-1/2}_n(\bm{\beta}^*_n)\bm{H}_{n}(\bm{\beta}^*_n)$ from the origin. To measure this distance, one most commonly makes use of a~$p$-norm, which then requires the choice of the exponent~$p$. 
%For example, in the context of testing with many instruments an appropriate choice of~$\hat{\bm{\Sigma}}_n(\bm{\beta}^*_n)$ and~$p=2$ results in the Anderson-Rubin test (\cite{anderson1949estimation}) whereas~$p=\infty$ gives the sup-score test of~\cite{belloni2012sparse}. More generally, testing many moment inequalities based on~$p=\infty$ was studied in~\cite{chernozhukov2019inference} and~$p=2$ was used for a fixed number of (in)equalities in~\cite{stock2000gmm} and~\cite{rosen2008confidence}. 
Most tests used in practice are based on~$p=2$ or~$p=\infty$ and it is well-understood that tests based on the former are relatively powerful against ``dense'' alternatives whereas tests based on the latter are relatively powerful against ``sparse'' alternatives, cf., e.g.,~\cite{ingster2003nonparametric}. To construct a test that is simultaneously powerful against dense as well as many sparse alternatives,~\cite{fan2015power} combined tests based on the~$2$- and $\infty$-norm via their \emph{power enhancement principle}. The idea of combining the~$2$- and $\infty$-norm based tests or a \emph{finite} number of $p$-norm based tests in order to construct a more powerful test has since gained considerable popularity, cf.~\cite{xu2016adaptive}, \cite{yang2017weighted}, \cite{tang2018testing}, \cite{kp},
\cite{liu2019statistical}, \cite{jammalamadaka2020sobolev},  \cite{he2021asymptotically}, \cite{feng2020max}, \cite{juodis2022incidental}, \cite{zhang2021adaptive}, \cite{yu2023power}, 
\cite{enhanceIV}, \cite{ge2022dynamic}, \cite{li2024power},  \cite{yu2020fisher}, \cite{yu2024power}, \cite{li2024detection}.

Despite the success of the power enhancement principle and related combination procedures, sparse and dense alternatives are merely two (conceptually useful) ``endpoints" between which a continuum of ``semi-sparse'' alternatives exist. This continuum of structures is mirrored by a continuum of~$p$-norms,~$p\in[2,\infty]$, between the two extremes~$p=2$ and~$p=\infty$ that most tests are based on. In the clean but restrictive testbed of the Gaussian sequence model,~\cite{kp2021consistency} showed that there exist semi-sparse alternatives against which tests based on the~$2$- and~$\infty$-norm are inconsistent, but against which tests based on any~$p\in(2,\infty)$ are consistent. Thus, it is important to harness the power from \emph{all}~$p$-norms,~$p\in[2,\infty]$, which \cite{kp2021consistency} exploited to construct a test that is ``dominant''  --- in the Gaussian sequence model --- in the following sense: If there exists \emph{some}~$p$ for which the corresponding~$p$-norm based test is consistent against a given alternative, then so is their test. In particular, as Figure~\ref{fig:intro} illustrates for a slight variation~$\psi_d$ (which is more convenient for our purposes) of their test,~$\psi_d$ is

\begin{enumerate}
	\item about as powerful against sparse alternatives as a test based on~$p=\infty$, but much more powerful than a test based on~$p=2$;	
	\item about as powerful against dense alternatives as a test based on~$p=2$, but much more powerful than a test based on~$p=\infty$;
	\item \emph{often more powerful} against semi-sparse alternatives than tests based on~$p=2$,~$p=\infty$ or the power enhancement principle. This power gain can be of practical relevance as there is often no reason to believe that alternatives (or, more relevant, the non-centrality parameter in~\eqref{eq:nc2} below) are exactly sparse or dense. The gain comes from harnessing the strengths of~$p$-norms beyond~$2$ and~$\infty$, cf.~also the discussion in Section~\ref{sec:domtest}.
	\end{enumerate}	

\begin{figure}

\begin{center}
	\footnotesize Additional power of~$\psi_d$ over each of the other tests studied
\end{center}
\vspace{-0.3cm}
\includegraphics[width=5.2cm]{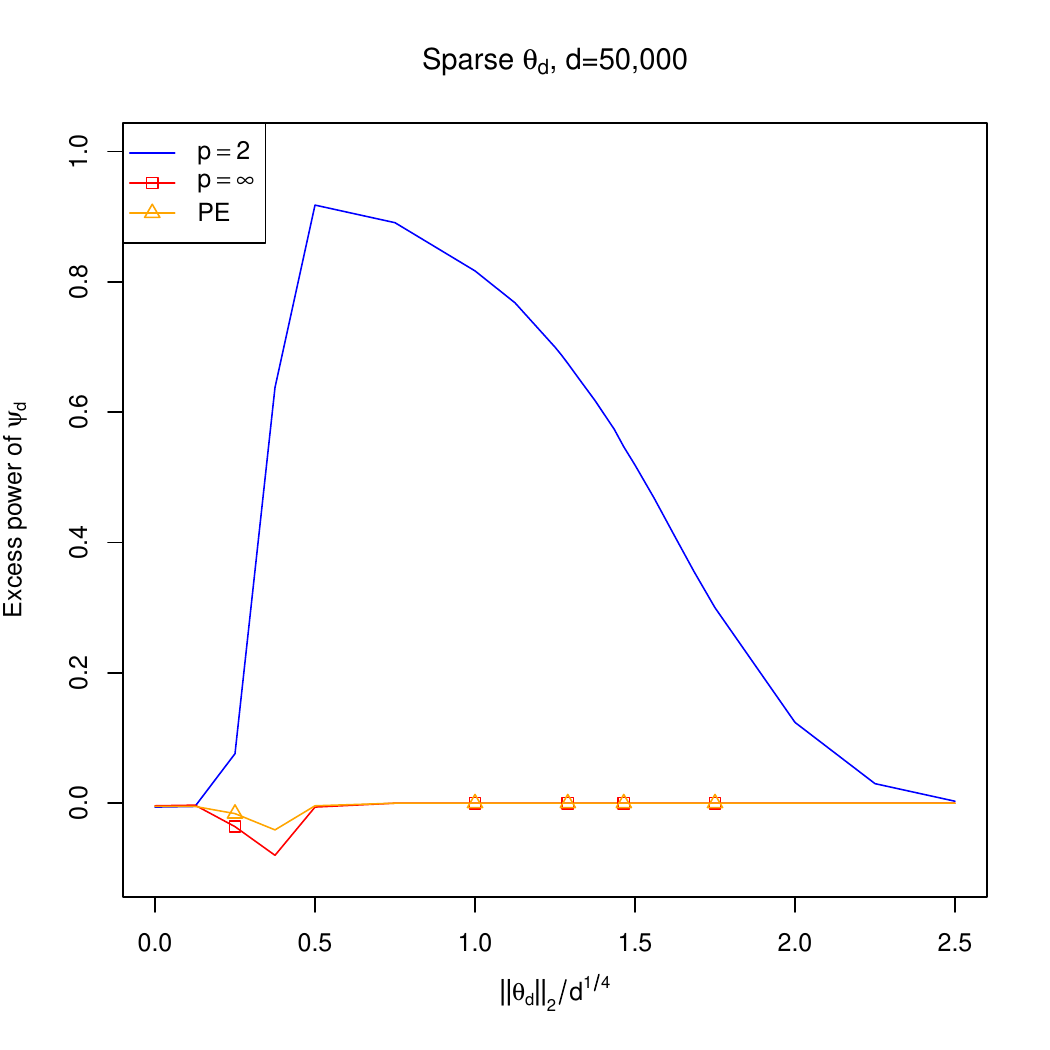}
\hspace{-0.6cm}
\includegraphics[width=5.2cm]{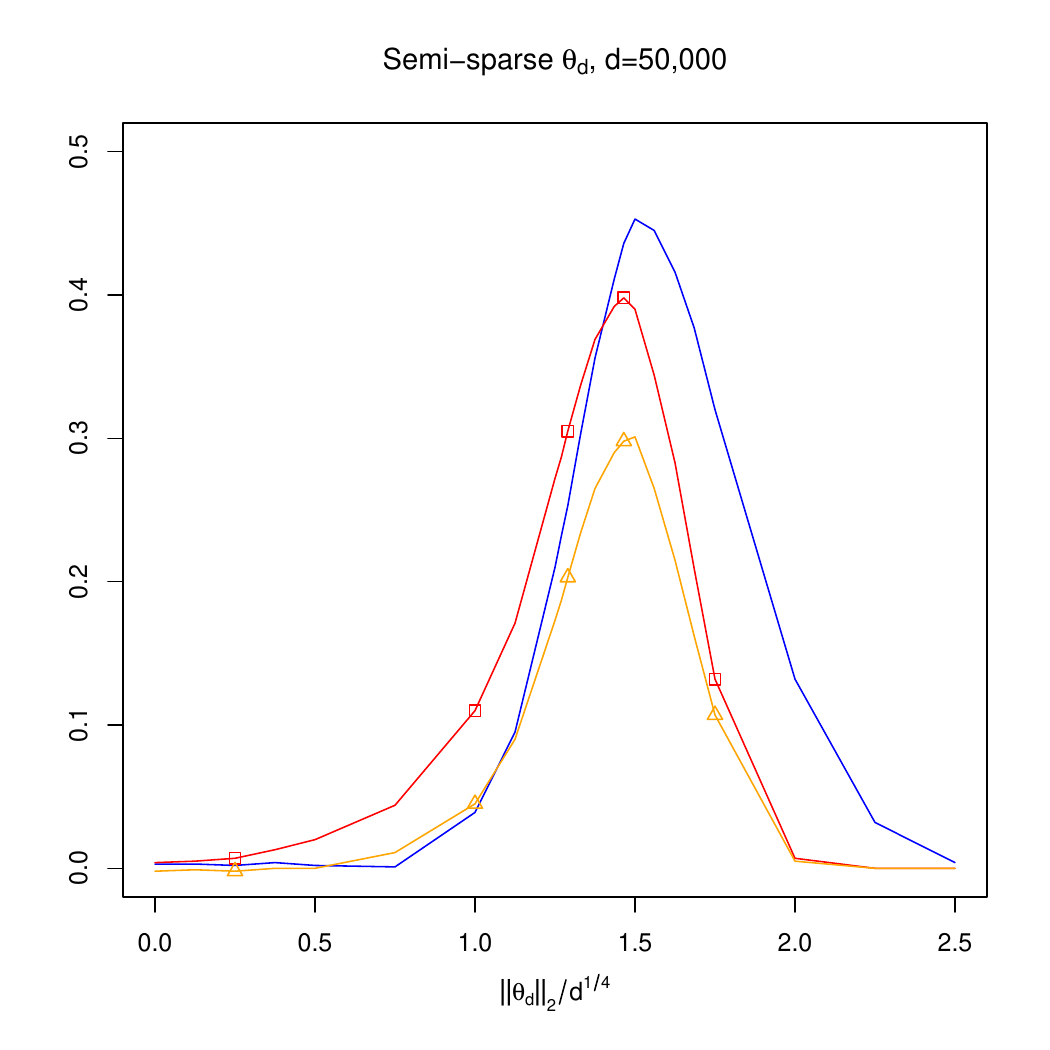}
\hspace{-0.6cm}
\includegraphics[width=5.2cm]{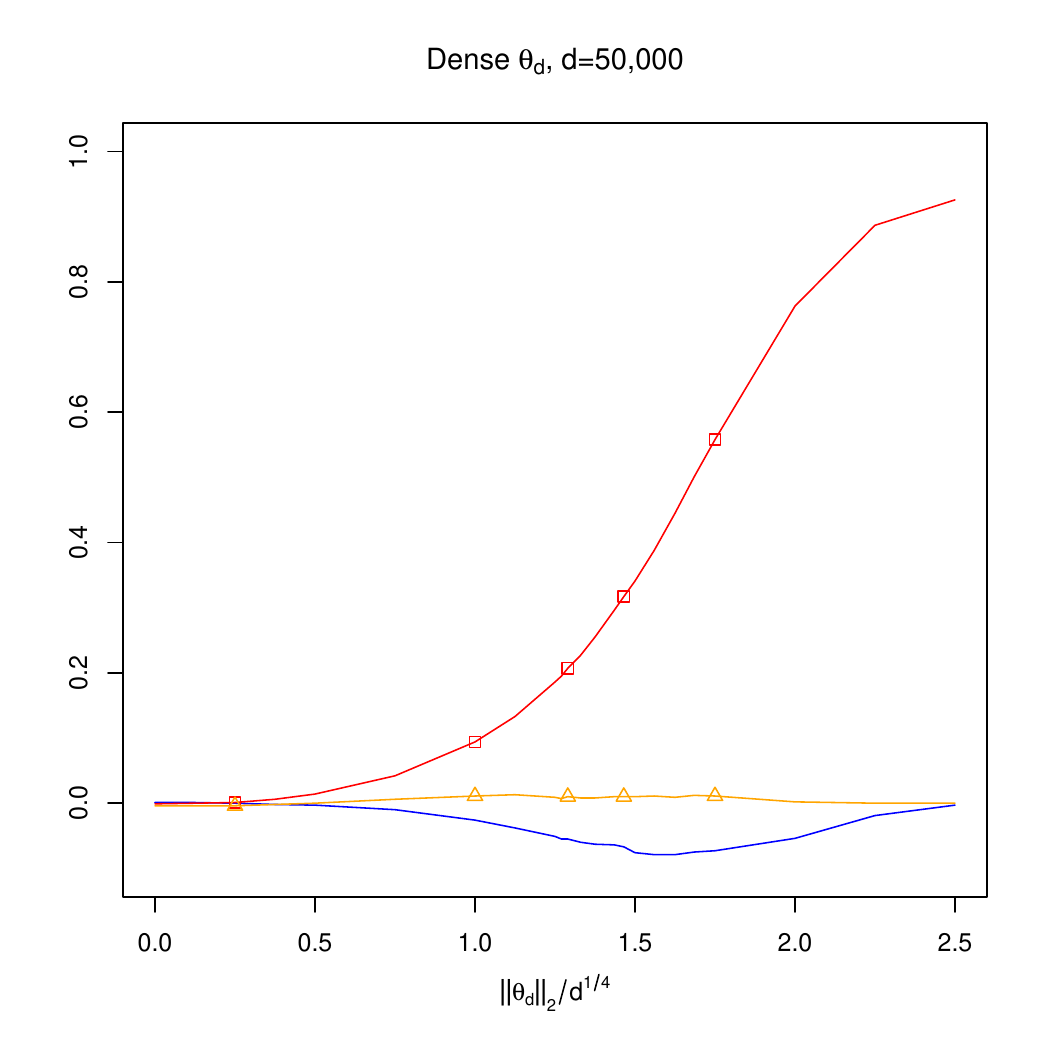}
\begin{center}
	\footnotesize Raw power functions of each of the tests studied
\end{center}
\vspace{-0.3cm}
\includegraphics[width=5.2cm]{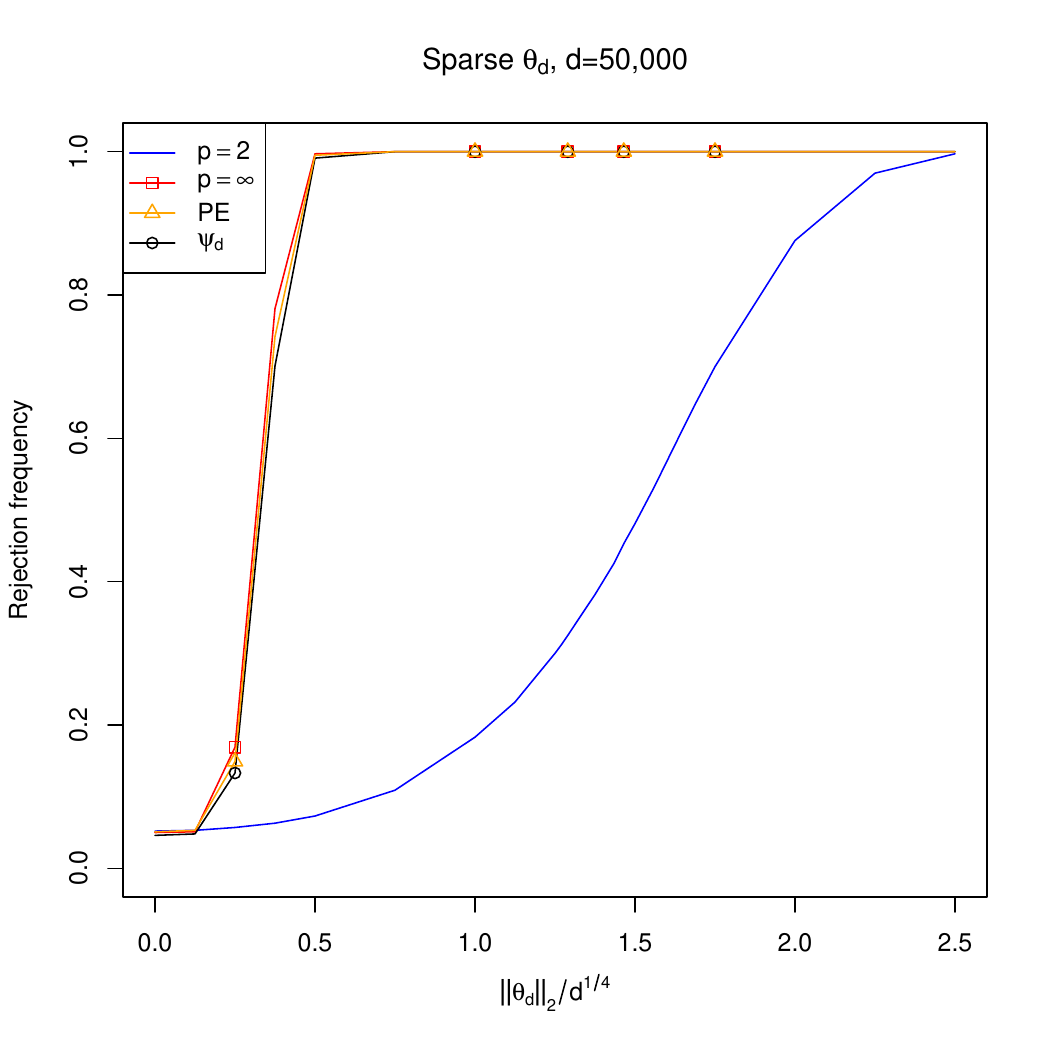}
\hspace{-0.6cm}
\includegraphics[width=5.2cm]{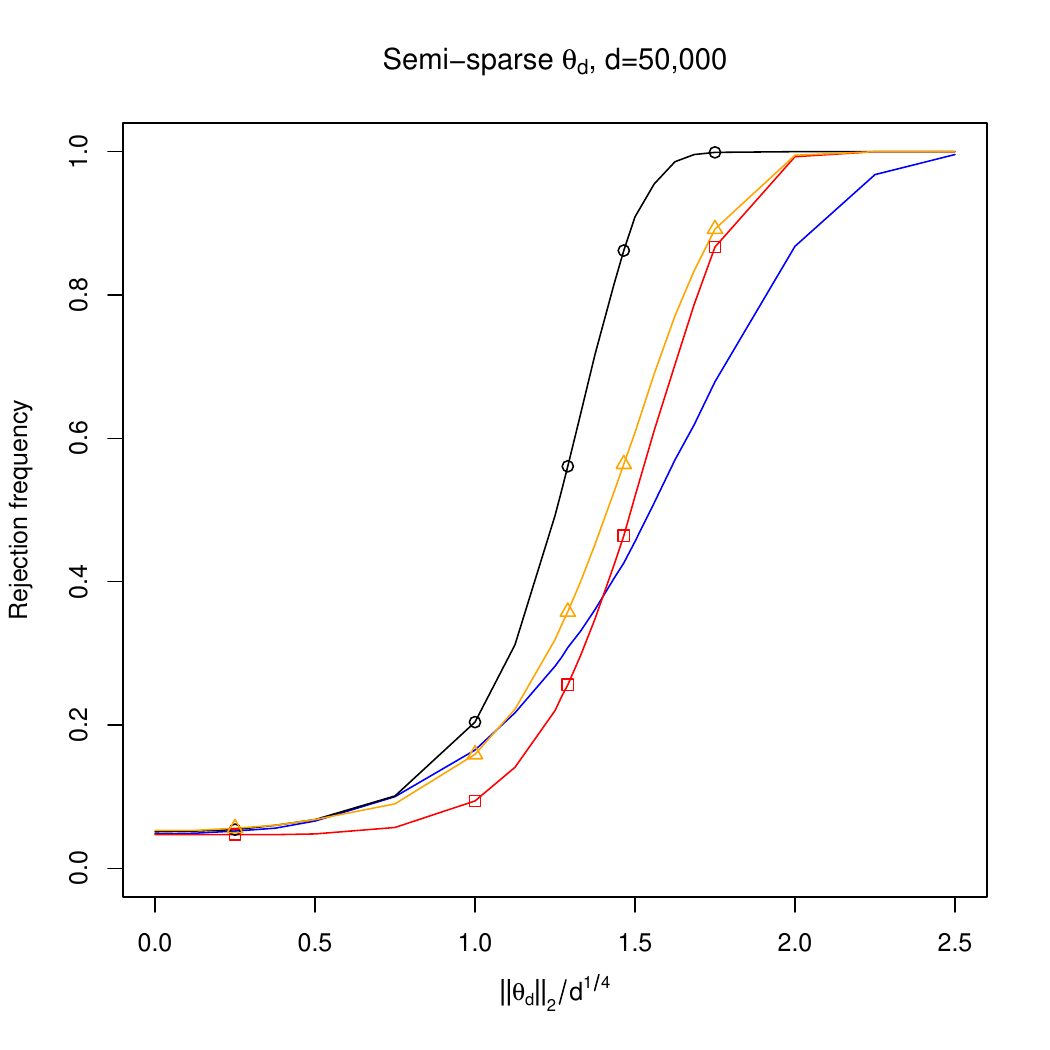}
\hspace{-0.6cm}
\includegraphics[width=5.2cm]{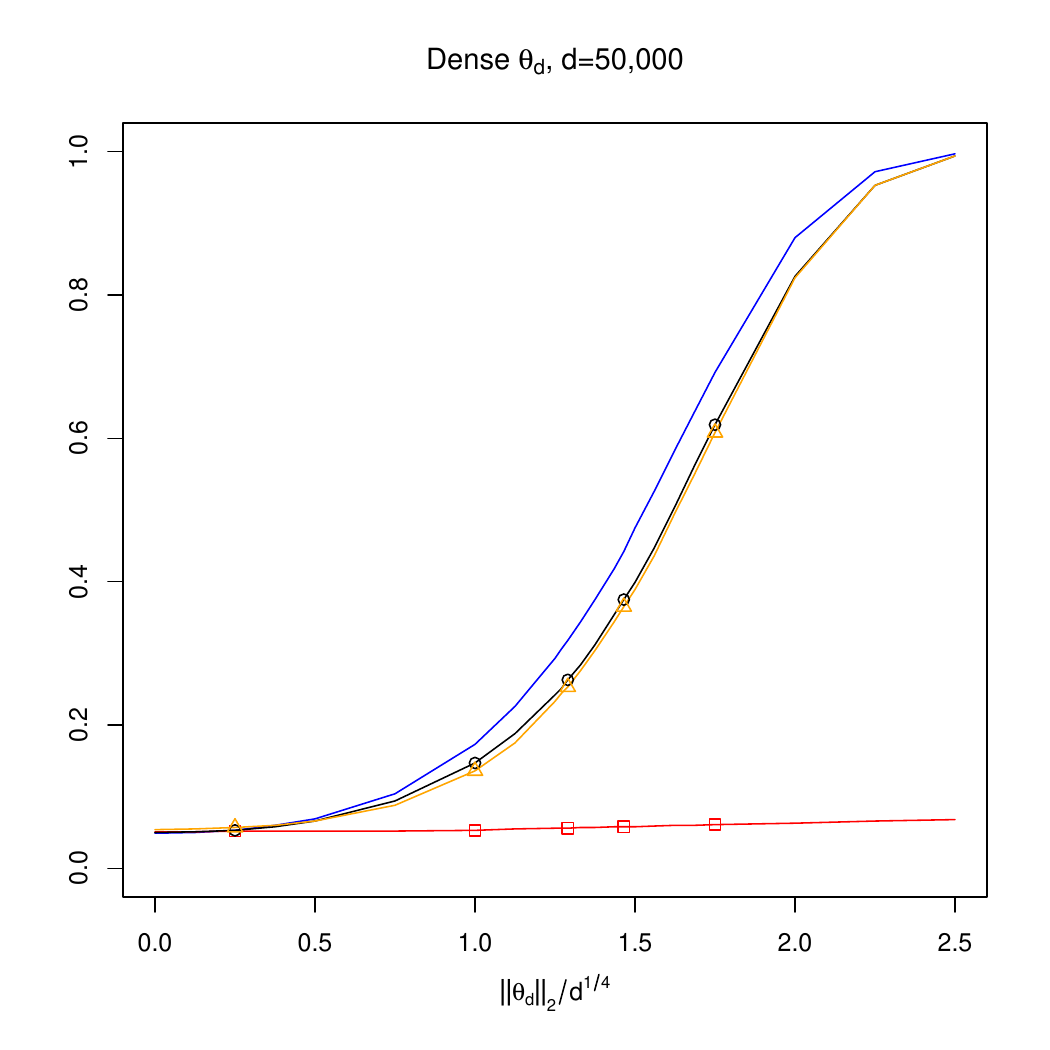}

\caption{\footnotesize For~$d=50{,}000$ and~$\bm{Z}_d\sim \mathsf{N}_d(\bm{\theta}_d,\mathbf{I}_d)$ the top row plots by how much the power  of~$\psi_d$ \emph{exceeds} that of~$\mathds{1}\del[0]{\enVert[0]{\bm{Z}_d}_2\geq \kappa_{d,2}}$ [``$p=2$''],~$\mathds{1}\del[0]{\enVert[0]{\bm{Z}_d}_\infty\geq \kappa_{d,\infty}}$ [``$p=\infty$''], and a power enhancement (PE) based test combining the $2$- and $\infty$-norm for testing~$H_{0,d}:\bm{\theta}_d=\bm{0}_d$ against~$H_{1,d}:\bm{\theta}_d\neq\bm{0}_d$. The bottom row plots the raw power functions. All tests have size~$0.05$. For the sparse alternatives,~$\bm{\theta}_d$ differs from zero in one entry only, for the dense alternatives all entries are equal and non-zero, and for the semi-sparse alternatives~$80$ entries are equal and non-zero. Full implementation details can be found in Section~\ref{sec:simGauss}. For sparse~$\bm{\theta}_d$, the power function of our test~$\psi_d$ is almost identical to that of the supremum-norm based test, but much higher than that of the~$2$-norm based test. For dense~$\bm{\theta}_d$, the power function of~$\psi_d$ is only marginally below that of the~$2$-norm based test, but much higher than that of~$\infty$-norm based test. Finally, for semi-sparse~$\bm{\theta}_d$, the power of~$\psi_d$ is i) nowhere below that of the~$2$- and~$\infty$-norm based tests and ii) up to twice as high as the power of these (the power of~$\psi_d$ is~$0.8$ where that of the~$2$- and~$\infty$-norm based tests is~$0.4$). The power of~$\psi_d$ is up to~$0.3$ higher than that of the PE based test.}
\label{fig:intro}
\end{figure}

In the present paper, we leverage the findings of~\cite{kp2021consistency} to \emph{construct a test simultaneously dominating all~$p$-norm based tests with~$p\in[2,\infty]$ in terms of consistency for testing~$H_{0,n}:\E h_n(\bm{X},\bm{\beta}^*_n)=\bm{0}_d$}.\footnote{The results in \cite{kp2021consistency} show that already in the Gaussian sequence model tests based on~$p$-norms with~$p \leq 2$ are dominated (in terms of consistency) by the $2$-norm based test. For simplicity, we therefore focus on $p \in [2, \infty]$ in the present paper.} This requires rigorous high-dimensional Gaussian approximations to justify that results obtained in the stylized Gaussian sequence model carry over to more complex empirically relevant models. Indeed, a large and vibrant literature on high-dimensional Gaussian approximations over hyperrectangles was sparked by the work of~\cite{chernozhukov2013gaussian} and is surveyed in~\cite{chernozhukov2023high}. Such  approximations over hyperrectangles are suitable for studying tests based on~$p=\infty$. However, since we study the full \emph{family} of~$p$-norm based tests and construct a test dominating all of its members, we cannot rely on these approximations, because the rejection regions of~$p$-norm based tests, albeit convex, are no hyperrectangles. Similarly, central limit theorems for quadratic forms (related to 2-norm type tests) do not suffice for our purpose. Gaussian approximations over convex sets have received considerable attention, cf.~\cite{nagaev2006estimate, senatov1981uniform, gotze1991rate, bentkus2003dependence, bentkus2005lyapunov, fang2020large}.

\subsection{Summary of results}
We now discuss our contributions in more detail. In Theorem \ref{thm:pnorm-char_GMM} we first exhibit high-level conditions under which the characterization of the consistency properties of~$p$-norm based tests in the Gaussian sequence model from \cite{kp2021consistency} carries over to the setting considered in the present paper. We then establish in Theorem \ref{thm:prim} that these high-level conditions are satisfied under
\begin{enumerate}[label=(\roman*)]
\item  a precision guarantee for~$\hat{\bm{\Sigma}}_n(\bm{\beta}^*_n)$ as an estimator of $\bm{\Sigma}_n(\bm{\beta}^*_n)$, and
\item a Gaussian approximation condition over convex sets for~$\bm{\Sigma}^{-1/2}_n(\bm{\beta}^*_n)\bm{H}_{n}(\bm{\beta}^*_n)$.
\end{enumerate}
In particular, although it is not our focus, dependent data can be accommodated. We also provide a catalogue of primitive sufficient conditions for (i) and (ii) in the i.i.d.~case and discuss their implication on the allowed growth rate of~$d$ in dependence of~$n$. To this end, because it is often not realistic to impose that the entries of~$h_{n}(\bm{X}_{i,n},\bm{\beta}^*_n)$ have light tails, we focus on a setting wherein (essentially) only four moments exist. %Here, to handle (i), we use the recent estimators~$\hat{\bm{\Sigma}}_n(\bm{\beta}^*_n)$ by~\cite{abdalla2022covariance} or~\cite{oliveira2022improved}, which remarkably provide sub-Gaussian precision guarantees without imposing any structure on~$\bm{\Sigma}_n(\bm{\beta}^*_n)$ and four moments only. Concerning (ii),~\cite{fang2020large} provides a suitable Gaussian approximation result. 
Here, we need~$d=o(n^{2/5})$, cf.~Corollary \ref{cor:fourmoments}, and we emphasize that this rate cannot be improved by much, even if one restricts attention to tests based on~$p=\infty$ only, see also~\cite{kock2023moment}. Clearly,~$d$ larger than~$n$ is ruled out. Nevertheless, many empirically relevant settings involving a large~$d$ that is still smaller than~$n$ are covered. For example, \cite{angrist2022machine} based on the data from \cite{angrist1991does} consider a setting with~$n=329{,}509$ and up to~$d=1{,}530$ instruments.\footnote{\label{fn:motivation}Similarly, register-based data sets, such as those provided by~\emph{Statistics Denmark}, allow one to access data on entire populations yielding very large~$n$ (in the millions) and thus allowing~$d$ to be large as well.} Alternatively, one can verify (i) and (ii) under stronger moment assumptions or by imposing structure on~$\bm{\Sigma}_n(\bm{\beta}^*_n)$ to allow~$d$ to grow faster than~$n^{2/5}$. Furthermore, in Section~\ref{sec:rapidgrowth}, we introduce ``projection-based'' tests based on a sample-splitting device that overcome dimensionality constraints completely (without sacrificing asymptotic size).

On a technical level, the conditions (i) and (ii) allow us to establish a Gaussian approximation for~$\hat{\bm{\Sigma}}^{-1/2}_n(\bm{\beta}^*_n)\bm{H}_{n}(\bm{\beta}^*_n)$ over convex sets when the ``noncentrality term''
\begin{align}\label{eq:nc2}
	\bm{\theta}_n(\bm{\beta}^*_n)=\sqrt{n}\bm{\Sigma}^{-1/2}_n(\bm{\beta}^*_n)\E h_{n}(\bm{X},\bm{\beta}^*_n)
\end{align}
 is sufficiently close to~$\bm{0}_d$, cf.~Lemma~\ref{lem:convgauss}. This enables us to deduce the power properties of tests based on the statistics~$\enVert[1]{\hat{\bm{\Sigma}}^{-1/2}_n\bm{H}_{n}(\bm{\beta}^*_n)}_p$ for any~$p\in[2,\infty]$ from those of~$p$-norm based tests in the Gaussian sequence model, which was analyzed in~\cite{kp2021consistency}. However, the case of ``large''~$\bm{\theta}_n(\bm{\beta}^*_n)$ must be handled by a separate proof strategy tailored to this regime.
 
  We find that a~$p$-norm based test is consistent for~$p\in[2,\infty)$ if and only if~$$d^{-1/2}\sum_{j=1}^d \sbr[1]{\theta_{j,n}^2(\bm{\beta}^*_n)\vee |\theta_{j,n}(\bm{\beta}^*_n)|^p}\to\infty,$$ and obtain a structurally similar, but somewhat more complex, necessary and sufficient condition for consistency of the~$\infty$-norm based test. Thus, apart from the case of~$p=2$, consistency of a~$p$-norm based test is \emph{not} solely determined by the~$p$-norm of~$\bm{\theta}_n(\bm{\beta}^*_n)$. Furthermore, the characterization reveals the following monotonicity: For~$2\leq p<q<\infty$ tests based on the~$q$-norm are consistent against more alternatives than tests based on the~$p$-norm. This monotonicity does \emph{not} extend to~$p=\infty$.\footnote{Although sufficient, yet not necessary, conditions are known for consistency of tests based on~$p=\infty$ (based on maximin rates of testing, cf.~\cite{ingster2003nonparametric}), we are not aware of any full characterization providing necessary and sufficient conditions beyond the Gaussian setting considered in \cite{kp2021consistency}. Thus, our characterization also sheds new light on tests based on~$p=\infty$, which are frequently employed.} 

Based on the monotonicity property, we then construct a test~$\psi_n(\bm{\beta}^*_n)$, say, that harnesses the strength of all~$p$-norms with exponents in~$[2, \infty]$ in the sense that it has asymptotic size~$\alpha \in (0, 1)$ and is consistent against a sequence of alternatives whenever some $p$-norm based test with $p \in [2, \infty]$ and asymptotic size in $(0, 1)$ is consistent against that sequence of alternatives. Thus,~$\psi_n(\bm{\beta}^*_n)$ simultaneously dominates all~$p$-norm based tests in terms of consistency. In particular, it goes beyond current implementations of the power enhancement principle, which typically target consistency against alternatives that either tests based on the~$2$- or~$\infty$-norm are consistent against, i.e.,~only~\emph{two} norms are targeted.

Finally, we illustrate the scope of our results in the running example of the linear instrumental variable model with many (weak) instruments. Here our general construction yields a test that is powerful irrespectively of whether the first-stage is sparse, dense, or semi-sparse.

\section{The problem}\label{sec:problem}
We observe realizations of random variables~$\bm{X}_{1,n},\hdots,\bm{X}_{n,n}$, each of which is defined on an underlying probability space~$(\Omega,\mc{F},\P)$ and takes its values in the measurable space~$(\mc{X}_{n},\mc{A}_n)$. We wish to test whether a candidate structural parameter~$\bm{\beta}^*_n \in \mathbf{B}_n \neq \emptyset$ satisfies the \emph{moment equalities}
\begin{align}\label{eq:GMMind}
\E h_{j,n}(\bm{X}_{1,n},\bm{\beta}^*_n)=0,\qquad j=1,\hdots,d(n), 
\end{align}
for~$h_{j,n}:\mc{X}_{n}\times \mathbf{B}_n\to\R$ with~$h_{j,n}(\cdot,\bm{\beta}^*_n)$ being~$\P$-integrable and~$\E$ denoting the expectation with respect to~$\P$. The~$\bm{X}_{i,n}$ are assumed to be identically distributed throughout, but are allowed to be dependent (although our detailed examples are for independent variables).
\emph{In this paper we are interested in a setting wherein~$d(n)\to\infty$ as the sample size~$n\to\infty$}.  That is, there are ``many" moment equalities. To conserve on notation, we write~$d=d(n)$ whenever this causes no confusion. For our general results, no structure needs to be imposed on the parameter space~$\mathbf{B}_n$. Of course, the complexity of constructing confidence sets for the parameters via test inversion depends on the size and structure of~$\mathbf{B}_n$. 

For~$h_{n} = (h_{1,n},\hdots,h_{d,n})'$ and~$\bm{0}_{d}$ denoting the~$d\times 1$ vector of zeros, the requirement in~\eqref{eq:GMMind} is more conveniently expressed as
\begin{align*}
\E h_{n}(\bm{X}_{1,n},\bm{\beta}^*_n)=\bm{0}_{d}.
\end{align*}
We thus consider the testing problem
\begin{align}\label{eq:test_problem}
H_{0,n}:\E h_{n}(\bm{X}_{1,n},\bm{\beta}^*_n)=\bm{0}_{d}\quad \text{against}\quad H_{1,n}:\E h_{n}(\bm{X}_{1,n},\bm{\beta}^*_n)\neq \bm{0}_{d},
\end{align}
and are interested in asymptotic size and power properties of tests in situations in which~$d$ increases with~$n$. Finally, we stress that since the tests to be constructed build on plugging in a candidate~$\bm{\beta}^*_n$, no assumptions need to be made on (the degree of) identification of the parameter(s) satisfying~\eqref{eq:GMMind}. Thus, the tests are trivially robust to (weak) identification problems. 

\subsection{Examples}\label{sec:examples}
As pointed out already in~\cite{hansen1982generalized}, moment conditions as in~\eqref{eq:GMMind} frequently arise in economic models as first-order conditions to an agent's optimization problem.  Furthermore, as mentiond in the introduction, the true population parameters in~$M$- and $Z$-estimation problems, such as (non-linear) least squares and maximum likelihood satisfy first-order conditions of the form~\eqref{eq:GMMind}. Our running example will be inference in the presence of many (weak) instruments, and since the 2-norm based test there is just the Anderson-Rubin test, we begin by recalling how this fits into our general framework. We also provide details for treatment effect testing examples. For clarity, we assume i.i.d.~sampling in the examples below.

\subsubsection{Inference in the presence of many (weak) instruments}\label{sec:IVexample} Consider the classic linear instrumental variable (IV) setting in which~$\bm{X}_{i,n}=(y_{i,n},\bm{Y}'_{i,n},\bm{z}'_{i,n})'$ for~$y_{i,n}\in\R$ an outcome of interest,~$\bm{Y}_{i,n}\in\R^k$ a vector of endogenous explanatory variables and~$\bm{z}_{i,n}\in\R^{d}$ a vector of instruments. Thus,~$\mc{X}_n=\R^{1+k+d}$,~$\mathbf{B}_n=\R^k$ and~$h_{n}(\bm{X}_{1,n},\bm{\beta}^*_n)=(y_{1,n}-\bm{Y}'_{1,n} \bm{\beta}^*_n)\bm{z}_{1,n}$. In this context, we wish to test whether a candidate parameter vector $\bm{\beta}_n^*$ lies in the identified set, i.e., whether 
\begin{align*}
H_{0,n}:\E\sbr[1]{(y_{1,n}-\bm{Y}'_{1,n} \bm{\beta}^*_n)\bm{z}_{1,n}}=\bm{0}_{d}\qquad\text{against}\qquad H_{1,n}:\E\sbr[1]{(y_{1,n}-\bm{Y}'_{1,n} \bm{\beta}^*_n)\bm{z}_{1,n}}\neq\bm{0}_{d}. 
\end{align*}
Note that the number of moment equalities equals the number of instruments~$d$. This can be large even for~$k$ fixed. Inference on~$\bm{\beta}^*_n$ in the presence of many (weak) instruments has received considerable recent attention as witnessed by, e.g., the works of~\cite{andrews2007testing}, \cite{anatolyev2011specification}, \cite{belloni2012sparse}, \cite{tchuente2016regularization}, \cite{kaffo2017bootstrap}, \cite{crudu2021inference}, \cite{mikusheva2021many},  
\cite{matsushita2021second},
\cite{mikusheva2020inference},  \cite{matsushita2022jackknife}, \cite{dovi2022ridge}, \cite{boot2023identification}. There are several important practical reasons for this interest. First, when identification is weak, researchers may seek to obtain more precise inference by using a large number of instruments in order to capture more of the exogenous variation in the endogenous covariates. Second, approaches such as the granular IV approach of~\cite{gabaix2020granular}, the saturation approach of~\cite{blandhol2022tsls} or Mendelian Randomization (\cite{davey2003mendelian}), the latter using genetic variation as instruments, can lead to situations where the number of instruments is large compared to the sample size. The same is true for technical instruments such as transformations or interactions or empirical strategies such as ``judge designs,'' cf.~\cite{miksun23}. However, as there is no guarantee that even a large number of instruments is jointly informative, it is important to develop tests that remain valid under no or weak identification. 

Our general results allow us to construct a test that strictly dominates the Anderson-Rubin test, the sup-score test, and power enhancement combinations of these in terms of consistency properties when~$d\to\infty$, cf.~Section~\ref{ex:IVdom}. 

\bigskip 

The examples in the following two subsections revisit  Examples 1 and 2 in \cite{belloni2018high} and both consider testing for treatment effects in a randomized controlled trial.

\subsubsection{Randomized controlled trial with many outcomes}\label{sec:exmanyoutcomes}
Consider a randomized controlled trial in which~$\bm{X}_{i,n}=(D_{i,n},\bm{Y}'_{i,n}(0),\bm{Y}'_{i,n}(1))'$ for~$D_{i,n}\in\cbr[0]{0,1}$ indicating whether individual~$i$ is assigned to the treatment ($D_{i,n}=1$) or not ($D_{i,n}=0$). For each individual we observe a large number,~$d$,  of outcome variables and denote this by~$\bm{Y}_{i,n}(1) = (Y_{i1,n}(1), \hdots, Y_{id,n}(1))' \in \R^{d}$  in case individual~$i$ is treated and~$\bm{Y}_{i,n}(0) = (Y_{i1,n}(0), \hdots, Y_{id,n}(0))' \in\R^{d}$ if not. This framework of many outcome variables is relevant, because a researcher may, for example, be interested in whether a medication has an effect on any of a large number of health outcomes (blood pressure, weight, heart rate, ...) rather than only monitoring the effect on a single of these.\footnote{Although a treatment may be targeted to affect a single outcome of interest, it may nevertheless be important to scan for potential side effects on other outcome variables.}

Suppose that~$\P(D_{1,n}=1)=\pi$~for~$\pi\in(0,1)$ a known probability of being assigned to the treatment and that the treatments are assigned independently of the potential outcomes, that is~$D_{1,n} \indep (\bm{Y}_{1,n}(0),\bm{Y}_{1,n}(1))$. Furthermore,~$\mc{X}_n=\R^{1+2d}$. The treatment does not have an effect on any of the outcomes being measured if~$\E \bm{Y}_{1,n}(1)=\E \bm{Y}_{1,n}(0)$. More generally, for $\bm{\beta}^*_n = (\beta^*_{n,1}, \hdots, \beta^*_{n,d})' \in \R^d$ a vector of coordinate-wise effects, we are interested in testing whether $\E \bm{Y}_{1,n}(1) - \E \bm{Y}_{1,n}(0)=\bm{\beta}^*_n$. Because~$D_{1,n} \indep (\bm{Y}_{1,n}(0),\bm{Y}_{1,n}(1))$ this is equivalent to
\begin{align}\label{eq:treatment}
\E \sbr[3]{\frac{D_{1,n}\bm{Y}_{1,n}(1)}{\pi}-\frac{(1-D_{1,n})\bm{Y}_{1,n}(0)}{1-\pi} - \bm{\beta}^*_n}
=
\E\sbr[3]{\frac{D_{1,n}\bm{Y}_{1,n}}{\pi}-\frac{(1-D_{1,n})\bm{Y}_{1,n}}{1-\pi}- \bm{\beta}^*_n}
=
\bm{0}_{d},
\end{align} 
where~$\bm{Y}_{1,n}=D_{1,n}\bm{Y}_{1,n}(1)+(1-D_{1,n})\bm{Y}_{1,n}(0)$ is the observed outcome. This is a testing problem of the form in~\eqref{eq:test_problem} with~$h_{n}(\bm{X}_{1,n},\bm{\beta}^*_n)=\frac{D_{1,n}\bm{Y}_{1,n}}{\pi}-\frac{(1-D_{1,n})\bm{Y}_{1,n}}{1-\pi} - \bm{\beta}^*_n$. 

\subsubsection{Randomized controlled trials with many groups}

Similar to the previous example of Section~\ref{sec:exmanyoutcomes}, consider a randomized controlled trial. However, let there be only one outcome variable for each individual. Suppose, however, that each subject falls into one of $d$ groups and one wants to test whether a treatment has the hypothesized effect (e.g., no effect) in every one of these groups. Assuming for simplicity that there are~$n$ observations available in each group, one can thus again write the problem in the form of two~$d$-dimensional outcome vectors~$\bm{Y}_{i,n}(0)$ and~$\bm{Y}_{i,n}(1)$ as in the previous example ($\bm{Y}_{i,n}(0)$ is the vector of the stacked outcomes of the $i$-th subjects in each group under the no treatment condition, whereas $\bm{Y}_{i,n}(1)$ is the vector of the stacked outcomes of the $i$-th subjects in each group under the treatment condition). The null hypotheses of the treatment having effect $\beta_{n,j}^*$ in group $j$ for $j = 1, \hdots, d$ then again amounts to~\eqref{eq:treatment}.

As pointed out in, e.g.,~\cite{belloni2018high}, the project STAR investigating the effect of class size on student learning falls within the framework of the present example as many background characteristics were collected on pupils and teachers. For example gender or race. For pupils, their month of birth was also recorded. Identifying each group with a particular combination of covariates, this results in a large number of groups.

\section{Characterization of consistency properties of~$p$-norm based tests}

Recall from the introduction that given~$\bm{\beta}_n^* \in \mathbf{B}_n$, we write
\begin{align*}
\bm{H}_{n}(\bm{\beta}^*_n)&:=\frac{1}{\sqrt{n}}\sum_{i=1}^nh_{n}(\bm{X}_{i,n},\bm{\beta}^*_n),
\end{align*}
and let~$\hat{\bm{\Sigma}}_n(\bm{\beta}^*_n)$ be a positive semidefinite and symmetric estimator of the covariance matrix~$\bm{\Sigma}_n(\bm{\beta}^*_n)$ of~$\bm{H}_n(\bm{\beta}^*_n)$, which we assume to exist.
\begin{remark}
Already under independent sampling, the choice of~$\hat{\bm{\Sigma}}_n(\bm{\beta}^*_n)$ may depend on instance-specific additional structural information on~$\bm{\Sigma}_n(\bm{\beta}^*_n)$ (e.g., bandedness, sparsity or factor structure), the incorporation of which can improve the estimator. We are particularly interested in a setting where no structural information on~$\bm{\Sigma}_n(\bm{\beta}^*_n)$ is available and leave the specific choice of estimator unspecified in the following. In Section~\ref{sec:primcond} we highlight how the precision of the chosen~$\hat{\bm{\Sigma}}_n(\bm{\beta}^*_n)$ influences the allowed growth rate of~$d$ and discuss precision guarantees available in the literature. 
\end{remark} 
A common way of measuring the empirical evidence against the null of~$\E h_{n}(\bm{X}_{1,n},\bm{\beta}^*_n)=\bm{0}_{d}$ is based on the the Euclidean norm of~$\hat{\bm{\Sigma}}^{-1/2}_n(\bm{\beta}^*_n)\bm{H}_{n}(\bm{\beta}^*_n)$, that is, one rejects~$H_{0,n}$ null whenever
\begin{align}
S_{n,2}(\bm{\beta}^*_n):=\sqrt{\bm{H}'_{n}(\bm{\beta}^*_n) \hat{\bm{\Sigma}}^{-1}_n(\bm{\beta}^*_n)\bm{H}_{n}(\bm{\beta}^*_n)}
=
\enVert[2]{\hat{\bm{\Sigma}}^{-1/2}_n(\bm{\beta}^*_n)\bm{H}_{n}(\bm{\beta}^*_n)}_2\label{eq:2normstat}
\end{align}
exceeds a critical value~$\kappa_{n,2}$ chosen to ensure that the resulting test has a desired (asymptotic) size~$\alpha\in(0,1)$, and where~$\enVert[0]{\bm{x}}_2=\sqrt{\sum_{i=1}^dx_i^2}$ for~$\bm{x}\in\R^d$.\footnote{In order to avoid taking a stance on~$\hat{\bm{\Sigma}}_n(\bm{\beta}^*_n)$ potentially not being invertible, we denote by~$\bm{A}^{-1}$ the Moore-Penrose pseudoinverse for any matrix~$\bm{A}$ and define~$\bm{A}^{-1/2}:=\del[1]{\bm{A}^{1/2}}^{-1}$ in case~$\bm{A}$ is symmetric and positive semidefinite, and where~$\bm{A}^{1/2}$ denotes the unique symmetric positive semidefinite square root of~$\bm{A}$. Recall that the Moore-Penrose inverse is identical to the regular matrix inverse whenever the latter exists.\label{fn:MPinv}} For~$\bm{x} = (x_1, \hdots, x_d)' \in \R^d$ and~$p \in [2, \infty]$, define the~$p$-norm
\begin{equation*}
\|\bm{x}\|_p := 
\begin{cases}
\left(\sum_{i = 1}^d |x_i|^p\right)^{\frac{1}{p}} & \text{if } p < \infty, \\
\max_{i = 1, \hdots, d} |x_i| & \text{else}
\end{cases}
\end{equation*}
and introduce, analogously to~$S_{n,2}(\bm{\beta}^*_n)$ in~\eqref{eq:2normstat} but based on the $p$-norm, the family of test statistics
\begin{align*}
S_{n,p}(\bm{\beta}^*_n):=\enVert[2]{\hat{\bm{\Sigma}}^{-1/2}_n(\bm{\beta}^*_n)\bm{H}_{n}(\bm{\beta}^*_n)}_p,\qquad p\in[2,\infty].
\end{align*}
For sequences of critical values~$(\kappa_{n,p})_{n\in\N}$ that guarantee a desired asymptotic size, we shall now study consistency properties of the~\emph{$p$-norm based} tests
\begin{equation*}
\mathds{1}(S_{n,p}(\bm{\beta}^*_n)\geq \kappa_{n,p})
\end{equation*}
and coverage properties of the associated confidence sets
\begin{equation*} \cbr[1]{\bm{\beta}_n \in\mathbf{B}_n:S_{n,p}(\bm{\beta}_n)\leq \kappa_{n,p}},
\end{equation*}
when~$d\to\infty$ as $n \to \infty$. Define $$\mathbf{B}^* := \bigtimes_{n = 1}^{\infty} \mathbf{B}_n,$$ i.e., the set of possible \emph{sequences} of parameters $\bm{\beta}_n$ for $n \in \N$, along which power or coverage properties of tests for the sequence of testing problems in~\eqref{eq:test_problem} can be studied asymptotically in the current setup. In particular, we write~$\bm{\beta}^*=(\bm{\beta}_1^*,\bm{\beta}_2^*,\hdots)$. Let $$\mathbf{B}^{(0)}=\cbr[1]{\bm{\beta} = (\bm{\beta}_1, \bm{\beta}_2, \hdots) \in\mathbf{B}^*:\E h_{n}(\bm{X}_{1,n},\bm{\beta}_n)=\bm{0}_{d}\text{ for every }n\in\N} \subseteq \mathbf{B}^*$$ be the set of sequences of parameters satisfying~$H_{0,n}$ in~\eqref{eq:test_problem} for every $n \in \N$. 

As we shall see, the scaled deviations from the null hypothesis
\begin{align}\label{eq:thetadef}
\bm{\theta}_n(\bm{\beta}^*_n)=\sqrt{n}\bm{\Sigma}^{-1/2}_{n}(\bm{\beta}^*_n)\E h_{n}(\bm{X}_{1,n},\bm{\beta}^*_n) \in \R^d,
\end{align}
play the role of a noncentrality parameter in characterizing the consistency properties of tests based on~$S_{n,p}(\bm{\beta}^*_n)$. We also let~$\sigma_p^2 := \mathbb{V}ar(|Z|^p)$  with~$Z\sim\mathsf{N}_1(0,1)$ and define the functions
\begin{align}\label{eqn:rhodef}
\lambda_p(x)=\E|Z+x|^p\qquad\text{and}\qquad g_p(x) =x^2\vee |x|^p,\qquad x\in\R,\ p\in[2,\infty).
\end{align}
\subsection{Power properties of~$p$-norm based tests:~$p\in[2,\infty)$}
We first consider the case of~$p\in[2,\infty)$ as the characterization of their power has a common structure. The case of~$p=\infty$ is then covered in Section~\ref{sec:suptest}.

Our most general characterization of the power properties of~$p$-norm based tests is carried out under the following high-level conditions for which primitive conditions are provided in Section~\ref{sec:primcond}.
\begin{assumption}[Size]\label{ass:pnormsize}
$\bm{\beta}^* \in \mathbf{B}^{(0)}$ 
and
\begin{align}\label{eq:pnormsize}
\frac{S_{n,p}^p(\bm{\beta}_n^*)-d \lambda_p(0)}{\sqrt{d} \sigma_p}
\rightsquigarrow \mathsf{N}_1(0, 1).
\end{align}
\end{assumption}
Note that~$d \lambda_p(0)$ appearing in Assumption~\ref{ass:pnormsize} is not the expected value of~$S_{n,p}^p(\bm{\beta}_n^*)$, which, in fact, may not even exist. Section~\ref{sec:primcond} provides conditions under which Assumption~\ref{ass:pnormsize} is satisfied for \emph{all} powers~$p\in[2,\infty)$ even when the~$h_{j,n}(\bm{X}_{1,n},\bm{\beta}_n^*)$ have only four moments.

\begin{assumption}[Power]\label{ass:pnormconsistency} 
For $\bm{\beta}^*  \in\mathbf{B}^*$ we summarize the following two properties:
\begin{enumerate}
\item If~$\frac{1}{\sqrt{d'}} \sum_{i = 1}^{d'} [\lambda_p(\theta_{i,n'}(\bm{\beta}_n^*))-\lambda_p(0)]$ is bounded for a subsequence~$n'$ of~$n$ and where~$d'=d(n')$, then 
\begin{align}\label{eq:pnormlocalhighlevel}
\frac{S_{n',p}^p(\bm{\beta}_{n'}^*)-\sum_{i = 1}^{d'} \lambda_p(\theta_{i,n'}(\bm{\beta}_{n'}^*))}{\sqrt{d'} \sigma_p}
\rightsquigarrow \mathsf{N}_1(0, 1).
\end{align}
\item If~$\frac{1}{\sqrt{d}} \sum_{i = 1}^{d} [\lambda_p(\theta_{i,n}(\bm{\beta}_n^*))-\lambda_p(0)]\to \infty$, then
\begin{align}\label{eq:pnormconshighlevel}
\frac{S_{n,p}^p(\bm{\beta}_n^*)-\sum_{i=1}^{d}\lambda_p(\theta_{i,n}(\bm{\beta}_n^*))}{\sum_{i = 1}^{d} [\lambda_p(\theta_{i,n}(\bm{\beta}_n^*))-\lambda_p(0)]}
=
o_{\P}(1).
\end{align}
\end{enumerate}
\end{assumption}

The following theorem characterizes asymptotic size and power properties of $p$-norm based tests for $p \in [2, \infty)$. We denote the cdf of a standard normal distribution by~$\Phi$.
\begin{theorem}\label{thm:pnorm-char_GMM}
Let~$p\in[2,\infty)$ and $\alpha\in(0,1)$.  
\begin{enumerate}
\item Size control: Under Assumption~\ref{ass:pnormsize}, a sequence of real numbers~$(\kappa_{n,p})_{n\in\N}$ satisfies
\begin{align}\label{eq:sizepnorm}
\P\del[1]{S_{n,p}(\bm{\beta}_n^*)\geq \kappa_{n,p}}\to\alpha, 
\end{align}
if and only if~$\kappa_{n,p}=\kappa_{n,p}(\alpha)=\sbr[1]{(\Phi^{-1}(1-\alpha)+o(1))d^{1/2}\sigma_p+d\lambda_p(0)}^{1/p}$.
\item Local power: If Part 1 of Assumption \ref{ass:pnormconsistency} holds and if~$$\frac{1}{\sqrt{d}} \sum_{i = 1}^{d} [\lambda_p(\theta_{i,n}(\bm{\beta}_n^*))-\lambda_p(0)]\to c\in[0,\infty),$$ then, for~$(\kappa_{n,p})_{n\in\N}$ satisfying~\eqref{eq:sizepnorm},
\begin{align*}
\P\del[1]{S_{n,p}(\bm{\beta}_n^*)\geq \kappa_{n,p}}\to 1-\Phi(\Phi^{-1}(1-\alpha)-c/\sigma_p).	
\end{align*}
\item Consistency: Under Assumption~\ref{ass:pnormconsistency} and for~$(\kappa_{n,p})_{n\in\N}$ satisfying~\eqref{eq:sizepnorm}, it holds that
\begin{align}
\P\del[1]{S_{n,p}(\bm{\beta}_n^*)\geq \kappa_{n,p}}\to 1 
\quad 
&\Longleftrightarrow \quad \frac{\sum_{i=1}^{d}[\lambda_p(\theta_{i,n}(\bm{\beta}_n^*))-\lambda_p(0)]}{\sqrt{d}}\to\infty \label{eq:conscharlambda}\\
&\Longleftrightarrow	 \quad\frac{\sum_{i=1}^dg_p(\theta_{i,n}(\bm{\beta}_n^*))}{\sqrt{d}}\to\infty.\label{eq:conschargp}
\end{align}
\end{enumerate}
\end{theorem}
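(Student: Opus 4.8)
The plan is to let the two high-level assumptions carry the probabilistic content and to reduce each part to elementary manipulations of the standardizing sequences, supplemented by one deterministic analytic comparison. Throughout I suppress the argument $\bm{\beta}_n^*$ and write $\theta_{i,n}=\theta_{i,n}(\bm{\beta}_n^*)$. A fact I would record first is that $\lambda_p(x)\geq\lambda_p(0)$ for every $x$: by convexity of $t\mapsto|t|^p$ one has $\tfrac12(|Z+x|^p+|Z-x|^p)\geq|Z|^p$, and taking expectations while using $Z\overset{d}{=}-Z$ gives $\lambda_p(x)\geq\lambda_p(0)$. Hence every sum $\sum_i[\lambda_p(\theta_{i,n})-\lambda_p(0)]$ is nonnegative, which makes the subsequence extractions below clean. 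For Part~1, since both sides are nonnegative, $S_{n,p}(\bm{\beta}_n^*)\geq\kappa_{n,p}$ is equivalent to $S_{n,p}^p(\bm{\beta}_n^*)\geq\kappa_{n,p}^p$. Setting $t_{n,p}:=(\kappa_{n,p}^p-d\lambda_p(0))/(\sqrt d\,\sigma_p)$ and applying the convergence in Assumption~\ref{ass:pnormsize}, the rejection probability converges to $1-\Phi(\lim t_{n,p})$ whenever $t_{n,p}$ converges; thus \eqref{eq:sizepnorm} holds iff $t_{n,p}\to\Phi^{-1}(1-\alpha)$, which rearranges to the stated closed form for $\kappa_{n,p}$.

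For Part~2 I would recenter by $\sum_i\lambda_p(\theta_{i,n})$ rather than by $d\lambda_p(0)$ and invoke Part~1 of Assumption~\ref{ass:pnormconsistency}, which is applicable precisely because $\tfrac1{\sqrt d}\sum_i[\lambda_p(\theta_{i,n})-\lambda_p(0)]\to c\in[0,\infty)$. The standardized critical value satisfies
\[
\frac{\kappa_{n,p}^p-\sum_{i=1}^d\lambda_p(\theta_{i,n})}{\sqrt d\,\sigma_p}
=\frac{\kappa_{n,p}^p-d\lambda_p(0)}{\sqrt d\,\sigma_p}-\frac{1}{\sigma_p}\cdot\frac{\sum_{i=1}^d[\lambda_p(\theta_{i,n})-\lambda_p(0)]}{\sqrt d}\longrightarrow\Phi^{-1}(1-\alpha)-c/\sigma_p,
\]
where the first term tends to $\Phi^{-1}(1-\alpha)$ by Part~1. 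Combining with the weak convergence in \eqref{eq:pnormlocalhighlevel} yields the limiting power $1-\Phi(\Phi^{-1}(1-\alpha)-c/\sigma_p)$.

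For Part~3 I would treat the two implications of \eqref{eq:conscharlambda} separately and then prove its equivalence with \eqref{eq:conschargp}. For sufficiency, \eqref{eq:pnormconshighlevel} gives $S_{n,p}^p=d\lambda_p(0)+\sum_i[\lambda_p(\theta_{i,n})-\lambda_p(0)](1+o_{\P}(1))$, while $\kappa_{n,p}^p=d\lambda_p(0)+O(\sqrt d)$; hence $(S_{n,p}^p-\kappa_{n,p}^p)/\sqrt d=(1+o_{\P}(1))\,\tfrac1{\sqrt d}\sum_i[\lambda_p(\theta_{i,n})-\lambda_p(0)]-O(1)\to+\infty$ in probability, giving consistency. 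For necessity I argue by contraposition: if $\tfrac1{\sqrt d}\sum_i[\lambda_p(\theta_{i,n})-\lambda_p(0)]\not\to\infty$, then along a subsequence it is bounded and, along a further subsequence, converges to some finite $c$; Part~1 of Assumption~\ref{ass:pnormconsistency} then applies along that subsequence and the Part~2 computation yields limiting power $1-\Phi(\Phi^{-1}(1-\alpha)-c/\sigma_p)<1$, contradicting consistency. Finally, the equivalence of \eqref{eq:conscharlambda} and \eqref{eq:conschargp} is purely deterministic: I would produce constants $0<c_p\leq C_p<\infty$, depending only on $p$, with $c_p\,g_p(x)\leq\lambda_p(x)-\lambda_p(0)\leq C_p\,g_p(x)$ for all $x$, using the second-order Taylor expansion at $0$ (where $\lambda_p'(0)=0$ and $\lambda_p''(0)=p(p-1)\E|Z|^{p-2}$, so the difference is comparable to $x^2$) and the asymptotics $\lambda_p(x)/|x|^p\to1$ as $|x|\to\infty$ (so the difference is comparable to $|x|^p$), gluing the two regimes by continuity and strict positivity away from the origin. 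This two-sided bound transfers divergence of one normalized sum to the other.

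The step I expect to be the main obstacle is this uniform comparison of $\lambda_p(x)-\lambda_p(0)$ with $g_p(x)$. The small- and large-$x$ asymptotics are routine, but controlling their ratio uniformly across the intermediate range — and, crucially, tracking that the implied constants depend only on $p$ and not on the dimension-dependent configuration of the $\theta_{i,n}$ — is delicate, because the characterization must hold simultaneously for all $p\in[2,\infty)$ and the sums aggregate contributions of widely varying magnitudes. Everything else reduces to the algebra above once the two assumptions are in force.
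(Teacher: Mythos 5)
Your proposal is correct and takes essentially the same route as the paper: the identical recentering and standardization of $\kappa_{n,p}^p$ drives Parts 1 and 2 (the paper likewise invokes Polya's theorem for the ``only if'' in Part 1), necessity in Part 3 is the same subsequence contraposition via Part 1 of Assumption~\ref{ass:pnormconsistency}, and sufficiency is the same rearrangement of \eqref{eq:pnormconshighlevel} against $\kappa_{n,p}^p = d\lambda_p(0)+O(\sqrt{d})$. The only divergence is the two-sided comparison $c_p^{-1}g_p(x)\leq \lambda_p(x)-\lambda_p(0)\leq c_p g_p(x)$, which you flag as the main obstacle and sketch from scratch (soundly, via the Taylor expansion at $0$, the $|x|^p$ asymptotics, and a compactness glue using strict convexity of $\lambda_p$), whereas the paper simply imports it as Lemma~C.1 of \cite{kp2021consistency}, so this step is less delicate than you anticipate.
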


Part~1 of Theorem~\ref{thm:pnorm-char_GMM} characterizes sequences of critical values that yield asymptotic size control. For any~$\alpha\in(0,1)$, a canonical choice is~$$\kappa_{n,p}=\sbr[0]{\Phi^{-1}(1-\alpha)d^{1/2}\sigma_p+d\lambda_p(0)}^{1/p}.$$
Part 2 provides results on local asymptotic power and Part 3 provides a complete characterization of the alternatives that a~$p$-norm based test is consistent against. Apart from when~$p=2$, neither of these can be expressed in terms of only the~$p$-norm of the (scaled) deviation from the null~$\|\bm{\theta}_n(\bm{\beta}_n^*)\|_p$; instead they depend on the asymptotic behaviour of~$$d^{-1/2}\sum_{i = 1}^{d} [\lambda_p(\theta_{i,n}(\bm{\beta}_n^*))-\lambda_p(0)].$$ Concerning consistency, the divergence of the latter is equivalent to divergence of $$\frac{\sum_{i=1}^dg_p(\theta_{i,n}(\bm{\beta}_n^*))}{\sqrt{d}}=\frac{\sum_{i=1}^d\sbr[1]{\theta_{i,n}^2(\bm{\beta}_n^*)\vee |\theta_{i,n}(\bm{\beta}_n^*)|^p}}{\sqrt{d}},$$ which is somewhat easier to interpret.

Concerning the consistency of a~$p$-norm based test (with asymptotic size in~$(0, 1)$),~\eqref{eq:conschargp} implies that a sufficient condition is that~$d^{-1/2}\|\bm{\theta}_n(\bm{\beta}_n^*)\|_2^2\to\infty$ \emph{or}~$d^{-1/2}\|\bm{\theta}_n(\bm{\beta}_n^*)\|_p^p\to\infty$, cf.~also the previous display. What is more, observing that
\begin{align*}
g_p(x)\leq g_q(x)\qquad\text{for}\qquad 2\leq p<q<\infty\text{ and all }x\in\R,
\end{align*}
Part 3~of Theorem~\ref{thm:pnorm-char_GMM} shows that
\begin{align}\label{eq:dom}
\P\del[1]{S_{n,p}(\bm{\beta}_n^*)\geq \kappa_{n,p}}\to 1\qquad \text{implies} \qquad \P\del[1]{S_{n,q}(\bm{\beta}_n^*)\geq \kappa_{n,q}}\to 1
\end{align}
when~$(\kappa_{n,p})_{n\in\N}$ and~$(\kappa_{n,q})_{n\in\N}$ are chosen such that the corresponding tests have asymptotic sizes in~$(0,1)$. In words,~\emph{any violation of the moment conditions~$\E h_{n}(\bm{X}_{1,n},\bm{\beta}_n^*)=\bm{0}_{d}$ that a test based on the~$p$-norm is consistent against, a test based on the~$q$-norm will also be consistent against},~$2\leq p<q<\infty$. Thus, the~$q$-norm based test always weakly dominates the~$p$-norm based test in terms of consistency. This domination is \emph{strict} if and only if there exists a~$\bm{\beta}^*\in\mathbf{B}^*$ such that $\liminf_{n\to\infty}\P\del[1]{S_{n,p}(\bm{\beta}_n^*)\geq \kappa_{n,p}}<1$ but~$\P\del[1]{S_{n,q}(\bm{\beta}_n^*)\geq \kappa_{n,q}}\to 1$, i.e., if and only if (cf.~\eqref{eq:conschargp})
\begin{align}\label{eq:highlevchar}
\liminf_{n\to\infty}\frac{\sum_{i=1}^{d}g_p(\theta_{i,n}(\bm{\beta}_n^*))}{\sqrt{d}}<\infty\qquad \text{and}\qquad
\frac{\sum_{i=1}^{d}g_q(\theta_{i,n}(\bm{\beta}_n^*))}{\sqrt{d}}\to\infty.
\end{align}

Summarizing, given the conditions imposed in Theorem \ref{thm:pnorm-char_GMM}, the characterization and monotonicity properties of the consistency of~$p$-norm based tests established in \cite{kp2021consistency} in the Gaussian sequence model continue to hold in the current setting.

By Theorem 3.4 in~\cite{kp2021consistency} the~$\bm{\theta}_n(\bm{\beta}_n^*)$ that satisfy~\eqref{eq:highlevchar} are necessarily approximately sparse and unbalanced. Furthermore, if the limit inferior to the left in~\eqref{eq:highlevchar} is zero, then there exists a subsequence along which the~$p$-norm based test has asymptotic power equal to its size, while the~$q$-norm based test is consistent --- a substantial difference in power.

\subsection{Primitive conditions for Assumptions \ref{ass:pnormsize} and \ref{ass:pnormconsistency}}\label{sec:primcond}
We next provide conditions that will be shown to be sufficient for Assumptions~\ref{ass:pnormsize} and~\ref{ass:pnormconsistency} to be satisfied and hence for Theorem~\ref{thm:pnorm-char_GMM} to apply. To this end, let~$\bm{Z}_d\sim\mathsf{N}_d(\bm{0}_d,\mathbf{I}_d)$,~$$\mc{C}_n=\cbr[0]{C\subseteq \R^{d(n)}: C\text{ is convex and Borel measurable}},$$ and~$\|\bm{A}\|_2$ be the spectral norm of the matrix~$\bm{A}$. 
\begin{assumption}\label{as:suffc}
Let~$\bm{\beta}^* \in \mathbf{B}^*$, assume that~$d\to\infty$, that the eigenvalues of~$\bm{\Sigma}_n(\bm{\beta}_n^*)$ are (uniformly) bounded away from zero and from above, and that
\begin{enumerate}
	\item $\|\hat{\bm{\Sigma}}_n(\bm{\beta}_n^*)-\bm{\Sigma}_n(\bm{\beta}_n^*)\|_2=O_\P(a_n)$ with~$d^{3/4}a_n\to 0$, and
	\item $\sup_{C\in\mc{C}_n}\envert[1]{\P\del[1]{\bm{\Sigma}_n^{-1/2}(\bm{\beta}_n^*)\bm{H}_n(\bm{\beta}_n^*)\in  C}-\P\del[1]{\bm{Z}_d+\bm{\theta}_n(\bm{\beta}_n^*) \in  C}}\to 0$.
\end{enumerate}
\end{assumption}
Note that these conditions do not impose independence. 

\begin{theorem}\label{thm:prim}
Suppose Assumption~\ref{as:suffc} is satisfied. Then, the following holds:
\begin{itemize}
	\item If $\bm{\beta}^* \in \mathbf{B}^{(0)}$, then Assumption~\ref{ass:pnormsize} holds for all~$p\in[2,\infty)$.
	\item Assumption \ref{ass:pnormconsistency} holds for all~$p\in[2,\infty)$.
\end{itemize}
\end{theorem}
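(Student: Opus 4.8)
The plan is to deduce Assumptions~\ref{ass:pnormsize} and~\ref{ass:pnormconsistency} from two reductions: first, from the feasible statistic $S_{n,p}^p=\|\hat{\bm\Sigma}_n^{-1/2}\bm H_n\|_p^p$ (suppressing the argument $\bm\beta_n^*$ throughout) to the infeasible statistic $\tilde S_{n,p}^p:=\|\bm\Sigma_n^{-1/2}\bm H_n\|_p^p$ built with the true covariance; and second, from $\tilde S_{n,p}^p$ to its Gaussian analogue $\|\bm Z_d+\bm\theta_n\|_p^p=\sum_{i=1}^d|Z_i+\theta_{i,n}|^p$. Writing $\bm U:=\bm\Sigma_n^{-1/2}\bm H_n$, $\bm V:=\hat{\bm\Sigma}_n^{-1/2}\bm H_n$ and $\bm M:=\hat{\bm\Sigma}_n^{-1/2}\bm\Sigma_n^{1/2}$, so that $\bm V=\bm M\bm U$, all three target statements (the size CLT, the local CLT, and the consistency concentration) will follow by combining a distributional statement for $\tilde S_{n,p}^p$ with a negligibility statement for $S_{n,p}^p-\tilde S_{n,p}^p$ and invoking Slutsky's lemma.

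For the second reduction I would first record, in the pure Gaussian sequence model, the three facts that (i) $(\sum_i|Z_i|^p-d\lambda_p(0))/(\sqrt d\,\sigma_p)\rightsquigarrow\mathsf N_1(0,1)$; (ii) if $d^{-1/2}\sum_i[\lambda_p(\theta_{i,n})-\lambda_p(0)]$ is bounded, then $(\sum_i|Z_i+\theta_{i,n}|^p-\sum_i\lambda_p(\theta_{i,n}))/(\sqrt d\,\sigma_p)\rightsquigarrow\mathsf N_1(0,1)$; and (iii) if $d^{-1/2}\sum_i[\lambda_p(\theta_{i,n})-\lambda_p(0)]\to\infty$, then the same numerator divided by $\sum_i[\lambda_p(\theta_{i,n})-\lambda_p(0)]$ is $o_\P(1)$. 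These are statements about sums of independent summands: (i) and (ii) follow from the Lyapunov central limit theorem once one checks that the variance $\sum_i\mathbb{V}ar(|Z_i+\theta_{i,n}|^p)$ is asymptotic to $d\sigma_p^2$ in the bounded regime, and (iii) from Chebyshev's inequality; this is precisely the content established in the Gaussian model in~\cite{kp2021consistency}. I would then transfer each of these to $\tilde S_{n,p}^p$ using the convex Gaussian approximation in Assumption~\ref{as:suffc}. The key observation is that for every $t\in\R$ the sublevel set $\{\bm x\in\R^d:\|\bm x\|_p^p\le t\}$ is convex, because $\bm x\mapsto\|\bm x\|_p^p$ is convex for $p\ge1$; hence $\P(\tilde S_{n,p}^p\le t)=\P(\bm U\in C)=\P(\bm Z_d+\bm\theta_n\in C)+o(1)$ uniformly in the convex set $C$, which converts each probability statement about $\|\bm Z_d+\bm\theta_n\|_p^p$ into the identical statement for $\tilde S_{n,p}^p$. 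For the two-sided $o_\P(1)$ consistency statement I would argue with the two one-sided convex events separately.

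The first reduction --- controlling the covariance estimation error $S_{n,p}^p-\tilde S_{n,p}^p=\|\bm M\bm U\|_p^p-\|\bm U\|_p^p$ --- is the main obstacle, since $\bm x\mapsto\|\bm x\|_p^p$ is nonlinear and the relevant fluctuation scale ($\sqrt d$, or $D_n:=\sum_i[\lambda_p(\theta_{i,n})-\lambda_p(0)]$) is small relative to the statistic itself (of order $d$). A naive worst-case bound via $|a^p-b^p|\le p\max(a,b)^{p-1}|a-b|$ combined with $\|\cdot\|_p\le\|\cdot\|_2$ loses a factor $d^{(p-1)/p}$ and is too crude for large $p$. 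Instead I would use the coordinatewise inequality $\big||V_i|^p-|U_i|^p\big|\le p|V_i-U_i|(|V_i|^{p-1}+|U_i|^{p-1})$ and Cauchy--Schwarz in $\ell_2$ to obtain
\begin{align*}
\big|S_{n,p}^p-\tilde S_{n,p}^p\big|\le p\,\|\bm V-\bm U\|_2\Big(\sum_{i=1}^d(|V_i|^{p-1}+|U_i|^{p-1})^2\Big)^{1/2},
\end{align*}
and then bound $\|\bm V-\bm U\|_2=\|(\bm M-\bm I)\bm U\|_2\le\|\bm M-\bm I\|_2\|\bm U\|_2$. Here the eigenvalue bounds on $\bm\Sigma_n$ together with Assumption~\ref{as:suffc}(1) give $\|\bm M-\bm I\|_2=O_\P(a_n)$ via standard matrix-square-root perturbation, while the remaining factors $\|\bm U\|_2$ and $\sum_i|U_i|^{2p-2}$ (and their $\bm V$-counterparts) are of the orders $\sqrt{d+\|\bm\theta_n\|_2^2}$ and $\sum_i\lambda_{2p-2}(\theta_{i,n})$, respectively --- orders I would again read off from the convex Gaussian approximation applied to the convex sublevel sets $\{\|\bm x\|_2\le s\}$ and $\{\|\bm x\|_{2p-2}\le s\}$. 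Plugging these orders in, the growth condition $d^{3/4}a_n\to0$ is exactly what renders $\big(S_{n,p}^p-\tilde S_{n,p}^p\big)/(\sqrt d\,\sigma_p)$ --- and, in the consistency regime, $\big(S_{n,p}^p-\tilde S_{n,p}^p\big)/D_n$ --- equal to $o_\P(1)$.

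Combining the two reductions by Slutsky's lemma then yields the size CLT~\eqref{eq:pnormsize} (when $\bm\beta^*\in\mathbf B^{(0)}$, so that $\bm\theta_n=\bm 0_d$), the local CLT~\eqref{eq:pnormlocalhighlevel}, and the consistency statement~\eqref{eq:pnormconshighlevel}, i.e.\ Assumptions~\ref{ass:pnormsize} and~\ref{ass:pnormconsistency} for all $p\in[2,\infty)$. I expect the delicate bookkeeping to be in the first reduction: one must verify that the product bound above is negligible simultaneously across all regimes of $\bm\theta_n$ (dense-small, sparse-large, and intermediate), and that the $\bm V$-dependent moments, which are not directly covered by the Gaussian approximation, can be controlled by transferring the corresponding $\bm U$-moments and absorbing the perturbation $\bm M-\bm I$.
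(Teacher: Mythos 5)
Your architecture is genuinely different from the paper's. The paper does not reduce to the infeasible statistic plus Slutsky; it proves a Gaussian approximation over \emph{all} convex sets for the feasible statistic $\hat{\bm\Sigma}_n^{-1/2}\bm H_n$ (Lemma~\ref{lem:convgauss}) by enlarging and shrinking convex sets by the vector-level error of Lemma~\ref{eq:pnormerror} and paying the Bentkus--Nazarov anti-concentration cost $d^{1/4}\eps_n$ --- this is exactly where $d^{3/4}a_n\to 0$ enters --- and, in the large-noncentrality regime $d^{-1}\sum_i[\lambda_p(\theta_{i,n})-\lambda_p(0)]\to\infty$ where that lemma is unavailable, it switches to a \emph{multiplicative} radius comparison on $p$-norm balls: the perturbation satisfies $\eps_n/r_{n,\delta}=O(d^{1/2-1/p}a_n)=o(1)$, and a Bernoulli/mean-value expansion of $(r_{n,\delta}\pm\eps_n)^p$ shows the induced shift is negligible relative to $D_n:=\sum_i[\lambda_p(\theta_{i,n})-\lambda_p(0)]$, so that no anti-concentration is needed there. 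Your route, transferring one-dimensional distributional facts through convex sublevel sets and adding an additive $o_\P$ perturbation, buys a real simplification: it needs no anti-concentration at all for this theorem, and (once repaired as below) only $d^{1/2}a_n\to 0$ rather than $d^{3/4}a_n\to 0$. Its cost is that you never obtain the estimated-covariance approximation~\eqref{eq:newGaussApprox} itself, which the paper reuses for Theorem~\ref{thm:domtest}, whose acceptance region is an intersection of balls in many different norms and is not a sublevel set of any single statistic.

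The genuine gap sits exactly at the step you dispatch with ``plugging these orders in'': in the consistency regime your additive bound is \emph{not} $o_\P(D_n)$ from the stated orders alone. Your bound has size $a_n\bigl[(d+\|\bm\theta_n\|_2^2)(d+\sum_i g_{2p-2}(\theta_{i,n}))\bigr]^{1/2}$ plus a $\|\bm V-\bm U\|_2^p$ term, and with only the generic inequalities $\|\bm\theta_n\|_2^2\le G_n$ and $\sum_i g_{2p-2}(\theta_{i,n})\le G_n+G_n^{2-2/p}$, where $G_n:=\sum_i g_p(\theta_{i,n})\asymp D_n$ by~\eqref{eq:lambdaprop}, the leading term is of order $a_nG_n^{3/2-1/p}$, i.e.\ $a_nG_n^{1/2-1/p}$ relative to $D_n$. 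Since no growth restriction is placed on the alternatives, $G_n$ may diverge arbitrarily fast, so for every $p>2$ this does not vanish under any rate condition on $a_n$; the extremal configuration is a single spike of height $T$ plus a bulk of coordinates of modulus at most one carrying $\ell_2$-mass $\asymp T^p$. What saves you is the dimension constraint this configuration violates for large $T$: the bulk can carry $\ell_2$-mass at most $d$, and the coordinates in $L:=\{i:|\theta_{i,n}|>1\}$ number at most $d$, whence by H\"older $\sum_{i\in L}\theta_{i,n}^2\le d^{1-2/p}\bigl(\sum_{i\in L}|\theta_{i,n}|^p\bigr)^{2/p}$, together with $\sum_{i\notin L}\theta_{i,n}^2\le d$ and $\sum_{i\in L}|\theta_{i,n}|^{2p-2}\le G_n^{2-2/p}$. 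Carrying these through makes every term $O_\P\bigl(a_nd^{1/2}\bigr)\cdot D_n$ in the regime $D_n/\sqrt d\to\infty$, so your reduction is repairable --- but the repair is a missing idea (dimension-constrained H\"older on the spiky coordinates), not routine bookkeeping, and your claim that $d^{3/4}a_n\to0$ is ``exactly'' what makes the ratio vanish is not what the computation produces. Alternatively, you can adopt the paper's device and compare radii multiplicatively against $r_{n,\delta}\asymp\bigl(\delta D_n+\sum_i\lambda_p(\theta_{i,n})\bigr)^{1/p}$, in which case the error never competes with $D_n$ on the additive scale at all.
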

\begin{remark}
The requirement~$d^{3/4}a_n\to 0$ in Part 1 of Assumption~\ref{as:suffc} can be relaxed to~$d^{3/4\wedge (1-1/p)}a_n\to 0$, which is milder for~$p\in[2,4)$, but we omit a formal statement.
\end{remark}
Theorem~\ref{thm:prim} shows that for Theorem~\ref{thm:pnorm-char_GMM} to apply it suffices to upper bound the estimation error of the covariance matrix, $\|\hat{\bm{\Sigma}}_n(\bm{\beta}_n^*)-\bm{\Sigma}_n(\bm{\beta}_n^*)\|_2$, and a Gaussian approximation to hold.\footnote{We impose the Gaussian approximation to hold over convex sets as we study all~$p$-norm based tests simultaneously. If one is only interested in a single~$p$, it may be enough for the  approximation to hold over~$p$-norm balls with arbitrary centres and radii for this given $p$ --- potentially allowing for~$d$ to grow faster.} As discussed below, for both of these requirements catalogues of sufficient conditions exist in the literature.

To prove Theorem~\ref{thm:prim} we show in Lemma~\ref{lem:convgauss} that when~$d^{-1}\sum_{i=1}^dg_2(\theta_{i,n}(\bm{\beta}_n^*))$ is bounded, a Gaussian approximation property holds even when~$\bm{\Sigma}_n(\bm{\beta}_n^*)$ is replaced by~$\hat{\bm{\Sigma}}_n(\bm{\beta}_n^*)$:
\begin{align}\label{eq:newGaussApprox}
\sup_{C\in\mc{C}_n}\envert[1]{\P\del[1]{\hat{\bm{\Sigma}}_n^{-1/2}(\bm{\beta}_n^*)\bm{H}_n(\bm{\beta}_n^*)\in  C}-\P\del[1]{\bm{Z}_d+\bm{\theta}_n(\bm{\beta}_n^*) \in  C}}\to 0.
\end{align}
This essentially allows us to deduce the desired properties of~$S_{n,p}(\bm{\beta}_n^*)=\|\hat{\bm{\Sigma}}_n^{-1/2}(\bm{\beta}_n^*)\bm{H}_n(\bm{\beta}_n^*)\|_p$ in Assumptions \ref{ass:pnormsize} and \ref{ass:pnormconsistency} from the ones of~$p$-norms of the high-dimensional Gaussian mean shift~$\bm{Z}_d+\bm{\theta}_n(\bm{\beta}_n^*)$ in the ``limit experiment'', the latter of which was analyzed in~\cite{kp2021consistency}.\footnote{This is done when~$d^{-1}\sum_{i=1}^dg_2(\theta_{i,n}(\bm{\beta}_n^*))=d^{-1}\|\bm{\theta}_n(\bm{\beta}_n^*)\|_2^2$ is bounded. In regimes wherein this is not the case, arguments tailored to those settings are used to verify~\eqref{eq:pnormconshighlevel} of Assumption~\ref{ass:pnormconsistency}.} More broadly,~\eqref{eq:newGaussApprox} may be of independent interest beyond the scope of this paper as it allows one to deduce probabilistic properties of the scaled ``central statistic''~$\hat{\bm{\Sigma}}_n^{-1/2}(\bm{\beta}_n^*)\bm{H}_n(\bm{\beta}_n^*)$ from those of~$\bm{Z}_d+\bm{\theta}_n(\bm{\beta}_n^*)$ even when~$d\to\infty$. We next discuss Parts 1~and 2~of  Assumption \ref{as:suffc} in further detail. 

Concerning 1~of Assumption~\ref{as:suffc}, it is well-known that if the vectors~$h_n(\bm{X}_{i,n},\bm{\beta}_n^*)$ are i.i.d.~and appropriately sub-Gaussian, then the empirical covariance matrix~$\hat{\bm{\Sigma}}_{n,\text{emp}}(\bm{\beta}_n^*)$ satisfies~$\|\hat{\bm{\Sigma}}_{n,\text{emp}}(\bm{\beta}_n^*)-\bm{\Sigma}_n(\bm{\beta}_n^*)\|_2=O_\P(\sqrt{d/n})$, cf., e.g.,~\cite{vershynin2012close} or \cite{koltchinskiilounici2017}. If, in addition,~$\bm{\Sigma}_n(\bm{\beta}_n^*)$ possesses further structure, this can be utilised to construct  estimators with even better performance guarantees. For example,~\cite{bickel2008covariance} showed that if~$\bm{\Sigma}_n(\bm{\beta}_n^*)$ is sparse,~$\hat{\bm{\Sigma}}_{n,\text{emp}}(\bm{\beta}_n^*)$ can be thresholded to yield an estimator satisfying~$\|\hat{\bm{\Sigma}}_n(\bm{\beta}_n^*)-\bm{\Sigma}_n(\bm{\beta}_n^*)\|_2=O_\P(\sqrt{\log(d)/n})$. The overviews in~\cite{fan2016overview} and \cite{cai2016estimating} list further estimators utilising structural properties of~$\bm{\Sigma}_n(\bm{\beta}_n^*)$ and their performance guarantees.

One is often not willing to assume that the vectors~$h_n(\bm{X}_{i,n},\bm{\beta}_n^*)$ are sub-Gaussian. In particular, much effort is currently being devoted to the construction of covariance matrix estimators with the sub-Gaussian performance guarantee~$\|\hat{\bm{\Sigma}}_n(\bm{\beta}_n^*)-\bm{\Sigma}_n(\bm{\beta}_n^*)\|_2=O_\P(\sqrt{d/n})$ even when only \emph{four} moments exist. Although the sample covariance matrix does not work well under such heavy tails,\footnote{This is in analogy to the sub-optimality of the sample average under heavy tails in the one-dimensional mean estimation problem, cf.~the overview in~\cite{lugosi2019mean}.} the existence of estimators satisfying~$\|\hat{\bm{\Sigma}}_{n}(\bm{\beta}_n^*)-\bm{\Sigma}_n(\bm{\beta}_n^*)\|_2=O_\P(\sqrt{d/n})$ --- without imposing structure on~$\bm{\Sigma}_n(\bm{\beta}_n^*)$ --- has been established in \cite{abdalla2022covariance} and~\cite{oliveira2022improved}, cf.~also~\cite{mendelson2020robust} for  bounds containing an extra factor~$\sqrt{\log(d)}$. These results are proven under the assumption that the kurtosis of all one-dimensional marginals of~$h_n(\bm{X}_{1,n}, \bm{\beta}_n^*)$ is bounded (in particular only four moments need to exist):
\begin{assumption}\label{ass:boundedkurt}
Denote~$\bm{\mu}_n(\bm{\beta}_n^*)=\E h_n(\bm{X}_{1,n},\bm{\beta}_n^*)$ and let~$\langle \cdot,\cdot\rangle$ be the standard inner product in~$\R^d$. There exists a real number~$L(\bm{\beta}^*)\geq 1$, such that for all~$t\in\R^d$ and every $n \in \N$
\begin{align*}
\del[1]{\E \langle [h_n(\bm{X}_{1,n},\bm{\beta}_n^*)-\bm{\mu}_n(\bm{\beta}_n^*)],t\rangle^4}^{\frac{1}{4}}
\leq  
L(\bm{\beta}^*)\del[1]{\E \langle [h_n(\bm{X}_{1,n},\bm{\beta}_n^*)-\bm{\mu}_n(\bm{\beta}_n^*)],t\rangle^2}^{\frac{1}{2}}.
\end{align*}
\end{assumption}
As pointed out in~\cite{mendelson2020robust}, Assumption~\ref{ass:boundedkurt} is satisfied if, e.g.,~$h_n(\bm{X}_{1,n},\bm{\beta}_n^*)-\bm{\mu}_n(\bm{\beta}_n^*)$ follows a multivariate $t$-distribution with~$\nu>4$ degrees of freedom. This example is one of rather heavy tails as only moments strictly lower than~$\nu$ exist. 

Concerning Part 2~of Assumption \ref{as:suffc}, note that by translation invariance of convex sets this is equivalent to
\begin{align*}
\sup_{C\in\mc{C}_n}\envert[3]{\P\del[2]{n^{-1/2}\sum_{i=1}^n\bm{\Sigma}_n^{-1/2}(\bm{\beta}_n^*)[h_n(\bm{X}_{i,n},\bm{\beta}_n^*)-\bm{\mu}_n(\bm{\beta}_n^*)] \in  C}-\P\del[1]{\bm{Z}_d \in  C}}\to 0;
\end{align*}
that is (scaled) partial sums in~$\R^d$ with mean~$\bm{0}_d$ and identity covariance matrix~$\mathbf{I}_d$ should obey a Gaussian approximation. We show that by~\cite{fang2020large}  this holds under i.i.d.~sampling and Assumption \ref{ass:boundedkurt} (or, slightly weaker, bounded fourth moments) if, up to logarithmic factors in~$n$,~$d/n^{2/5}\to 0$. 

Summarizing the above discussion results in the following statement.
\begin{corollary}\label{cor:fourmoments}
Let~$\bm{\beta}^* \in \mathbf{B}^*$, let~$d\to\infty$, and suppose that the eigenvalues of~$\bm{\Sigma}_n(\bm{\beta}_n^*)$ are (uniformly) bounded away from zero and from above. Furthermore, let~$\bm{X}_{1,n},\hdots,\bm{X}_{n,n}$ be i.i.d.~for each~$n\in\N$. Suppose~$\frac{d}{n^{2/5}}[\log(n)]^{2/5}\to 0$ and one uses the estimator~$\hat{\bm{\Sigma}}_n(\bm{\beta}_n^*)$ from \cite{abdalla2022covariance} (with their $\eta = 0$, $\delta = 1/d$ and $p = 4$) or~\cite{oliveira2022improved} (with their $\eta = 0$, $\alpha = 1/d$ and $p = 4$) applied to the auxiliary sample 
\begin{equation*}
\del[1]{h_{n}(\bm{X}_{2,n},\bm{\beta}_n^*)-h_{n}(\bm{X}_{1,n},\bm{\beta}_n^*)}/\sqrt{2},\hdots,\del[1]{(h_{n}(\bm{X}_{2\lfloor n/2 \rfloor,n},\bm{\beta}_n^*)-h_{n}(\bm{X}_{2 \lfloor n/2 \rfloor-1,n},\bm{\beta}_n^*)}/\sqrt{2}.
\end{equation*} 
If Assumption~\ref{ass:boundedkurt} holds, then Assumption~\ref{as:suffc} is satisfied. 
\end{corollary}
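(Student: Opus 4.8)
The plan is to verify Parts~1 and~2 of Assumption~\ref{as:suffc} separately: the covariance-precision bound via the robust estimators of~\cite{abdalla2022covariance} and~\cite{oliveira2022improved}, and the convex-set Gaussian approximation via~\cite{fang2020large}.

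For Part~1, the first step is to explain why the auxiliary differenced sample is used. The vectors $\bm{W}_{i,n} := (h_n(\bm{X}_{2i,n},\bm{\beta}_n^*)-h_n(\bm{X}_{2i-1,n},\bm{\beta}_n^*))/\sqrt{2}$, $i = 1,\hdots,\lfloor n/2\rfloor$, are i.i.d.\ with mean $\bm{0}_d$ --- automatically, since the $\bm{X}_{i,n}$ are i.i.d., so the unknown and possibly nonzero $\bm{\mu}_n(\bm{\beta}_n^*)$ cancels --- and covariance exactly $\bm{\Sigma}_n(\bm{\beta}_n^*)$. Writing $\langle \bm{W}_{i,n},\bm{t}\rangle = (U-V)/\sqrt2$ with $U,V$ i.i.d.\ copies of $\langle h_n(\bm{X}_{1,n},\bm{\beta}_n^*)-\bm{\mu}_n(\bm{\beta}_n^*),\bm{t}\rangle$, the odd cross-moments vanish and a short computation gives $\E\langle \bm{W}_{i,n},\bm{t}\rangle^4 = \tfrac12\E U^4 + \tfrac32(\E U^2)^2 \le \tfrac{L(\bm{\beta}^*)^4+3}{2}(\E\langle \bm{W}_{i,n},\bm{t}\rangle^2)^2$, so Assumption~\ref{ass:boundedkurt} is inherited by $\bm{W}_{i,n}$ with $L(\bm{\beta}^*)$ replaced by $((L(\bm{\beta}^*)^4+3)/2)^{1/4}$. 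I would then apply the stated estimator to this mean-zero sample of size $\lfloor n/2\rfloor \asymp n$. Since the eigenvalues of $\bm{\Sigma}_n(\bm{\beta}_n^*)$ are bounded above and below, its effective rank is of order $d$; taking the confidence parameter $\delta = 1/d$ (resp.\ $\alpha = 1/d$), the guarantee of~\cite{abdalla2022covariance,oliveira2022improved} reads $\|\hat{\bm{\Sigma}}_n(\bm{\beta}_n^*)-\bm{\Sigma}_n(\bm{\beta}_n^*)\|_2 \le C(\sqrt{d/n}+d/n)$ on an event of probability at least $1-1/d$, with $C$ depending only on $L(\bm{\beta}^*)$ and the eigenvalue bounds. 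As $1/d\to 0$ and $d/n = o(\sqrt{d/n})$ under the growth condition, this yields $\|\hat{\bm{\Sigma}}_n(\bm{\beta}_n^*)-\bm{\Sigma}_n(\bm{\beta}_n^*)\|_2 = O_\P(\sqrt{d/n})$, so I may take $a_n = \sqrt{d/n}$. It then remains to check $d^{3/4}a_n = d^{5/4}n^{-1/2}\to 0$, which is equivalent to $d = o(n^{2/5})$ and is implied by $\frac{d}{n^{2/5}}[\log n]^{2/5}\to 0$.

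For Part~2, I would use the translation invariance of the supremum over convex sets recorded just after Assumption~\ref{as:suffc} to reduce the claim to a Gaussian approximation for the standardized, centered partial sum $n^{-1/2}\sum_{i=1}^n\bm{\Sigma}_n^{-1/2}(\bm{\beta}_n^*)[h_n(\bm{X}_{i,n},\bm{\beta}_n^*)-\bm{\mu}_n(\bm{\beta}_n^*)]$, whose summands are i.i.d., mean-zero, and have covariance $\mathbf{I}_d$. The bounded eigenvalues of $\bm{\Sigma}_n(\bm{\beta}_n^*)$ make $\|\bm{\Sigma}_n^{-1/2}(\bm{\beta}_n^*)\|_2$ bounded, so combined with Assumption~\ref{ass:boundedkurt} the fourth-moment quantity entering the bound of~\cite{fang2020large} is uniformly bounded in $n$. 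Their convex-set error bound is then of order $d^{5/4}(\log n)^{1/2}n^{-1/2}$ up to constants, and raising the condition $d^{5/4}(\log n)^{1/2}n^{-1/2}\to 0$ to the power $4/5$ shows it is exactly the assumed $\frac{d}{n^{2/5}}[\log n]^{2/5}\to 0$.

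I expect the main obstacle to lie in Part~2: matching the exponent of $d$ and the logarithmic factor in the convex-set bound of~\cite{fang2020large} to the target rate, and verifying that the normalized summands meet that paper's precise moment hypotheses (which is where the eigenvalue bounds and Assumption~\ref{ass:boundedkurt} must be combined). The remaining delicate point in Part~1 is to confirm that, with $\delta=\alpha=1/d$, the effective-rank and $\log(1/\delta)$ contributions in~\cite{abdalla2022covariance,oliveira2022improved} do not degrade the rate below $\sqrt{d/n}$; the moment computation and the verification $d^{5/4}n^{-1/2}\to 0$ are otherwise routine.
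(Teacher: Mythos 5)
Your proposal is correct and follows essentially the same route as the paper's proof: the differenced auxiliary sample inherits the bounded-kurtosis condition with exactly the constant the paper computes (your $((L^4+3)/2)^{1/4}$ equals the paper's $[(2L^4+6)/4]^{1/4}$), the robust estimators then give $a_n=\sqrt{d/n}$ with $d^{3/4}a_n\to 0$, and Part~2 is verified via translation invariance and Theorem~2.1 of \cite{fang2020large} with the same rate $d^{5/4}\sqrt{\log(n)}/\sqrt{n}\to 0$. The only imprecision is your phrase that the fourth-moment quantity in \cite{fang2020large} is ``uniformly bounded in $n$'': what is bounded (by $L^4$, via Assumption~\ref{ass:boundedkurt} applied in the directions $\bm{\Sigma}_n^{-1/2}e_j$, where the eigenvalue bounds are in fact not even needed since the standardization is exact) is the coordinate-wise kurtosis, while the aggregated quantity $\sum_{i=1}^n\E\|\xi_i\|_2^4$ is of order $d^2/n$ after a Cauchy--Schwarz step --- which is precisely what produces the $d^{5/4}$ in the rate you correctly state.
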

Under the assumptions of	 Corollary~\ref{cor:fourmoments}, Theorem~\ref{thm:pnorm-char_GMM} provides a characterization of the consistency properties of~$p$-norm based tests for all~$p\in[2,\infty)$. In particular, even though the~$h_{j,n}(\cdot,\bm{\beta}_n^*)$ only possess four moments, the characterization also holds for~$p$-norm based tests with~$p>4$; that is the consistency properties of~$p$-norm based tests are characterized even when the~$p$th moments of the entries of~$h_{n}(\bm{X}_{1,n},\bm{\beta}_n^*)$ need not exist.
\begin{remark}
Many recent tests in high-dimensional testing problems are based on the supremum-norm such that Gaussian approximations over the class of hyperrectangles (contained in~$\mc{C}_n$) suffice. This allows~$d$ to increase faster than~$n^{2/5}$. However, even for this smaller class of sets, there exist distributions with bounded fourth moments such that the Gaussian approximation breaks down if~$d$ grows faster than~$n$, cf.~\cite{zhangwu2017} and~\cite{kock2023moment} for precise formulations. In particular, there exist distributions satisfying the null such that for all~$\alpha\in(0,1)$ the rejection frequency of the supremum-norm based test using Gaussian size~$\alpha$ critical values tends to \emph{one}. Thus, not even supremum-norm based tests control size uniformly over distributions with bounded fourth moments beyond~$d=n$. In Section~\ref{sec:suptest} we characterize the consistency properties of supremum-norm based tests and in Section~\ref{sec:rapidgrowth} we show how a sample split can be used to accommodate any growth rate of~$d$.
\end{remark}
\begin{remark}
As acknowledged by~\cite{abdalla2022covariance} and~\cite{oliveira2022improved}, their estimators of the population covariance matrix are targeting the best possible statistical performance guarantees at the expense of not being practical to implement. ``User-friendly'' estimators, losing logarithmic factors in~$d$ in the performance guarantees, have been proposed in~\cite{ke2019user}.
\end{remark}

\subsection{The size of confidence sets based on~$p$-norms for~$p\in[2,\infty)$}\label{sec:confsets}
Theorem~\ref{thm:pnorm-char_GMM} also has implications for the coverage properties and size of confidence sets on the form
\begin{align*}
\mathsf{CS}_{n,p,1-\alpha_p}:=\cbr[1]{\bm{\beta}\in\mathbf{B}_n:S_{n,p}(\bm{\beta})\leq \kappa_{n,p}(\alpha_p)}\qquad\text{for }p\in[2,\infty)\text{ and }\alpha_p\in(0,1).
\end{align*}
Under the assumptions of Part 1~of Theorem~\ref{thm:pnorm-char_GMM}, these confidence sets have the desired (uniform) asymptotic coverage guarantee. That is, for any~$\bm{\beta}^*\in\mathbf{B}^{(0)}$, it holds that 
\begin{align*}
\lim_{n\to\infty}\P\del[1]{\bm{\beta}_n^*\in \mathsf{CS}_{n,p,1-\alpha_p}}= 1-\alpha_p.
\end{align*}
Furthermore, under the assumptions of Part 3~of the same theorem and~$2\leq p<q<\infty$, one has for any~$\bm{\beta}^*$ (not satisfying the moment conditions) that
\begin{align}\label{eq:confweak}
\lim_{n\to\infty}\P\del[1]{\bm{\beta}_n^*\in \mathsf{CS}_{n,p,1-\alpha_p}}= 0\qquad\text{implies}\qquad \lim_{n\to\infty}\P\del[1]{\bm{\beta}_n^*\in \mathsf{CS}_{n,q,1-\alpha_q}}= 0;
\end{align}
that is any~$\bm{\beta}^*$ that is excluded from~$\mathsf{CS}_{n,p,1-\alpha_p}$ with probability tending to one will also be excluded from~$\mathsf{CS}_{n,q,1-\alpha_q}$ with probability tending to one. Also, if~$\bm{\beta}^*$ satisfies the two conditions in~\eqref{eq:highlevchar}, with the limit inferior being zero, then
\begin{align}\label{eq:confstrong}
\limsup_{n\to\infty}\P\del[1]{\bm{\beta}_n^*\in \mathsf{CS}_{n,p,1-\alpha_p}}\to 1-\alpha_p\qquad\text{yet}\qquad \lim_{n\to\infty}\P\del[1]{\bm{\beta}_n^*\in \mathsf{CS}_{n,q,1-\alpha_q}}= 0.
\end{align}	
In words,~$\bm{\beta}^*$ then does not satisfy the moment conditions, yet it is included in the confidence set~$\mathsf{CS}_{n,p,1-\alpha_p}$ with the same asymptotic probability as those parameters that do, whereas~$\mathsf{CS}_{n,q,1-\alpha_q}$ correctly excludes this~$\bm{\beta}^*$ asymptotically.  Thus,~\eqref{eq:confweak} and~\eqref{eq:confstrong} give a sense in which basing confidence sets on larger~$p\in[2,\infty)$ results in ``smaller'' confidence sets. However, using results in~\cite{pinelis2010asymptotic}, it is not difficult to show that already in the high-dimensional Gaussian location model one can have that~$\E h_{n}(\bm{X}_{1,n},\bm{\beta}_n^*)\neq\bm{0}_{d}$ and
\begin{align*}
0<\lim_{n\to\infty}\P\del[1]{\bm{\beta}_n^*\in \mathsf{CS}_{n,p,1-\alpha_p}}< \lim_{n\to\infty}\P\del[1]{\bm{\beta}_n^*\in\mathsf{CS}_{n,q,1-\alpha_q}}.
\end{align*}
Thus, it is not generally the case that eventually~$\mathsf{CS}_{n,q,1-\alpha_q}\subseteq \mathsf{CS}_{n,p,1-\alpha_p}$.

\subsection{Tests based on the supremum-norm:~$p=\infty$}\label{sec:suptest}
Tests based on the supremum-norm have received considerable attention due to the work on high-dimensional Gaussian approximations over hyperrectangles
\begin{align*}
\mathcal{H}_n=\cbr[3]{\prod_{j=1}^{d}[a_j,b_j] \cap \R :\ -\infty\leq a_j\leq b_j\leq \infty,\ j=1,\hdots,d}.
\end{align*}	
The following theorem exactly characterizes which alternatives tests based on the supremum-norm are consistent against under the same assumptions as those of Theorem~\ref{thm:prim}.
\begin{theorem}\label{thm:supnorm}
Let~$\alpha\in(0,1)$ and suppose Assumption~\ref{as:suffc} is satisfied. Then, the following holds.
\begin{enumerate}
\item[a)] Size control: Suppose~$\bm{\beta}^*\in \mathbf{B}^{(0)}$. A sequence of real numbers~$(\kappa_{n,\infty})_{n\in\N}$ satisfies 
\begin{align}\label{eq:sizesup}
\P\del[1]{S_{n,\infty}(\bm{\beta}_n^*)\geq \kappa_{n,\infty}}\to \alpha
\end{align}
if and only if~$\kappa_{n,\infty}= \kappa_{n,\infty}(\alpha)= \sqrt{2\log(d)}-\frac{\log\log(d)+\log(4\pi)}{2\sqrt{2\log(d)}}-\frac{\log\del[0]{-\log(1-\alpha)/2}+o(1)}{\sqrt{2\log(d)}}$.

\item[b)] Consistency: For~$(\kappa_{n,\infty})_{n\in\N}$ as in~\eqref{eq:sizesup}
\begin{align*}
\P\del[1]{S_{n,\infty}(\bm{\beta}_n^*)\geq \kappa_{n,\infty}}\to 1\quad
\Longleftrightarrow 
\quad 
\sum_{i = 1}^d \frac{\overline{\Phi}\left(\mathfrak{c}_d - |\theta_{i,d}(\bm{\beta}_n^*)|\right)}{\Phi\left(\mathfrak{c}_d - |\theta_{i,d}(\bm{\beta}_n^*)|\right)} \to \infty,
\end{align*}
where~$\mathfrak{c}_d := \sqrt{2\log(d)} - \frac{\log \log(d)}{2 \sqrt{2\log(d)}}$ for~$d\geq 2$ and~$\overline{\Phi} = 1-\Phi$ (and one may set~$\mathfrak{c}_1 := 0$).
\end{enumerate}
\end{theorem}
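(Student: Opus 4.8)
The plan is to reduce the behaviour of $S_{n,\infty}(\bm{\beta}_n^*) = \|\hat{\bm{\Sigma}}_n^{-1/2}(\bm{\beta}_n^*)\bm{H}_n(\bm{\beta}_n^*)\|_\infty$ to that of the supremum-norm of the high-dimensional Gaussian mean shift $\bm{Z}_d + \bm{\theta}_n(\bm{\beta}_n^*)$, and then to leverage the Gaussian-sequence-model analysis of \cite{kp2021consistency}. The enabling observation is that the acceptance region $\{S_{n,\infty} < \kappa\} = [-\kappa, \kappa]^{d}$ is a convex cube, so that the Gaussian approximation over convex sets is exactly the right tool. Concretely, whenever $d^{-1}\sum_{i=1}^d g_2(\theta_{i,n}(\bm{\beta}_n^*))$ is bounded, \eqref{eq:newGaussApprox} (obtained from Assumption~\ref{as:suffc} as in the proof of Theorem~\ref{thm:prim}) yields $\P(S_{n,\infty} \ge \kappa) = \P(\max_{i}|Z_i + \theta_{i,n}(\bm{\beta}_n^*)| \ge \kappa) + o(1)$, uniformly in $\kappa$, with the $Z_i$ i.i.d.\ standard normal. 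This is the regime in which all substantive work takes place; the regime $d^{-1}\|\bm{\theta}_n(\bm{\beta}_n^*)\|_2^2 \to \infty$ is handled by separate, tailored arguments described below.

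For Part a) I would specialise to $\bm{\beta}^* \in \mathbf{B}^{(0)}$, so $\bm{\theta}_n(\bm{\beta}_n^*) = \bm{0}_d$ and the boundedness condition holds trivially. The reduction gives $\P(S_{n,\infty} \ge \kappa) = 1 - (1 - 2\overline{\Phi}(\kappa))^{d} + o(1)$. Requiring the rejection probability to converge to $\alpha$ is then equivalent to $2 d\, \overline{\Phi}(\kappa_{n,\infty}) \to -\log(1-\alpha)$, and inverting this relation via the Mills-ratio asymptotics $\overline{\Phi}(\kappa) \sim \phi(\kappa)/\kappa$ (the classical extreme-value normalisation for the maximum of $d$ independent half-normals) produces exactly the stated expression for $\kappa_{n,\infty}(\alpha)$. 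Since $\kappa \mapsto \P(\max_i |Z_i| \ge \kappa)$ is strictly decreasing and continuous, the characterisation is an ``if and only if''.

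For Part b) I would, in the bounded regime, write the acceptance probability as $\prod_{i=1}^d \P(|Z_i + \theta_{i,n}| < \kappa_{n,\infty}) + o(1)$ and set $p_i := \overline{\Phi}(\kappa_{n,\infty} - |\theta_{i,n}|) + \overline{\Phi}(\kappa_{n,\infty} + |\theta_{i,n}|)$, so that consistency is equivalent to $\sum_{i=1}^d [-\log(1-p_i)] \to \infty$. The crux is to show that this is equivalent to divergence of $L_n := \sum_{i=1}^d \overline{\Phi}(\mathfrak{c}_d - |\theta_{i,n}|)/\Phi(\mathfrak{c}_d - |\theta_{i,n}|)$. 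I would first discard the uniformly negligible contribution $\sum_i \overline{\Phi}(\kappa_{n,\infty} + |\theta_{i,n}|) \le d\,\overline{\Phi}(\kappa_{n,\infty}) = O(1)$, then pass from the size-calibrated threshold $\kappa_{n,\infty}(\alpha)$ to the $\alpha$-free $\mathfrak{c}_d$: since $\kappa_{n,\infty}(\alpha) - \mathfrak{c}_d = O(1/\sqrt{\log d})$ while the relevant arguments are at most of order $\sqrt{\log d}$, the Mills-ratio asymptotics show this replacement alters each summand only by a bounded multiplicative factor, which cannot affect divergence (this is precisely why the consistency condition is independent of $\alpha$). Finally I would split coordinates into a ``moderate'' regime ($p_i$ bounded away from $1$), where $-\log(1-p_i) \asymp p_i \asymp \overline{\Phi}(\mathfrak{c}_d - |\theta_{i,n}|)$ and the latter is $\asymp \overline{\Phi}(\mathfrak{c}_d - |\theta_{i,n}|)/\Phi(\mathfrak{c}_d - |\theta_{i,n}|)$ because $\Phi(\mathfrak{c}_d - |\theta_{i,n}|) \ge 1/2$, and a ``strong'' regime ($|\theta_{i,n}|$ exceeding $\kappa_{n,\infty}$), where both $-\log(1-p_i)$ and the Mills ratio diverge together as the coordinate strengthens; matching the two sums across these regimes yields the equivalence. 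The unbounded regime $d^{-1}\|\bm{\theta}_n\|_2^2 \to \infty$ is dealt with by showing both sides hold: a deterministic accounting argument shows that bounded $L_n$ forces $d^{-1}\|\bm{\theta}_n\|_2^2$ to be bounded (for $L_n$ to stay bounded, all but $O(1)$ coordinates must be small), so $d^{-1}\|\bm{\theta}_n\|_2^2 \to \infty$ implies $L_n \to \infty$; and consistency is established directly, either from a single dominant coordinate via a one-dimensional marginal central limit theorem, or, when no single coordinate dominates, by a cube/exceedance argument.

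The main obstacle I anticipate is the regime $d^{-1}\|\bm{\theta}_n\|_2^2 \to \infty$, where the clean approximation \eqref{eq:newGaussApprox} for the $\hat{\bm{\Sigma}}_n$-based statistic is unavailable because the estimation error $\hat{\bm{\Sigma}}_n^{-1/2} - \bm{\Sigma}_n^{-1/2}$ interacts with the large mean $\bm{\theta}_n$ to produce a shift of order $a_n\|\bm{\theta}_n\|_2$ that need not vanish. Establishing consistency there without the convex-set approximation for the whitened statistic---while simultaneously verifying the purely analytic implication that $d^{-1}\|\bm{\theta}_n\|_2^2 \to \infty$ forces $L_n \to \infty$---is the delicate part. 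A secondary difficulty is the sharp coordinate-wise matching of $-\log(1-p_i)$ with the Mills ratio across the moderate and strong regimes, together with the uniform control of the threshold replacement of $\kappa_{n,\infty}(\alpha)$ by $\mathfrak{c}_d$, both of which must be carried out carefully to preserve the exact divergence criterion.
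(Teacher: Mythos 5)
Your reduction to the Gaussian mean-shift model, your null-case computation in part a), and your Mills-ratio matching in the bounded regime $d^{-1}\|\bm{\theta}_n(\bm{\beta}_n^*)\|_2^2=O(1)$ are all sound and essentially replicate the paper's route (the paper delegates the extreme-value inversion to Lemma A.2 and the divergence criterion to Theorem 3.5 of \cite{kp2021consistency} instead of re-deriving them, which is only a presentational difference). The genuine gap is in part b): your two-regime split, dictated by the convex-set approximation \eqref{eq:newGaussApprox} whose hypothesis is boundedness of $d^{-1}\|\bm{\theta}_n\|_2^2$, leaves an intermediate zone uncovered, namely $d \ll \|\bm{\theta}_n\|_2^2 \lesssim d^{3/2}/\log(d)$. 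Concretely, take $\theta_{i,n}\equiv \sqrt{\log\log d}$: then $d^{-1}\|\bm{\theta}_n\|_2^2\to\infty$ and your $L_n\to\infty$, so your plan requires a \emph{direct} proof of consistency there; but $\|\bm{\theta}_n\|_\infty=\sqrt{\log\log d}=o(\kappa_{n,\infty})$, so no single coordinate is near the threshold (the one-dimensional marginal CLT route is void), and the mean term cannot dominate the $O_\P(\sqrt{\log d})$ noise, so mean-domination via $\|\cdot\|_\infty\geq \|\cdot\|_2/\sqrt{d}$ is also void --- that argument only works once $\|\bm{\theta}_n\|_2^2\log(d)/d^{3/2}\to\infty$, where $\|\bm{\theta}_n\|_2/\sqrt{d}\geq c\, d^{1/4}/\sqrt{\log d}\gg \kappa_{n,\infty}$, which is exactly when the paper deploys it. Against such semi-dense alternatives, consistency is a genuinely extreme-value phenomenon requiring distributional control of the whitened statistic jointly over the cube $\{\|\cdot\|_\infty\leq \kappa_{n,\infty}\}$; your ``cube/exceedance argument'' presupposes precisely this control, which none of your stated tools delivers.

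What is missing is the paper's key Lemma~\ref{lem:hypergauss}: a Gaussian approximation over hyperrectangles with \emph{estimated} covariance matrix that remains valid whenever $\|\bm{\theta}_n\|_2^2\log(d)/d^{3/2}$ is bounded --- a far larger region than $\|\bm{\theta}_n\|_2^2\lesssim d$. The enlargement comes from replacing Bentkus's anti-concentration over convex sets (penalty $d^{1/4}\eps_n$, which is what forces the bounded-$d^{-1}\|\bm{\theta}_n\|_2^2$ hypothesis in Lemma~\ref{lem:convgauss}) by Nazarov's anti-concentration over rectangles (penalty $\sqrt{\log d}\,\eps_n$), combined with the error bound $\eps_n=O_\P\bigl((d^{1/2}\vee\|\bm{\theta}_n\|_2)a_n\bigr)$ from Lemma~\ref{eq:pnormerror} with $p=2$. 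With that lemma the regime split sits at $\|\bm{\theta}_n\|_2^2\asymp d^{3/2}/\log(d)$ rather than at $\asymp d$, and above the cutoff the crude triangle-inequality argument closes both directions (for the converse, $\|\bm{\theta}_n\|_\infty\geq d^{-1/2}\|\bm{\theta}_n\|_2$ diverges faster than $\mathfrak{c}_d$, so a single summand sends $L_n\to\infty$). Your deterministic observation that bounded $L_n$ forces bounded $d^{-1}\|\bm{\theta}_n\|_2^2$ is correct but only settles the easy half in your unbounded regime; without proving (or importing) a statement of the strength of Lemma~\ref{lem:hypergauss}, the consistency half fails on the displayed class of alternatives, and the proposal cannot be completed as written.
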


Well-known \emph{sufficient}, yet not necessary, conditions for consistency of the supremum-norm based test, such as~$\|\bm{\theta}_n(\bm{\beta}_n^*)\|_\infty-\sqrt{2\log(d)}\to\infty$, are trivial special cases of Theorem~\ref{thm:supnorm}. In particular, the $\infty$-norm based test is consistent against~$(\sqrt{3\log(d)},0,\hdots,0)$ whereas by~\eqref{eq:conschargp} of Theorem~\ref{thm:pnorm-char_GMM} no~$p$-norm based test with~$p\in [2,\infty)$ is consistent.

Concerning dense alternatives of the form~$\bm{\theta}_n(\bm{\beta}_n^*)=(c_n,\hdots,c_n)$, it follows that the supremum-norm based test is consistent if and only if~$\sqrt{\log(d)}|c_n|\to\infty$, cf.~Appendix A.4 of~\cite{kp2021consistency}. Thus, the supremum-norm based test is \emph{not} consistent if~$c_n=1/\sqrt{\log(d)}$ whereas by~\eqref{eq:conschargp} of Theorem~\ref{thm:pnorm-char_GMM} every~$p$-norm based test with~$p\in [2,\infty)$ \emph{is} consistent. Hence, i) the monotonicity in~\eqref{eq:dom} does not extend to~$q=\infty$ and ii) no single~$p$-norm based test dominates all others in terms of its consistency properties.

The exact characterization of consistency in Theorem~\ref{thm:supnorm} sheds new light on the frequently used supremum-norm based test for testing moment equalities. It is more informative than the currently available power analysis of the supremum-norm based test through a maximin-lens. The latter provides the sufficient condition for consistency $\|\bm{\theta}_n(\bm{\beta}_n^*)\|_\infty-\sqrt{2\log(d)}\to\infty$, but is silent about the alternatives not satisfying this condition that the supremum-norm based test is nevertheless consistent against. For example, if~$\bm{\theta}_n(\bm{\beta}_n^*)=(1,\hdots,1)'$, then by the observation on dense alternatives in the previous paragraph the supremum-norm based test \emph{is} consistent, yet~$\|\bm{\theta}_n(\bm{\beta}_n^*)\|_\infty-\sqrt{2\log(d)}\not\to\infty$. 

Concerning confidence sets, Theorem \ref{thm:supnorm} shows that for a given~$\alpha\in(0,1)$,~$\kappa_{n,\infty}(\alpha)$ as in~\eqref{eq:sizesup}, and~$\bm{\beta}^* \in\mathbf{B}^{(0)}$, the  sets $\mathsf{CS}_{n,\infty,1-\alpha}:=\cbr[1]{\bm{\beta}\in\mathbf{B}_n:S_{n,\infty}(\bm{\beta})\leq \kappa_{n,\infty}(\alpha)}$ satisfy $\lim_{n\to\infty}\P\del[1]{\bm{\beta}_n^*\in \mathsf{CS}_{n,\infty,1-\alpha}}= 1-\alpha$.

\begin{remark}
To prove Theorem~\ref{thm:supnorm}, Lemma \ref{lem:hypergauss} establishes the following variant of~\eqref{eq:newGaussApprox} which, crucially for our characterization of consistency, holds for~$\bm{\theta}_n(\bm{\beta}_n^*)$ further away from the origin than required for~\eqref{eq:newGaussApprox}:
\begin{align*}%\label{eq:newGaussApproxhyper}
\sup_{H\in\mc{H}_n}\envert[1]{\P\del[1]{\hat{\bm{\Sigma}}_n^{-1/2}(\bm{\beta}_n^*)\bm{H}_n(\bm{\beta}_n^*)\in  H}-\P\del[1]{\bm{Z}_d+\bm{\theta}_n(\bm{\beta}_n^*) \in  H}}\to 0.
\end{align*}	
\end{remark}

\section{A test dominating all~$p$-norm based test for~$p\in[2,\infty]$}\label{sec:domtest}
The discussion following Theorem~\ref{thm:supnorm} revealed that no single~$p$-norm based test is ``best'' in terms of the amount of alternatives it is consistent against. This makes choosing a~$p$ to base a test on very difficult. Even if one \emph{knows} that the deviation from the null~$\E h_n(\bm{X}_{1,n},\bm{\beta}_n^*)$ is sparse, which \emph{could} suggest using a test based on the supremum-norm, sparsity need not be inherited by the noncentrality parameter 	
\begin{align*}
\bm{\theta}_n(\bm{\beta}_n^*)=\sqrt{n}\bm{\Sigma}^{-1/2}_{n}(\bm{\beta}_n^*)\E h_{n}(\bm{X}_{1,n},\bm{\beta}_n^*)
\end{align*}
due to the presence of~$\bm{\Sigma}^{-1/2}_{n}(\bm{\beta}_n^*)$. Therefore,  one often does not know anything about the structure of~$\bm{\theta}_n(\bm{\beta}_n^*)$, which is the actual quantity entering the characterization of the power of~$p$-norm based tests in~Theorems~\ref{thm:pnorm-char_GMM} and~\ref{thm:supnorm}. This underscores the importance of having a test that is powerful irrespective of the unknown structure of~$\bm{\theta}_n(\bm{\beta}_n^*)$.

One could hope that combining tests based on the two ``endpoints''~$p=2$ and~$p=\infty$, e.g.,~by the power enhancement principle of \cite{fan2015power}, of the interval~$[2,\infty]$ results in a test that is consistent whenever \emph{some}~$p$-norm based test for~$p\in[2,\infty]$ is consistent. However, Corollary 4.2 in \cite{kp2021consistency} shows that already in the Gaussian sequence model this is not the case: There exist (semi-sparse) alternatives against which tests based on~$p=2$ and~$p=\infty$ are inconsistent, but against which a test based on any single~$p\in(2,\infty)$ \emph{is} consistent. To construct a test that is consistent whenever a test based on some~$p\in[2,\infty]$ is consistent, we use a construction similar in spirit to the one employed in the Gaussian sequence model in~\cite{kp2021consistency}. A crucial difference is that the present construction explicitly includes the supremum-norm based test rather than using a~$p_n$-norm based test with sufficiently quickly increasing~$p_n$ to dominate the former. The present construction is more convenient as it does not require the analysis of tests for moving~$p_n$, which allows us to reduce the number of approximation steps in the analysis.

\subsection{The dominant test}\label{sec:domtestconstruct}
Let~$\alpha\in(0,1)$ and~$\alpha_2,\alpha_I$ and~$\alpha_\infty$ be non-negative with~$\alpha_2+\alpha_I+\alpha_\infty=\alpha$ and let~$\kappa_{n,2}$ and~$\kappa_{n,\infty}$ satisfy
\begin{align*}
\mathbb{P}\left( \|\bm{Z}_{d} \|_{2} \geq \kappa_{n,2} \right)=\alpha_2\quad\text{and}\quad \mathbb{P}\left( \|\bm{Z}_{d} \|_{\infty} \geq \kappa_{n,\infty} \right)=\alpha_\infty.
\end{align*}
Furthermore, let~$p_n$ be a strictly increasing and unbounded sequence in~$(2, \infty)$ and let~$m_n$ be a non-decreasing and unbounded sequence in~$\N$. Fix an array
\begin{align*}%\label{eq:array}
\mathcal{A}  = \left\{\alpha_{n,p_j} \in (0, 1): n \in \N,~ j = 1, \hdots, m_n\right\}	
\end{align*}
such that
\begin{equation*}%\label{eqn:Aarray}
\sum_{j = 1}^{m_n} \alpha_{n,p_j} = \alpha_I \text{ for every } n\in \N \quad\text{and}\quad \lim_{n \to \infty} \alpha_{n,p_j}  > 0 \text{ for every } j \in \mathbb{N},
\end{equation*}
where the conditions implicitly impose the existence of the respective limits. For every~$n \in \N$ and every~$j = 1, \hdots, m_n$, choose~$\kappa_{n,p_j}>0$ and~$c_n\in(0,1]$ such that 
\begin{equation}\label{eq:kappachoice}
\mathbb{P}\left( \|\bm{Z}_{d} \|_{p_j} \geq \kappa_{n,p_j} \right) = \alpha_{n,p_j} \qquad \text{and}\qquad 
\P\del[3]{\max_{p \in \mathfrak{P}_n} \kappa_{n,p}^{-1}
\| \bm{Z}_{d} \|_{p} \geq c_n}
=
\alpha,
\end{equation}
where~$\mathfrak{P}_n=\cbr[0]{2,p_1,\hdots,p_{m_n},\infty}$.
Define the test~$\psi_n(\bm{\beta}_n^*)$ as
\begin{align}\label{eq:psidef}
\psi_n(\bm{\beta}_n^*):=\mathds{1}\cbr[2]{\max_{p\in\mathfrak{P}_n} \kappa_{n,p}^{-1}
S_{n,p}(\bm{\beta}_n^*)\geq c_n}.
\end{align}
Thus,~$\psi_n(\bm{\beta}_n^*)$ rejects~$H_{0,n}:\E h_{n}(\bm{X}_{1,n},\bm{\beta}_n^*)=\bm{0}_{d}$ if for any~$p$-norm based tests with~$p\in\mathfrak{P}_n$ it is the case that~$S_{n,p}(\bm{\beta}_n^*)$ exceeds~$c_n\kappa_{n,p}$. The tests based on~$p=2$ and~$p=\infty$ are included in the construction of~$\psi_n(\bm{\beta}_n^*)$ to make it powerful against dense and sparse alternatives, respectively. The tests based on~$p\in\cbr[0]{p_1,\hdots,p_{m_n}}$ are included to cover the (semi-sparse) alternatives that neither tests based on the 2- or supremum-norm are consistent against. The quantities~$\alpha_2,~\alpha_I,$ and~$\alpha_\infty$ control how much size is distributed to the components of~$\psi_n(\bm{\beta}_n^*)$ targeting dense, semi-sparse and sparse alternatives, respectively while the array~$\mc{A}$ distributes the size~$\alpha_I$ to the~$p$-norm based tests with~$p\in\cbr[0]{p_1,\hdots,p_{m_n}}$. The role of~$c_n$ is to ensure that~$\psi_n(\bm{\beta}_n^*)$ has asymptotic size~$\alpha$, cf. Part~1~of Theorem~\ref{thm:domtest} below. Choosing~$c_n=1$ will generally result in~$\psi_n(\bm{\beta}_n^*)$ having asymptotic size not larger but potentially smaller than~$\alpha$, i.e.,~a conservative asymptotic level~$\alpha$ test.
\begin{theorem}\label{thm:domtest}
Let~$\alpha\in(0,1)$ and suppose Assumption~\ref{as:suffc} is satisfied. Then, the following holds.
\begin{enumerate}
\item Size control: If~$\bm{\beta}^*\in \mathbf{B}^{(0)}$, then~$\lim_{n\to\infty}\E \psi_n(\bm{\beta}_n^*)= \alpha$.
\item Dominance: For any~$p\in[2,\infty]$,~$\bar{\alpha}_p\in(0,1)$, and~$(\bar{\kappa}_{n,p}(\bar{\alpha}_p))_{n\in\N}$ given by~\eqref{eq:sizepnorm} in case~$p\in[2,\infty)$ and by~\eqref{eq:sizesup} in case~$p=\infty$ it holds that
\begin{align*}
\P\del[1]{S_{n,p}(\bm{\beta}_n^*)\geq \bar{\kappa}_{n,p}(\bar{\alpha}_p)}\to 1\qquad \text{implies} \qquad \E\psi_n(\bm{\beta}_n^*)\to 1.
\end{align*}
\end{enumerate}

\end{theorem}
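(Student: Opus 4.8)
The plan is to handle the two assertions separately, in each case transferring the relevant event for $\hat{\bm{\Sigma}}_n^{-1/2}(\bm{\beta}_n^*)\bm{H}_n(\bm{\beta}_n^*)$ to the Gaussian vector $\bm{Z}_d$ through the convex-set approximation, and then invoking Theorems~\ref{thm:pnorm-char_GMM} and~\ref{thm:supnorm} together with the monotonicity~\eqref{eq:dom}. For size control, since $\bm{\beta}^*\in\mathbf{B}^{(0)}$ we have $\bm{\theta}_n(\bm{\beta}_n^*)=\bm{0}_d$, so $d^{-1}\sum_{i=1}^d g_2(\theta_{i,n}(\bm{\beta}_n^*))=0$ is bounded and Lemma~\ref{lem:convgauss}, i.e.~\eqref{eq:newGaussApprox}, applies with $\bm{\theta}_n=\bm{0}_d$. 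The crucial structural observation is that the acceptance region of $\psi_n(\bm{\beta}_n^*)$,
$$B_n:=\cbr[2]{\bm{x}\in\R^d:\max_{p\in\mathfrak{P}_n}\kappa_{n,p}^{-1}\|\bm{x}\|_p<c_n}=\bigcap_{p\in\mathfrak{P}_n}\cbr[1]{\bm{x}\in\R^d:\|\bm{x}\|_p<c_n\kappa_{n,p}},$$
is a finite intersection of open $p$-norm balls, hence a convex Borel set, so $B_n\in\mc{C}_n$ (if $\alpha_\infty=0$ the convention $\kappa_{n,\infty}=\infty$ simply turns the corresponding ball into $\R^d$). Writing $\E\psi_n(\bm{\beta}_n^*)=1-\P\del[1]{\hat{\bm{\Sigma}}_n^{-1/2}(\bm{\beta}_n^*)\bm{H}_n(\bm{\beta}_n^*)\in B_n}$ and applying~\eqref{eq:newGaussApprox} to $B_n$ gives $\P\del[1]{\hat{\bm{\Sigma}}_n^{-1/2}(\bm{\beta}_n^*)\bm{H}_n(\bm{\beta}_n^*)\in B_n}=\P(\bm{Z}_d\in B_n)+o(1)$. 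As $\bm{Z}_d$ has a density, the boundary of $B_n$ is $\bm{Z}_d$-null, so the definition of $c_n$ in~\eqref{eq:kappachoice} yields $\P(\bm{Z}_d\in B_n)=1-\P\del[1]{\max_{p\in\mathfrak{P}_n}\kappa_{n,p}^{-1}\|\bm{Z}_d\|_p\geq c_n}=1-\alpha$, whence $\E\psi_n(\bm{\beta}_n^*)\to\alpha$.

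For dominance, suppose $\P\del[1]{S_{n,p}(\bm{\beta}_n^*)\geq\bar{\kappa}_{n,p}(\bar{\alpha}_p)}\to 1$ for some $p\in[2,\infty]$ and $\bar{\alpha}_p\in(0,1)$. Because Assumption~\ref{as:suffc} implies Assumptions~\ref{ass:pnormsize}--\ref{ass:pnormconsistency} via Theorem~\ref{thm:prim}, Part~3 of Theorem~\ref{thm:pnorm-char_GMM} (if $p<\infty$) and Part~b) of Theorem~\ref{thm:supnorm} (if $p=\infty$) apply, and they characterize consistency through a condition on $\bm{\theta}_n(\bm{\beta}_n^*)$ that is \emph{independent of the nominal size} $\bar{\alpha}_p$. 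I would use this to route the consistency onto a norm present in $\mathfrak{P}_n$ whose critical value also appears in $\psi_n$. If $p=\infty$, set $r=\infty\in\mathfrak{P}_n$; the size-free characterization in Theorem~\ref{thm:supnorm} gives consistency of the $\infty$-norm test at the critical value $\kappa_{n,\infty}$ built into $\psi_n$ (of asymptotic size $\alpha_\infty$). If $p\in[2,\infty)$, fix an index $j_0$ with $p_{j_0}>p$, which exists since $(p_n)$ is unbounded, and set $r=p_{j_0}$; the monotonicity~\eqref{eq:dom} transfers consistency from the $p$-norm test to the $r$-norm test at the critical value $\kappa_{n,r}$ built into $\psi_n$. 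In either case $\kappa_{n,r}$ has, by the null Gaussian approximation, an asymptotic size $\alpha_r^*\in(0,1)$ (namely $\alpha_\infty$, respectively $\lim_n\alpha_{n,p_{j_0}}>0$), so it is a legitimate critical value of the form~\eqref{eq:sizepnorm}/\eqref{eq:sizesup}, and we obtain $\P\del[1]{S_{n,r}(\bm{\beta}_n^*)\geq\kappa_{n,r}}\to 1$.

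The remaining step absorbs the scaling $c_n\in(0,1]$ and the maximum defining $\psi_n$. Since $c_n\leq 1$ we have $c_n\kappa_{n,r}\leq\kappa_{n,r}$, hence $\P\del[1]{S_{n,r}(\bm{\beta}_n^*)\geq c_n\kappa_{n,r}}\geq\P\del[1]{S_{n,r}(\bm{\beta}_n^*)\geq\kappa_{n,r}}\to 1$. As $r\in\mathfrak{P}_n$ for all large $n$ (immediate for $r=\infty$, and once $m_n\geq j_0$ for $r=p_{j_0}$), the event $\{\kappa_{n,r}^{-1}S_{n,r}(\bm{\beta}_n^*)\geq c_n\}$ is contained in $\{\max_{q\in\mathfrak{P}_n}\kappa_{n,q}^{-1}S_{n,q}(\bm{\beta}_n^*)\geq c_n\}$, so
$$\E\psi_n(\bm{\beta}_n^*)\geq\P\del[1]{S_{n,r}(\bm{\beta}_n^*)\geq c_n\kappa_{n,r}}\to 1,$$
giving $\E\psi_n(\bm{\beta}_n^*)\to 1$.

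The mechanical parts — the bound $c_n\kappa_{n,r}\leq\kappa_{n,r}$ and the set containment for the maximum — are routine. The two genuine difficulties are, first, recognizing in the size part that the acceptance region is convex but \emph{not} a hyperrectangle, which is exactly why the convex-set approximation~\eqref{eq:newGaussApprox} (rather than a hyperrectangle bound) is indispensable; and second, in the dominance part, accommodating an \emph{arbitrary} exponent $p\in[2,\infty]$, in particular $p\in(2,\infty)$ that need not equal any $p_j$. The latter is precisely what the monotonicity~\eqref{eq:dom} resolves, by allowing $p$ to be replaced by a larger exponent $p_{j_0}$ present in $\mathfrak{P}_n$, combined with the fact that consistency is governed by $\bm{\theta}_n(\bm{\beta}_n^*)$ alone and not by the nominal size. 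For $p=\infty$, where no such monotone replacement is available because~\eqref{eq:dom} fails at the endpoint $\infty$, the argument hinges on the supremum-norm test being included \emph{explicitly} in $\mathfrak{P}_n$ with a positive size share $\alpha_\infty$ — matching the paper's emphasis that the construction cannot rely merely on a rapidly increasing $p_n$ to mimic $p=\infty$.
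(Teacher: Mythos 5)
Your proposal is correct and takes essentially the same route as the paper's own proof: for size control you exploit convexity of the acceptance region $\bigcap_{p\in\mathfrak{P}_n}\mathbb{B}_p(c_n\kappa_{n,p})$ together with Lemma~\ref{lem:convgauss} and the definition of $c_n$ in~\eqref{eq:kappachoice}, and for dominance you use the sandwich $\psi_n(\bm{\beta}_n^*)\geq\mathds{1}\{S_{n,r}(\bm{\beta}_n^*)\geq c_n\kappa_{n,r}\}\geq\mathds{1}\{S_{n,r}(\bm{\beta}_n^*)\geq\kappa_{n,r}\}$ with $r=p_{j_0}$ (the paper's $p_J$) via the monotonicity~\eqref{eq:dom} when $p\in[2,\infty)$, and $r=\infty$ via the explicitly included supremum-norm component and Theorem~\ref{thm:supnorm}, exactly as the paper does. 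Your additional remarks (strict inequality $p_{j_0}>p$, the boundary being $\bm{Z}_d$-null, and the reliance on $\alpha_\infty>0$ at the endpoint) are harmless refinements of the same argument.
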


Part 1~of Theorem~\ref{thm:domtest} shows that the critical values~$\kappa_{n,p}$,~$p\in\mathfrak{P}_n$, and the~$c_n$ that guarantee exact size~$\alpha\in(0,1)$ under Gaussianity in~\eqref{eq:kappachoice} also guarantee that~$\psi_n(\bm{\beta}_n^*)$ has asymptotic size~$\alpha$. This may not be obvious ex-ante as the construction of~$\psi_n(\bm{\beta}_n^*)$ is based on combining~$|\mathfrak{P}_n|=(m_n+2)\to\infty$ tests. However, having established~\eqref{eq:newGaussApprox}, this becomes a trivial consequence of writing non-rejection of~$\psi_n(\bm{\beta}_n^*)$ as the event
\begin{align*}
\cbr[3]{\hat{\bm{\Sigma}}_n^{-1/2}(\bm{\beta}_n^*)\bm{H}_n(\bm{\beta}_n^*)\in\bigcap_{p\in\mathfrak{P}_n}\mathbb{B}_p(c_n\kappa_{n,p})}
\end{align*}
where for~$r\in(0,\infty)$ we set~$\mathbb{B}_p(r)=\cbr[0]{\bm{x}\in\R^d:\|\bm{x}\|_p<r}$ and observe that $\bigcap_{p\in\mathfrak{P}_n}\mathbb{B}_p(c_n\kappa_{n,p})$ is convex.

Part 2~shows that~$\psi_n(\bm{\beta}_n^*)$ is consistent against \emph{every} deviation from the null hypothesis that \emph{some}~$p$-norm based test with~$p\in[2,\infty]$ is consistent against (irrespective of the asymptotic size of the latter, as long as it is a number in $(0, 1)$). In particular,~$\psi_n(\bm{\beta}_n^*)$ is consistent whenever tests based on the~$2$- or~$\infty$-norm with asymptotic size in $(0, 1)$ are consistent. Recall that the (dense and sparse) alternatives that tests based on these norms are consistent against are the ones targeted by current applications of the power enhancement principle of~\cite{fan2015power}, cf.~also~\cite{kp}. In addition,~$\psi_n(\bm{\beta}_n^*)$ is consistent against further alternatives, as it is also consistent as soon as there exists a~$p\in(2,\infty)$ such that a test based on this~$p$ is consistent (a property that tests based on the power enhancement principle that combine $2$- and $\infty$-norm based tests do not share, cf.~\cite{kp2021consistency}).

Furthermore, as a consequence of Theorems~\ref{thm:pnorm-char_GMM} and~\ref{thm:supnorm}, it is not possible to choose a single best~$p$-norm to base a test on: Our results show that no matter which \emph{single}~$p\in[2,\infty]$ one chooses, there exists another~$p$-norm based test that is consistent against alternatives that the test based on the chosen~$p$ is inconsistent against. Theorem~\ref{thm:domtest} suggests a way to overcome this impossibility of choosing a single ``best'' norm to base a tests on as~$\psi_n(\bm{\beta}_n^*)$ simultaneously dominates all~$p$-norm based tests for~$p\in[2,\infty]$.

Theorem~\ref{thm:domtest} is silent about the power properties of~$\psi_n(\bm{\beta}_n^*)$ when no~$p$-norm based test is consistent. Thus, there could, in principle, exist alternatives against which a~$p$-norm based test has substantially higher asymptotic power than~$\psi_n(\bm{\beta}_n^*)$ without the former converging to one.  The following theorem shows that this possibility can be ruled out by allocating sufficient asymptotic size~$\alpha_p$ to a given~$p$-norm in the construction of~$\psi_n(\bm{\beta}_n^*)$.
\begin{theorem}\label{thm:boundedloss}
Let~$\alpha\in(0,1)$ and suppose Assumption~\ref{as:suffc} is satisfied. In the context of the construction of~$\psi_n(\bm{\beta}_n^*)$ fix~$p\in\cup_{n\in\N}\mathfrak{P}_n$ and define~$\alpha_{p}:=\lim_{n \to \infty}\alpha_{n,p}\in(0,1)$ if~$p\in(2,\infty)$.\footnote{Note that~$\alpha_2$ and~$\alpha_\infty$ have been defined already in the construction of~$\psi_n(\bm{\beta}_n^*)$.}
\begin{enumerate}
\item If~$p\in[2,\infty)$, then for any sequence~$(\bar{\kappa}_{n,p}(\alpha))_{n\in\N}$ as in~\eqref{eq:sizepnorm}  
\begin{align*}%\label{eq:powerfunctiongen}
\limsup_{n\to\infty}\sbr[2]{\P\del[1]{S_{n,p}(\bm{\beta}_n^*)\geq \bar{\kappa}_{n,p}(\alpha)}-\E \psi_n(\bm{\beta}_n^*)}
\leq
\frac{\Phi^{-1}(1-\alpha_{p}) - \Phi^{-1}(1-\alpha)}{\sqrt{2\pi}}.
\end{align*}
\item If~$p=\infty$, then for any sequence~$(\bar{\kappa}_{n,\infty}(\alpha))_{n\in\N}$ as in~\eqref{eq:sizesup} %
\begin{align*}
\limsup_{n\to\infty}\sbr[2]{\P\del[1]{S_{n,\infty}(\bm{\beta}_n^*)\geq \bar{\kappa}_{n,\infty}(\alpha)}-\E \psi_n(\bm{\beta}_n^*)}
\leq
f(\alpha)-f(\alpha_\infty),
\end{align*}
where~$f:(0,1)\to \R$ is defined via~$x\mapsto \log\del[0]{-\log(1-x)/2}$.	
\end{enumerate}

\end{theorem}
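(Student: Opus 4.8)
The plan is to bound, in each of the two cases, the power of $\psi_n(\bm{\beta}_n^*)$ from below by the power of its single $p$-norm (respectively $\infty$-norm) component, and then to reduce matters to comparing one and the same test run at two significance levels. Concretely, for the fixed $p\in\cup_n\mathfrak{P}_n$ we have $p\in\mathfrak{P}_n$ for all large $n$ (as $m_n$ is non-decreasing), and since $c_n\leq 1$ the rejection region of $\psi_n(\bm{\beta}_n^*)$ contains $\{S_{n,p}(\bm{\beta}_n^*)\geq \kappa_{n,p}\}$, so that $\E\psi_n(\bm{\beta}_n^*)\geq \P(S_{n,p}(\bm{\beta}_n^*)\geq \kappa_{n,p})$ and hence
\[
\P(S_{n,p}(\bm{\beta}_n^*)\geq \bar{\kappa}_{n,p}(\alpha))-\E\psi_n(\bm{\beta}_n^*)\leq \P(S_{n,p}(\bm{\beta}_n^*)\geq \bar{\kappa}_{n,p}(\alpha))-\P(S_{n,p}(\bm{\beta}_n^*)\geq \kappa_{n,p}).
\]
By construction $\kappa_{n,p}$ is the exact Gaussian level-$\alpha_{n,p}$ critical value, and since $\alpha_{n,p}\to\alpha_p\in(0,1)$ the Gaussian central limit theorem for $\|\bm{Z}_d\|_p$ shows it is of the form~\eqref{eq:sizepnorm} (respectively~\eqref{eq:sizesup}) with $\alpha$ replaced by $\alpha_p$ (respectively $\alpha_\infty$). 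Thus the right-hand side is the difference in power, against the \emph{same} $\bm{\theta}_n(\bm{\beta}_n^*)$, of the $p$-norm based test at level $\alpha$ versus at level $\alpha_p$.

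For Part~1 I would pass to a subsequence attaining the $\limsup$ and, by a further extraction, along which $d^{-1/2}\sum_{i=1}^d[\lambda_p(\theta_{i,n}(\bm{\beta}_n^*))-\lambda_p(0)]$ converges in $[0,\infty]$. If the limit is $+\infty$, Part~3 of Theorem~\ref{thm:pnorm-char_GMM} makes both powers tend to one (consistency is insensitive to the level as long as it lies in $(0,1)$), so the increment tends to $0$, which is below the nonnegative bound. If the limit is $c\in[0,\infty)$, then Part~2 of Theorem~\ref{thm:pnorm-char_GMM}, whose hypotheses hold under Assumption~\ref{as:suffc} by Theorem~\ref{thm:prim}, applies to both critical-value sequences and yields convergence of the increment to $\Phi(\Phi^{-1}(1-\alpha_p)-c/\sigma_p)-\Phi(\Phi^{-1}(1-\alpha)-c/\sigma_p)$. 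The mean value theorem together with $\sup\phi=1/\sqrt{2\pi}$ bounds this by $(\Phi^{-1}(1-\alpha_p)-\Phi^{-1}(1-\alpha))/\sqrt{2\pi}$, with $\alpha_p\leq\alpha$ ensuring both the increment and the bound are nonnegative.

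For Part~2 no closed-form power expression is available, so I would work directly in the Gaussian limit. By the hyperrectangle Gaussian approximation of Lemma~\ref{lem:hypergauss}, up to an $o(1)$ term the increment equals $F_n(\kappa_{n,\infty})-F_n(\bar{\kappa}_{n,\infty}(\alpha))$, where $F_n(r)=\prod_{i=1}^d\P(|Z_i+\theta_{i,n}(\bm{\beta}_n^*)|<r)$ is the cdf of $\|\bm{Z}_d+\bm{\theta}_n(\bm{\beta}_n^*)\|_\infty$. Writing $r=a_d+t/\sqrt{2\log d}$ with $a_d=\sqrt{2\log d}-(\log\log d+\log(4\pi))/(2\sqrt{2\log d})$ the centering in~\eqref{eq:sizesup}, the levels $\alpha$ and $\alpha_\infty$ correspond to $t_\alpha=-f(\alpha)$ and $t_\infty=-f(\alpha_\infty)$, so that $f(\alpha)-f(\alpha_\infty)=t_\infty-t_\alpha$. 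I would show that $\tilde F_n(t):=F_n(a_d+t/\sqrt{2\log d})$ satisfies $\tilde F_n'(t)\leq 1+o(1)$ uniformly over the bounded interval $[t_\alpha,t_\infty]$; integrating then gives $F_n(\kappa_{n,\infty})-F_n(\bar{\kappa}_{n,\infty}(\alpha))\leq(1+o(1))(t_\infty-t_\alpha)\to f(\alpha)-f(\alpha_\infty)$, while in the consistent regime both powers tend to one (Theorem~\ref{thm:supnorm}) and the increment vanishes.

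The derivative bound is the heart of the argument and the step I expect to be hardest. Differentiating the product gives $\tilde F_n'(t)=\sum_j (\phi(r-\theta_j)+\phi(r+\theta_j))(\sqrt{2\log d})^{-1}\prod_{i\neq j}\P(|Z_i+\theta_i|<r)$, abbreviating $\theta_j=\theta_{j,n}(\bm{\beta}_n^*)$. The inverse Mills bound $\phi(x)\leq(x+1/x)\overline{\Phi}(x)$ turns the lower-tail part of the $j$-th summand, modulo an error, into $\P(\{|Z_j+\theta_j|\geq r\}\cap\bigcap_{i\neq j}\{|Z_i+\theta_i|<r\})$, whose sum over $j$ is the probability that exactly one coordinate exceeds $r$ and is therefore at most $1$. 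The difficulty is controlling the correction terms uniformly: the upper-tail terms $\phi(r+\theta_j)$ carry an inverse-Mills factor exceeding one, and coordinates with $|\theta_j|$ near or above $a_d\approx\sqrt{2\log d}$ break the naive Mills estimate. Both are handled by restricting, via subsequences, to the non-consistent regime, where the characterization in Theorem~\ref{thm:supnorm} forces $\sum_i \overline{\Phi}(\mathfrak{c}_d-|\theta_{i,n}(\bm{\beta}_n^*)|)/\Phi(\mathfrak{c}_d-|\theta_{i,n}(\bm{\beta}_n^*)|)$ to remain bounded: this caps the number of near-threshold coordinates at $O(1)$ and, using $\overline{\Phi}(r+\theta_j)\leq\phi(r)e^{-r\theta_j}$ and $\theta_j e^{-r\theta_j}\leq 1/(er)$, makes the aggregate upper-tail correction of order $1/\sqrt{\log d}$, yielding the required uniform bound $\tilde F_n'(t)\leq 1+o(1)$.
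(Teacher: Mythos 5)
Your proposal is correct and shares the paper's outer skeleton --- bounding $\E \psi_n(\bm{\beta}_n^*)$ from below by its single-$p$ component at the Gaussian-exact critical value $\kappa_{n,p}$, identifying $\kappa_{n,p}$ as a level-$\alpha_p$ sequence of the form \eqref{eq:sizepnorm} (resp.\ \eqref{eq:sizesup}) via the CLT/Gumbel limit for $\|\bm{Z}_d\|_p$, and then running a subsequence dichotomy between the consistent regime (where both powers tend to one) and the bounded regime --- but in the bounded regime you substitute genuinely different tools for both parts. For $p\in[2,\infty)$ the paper decomposes the power gap as in \eqref{eq:bounddecomp}, applies Lemma~\ref{lem:convgauss}, and imports the Gaussian computation following (C.61) of \cite{kp2021consistency}; you instead invoke the local power statement (Part~2 of Theorem~\ref{thm:pnorm-char_GMM}, applicable under Assumption~\ref{as:suffc} by Theorem~\ref{thm:prim}, and valid along subsequences as the paper's own footnote notes), so the gap converges exactly to $\Phi(\Phi^{-1}(1-\alpha_p)-c/\sigma_p)-\Phi(\Phi^{-1}(1-\alpha)-c/\sigma_p)$, after which the mean value theorem with $\sup\phi=1/\sqrt{2\pi}$ gives the bound --- a more self-contained route that even pins down the limit of the gap, not just an upper bound. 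For $p=\infty$ the paper reduces via Lemma~\ref{lem:hypergauss} to the Gaussian increment and applies Nazarov's inequality (Theorem~1 of \cite{chernozhukov2017detailed}), whose sharp $\sqrt{2\log d}$-order constant multiplied by the critical-value gap $(f(\alpha)-f(\alpha_\infty)+o(1))/\sqrt{2\log d}$ yields the claim in one line; you instead prove a bespoke anti-concentration estimate, $\tilde F_n'(t)\leq 1+o(1)$ uniformly on the relevant $t$-interval, exploiting the product form of the sup-norm cdf under $\mathbf{I}_d$, the ``exactly one exceedance'' pairing (the events that coordinate $j$ alone exceeds $r$ are disjoint, so their probabilities sum to at most one), the $O(1)$ cap on near-threshold coordinates forced by boundedness of $b_n=\sum_i \overline{\Phi}(\mathfrak{c}_d-|\theta_{i,n}(\bm{\beta}_n^*)|)/\Phi(\mathfrak{c}_d-|\theta_{i,n}(\bm{\beta}_n^*)|)$ in the non-consistent regime, and $\theta e^{-r\theta}\leq 1/(er)$ for the upper tails. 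I checked the bookkeeping: the aggregate upper-tail excess is indeed $o(1)$ (using $d\,\overline{\Phi}(r)=O(1)$ at $r=a_d+O(1/\sqrt{\log d})$ one even gets $O(1/\log d)$), near-threshold coordinates each contribute at most $2\phi(0)/\sqrt{2\log d}$, and the bound integrates to $f(\alpha)-f(\alpha_\infty)$; two points should be made explicit in a write-up: (i) invoking Lemma~\ref{lem:hypergauss} in the bounded regime requires $\frac{\log d}{d^{3/2}}\|\bm{\theta}_n(\bm{\beta}_n^*)\|_2^2$ bounded, which follows from boundedness of $b_n$ by the contrapositive argument closing the paper's proof of Theorem~\ref{thm:supnorm} (your ordering applies the lemma before restricting to that regime), and (ii) your derivative bound hinges on independence of the coordinates of $\bm{Z}_d$, which holds here but is less general than Nazarov's inequality --- the trade-off being that your argument is self-contained where the paper buys generality off the shelf.
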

Theorem~\ref{thm:boundedloss} shows that compared to the~$p$-norm based test of asymptotic size~$\alpha$ and with exponent $p$, there are no alternatives for which much is lost in terms of asymptotic power by using a test~$\psi_n(\bm{\beta}_n^*)$ with~$\alpha_{p}$ close to~$\alpha$. On the other hand: i) $\psi_n(\bm{\beta}_n^*)$ is consistent against any alternative that some~$p$-norm based test is consistent against and ii) alternatives can exist against which~$\psi_n(\bm{\beta}_n^*)$ is consistent yet the~$p$-norm based test has asymptotic power equal to its asymptotic size. Thus, there can be much to gain from using~$\psi_n(\bm{\beta}_n^*)$. Note, however, that the above construction of a test necessitates the choice of $p$ to target. That is,~$\psi_n(\bm{\beta}_n^*)$ with asymptotic size~$\alpha$ cannot have power arbitrarily close to, e.g., that of the $2$-norm and $\infty$-norm based tests (each with asymptotic size~$\alpha$) simultaneously.

\section{Many IVs and dominating the Anderson-Rubin test}\label{ex:IVdom}
Recall that in the context of the linear IV model in Section~\ref{sec:IVexample} one has
\begin{align*}
h_{n}(\bm{X}_{1,n},\bm{\beta}_n^*)
=
(y_{1,n}-\bm{Y}_{1,n}'\bm{\beta}_n^*)\bm{z}_{1,n}.
\end{align*}
The theory developed in the previous sections has the following consequences for testing whether a given~$\bm{\beta}_n^*$ satisfies~$\E (y_{1,n}-\bm{Y}_{1,n}'\bm{\beta}_n^*)\bm{z}_{1,n}=\bm{0}_d$:
\begin{enumerate}
\item Choosing~$p=2$ results in a classic weak identification robust Anderson-Rubin (AR) test. Thus, this AR test is dominated in terms of consistency by any~$p$-norm based test with~$p\in(2,\infty)$, cf.~the relationship in~\eqref{eq:dom}. There is no ranking between tests based on~$p\in[2,\infty)$ and the sup-score type test corresponding to~$p=\infty$; see~\cite{belloni2012sparse} for the definition of the original sup-score statistic.
\item For simplicity, let~$\mathbf{B}_n=\R$ and assume that~$\bm{\beta}_n$ satisfies~$\E (y_{1,n}-\bm{\beta}_n \bm{Y}_{1,n})\bm{z}_{1,n}=\bm{0}_d$. Even if only one instrument (say) is relevant in the sense of~$\E \bm{z}_{1,n}\bm{Y}_{1,n}=(a_n,0,\hdots,0)'$ for some~$a_n\neq 0$ such that the moments
\begin{align*}
\E(y_{1,n}-\bm{\beta}_n^*\bm{Y}_{1,n})\bm{z}_{1,n}
=
\E \bm{z}_{1,n}\bm{Y}_{1,n}(\bm{\beta}_n-\bm{\beta}_n^*)
=
\del[1]{a_n(\bm{\beta}_n-\bm{\beta}_n^*),0,\hdots,0}'
\end{align*}
only differ from zero in one entry, this sparsity need not be inherited by
\begin{align*}
\bm{\theta}_n(\bm{\beta}_n^*)=\sqrt{n}[\bm{\Sigma}_{n}(\bm{\beta}_n^*)]^{-1/2}\del[1]{a_n(\bm{\beta}_n-\bm{\beta}_n^*),0,\hdots,0}'	
\end{align*}
unless one is willing to impose structure on the covariance matrix~$\bm{\Sigma}_{n}(\bm{\beta}_n^*)$ of~$(y_{1,n}-\bm{\beta}_n^* \bm{Y}_{1,n})\bm{z}_{1,n}$. Thus, as it is~$\bm{\theta}_n(\bm{\beta}_n^*)$ that determines the power properties of~$p$-norm based tests, one cannot advise on which~$p$ to use solely based on the number of instruments that one suspects to be relevant. Similarly, one should be cautious basing advice on the sparsity structure of~$\bm{\pi}_n$ in the ``first-stage''~$\bm{Y}_{1,n}=\bm{\pi}_n'\bm{z}_{1,n}+\nu_{1,n}$: Since~$$\bm{\pi}_n=[\E (\bm{z}_{1,n}\bm{z}_{1,n}')]^{-1}\E\bm{z}_{1,n}\bm{Y}_{1,n}$$ (assuming that~$\E (\bm{z}_{1,n}\bm{z}_{1,n}')$ is invertible) one can write~$\E\bm{z}_{1,n}\bm{Y}_{1,n}=\E (\bm{z}_{1,n}\bm{z}_{1,n}')\bm{\pi}_n$ and hence
\begin{align*}
\bm{\theta}_n(\bm{\beta}_n^*)=\sqrt{n}[\bm{\Sigma}_{n}(\bm{\beta}_n^*)]^{-1/2}\E (\bm{z}_{1,n}\bm{z}_{1,n}')\bm{\pi}_n\cdot(\bm{\beta}_n-\bm{\beta}_n^*),
\end{align*}
cf.~the first equality in the penultimate display. Again there is no link between the sparsity pattern of the first-stage~$\bm{\pi}_n$ and that of~$\bm{\theta}_n(\bm{\beta}_n^*)$; the latter determining the power properties of~$p$-norm based tests.
\item In light of the previous point, it is difficult to give advice on which single~$p$-norm to base a test on even if one is willing to impose assumptions on the number of relevant instruments. Thus, it is useful that the test~$\psi_n(\bm{\beta}_n^*)$ in~\eqref{eq:psidef} is consistent whenever some~$p$-norm based test (including the AR- or sup-score type tests) is consistent. Therefore,~$\psi_n(\bm{\beta}_n^*)$ is consistent against strictly more alternatives than any test based on a single~$p$.  In this sense,~$\psi_n(\bm{\beta}_n^*)$ does not rely on knowing whether~$\bm{\theta}_n(\bm{\beta}_n^*)$ is sparse or not. 
\end{enumerate}

\section{Rapidly increasing dimension via sample splitting}\label{sec:rapidgrowth}
The results presented so far generally restrict the growth rate of~$d$. Suppose, however, that one wishes to test whether~$\bm{\beta}_n^*$ satisfies~$D=D(n)$ moment conditions
\begin{align}\label{eq:GMMlarge}
\E h_{j,n}(\bm{X}_{1,n},\bm{\beta}_n^*)=0,\qquad j=1,\hdots,D, 
\end{align}
with~$D$ \emph{unrestricted}. %In the examples in Section~\ref{sec:examples} this amounts to having very many instruments or treatment outcome variables at one's disposal. 
Under independent sampling one can use a two-step procedure to test the validity of~\eqref{eq:GMMlarge} by means of the tests developed in the previous sections: Based on a partition of the sample into two disjoint subsamples~$N_1=N_1(n)\subset\cbr[0]{1,\hdots, n}$ and~$N_2=N_2(n)\subset\cbr[0]{1,\hdots, n}$ of sizes~$n_1=n_1(n)$ and~$n_2=n_2(n)$, respectively, one first uses the subsample~$N_1$ to select a subset~$S=S(n_1,n_2,\cbr[0]{\bm{X}_{i,n}}_{i\in N_1})\subset \cbr[0]{1,\hdots,D}$ of moment functions. In a second step, using the subsample~$N_2$, one tests whether~$\bm{\beta}_n^*$ satisfies
\begin{align*}
\E h_{j,n}(\bm{X}_{1,n},\bm{\beta}_n^*)=0,\qquad j\in S 
\end{align*}
using only the data~$\cbr[0]{\bm{X}_{i,n}}_{i\in N_2}$. In Theorem~\ref{thm:dimred} in Section~\ref{app:samplesplit} in the appendix we show that if a predetermined fixed number of \emph{selected} moments~$|S|=d(n_2)$ satisfies the growth conditions of the theory developed in the previous sections relative to the second step ``sample size''~$n_2$, then all asymptotic size guarantees established so far remain valid. %The sample split is important to ensure that the data used to test the moment conditions is independent of the selected moments~$S(n_2)$. 
The concrete moment selection mechanism~$S$ does not affect the asymptotic size of the subsequent tests as independent sampling guarantees that no selection bias is introduced. However, it is clear that the power of a test depends on~$S$ including moments that are violated. Some ways to select the subset of moments~$S$ to be tested in order to obtain powerful tests are discussed in Appendix~\ref{app:samplesplit}, where the size-control result informally sketched above is also formally established in Theorem~\ref{thm:dimred}.

\section{Numerical results}\label{sec:sims}
In Section~\ref{sec:simIV} we investigate numerically the properties of our test~$\psi_n(\bm{\beta}_n^*)$ for hypothesis testing on a structural parameter in the presence of many instruments, cf.~Sections~\ref{sec:IVexample} and~\ref{ex:IVdom}. Then, to investigate the effect of even larger~$d$ without unduly increasing the computational burden, Section~\ref{sec:simGauss} considers directly the limiting Gaussian testing problem which by~\eqref{eq:newGaussApprox}  approximates the rejection frequencies of~$\psi_n(\bm{\beta}_n^*)$ and~$p$-norm based tests in many moment testing problems.

\subsection{Testing in the presence of many instruments}\label{sec:simIV}
We generate data from the following standard two-equation model with one endogenous regressor ($\mathbf{B}_n=\R$), but many instruments~$d$:
\begin{align*}
y_{i,n}
&=
\bm{Y}_{i,n}\bm{\beta}_n+u_{i,n},\hspace{3.5cm}\text{``Second stage''}\\
	\bm{Y}_{i,n}
&=
	\bm{\pi}_n'\bm{z}_{i,n}+\nu_{i,n},\qquad i=1,\hdots,n.\qquad\text{``First stage''}
\end{align*}
The variables are named and interpreted as in Section~\ref{sec:IVexample}. All variables are i.i.d.~across~$i=1,\hdots,n$ and we consider a setting where the instruments~$\bm{z}_{i,n}$ and error terms have heavy tails in the sense that the~$d$ independent entries of~$\bm{z}_{1,n}$ follow a~$t(5)$-distribution and
\begin{align*}
	\begin{pmatrix}
		u_{1,n}\\
		\nu_{1,n}
	\end{pmatrix}
	\sim t_{5}\del[1]{\bm{0}_2,\bm{\Omega}},\qquad \text{where }\bm{\Omega}=
	\begin{pmatrix}
		1 & 0.9\\
		0.9 & 1
	\end{pmatrix}.
\end{align*}
In addition,~$\bm{z}_{1,n}\indep(u_{1,n},\nu_{1,n})$. Thus, the ``true'' parameter~$\bm{\beta}_n$ satisfies~$\E (y_{1,n}-\bm{Y}_{1,n}'\bm{\beta}_n)\bm{z}_{1,n}=\E u_{1,n}\bm{z}_{1,n}=\bm{0}_d$ and we test whether the candidate structural parameter~$\bm{\beta}_n^*$ satisfies 
\begin{align*}
	H_{0,n}:\E (y_{1,n}-\bm{Y}_{1,n}'\bm{\beta}_n^*)\bm{z}_{1,n}=\bm{0}_d\qquad \text{vs.} \qquad H_{1,n}: \E (y_{1,n}-\bm{Y}_{1,n}'\bm{\beta}_n^*)\bm{z}_{1,n}\neq\bm{0}_d,
\end{align*}
which falls within our general testing framework. Throughout we use~$\bm{\beta}_n^*=0$ and gauge the size and power of the tests considered by generating data for a range of~$\bm{\beta}_n$. 

We consider~$(n,d)\in\cbr[1]{(5{,}000,100),(25{,}000,500),(100{,}000,1{,}000)}$. Although one cannot always obtain~$n$ of this order, these settings reflect many practically relevant situations, cf.~Footnote~\ref{fn:motivation}. 

 The number of relevant instruments is governed by the number of non-zero entries of~$\bm{\pi}_n\in\R^d$, i.e.,~the sparsity/denseness of the ``first stage''. For each pair~$(n,d)$ we consider the sparsest and densest possible first stage, namely~$\bm{\pi}_n=(1,0,\hdots,0)$ and~$\bm{\pi}_n=(1,\hdots,1)$. As the latter first stage is clearly more informative, we adjust the magnitude of the deviation of~$\bm{\beta}_n$ from~$\bm{\beta}_n^*=0$ according to the ``strength'' of~$\bm{\pi}_n$ to get non-trivial power curves.\footnote{Alternatively, one can adjust the magnitude of the entries of~$\bm{\pi}_n$, i.e.~change the strength of the individual instruments, with the number of non-zero entries of~$\bm{\pi}_n$.} Finally, we consider a setting of a semi-sparse first stage where a moderate number of instruments is relevant. Here~$\bm{\pi}_n=(\bm{\iota}_{d_{R,n}},0,\hdots,0)$ where~$\iota_{d_{R,n}}$ is a vector of~$d_{R,n}$ ones and~$d_{R,n}=4,7$,~$12$, for~$n=5{,}000,25{,}000$ and~$100{,}000$, respectively. We also note that a simple calculation reveals that in the present setting the number of non-zero entries of~$\bm{\theta}_n(\bm{\beta}^*_n)=\sqrt{n}\bm{\Sigma}^{-1/2}_{n}(\bm{\beta}^*_n)\E h_{n}(\bm{X}_{1,n},\bm{\beta}^*_n)$ equals that of~$\bm{\pi}_n$ as~$\bm{\Sigma}^{-1/2}_{n}(\bm{\beta}^*_n)$ is block-diagonal. Thus, when we vary the sparsity of~$\bm{\pi}_n$, we identically vary that of~$\bm{\theta}_n(\bm{\beta}^*_n)$.

Throughout, we estimate~$\bm{\Sigma}_n(\bm{\beta}_n^*)$ by the sample covariance matrix of the vectors~$(y_{i,n}-\bm{Y}_{i,n}'\bm{\beta}_n^*)\bm{z}_{i,n},\ i=1,\hdots,n$. We also experimented with the median-of-means estimator in Section 3.4 of~\cite{ke2019user}, but (although being more precise for heavy-tailed distributions and allowing for larger~$d$) this did not improve the performance of the resulting tests in the settings considered.

We implement our test~$\psi_n(\bm{\beta}_n^*)$ from Section~\ref{sec:domtestconstruct} with~$\alpha=0.05$,~$\mathfrak{P}_n=\cbr[0]{2,3,5,10,\infty}$, $\alpha_2=\alpha_\infty=\alpha_{n,p}=\alpha/5$ for~$p\in\cbr[0]{3,5,10}$ and all~$n$ considered. The power enhancement (PE) principle of~\cite{fan2015power}, which combines the~$2$- and supremum-norm based tests, is implemented by setting~$\alpha_2=\alpha_\infty=\alpha/2$ and~$\alpha_I=0$. For~$n=5{,}000$ we also implement the jacknifed AR test of~\cite{mikusheva2020inference}, which we refer to as MS.\footnote{For larger sample sizes the implementation time became prohibitive for the jacknifed AR test. However, unreported simulations revealed that for sufficiently small~$n$ (relative to~$d$) it controls size better than the~$p$-norm based tests that we study. This is in line with the theoretical results on size control of the jackknifed AR test in~\cite{mikusheva2020inference}.} The number of replications is~$1{,}000$ for~$n=5{,}000$ and~$n={25{,}000}$, but~$500$ for~$n=100{,}000$. Figure~\ref{fig:IVDiff} plots by how much the power of~$\psi_n(\bm{\beta}_n^*)$ \emph{exceeds} that of each of the other tests considered. Thus, when comparing to the~$2$-norm based test we plot the power difference~$\E \psi_n(\bm{\beta}_n^*)-\P\del[1]{S_{n,2}(\bm{\beta}_n^*)\geq \kappa_{n,2}}$ and similarly for the other tests considered. Positive numbers mean that~$\psi_n(\bm{\beta}_n^*)$ is superior and vice versa for negative numbers. The raw power function can be found in Figure~\ref{fig:IV} in Section~\ref{sec:furtherplots} of the appendix. Figure~\ref{fig:IVDiff} reveals: 
\begin{itemize}
	\item For each~pair~$(n,d)$ (i.e.,~in each row) our test~$\psi_n(\bm{\beta}_n^*)$  has power similar to that based on the supremum-norm and the PE based test when there is one relevant instrument (sparse first stage). The power can be much higher than that of the~$2$-norm based test. When all instruments are relevant (dense first stage),~$\psi_n(\bm{\beta}_n^*)$ has power similar to that of the~$2$-norm based test and the PE based one. The power of~$\psi_n(\bm{\beta}_n^*)$ can be much higher than that of the supremum-norm based test. $\psi_n(\bm{\beta}_n^*)$ is more powerful than~$2$-norm,~$\infty$-norm and PE based tests when the first-stage is semi-sparse. The power of the $2$-norm (AR-test) and the jackknifed AR-test are comparable.
\item As~$n$ and~$d$ increase (i.e.,~in each column) the power advantage of~$\psi_n(\bm{\beta}_n^*)$, the supremum-norm, and the PE based test over the~$2$-norm based test increases when there is one relevant instrument (cf.~the first column). Similarly, the power advantage of~$\psi_n(\bm{\beta}_n^*)$, the~$2$~norm, and the PE based test over the supremum-norm based test increases when all instruments are relevant (cf.~the third column). For~$(n,d)=(100{,}000,1{,}000)$ the gains of~$\psi_n(\bm{\beta}_n)$ are up to~$0.68$ and~$0.85$ for the sparse and dense first stage, respectively. 
	For the semi-sparse first stage the power of~$\psi_n(\bm{\beta}_n^*)$ is generally higher than that of all other tests considered. The power advantage increases for larger values of~$n$ and~$d$; for~$(n,d)=(100{,}000,1{,}000)$ it is seen that~$\psi_n(\bm{\beta}_n^*)$ can have power that is~$0.23$ and~$0.29$ higher than that of the~$2$- and supremum-norm based tests, respectively. The power advantage over the PE based is up to~$0.15$.
\end{itemize}

%\begin{itemize}
%	\item For each~pair~$(n,d)$ (i.e.~in each row) our test~$\psi_n(\bm{\beta}_n^*)$ and the test based on the PE principle are as powerful as the test based on the supremum-norm when there is one relevant instrument (sparse first stage) and as powerful as the test based on the~$2$-norm when all instruments are relevant (dense first stage).~$\psi_n(\bm{\beta}_n^*)$ is more powerful than~$2$-norm,~$\infty$-norm and PE based tests when the first-stage is semi-sparse. The power of the $2$-norm (AR-test) and the jackknifed AR-test are comparable.
%	\item As~$n$ and~$d$ increase (i.e.~in each column) the power advantage of~$\psi_n(\bm{\beta}_n^*)$ and the PE based test over the~$2$-norm based test increases when there is one relevant instrument (cf.~the first column). Similarly, the power advantage of~$\psi_n(\bm{\beta}_n^*)$ and the PE based test over the supremum-norm based test increases when all instruments are relevant (cf.~the third column). For~$(n,d)=(100{,}000,1{,}000)$ these gains are up to~$0.67$ and~$0.82$ for the dense and sparse first stage, respectively. 
%	For the semi-sparse first stage the power of~$\psi_n(\bm{\beta}_n^*)$ is generally higher than that of the other tests considered. The power advantage increases for larger values of~$n$ and~$d$; for~$(n,d)=(100{,}000,1{,}000)$ it is seen that at the point where the power functions of the~$2$- and supremum-norm based test intersect,~$\psi_n(\bm{\beta}_n^*)$ has a power that is~$0.2$ higher than that of these tests. At this point, the power advantage over the PE based is $0.1$.
%\end{itemize}
In the next section we shall see that the power gains from using~$\psi_n(\bm{\beta}_n^*)$ instead of a 2- or supremum-norm based test (or a power enhancement based combination of these) become even larger in higher dimensions.

\begin{figure}
%\captionsetup{labelformat=empty}

\begin{center}
	\footnotesize $n=5{,}000$ and~$d=100$
\end{center}

\vspace{-0.5cm}
\includegraphics[width=5.2cm]{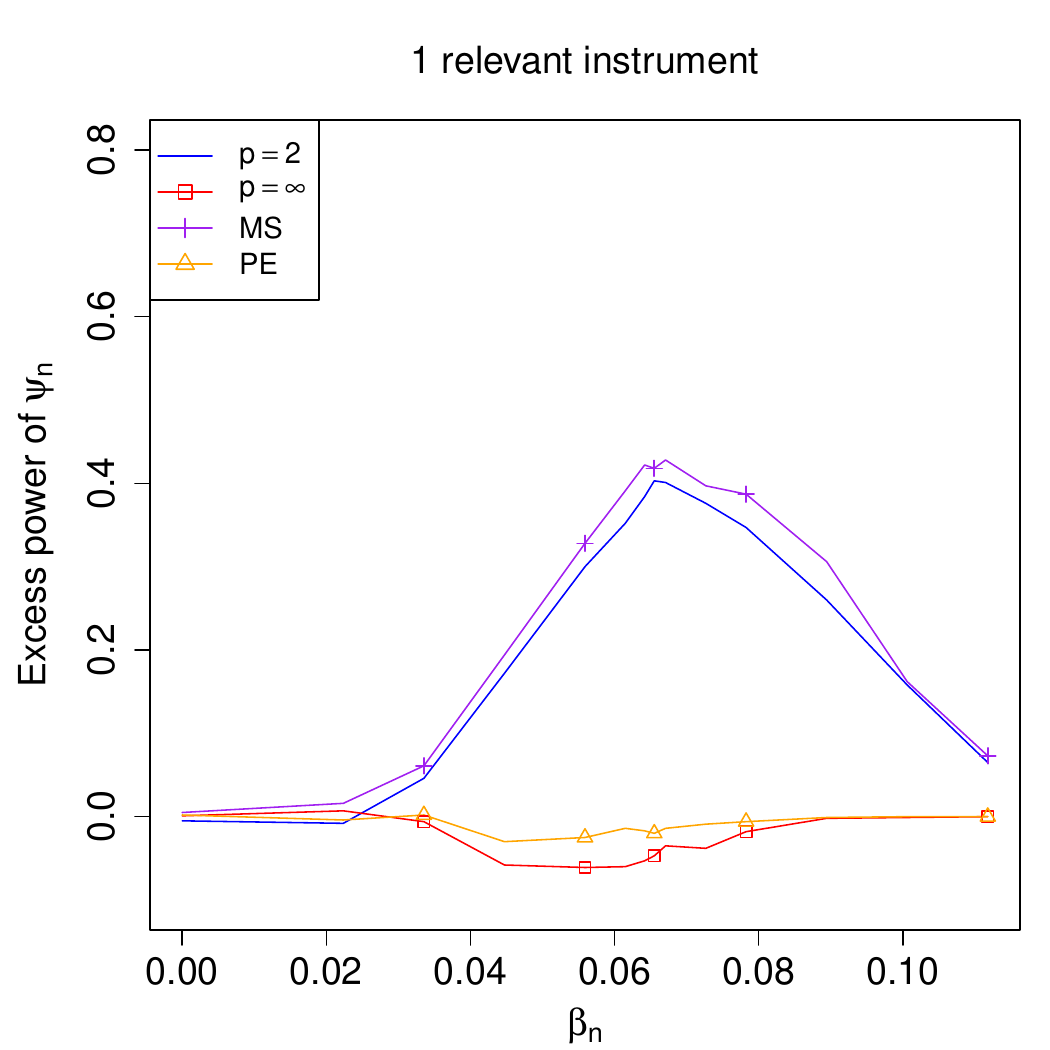}
\hspace{-0.5cm}
\includegraphics[width=5.2cm]{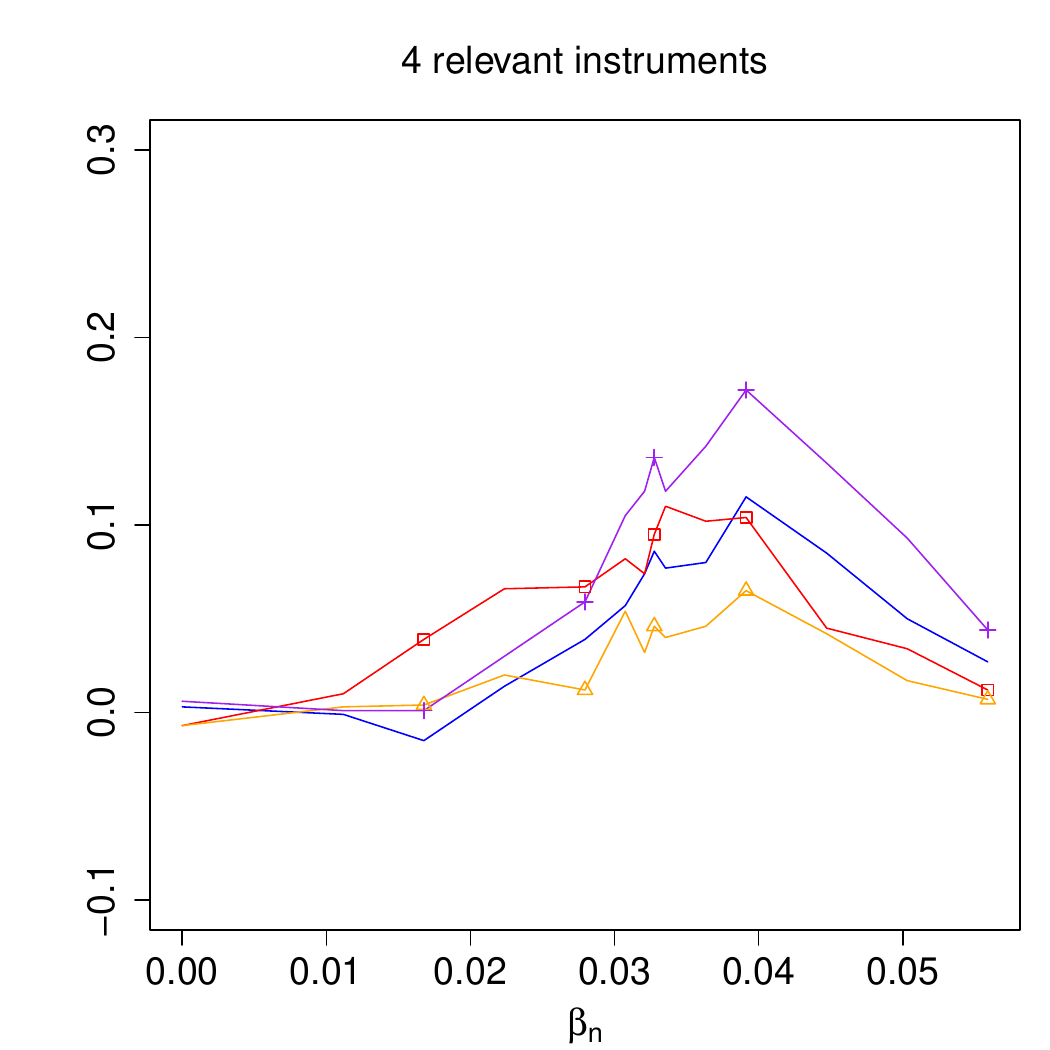}
\hspace{-0.5cm}
\includegraphics[width=5.2cm]{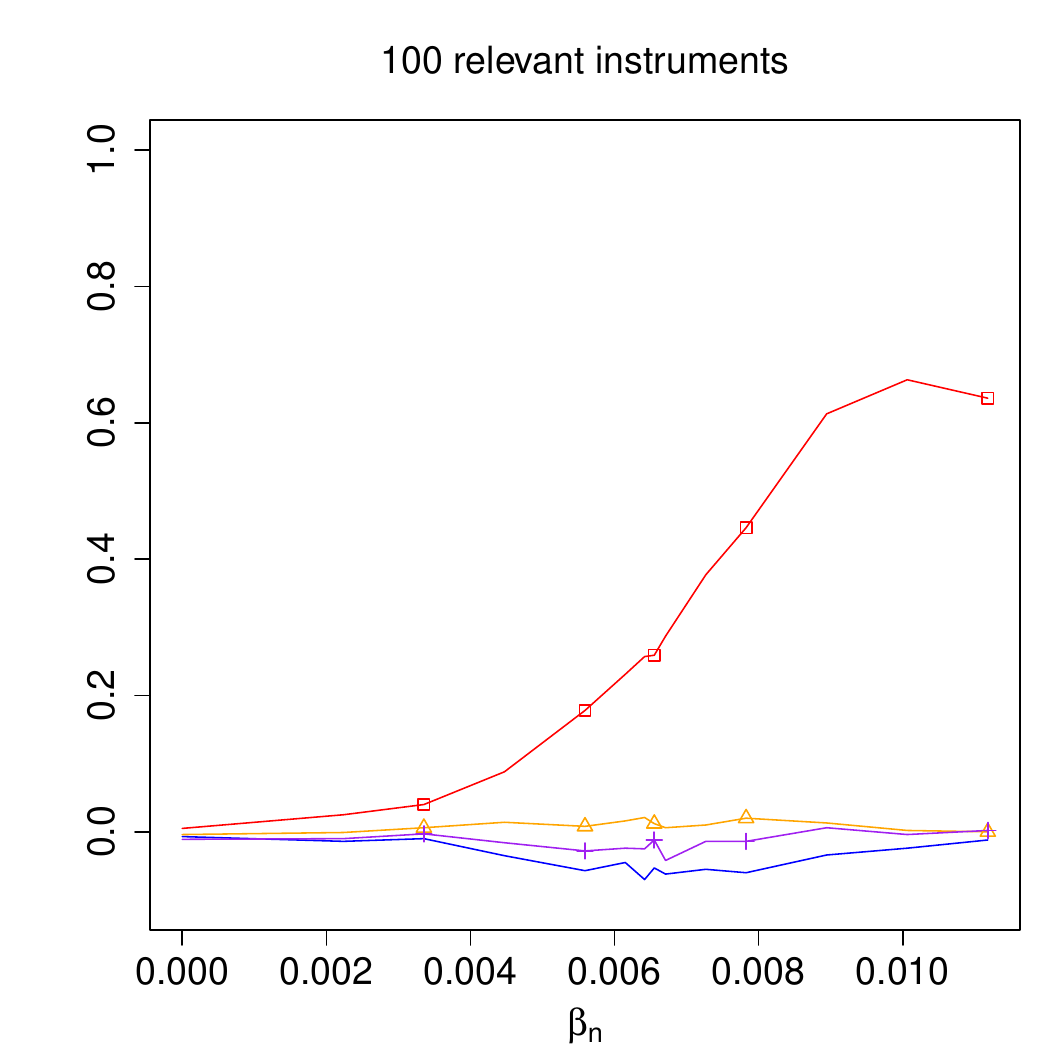}
\vspace{-0.2cm}

\begin{center}
	\footnotesize $n=25{,}000$ and~$d=500$
\end{center}

\vspace{-0.5cm}
\includegraphics[width=5.2cm]{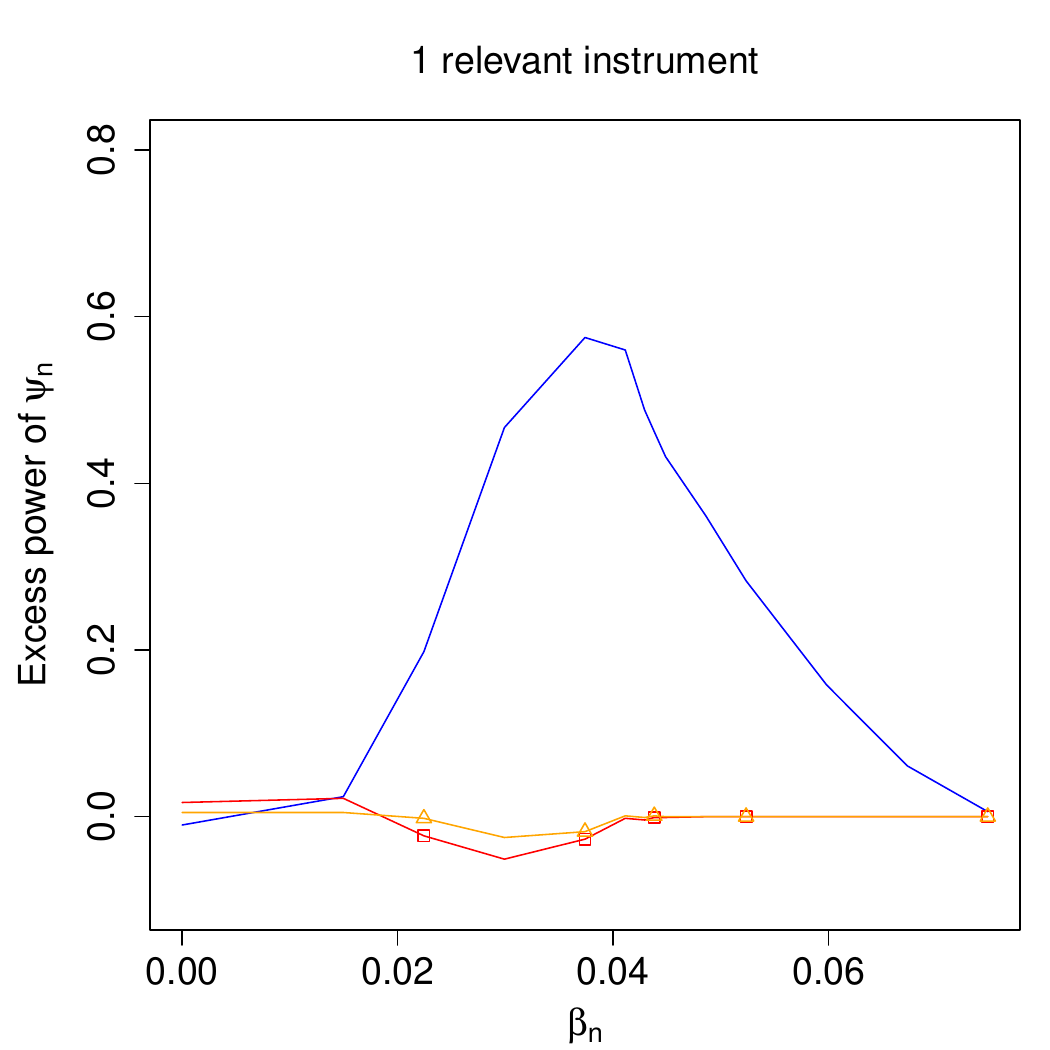}
\hspace{-0.5cm}
\includegraphics[width=5.2cm]{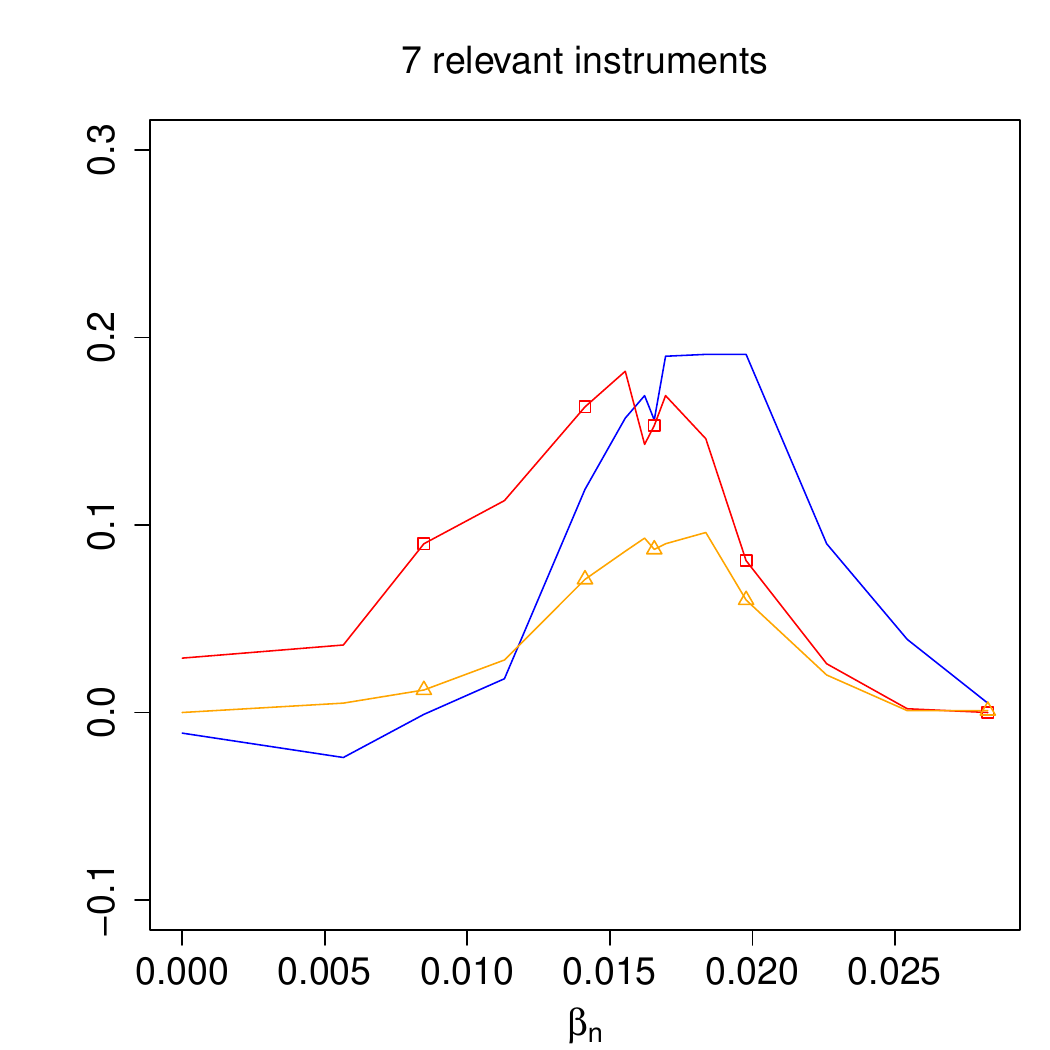}
\hspace{-0.5cm}
\includegraphics[width=5.2cm]{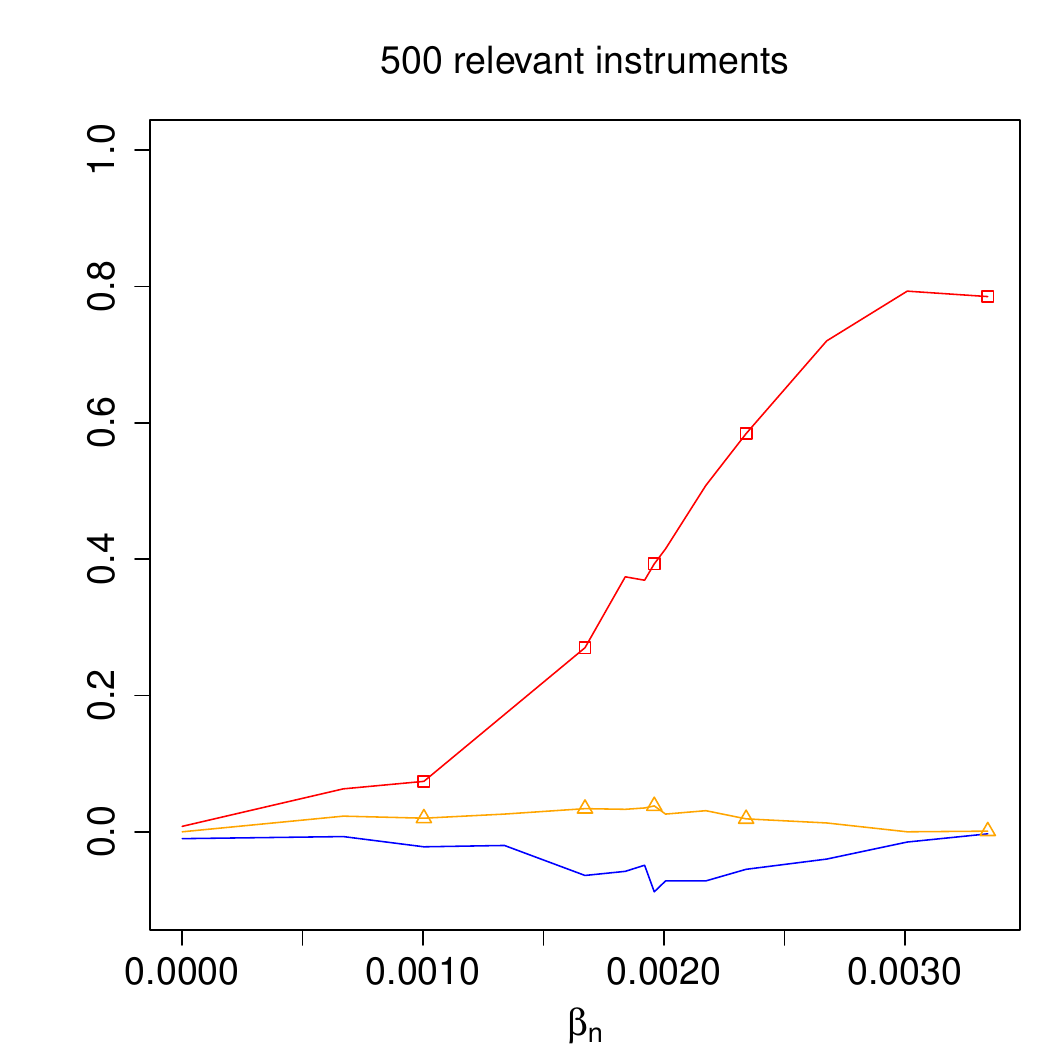}
\vspace{-0.2cm}

\begin{center}
	\footnotesize $n=100{,}000$ and~$d=1{,}000$
\end{center}

\vspace{-0.5cm}
\includegraphics[width=5.2cm]{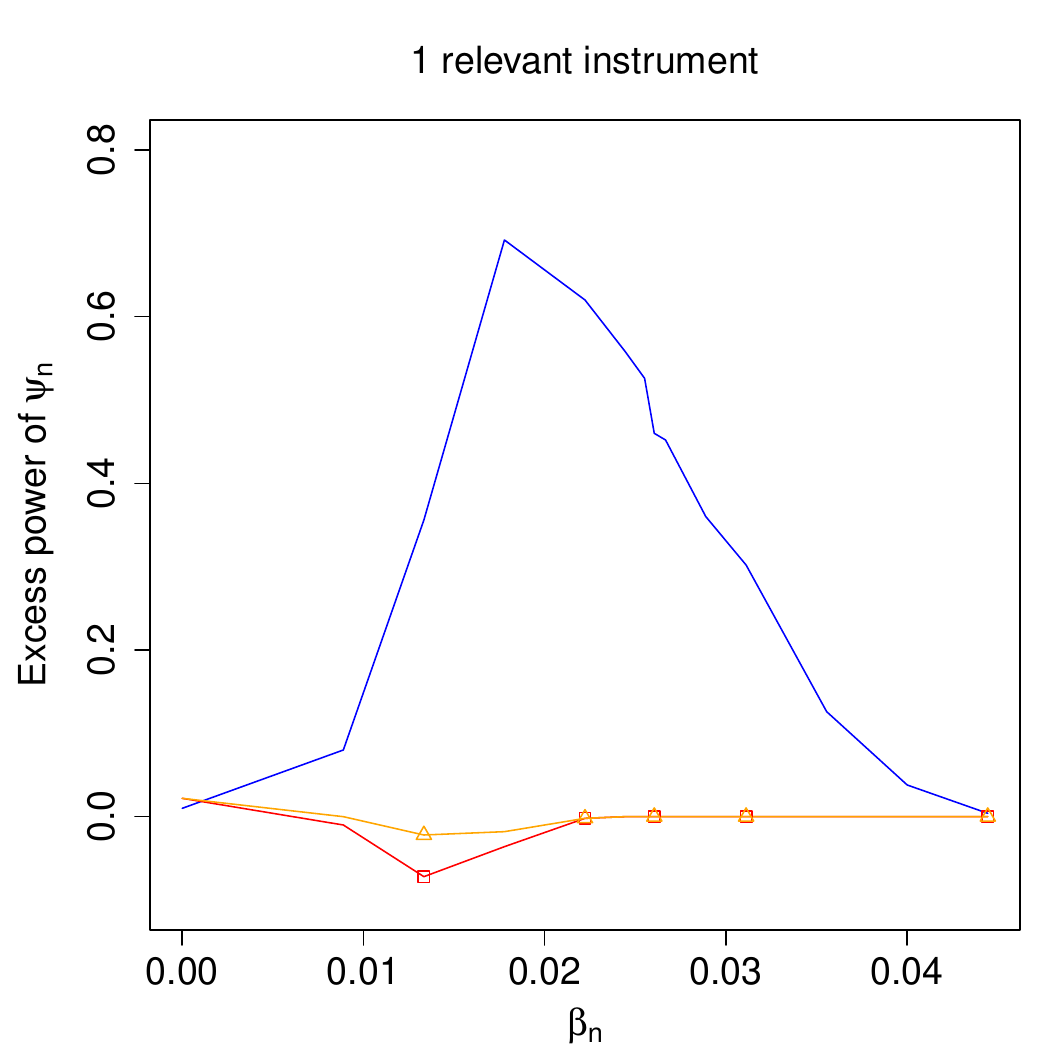}
\hspace{-0.5cm}
\includegraphics[width=5.2cm]{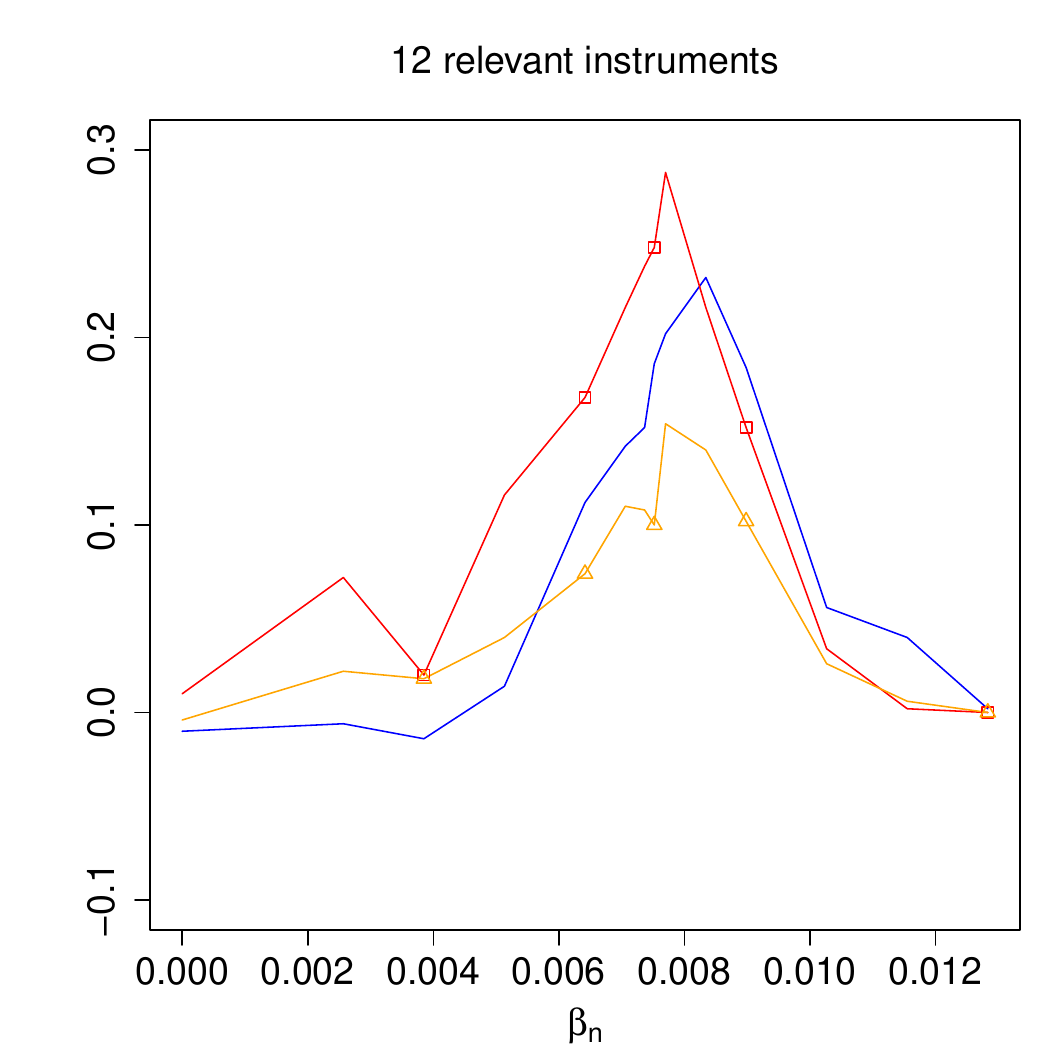}
\hspace{-0.5cm}
\includegraphics[width=5.2cm]{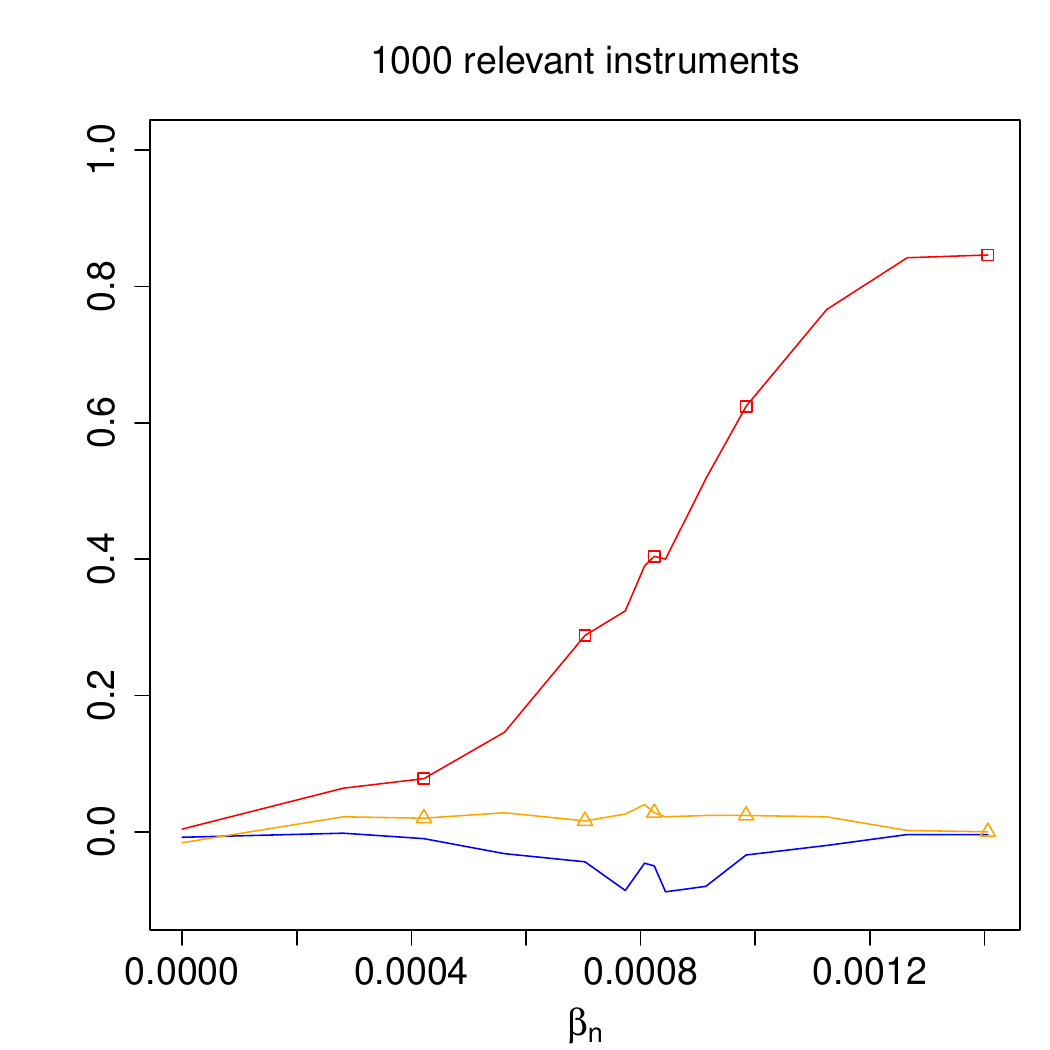}
\caption{\footnotesize Excess power of~$\psi_n(\bm{\beta}_n^*)$ over each of the other tests studied for~$(n,d)=(5{,}000,100)$ [first row], $(25{,}000, 500)$ [second row], $(100{,}000, 1{,}000)$ [third row] for sparse [first column], semi-sparse [second column] and dense [third column] alternatives. MS is the jackknifed Anderson-Rubin test of~\cite{mikusheva2020inference} and PE is the power enhancement principle. Full implementation details are given in the body text.}
\label{fig:IVDiff}	
\end{figure}

\subsection{Gaussian sequence model}\label{sec:simGauss}
To illustrate the power gains that one can expect when~$d$ becomes even larger than in the IV-based example in the previous subsection, we study testing
\begin{align*}
	H_{0,d}:\bm{\theta}_d=\bm{0}_d\qquad \text{vs.}\qquad H_{1,d}: \bm{\theta}_d\neq\bm{0}_d,
\end{align*}
when one observes one realization of~$\bm{Z}_d\sim\mathsf{N}_d(\bm{\theta}_d,\mathbf{I}_d)$ and~$p$-norm based tests are on the form~$\mathds{1}\del[0]{\enVert[0]{\bm{Z}_d}_p\geq \kappa_{d,p}}$,~$p\in[2,\infty]$.\footnote{Here we write~$\kappa_{d,p}$ rather than~$\kappa_{n,p}$ as~$n=1$ in the ``limit experiment'' (note that also in the previous sections the critical values~$\kappa_{n,p}$ only depended on~$n$ via~$d(n)$. A similar remark applies in the sequel.} By~\eqref{eq:newGaussApprox} the rejection frequencies $\P\del[0]{\enVert[0]{\bm{Z}_d}_p\geq \kappa_{d,p}}$ in this ``limit experiment'' will be close to~$\P\del[1]{S_{n,p}(\bm{\beta}_n^*)\geq \kappa_{n,p}}$ in a large set of moment testing problems as~$n\to\infty$ and with~$\bm{\theta}_d=\bm{\theta}_n(\bm{\beta}^*_n)=\sqrt{n}\bm{\Sigma}^{-1/2}_{n}(\bm{\beta}^*_n)\E h_{n}(\bm{X}_{1,n},\bm{\beta}^*_n)$. We study~$d=1{,}000,5{,}000, 50{,}000, 250{,}000$ as well as sparse, semi-sparse and dense deviations from~$H_{0,d}$. The sparse and dense~$\bm{\theta}_d$ have one and~$d$ (equal) non-zero entries, respectively. The semi-sparse alternatives have~$15$,~$30$,~$80$, and~$170$ (equal) non-zero entries, respectively (for increasing values of~$d$). All tests are implemented as in the IV-based simulations in the previous subsection. Analogously to Figure~\ref{fig:IVDiff}, Figure~\ref{fig:GaussDiff} plots by how much the power of~$\psi_d$ \emph{exceeds} that of the other tests studied. It is based on~$10{,}000$ replications.\footnote{The figures for~$d=50{,}000$ are identical to those in Figure~\ref{fig:intro}.}  The raw power function can be found in Figure~\ref{fig:Gauss} in Section~\ref{sec:furtherplots} of the appendix. Figure~\ref{fig:GaussDiff} reveals:
\begin{itemize}
	\item For each~$d$ (i.e., in each row) our test~$\psi_d$ has similar power to those based on the supremum-norm and the PE principle against sparse alternatives. The power is much higher than that of the~$2$-norm based test. For dense alternatives~$\psi_d$ is about as powerful as the tests based on the~$2$-norm and the PE principle. The power vastly exceeds that of the supremum-norm based test. $\psi_d$ is more powerful than~$2$-norm,~$\infty$-norm and PE based tests against the semi-sparse alternatives studied.
	\item As~$d$ increases (i.e., in each column) the power advantage of~$\psi_d$ and the PE based test over the~$2$-norm based test against sparse alternatives increases (cf.~the first column). Similarly, the power advantage of~$\psi_d$ and the PE based test over the supremum-norm based test against dense alternatives increases (cf.~the third column). These power gains are close to  the maximal possible value $0.95$($=1-\alpha$) for~$d\geq 5{,}000$. For the semi-sparse alternatives it is seen that at the point where the power functions of the~$2$- and supremum-norm based test intersect,~$\psi_d$ has a power that is~$0.2,0.3,0.4$ and~$0.5$ higher than that of these tests for~$d=1{,}000,5{,}000, 50{,}000, 250{,}000$, respectively. The power advantage over the PE based is up to~$0.15,0.2,0.3$ and~$0.4$, respectively, for these~$d$.
	\end{itemize} 
\emph{Thus, significant power gains can be obtained against semi-sparse alternatives from using~$\psi_d$ instead of a~$2$- or supremum-norm based test (or a combination of these) without sacrificing power against sparse or dense alternatives.} These power gains are obtained by harnessing the strength of norms beyond~$2$ and~$\infty$; in our case the~$3$-, $5$- and~$10$-norm.

\begin{figure}
\begin{center}
\vspace{-1.3cm}
\includegraphics[width=5.2cm]{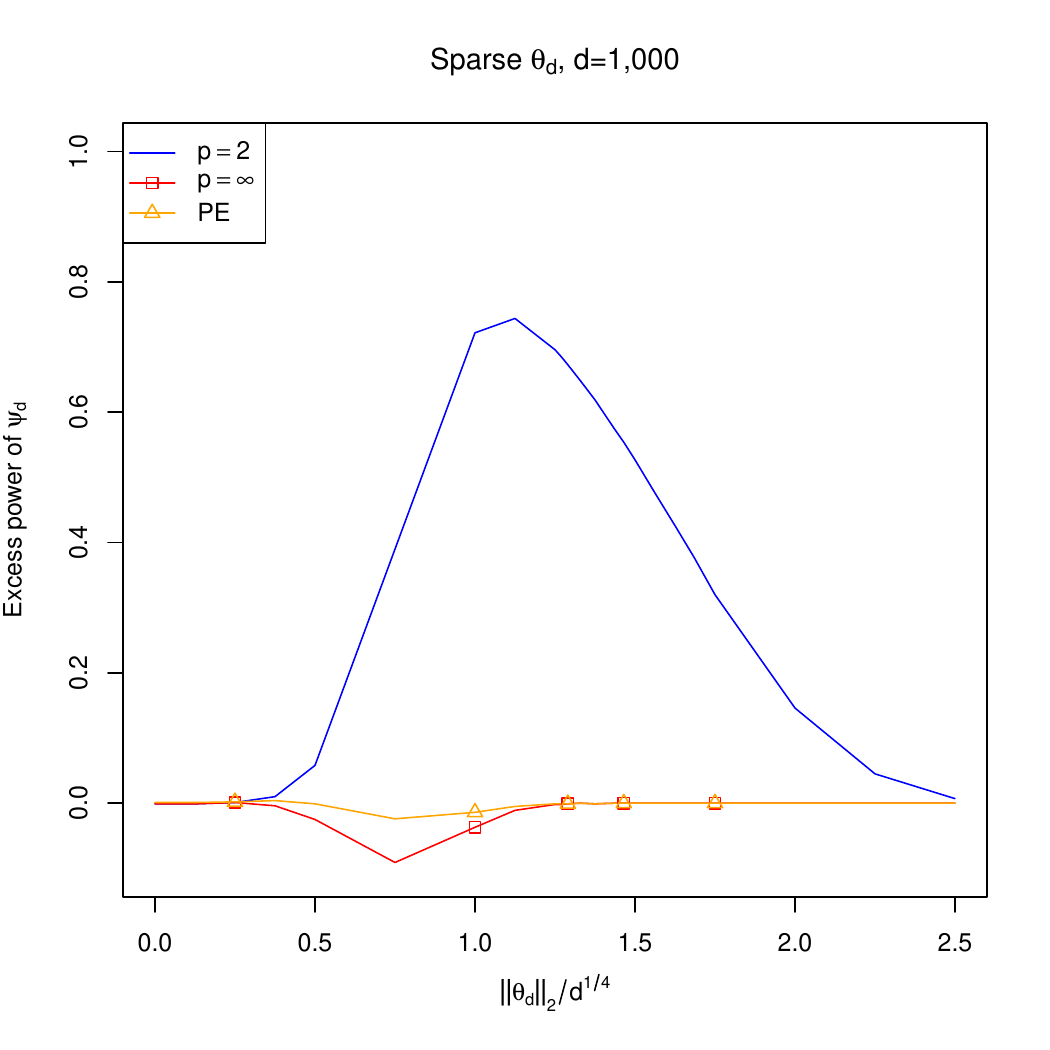}
\hspace{-0.6cm}
\includegraphics[width=5.2cm]{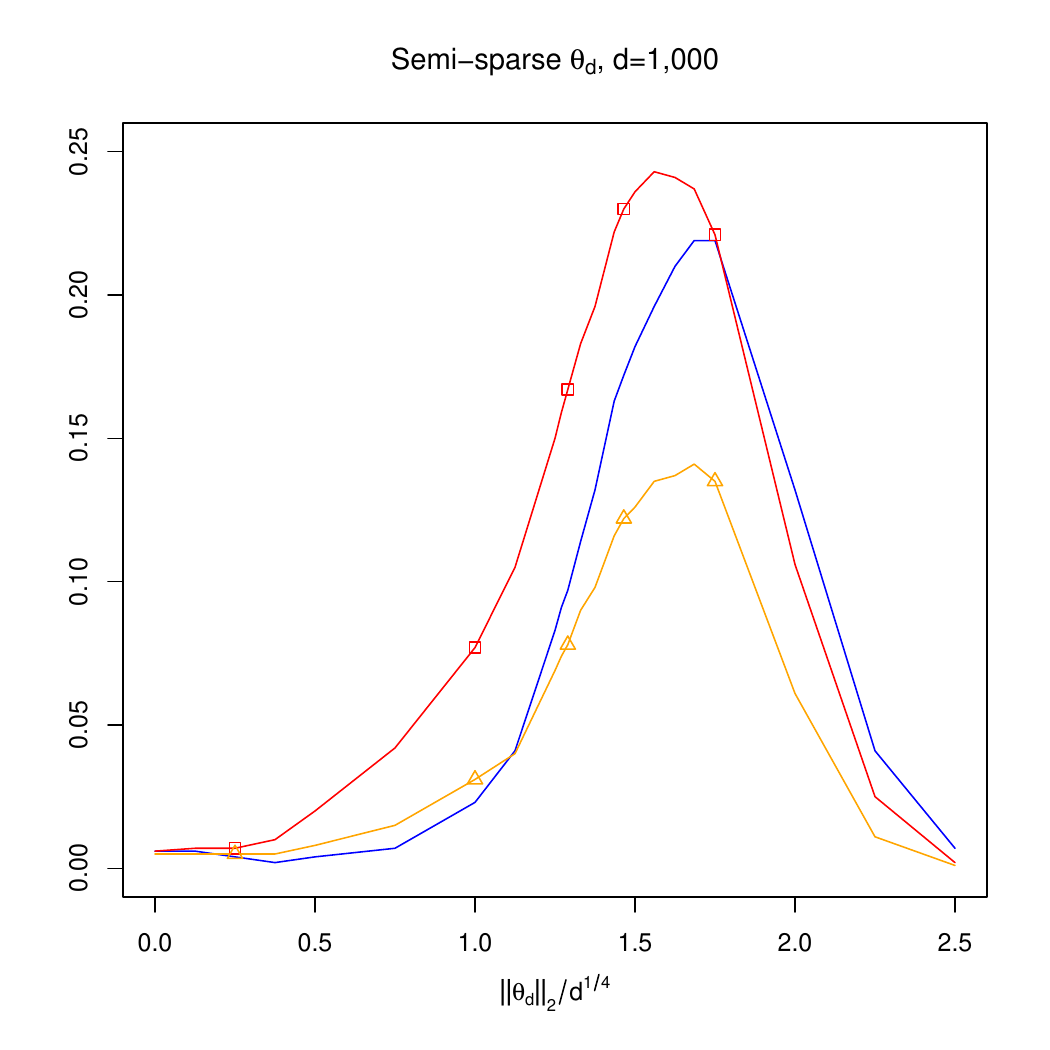}
\hspace{-0.6cm}
\includegraphics[width=5.2cm]{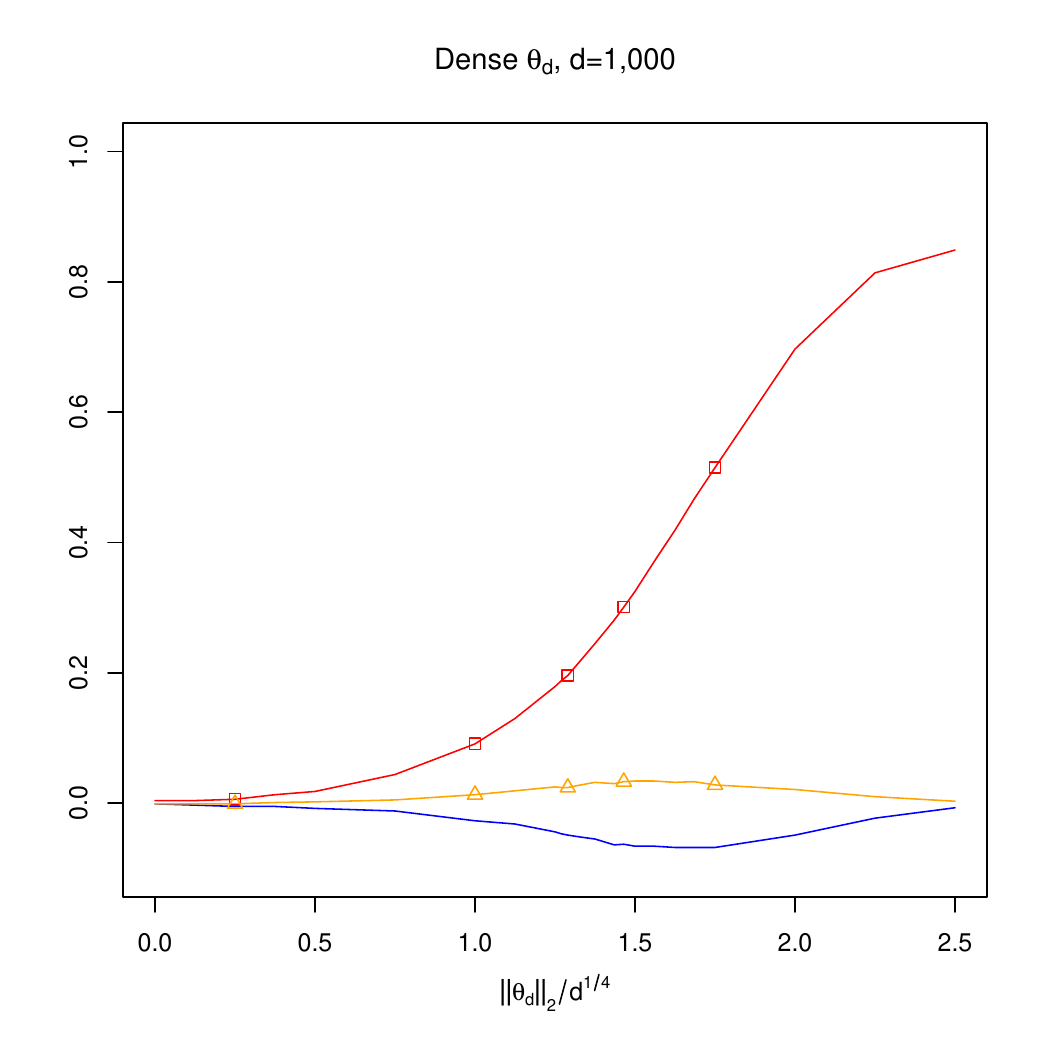}\null 
\vspace{-0.3cm}
\includegraphics[width=5.2cm]{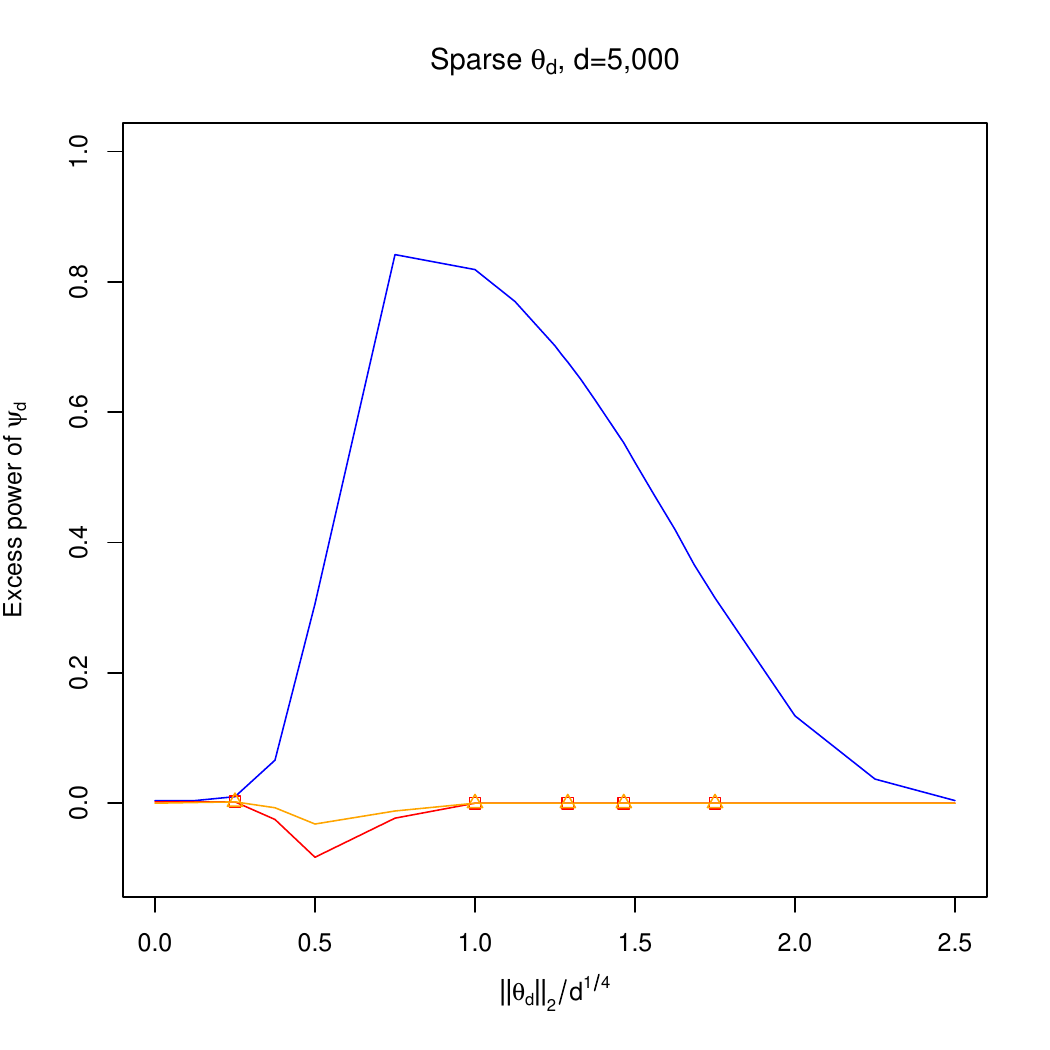}
\hspace{-0.6cm}
\includegraphics[width=5.2cm]{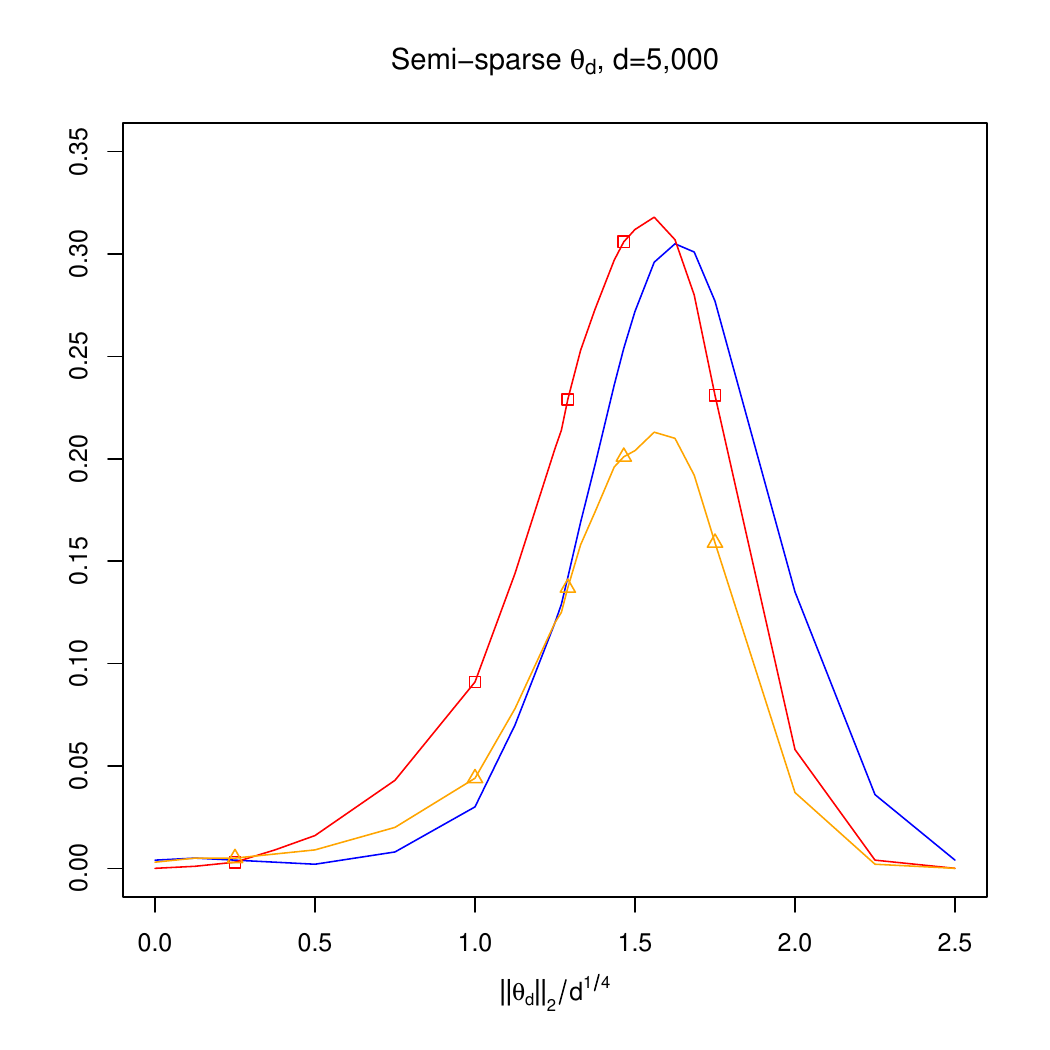}
\hspace{-0.6cm}
\includegraphics[width=5.2cm]{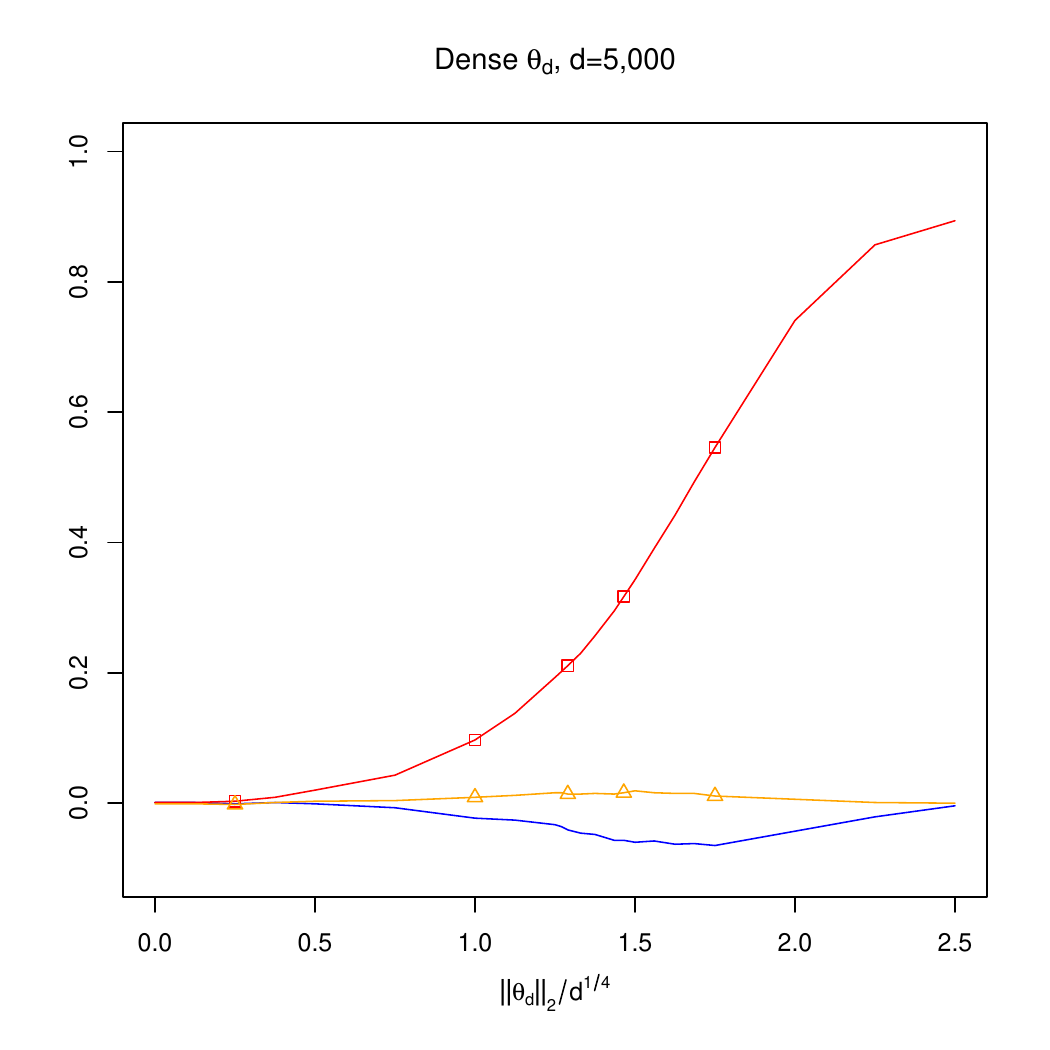}\null
\vspace{-0.3cm}
\includegraphics[width=5.2cm]{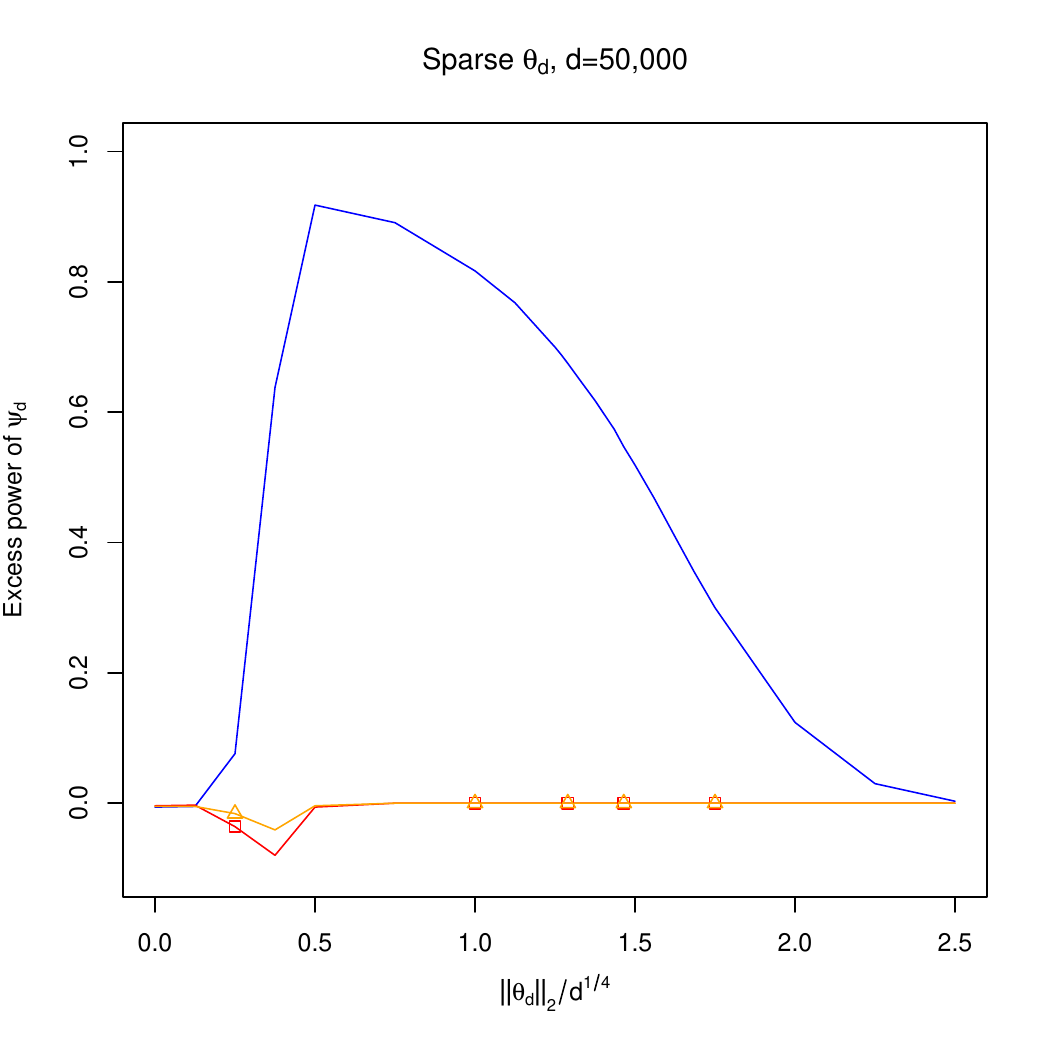}
\hspace{-0.6cm}
\includegraphics[width=5.2cm]{d=50000d_R=80MC=10000Diffsimsec.pdf}
\hspace{-0.6cm}
\includegraphics[width=5.2cm]{d=50000d_R=50000MC=10000Diffsimsec.pdf}
\null 
\vspace{-0.3cm}
\includegraphics[width=5.2cm]{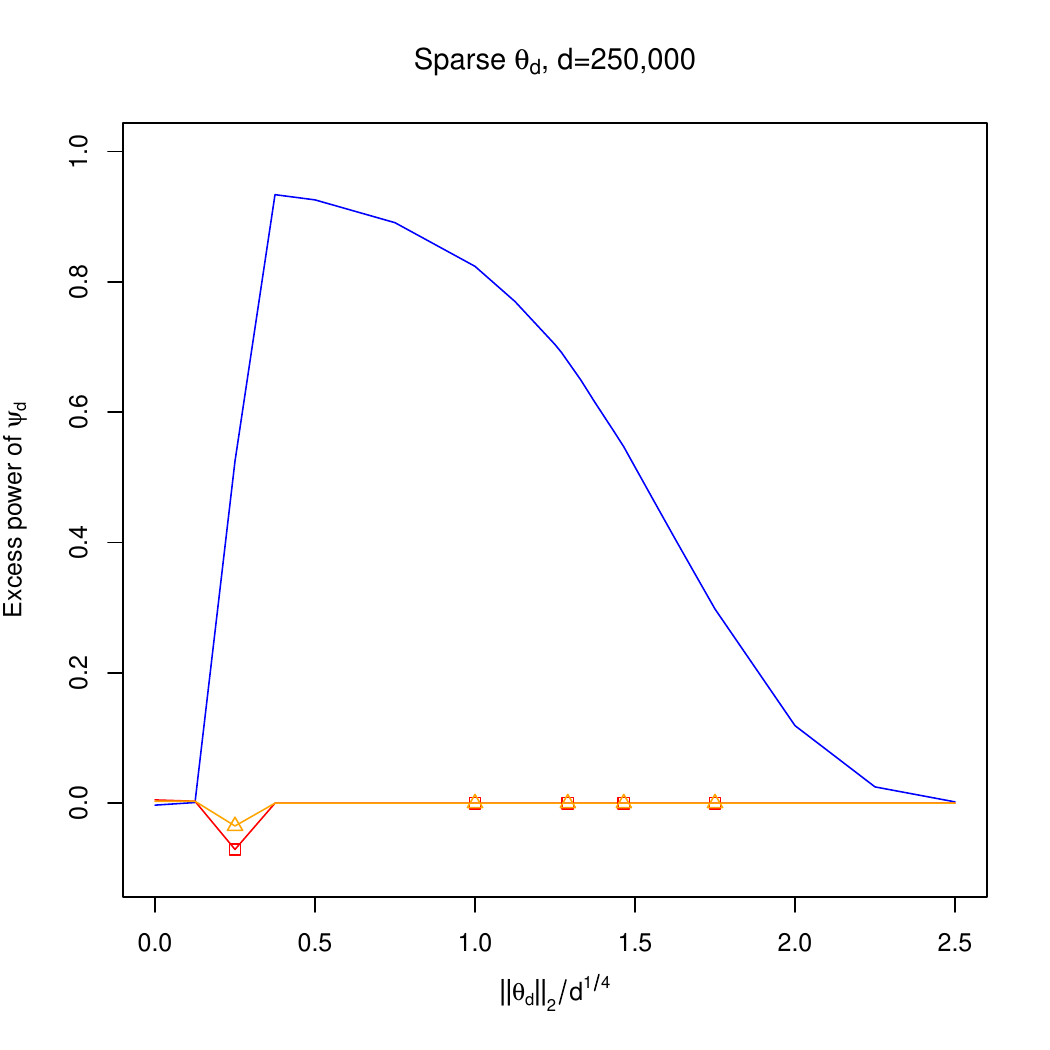}
\hspace{-0.6cm}
\includegraphics[width=5.2cm]{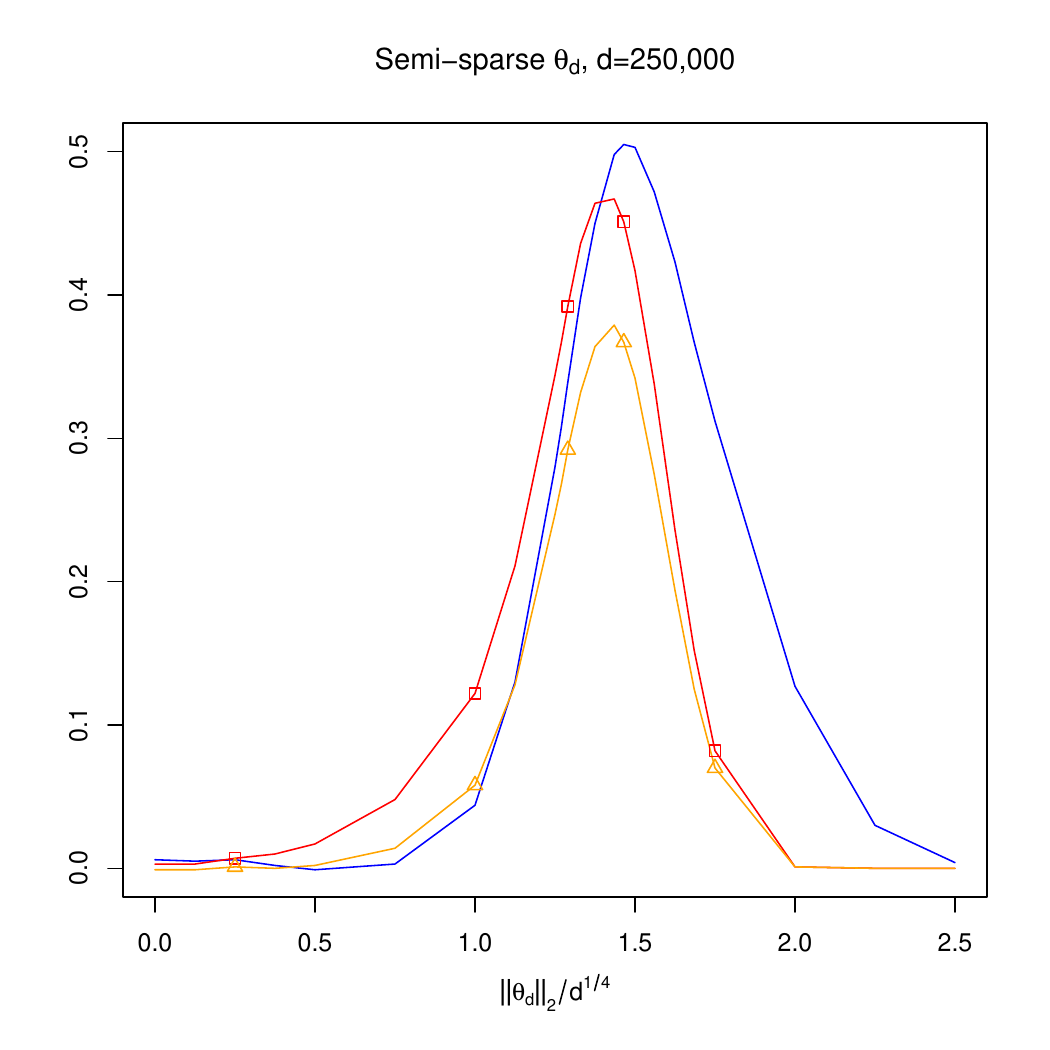}
\hspace{-0.6cm}
\includegraphics[width=5.2cm]{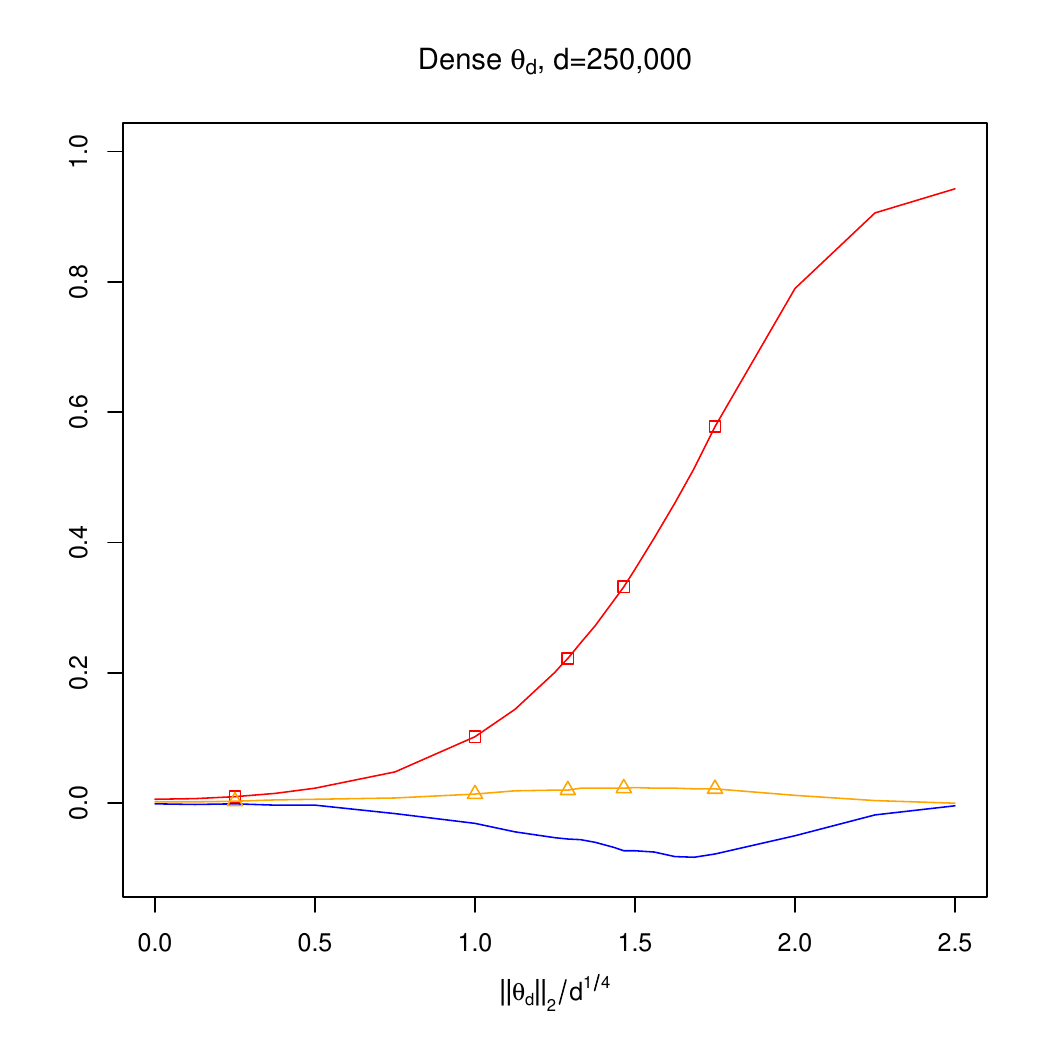}
\vspace{-0.9cm}
\end{center}
\caption{\scriptsize Excess power of~$\psi_d$ over each of the other tests studied as function of~$\enVert[0]{\bm{\theta}_d}_2/d^{1/4}$ for~$d=1{,}000$ (first row), $5{,}000$ (second row), $50{,}000$ (third row) and $250{,}000$ (fourth row) for sparse (first column), semi-sparse (second column) and dense (third column) alternatives. The tests compared to are~$\mathds{1}\del[0]{\enVert[0]{\bm{Z}_d}_p\geq \kappa_{d,p}}$ for~$p\in\cbr[0]{2,\infty}$, and the PE based test. Full implementation details are given in the body text.}
\label{fig:GaussDiff}
\end{figure}

\newpage

\begin{appendix}
	
\numberwithin{equation}{section}

\section{Notation}\label{app:Notation}
For brevity, we define~$\bm{Y}_{i,n}=h_{n}(\bm{X}_{i,n},\bm{\beta}_n^*)$ and write~$\bm{\mu}_{n}=\E\bm{Y}_{i,n}$ as well as~$\bm{\Sigma}_{n}=\bm{\Sigma}_{n}(\bm{\beta}_n^*)$. Write~$\bm{\theta}_{n}:=\sqrt{n}\bm{\Sigma}_{n}^{-1/2}\bm{\mu}_{n}$ and let~$\rightsquigarrow$ denote convergence in distribution. We shall also use that by Lemma C.1 in~\cite{kp2021consistency} for every~$p\in[2,\infty)$ there exists a constant~$c_p\in(0,\infty)$ such that
\begin{align}\label{eq:lambdaprop}
c_p^{-1}g_p(x)\leq \lambda_p(x)-\lambda_p(0)\leq c_pg_p(x)\qquad\text{for all }x\in\R,
\end{align}
where~$g_p$ and~$\lambda_p$ are defined in~\eqref{eqn:rhodef}. Furthermore, for every~$p \in [1, \infty)$ and for a matrix~$\bm{A}$ we denote by $\| \bm{A} \|_p$ the matrix norm induced by the vector~$p$-norm~$\|\cdot\|_p$.

\section{Proof of Theorem \ref{thm:pnorm-char_GMM}}

We start with Part 1~and note that $\bm{\beta}^* \in \mathbf{B}^{(0)}$ in this case due to Assumption~\ref{ass:pnormsize}, which also implies $\bm{\theta}_n = \bm{0}_d$. Fix~$\alpha \in (0, 1)$,~$p\in[2,\infty)$ and a sequence of real numbers~$\kappa_{n}$.	
First, observe that $\enVert[1]{{n}^{-1/2}\sum_{i=1}^{n}\hat{\bm{\Sigma}}_{n}^{-1/2}\bm{Y}_{i,n}}_p \geq \kappa_n$ is equivalent to~
\begin{align}\label{eqn:split}
\frac{\enVert[1]{{n}^{-1/2}\sum_{i=1}^{n}\hat{\bm{\Sigma}}_{n}^{-1/2}\bm{Y}_{i,n}}_p^p - \sum_{i = 1}^{d} \lambda_p(\theta_{i,n})}{\sqrt{d}\sigma_p} &+ \frac{\sum_{i = 1}^{d}\left( \lambda_p(\theta_{i,n}) - \lambda_p(0)\right)}{\sqrt{d}\sigma_p}\notag\\ 
&\geq 
\frac{\kappa_n^p-d\lambda_p(0)}{\sqrt{d}\sigma_p}=:\bar{\kappa}_n.
\end{align}
Under Assumption~\ref{ass:pnormsize}, the second term in the first line of~\eqref{eqn:split} vanishes and the first converges in distribution to a standard normal distribution, so that Polya's theorem shows that
\begin{align*}
\P\del[4]{\enVert[3]{n^{-1/2}\sum_{i=1}^n\hat{\bm{\Sigma}}^{-1/2}_{n}\bm{Y}_{i,n}}_p\geq \kappa_{n}}\to\alpha 
\end{align*}
if and only if~$\bar{\kappa}_n\to \Phi^{-1}(1-\alpha)=:z_{1-\alpha}$, which establishes Part 1~of Theorem~\ref{thm:pnorm-char_GMM}.

To establish Part 2~and~``$\Longrightarrow$'' in~\eqref{eq:conscharlambda} of Part~3, let $\bm{\beta}^* \in \mathbf{B}^*$ and consider (more generally than needed for Part 2) a subsequence~$n'$ of $n$ along which
\begin{align}\label{eq:locsubapp}
\frac{1}{\sqrt{d'}} \sum_{i = 1}^{d'} [\lambda_p(\theta_{i,n'})-\lambda_p(0)]\to c\in\R.
\end{align}
Then, by~\eqref{eq:pnormlocalhighlevel},~\eqref{eqn:split}, and~$\bar{\kappa}_n\to z_{1-\alpha}$, along this subsequence
\begin{align*}
\lim_{n' \to \infty} \P\del[4]{\enVert[3]{{n'}^{-1/2}\sum_{i=1}^{n'}\hat{\bm{\Sigma}}_{n'}^{-1/2}\bm{Y}_{i,n'}}_p \geq \kappa_{n'}} 
=1-\Phi(z_{1-\alpha}-c/\sigma_p)
< 1.
\end{align*}
Thus, Part~2~follows by the above with~``$n'=n$'' while ``$\Longrightarrow$'' in~\eqref{eq:conscharlambda} of Part~3~follows from observing that $\frac{1}{\sqrt{d}} \sum_{i = 1}^{d} [\lambda_p(\theta_{i,n})-\lambda_p(0)]\not\to \infty$ implies the existence of a subsequence along which \eqref{eq:locsubapp} holds for some~$c\in\R$.

We next establish~``$\Longleftarrow$'' of \eqref{eq:conscharlambda} of Part 3. Let $\bm{\beta}^* \in \mathbf{B}^*$. The rejection event in~\eqref{eqn:split}	 can be rewritten as
\begin{equation*}
\frac{\enVert[1]{{n}^{-1/2}\textstyle\sum_{i=1}^{n}\hat{\bm{\Sigma}}_{n}^{-1/2}\bm{Y}_{i,n}}_p^p - \sum_{i = 1}^{d} \lambda_p(\theta_{i,n})}{\sum_{i=1}^{d}[\lambda_p(\theta_{i,n})-\lambda_p(0)]} + 1 - \bar{\kappa}_n\frac{\sqrt{d}\sigma_p}{\sum_{i=1}^{d}[\lambda_p(\theta_{i,n})-\lambda_p(0)]}  \geq 0.
\end{equation*}
Hence, from~$d^{-1/2}\sum_{i=1}^{d}[\lambda_p(\theta_{i,n})-\lambda_p(0)]\to\infty$ and \eqref{eq:pnormconshighlevel} together with~$\bar{\kappa}_n\to \Phi^{-1}(1-\alpha)$, it follows that
\begin{align*}
\P\del[4]{\enVert[3]{n^{-1/2}\sum_{i=1}^n\hat{\bm{\Sigma}}^{-1/2}_{n}\bm{Y}_{i,n}}_p\geq \kappa_{n}}\to 1.
\end{align*}
Finally,~\eqref{eq:conschargp} of Part 3~follows from~\eqref{eq:lambdaprop}.

\section{Gaussian approximation over convex sets with estimated covariance matrix and~$d\to\infty$}
\subsection{Estimating $\mathbf{\Sigma}_n^{-1/2}$}
Recall that for any~$d\times d$ matrix~$\bm{A}$ we denote by~$\bm{A}^{-1}$ its Moore-Penrose generalized inverse and define~$\bm{A}^{-1/2}:=(\bm{A}^{1/2})^{-1}$, in case~$\bm{A}$ is symmetric and positive semidefinite, cf.~Footnote~\ref{fn:MPinv}. The following result is well known and immediately follows from, e.g., Equations (7.2.13) and  (5.8.4) in~\cite{horn2012matrix}. We provide the argument for completeness. Recall also that $\hat{\bm{\Sigma}}$ is positive semidefinite and symmetric by assumption.
\begin{lemma}\label{lem:covestim}
Let~$n'$ be a subsequence of~$n$, assume that the eigenvalues of~$\bm{\Sigma}_{n'}$ are (uniformly) bounded away from zero and from above, and that $\enVert[0]{\hat{\bm{\Sigma}}_{n'}-\bm{\Sigma}_{n'}}_2=O_\P(a_{n'})$ for some~$a_{n'}\to 0$. Then, $\enVert[1]{\hat{\bm{\Sigma}}^{-1/2}_{n'}-\bm{\Sigma}^{-1/2}_{n'}}_2
=
O_{\P}(a_{n'})$.
\end{lemma}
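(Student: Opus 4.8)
The plan is to reduce the statement to a \emph{deterministic} operator-Lipschitz bound for the map $\bm{A}\mapsto\bm{A}^{-1/2}$, valid on a high-probability event on which $\hat{\bm{\Sigma}}_{n'}$ is positive definite with eigenvalues bounded away from zero. Write $c>0$ for a uniform lower bound on the eigenvalues of $\bm{\Sigma}_{n'}$. First I would pass to the favourable event: since $\|\hat{\bm{\Sigma}}_{n'}-\bm{\Sigma}_{n'}\|_2=O_\P(a_{n'})$, for any $\eps>0$ there are $M,N$ with $\P(\|\hat{\bm{\Sigma}}_{n'}-\bm{\Sigma}_{n'}\|_2>Ma_{n'})<\eps$ for $n'\geq N$, and because $a_{n'}\to 0$ I may enlarge $N$ so that $Ma_{n'}<c/2$ as well. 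On the event $E_{n'}=\{\|\hat{\bm{\Sigma}}_{n'}-\bm{\Sigma}_{n'}\|_2\leq Ma_{n'}\}$, Weyl's inequality gives $\lambda_{\min}(\hat{\bm{\Sigma}}_{n'})\geq c-Ma_{n'}\geq c/2>0$, so $\hat{\bm{\Sigma}}_{n'}$ is invertible (its Moore--Penrose inverse coincides with the ordinary inverse) and both $\hat{\bm{\Sigma}}_{n'}$ and $\bm{\Sigma}_{n'}$ have spectra contained in $[c/2,\infty)$.

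The core of the argument is the deterministic bound: for symmetric positive definite $\bm{A},\bm{B}$ with smallest eigenvalue at least $c/2$, one has $\|\bm{A}^{-1/2}-\bm{B}^{-1/2}\|_2\leq K(c)\|\bm{A}-\bm{B}\|_2$ for a constant $K(c)$ depending only on $c$. I would derive this from the scalar integral representation $x^{-1/2}=\pi^{-1}\int_0^\infty t^{-1/2}(x+t)^{-1}\,dt$, which lifts via the spectral decomposition to the matrix identity $\bm{A}^{-1/2}-\bm{B}^{-1/2}=\pi^{-1}\int_0^\infty t^{-1/2}[(\bm{A}+t\mathbf{I})^{-1}-(\bm{B}+t\mathbf{I})^{-1}]\,dt$. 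Applying the resolvent identity $(\bm{A}+t\mathbf{I})^{-1}-(\bm{B}+t\mathbf{I})^{-1}=(\bm{A}+t\mathbf{I})^{-1}(\bm{B}-\bm{A})(\bm{B}+t\mathbf{I})^{-1}$ together with $\|(\bm{A}+t\mathbf{I})^{-1}\|_2\leq(c/2+t)^{-1}$ (and likewise for $\bm{B}$) yields the integrand bound $t^{-1/2}(c/2+t)^{-2}\|\bm{A}-\bm{B}\|_2$, and the remaining integral $\int_0^\infty t^{-1/2}(c/2+t)^{-2}\,dt$ is a finite (Beta) constant that evaluates to $\tfrac{\pi}{2}(c/2)^{-3/2}$, giving $K(c)=\tfrac12(c/2)^{-3/2}$. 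The virtue of routing through the integral representation is that it sidesteps the non-commutativity of $\bm{A}$ and $\bm{B}$: a naive attempt to factor $\bm{A}^{-1/2}-\bm{B}^{-1/2}$ directly runs into exactly this obstacle, which I regard as the only genuine difficulty in the proof. (Alternatively one could first bound $\|\hat{\bm{\Sigma}}_{n'}^{1/2}-\bm{\Sigma}_{n'}^{1/2}\|_2$ by the same device applied to $\sqrt{x}=\tfrac{2}{\pi}\int_0^\infty x(x+t^2)^{-1}\,dt$ and then use $\hat{\bm{\Sigma}}_{n'}^{-1/2}-\bm{\Sigma}_{n'}^{-1/2}=\hat{\bm{\Sigma}}_{n'}^{-1/2}(\bm{\Sigma}_{n'}^{1/2}-\hat{\bm{\Sigma}}_{n'}^{1/2})\bm{\Sigma}_{n'}^{-1/2}$, but the direct route is shorter.)

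Combining the two pieces finishes the argument: on $E_{n'}$ the deterministic bound gives $\|\hat{\bm{\Sigma}}_{n'}^{-1/2}-\bm{\Sigma}_{n'}^{-1/2}\|_2\leq K(c)Ma_{n'}$, so $\P(\|\hat{\bm{\Sigma}}_{n'}^{-1/2}-\bm{\Sigma}_{n'}^{-1/2}\|_2>K(c)Ma_{n'})\leq\P(E_{n'}^{c})<\eps$ for all $n'\geq N$. Since $\eps>0$ was arbitrary and $K(c)$ does not depend on $n'$, this is precisely the assertion $\|\hat{\bm{\Sigma}}_{n'}^{-1/2}-\bm{\Sigma}_{n'}^{-1/2}\|_2=O_\P(a_{n'})$.
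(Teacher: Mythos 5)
Your proof is correct, and it takes a genuinely different (and more self-contained) route than the paper. The paper's proof handles the stochastic part exactly as you do --- it restricts to the event $G_n$ on which $\hat{\bm{\Sigma}}_{n}$ is invertible (so that the Moore--Penrose inverse is the ordinary inverse), notes $\P(G_n^c)\to 0$, and then disposes of the deterministic perturbation bound by citing two standard results from Horn and Johnson: Equation (7.2.13), a perturbation bound for the positive semidefinite square root, chained with Equation (5.8.4), a perturbation bound for the matrix inverse. That two-step route is essentially the alternative you sketch in your parenthetical remark, via $\hat{\bm{\Sigma}}_{n'}^{-1/2}-\bm{\Sigma}_{n'}^{-1/2}=\hat{\bm{\Sigma}}_{n'}^{-1/2}\bigl(\bm{\Sigma}_{n'}^{1/2}-\hat{\bm{\Sigma}}_{n'}^{1/2}\bigr)\bm{\Sigma}_{n'}^{-1/2}$. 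Your main argument instead proves the needed Lipschitz estimate for $\bm{A}\mapsto\bm{A}^{-1/2}$ in one shot from the integral representation $x^{-1/2}=\pi^{-1}\int_0^\infty t^{-1/2}(x+t)^{-1}\,dt$ and the resolvent identity; I checked the Beta-function evaluation and your constant $K(c)=\tfrac12(c/2)^{-3/2}$ is right, as is the observation that only the lower eigenvalue bound is needed (the map $x\mapsto x^{-1/2}$ has bounded derivative on $[c/2,\infty)$, so the upper eigenvalue bound assumed in the lemma plays no role). What each approach buys: the paper's version is shorter because it outsources the matrix analysis to a textbook; yours is citation-free, yields an explicit constant, and makes transparent where non-commutativity is circumvented --- the only genuine difficulty, as you correctly identify. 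Your handling of the Weyl-inequality step and of the $O_\P$ quantifiers (with $M=M_\eps$ and $N$ enlarged so that $Ma_{n'}<c/2$) is also complete.
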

\begin{proof}
Write~$n$ instead of~$n'$. Since~$\enVert[0]{\hat{\bm{\Sigma}}_n-\bm{\Sigma}_{n}}_2\to 0$ in probability and the eigenvalues of $\bm{\Sigma}_n$ are uniformly bounded away from zero and infinity, $\hat{\bm{\Sigma}}_{n}$ is invertible with probability tending to one. Denote by~$G_n$ the event on which~$\hat{\bm{\Sigma}}_{n}$, and hence~$\hat{\bm{\Sigma}}_n^{1/2}$, is invertible. On~$G_n$ one has that~$\hat{\bm{\Sigma}}_n^{-1/2}$ is the regular matrix inverse of~$\hat{\bm{\Sigma}}_n^{1/2}$ and so
\begin{align*}
\enVert[1]{\hat{\bm{\Sigma}}^{-1/2}_{n}-\bm{\Sigma}^{-1/2}_{n}}_2
=
\enVert[1]{(\hat{\bm{\Sigma}}_n^{1/2})^{-1}-(\bm{\Sigma}_{n}^{1/2})^{-1}}_2\mathds{1}_{G_n}+\enVert[1]{\hat{\bm{\Sigma}}^{-1/2}_{n}-\bm{\Sigma}^{-1/2}_{n}}_2\mathds{1}_{G_n^c},
\end{align*}
where, with a slight overload of notation, the matrix inverse is now a regular inverse on~$G_n$. Equations (7.2.13) and  (5.8.4) in~\cite{horn2012matrix} guarantee that
\begin{align*}
\enVert[1]{\hat{\bm{\Sigma}}^{-1/2}_{n}-\bm{\Sigma}^{-1/2}_{n}}_2
\leq 
O_{\P}(a_n) +\enVert[1]{\hat{\bm{\Sigma}}^{-1/2}_{n}-\bm{\Sigma}^{-1/2}_{n}}_2\mathds{1}_{G_n^c}
,
\end{align*}
and the conclusion follows from~$\P(G_n^c)\to 0$.
\end{proof}

\subsection{An approximation result}
\begin{lemma}\label{eq:pnormerror}
Let~$p\in[2,\infty)$,~$n'$ be a subsequence of~$n$ and~$d'\to\infty$. Assume that the eigenvalues of~$\bm{\Sigma}_{n'}$ are (uniformly) bounded away from zero and from above and that $\enVert[0]{\hat{\bm{\Sigma}}_{n'}-\bm{\Sigma}_{n'}}_2=O_\P(a_{n'})$ for some~$a_{n'}\to 0$, as well as
\begin{align}\label{eq:GAlemmaass}
\sup_{t\in\R}\envert[4]{\P\del[3]{\enVert[3]{{n'}^{-1/2}\sum_{i=1}^{n'}\bm{\Sigma}_{n'}^{-1/2}\bm{Y}_{i,n'}}_p\leq t}-\P\del[1]{\|\bm{Z}_d+\bm{\theta}_{n}\|_p\leq t}}\to 0.
\end{align}
\begin{enumerate}
\item If $\frac{1}{d'} \sum_{i = 1}^{d'} g_p(\theta_{i,n'})$ is bounded, then
\begin{align*}
\enVert[3]{{n'}^{-1/2}\sum_{i=1}^{n'}\hat{\bm{\Sigma}}_{n'}^{-1/2}\bm{Y}_{i,n'}-{n'}^{-1/2}\sum_{i=1}^{n'}\bm{\Sigma}_{n'}^{-1/2}\bm{Y}_{i,n'}}_p
=
O_\P\del[1]{{d'}^{1/2}a_{n'}}.
\end{align*}
\item If $\frac{1}{d'} \sum_{i = 1}^{d'} g_p(\theta_{i,n'})\to\infty$, then
\begin{align*}
\enVert[3]{{n'}^{-1/2}\sum_{i=1}^{n'}\hat{\bm{\Sigma}}_{n'}^{-1/2}\bm{Y}_{i,n'}-{n'}^{-1/2}\sum_{i=1}^{n'}\bm{\Sigma}_{n'}^{-1/2}\bm{Y}_{i,n'}}_p
=
O_\P\del[4]{{d'}^{\frac{1}{2}-\frac{1}{p}}\sbr[3]{\sum_{i = 1}^{d'} g_p(\theta_{i,n'})}^{\frac{1}{p}}a_{n'}}.
\end{align*}
\end{enumerate}
\end{lemma}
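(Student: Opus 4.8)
The plan is to reduce both parts to a single deterministic norm inequality that isolates the estimation error of $\bm{\Sigma}_{n'}^{-1/2}$ from the random magnitude of the normalised statistic, and then to control the latter through the $p$-norm Gaussian approximation~\eqref{eq:GAlemmaass}. Write $\bm{H}_{n'}={n'}^{-1/2}\sum_{i=1}^{n'}\bm{Y}_{i,n'}$ and $\bm{W}_{n'}=\bm{\Sigma}_{n'}^{-1/2}\bm{H}_{n'}$, so that, since $\bm{\Sigma}_{n'}$ is invertible (its eigenvalues are bounded away from zero), the vector whose $p$-norm is to be bounded equals $(\hat{\bm{\Sigma}}_{n'}^{-1/2}-\bm{\Sigma}_{n'}^{-1/2})\bm{H}_{n'}=\bm{M}_{n'}\bm{W}_{n'}$ with $\bm{M}_{n'}:=(\hat{\bm{\Sigma}}_{n'}^{-1/2}-\bm{\Sigma}_{n'}^{-1/2})\bm{\Sigma}_{n'}^{1/2}$. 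First I would record that $\|\bm{M}_{n'}\|_2=O_\P(a_{n'})$: this follows from submultiplicativity of the spectral norm, Lemma~\ref{lem:covestim} (which upgrades $\|\hat{\bm{\Sigma}}_{n'}-\bm{\Sigma}_{n'}\|_2=O_\P(a_{n'})$ to $\|\hat{\bm{\Sigma}}_{n'}^{-1/2}-\bm{\Sigma}_{n'}^{-1/2}\|_2=O_\P(a_{n'})$), together with $\|\bm{\Sigma}_{n'}^{1/2}\|_2=O(1)$.

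The key deterministic step is then the chain of norm inequalities, valid for $p\ge 2$,
\begin{equation*}
\|\bm{M}_{n'}\bm{W}_{n'}\|_p \le \|\bm{M}_{n'}\bm{W}_{n'}\|_2 \le \|\bm{M}_{n'}\|_2\,\|\bm{W}_{n'}\|_2 \le \|\bm{M}_{n'}\|_2\,{d'}^{\,1/2-1/p}\,\|\bm{W}_{n'}\|_p,
\end{equation*}
where the first inequality is monotonicity of $p$-norms, the second is the definition of the spectral norm, and the third is the elementary bound $\|\bm{x}\|_2\le {d'}^{\,1/2-1/p}\|\bm{x}\|_p$. This is exactly what forces the factor ${d'}^{\,1/2-1/p}$ to appear, and it is the reason the statement is framed through the $p$-norm of $\bm{W}_{n'}$, which is precisely the quantity~\eqref{eq:GAlemmaass} controls. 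It then remains to bound $\|\bm{W}_{n'}\|_p$ in probability. I would compute $\E\|\bm{Z}_{d'}+\bm{\theta}_{n'}\|_p^p=\sum_{i=1}^{d'}\lambda_p(\theta_{i,n'})$ and invoke~\eqref{eq:lambdaprop} to obtain $\sum_{i=1}^{d'}\lambda_p(\theta_{i,n'})\asymp d'+\sum_{i=1}^{d'}g_p(\theta_{i,n'})$ (using $\lambda_p(0)>0$); Markov's inequality gives $\|\bm{Z}_{d'}+\bm{\theta}_{n'}\|_p=O_\P\big((d'+\sum_{i=1}^{d'}g_p(\theta_{i,n'}))^{1/p}\big)$, and~\eqref{eq:GAlemmaass} transfers this stochastic boundedness to $\|\bm{W}_{n'}\|_p$.

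Finally I would combine the last display with these bounds and split according to the regime. In Part~1, boundedness of $d'^{-1}\sum_i g_p(\theta_{i,n'})$ gives $(d'+\sum_i g_p)^{1/p}\asymp{d'}^{1/p}$, so the bound collapses to $O_\P(a_{n'}{d'}^{1/2-1/p}{d'}^{1/p})=O_\P(a_{n'}{d'}^{1/2})$; in Part~2, $d'^{-1}\sum_i g_p(\theta_{i,n'})\to\infty$ gives $(d'+\sum_i g_p)^{1/p}\asymp(\sum_i g_p)^{1/p}$, yielding the stated $O_\P(a_{n'}{d'}^{1/2-1/p}(\sum_i g_p(\theta_{i,n'}))^{1/p})$. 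The hard part — and the only genuinely non-mechanical point — will be the transfer step: \eqref{eq:GAlemmaass} only provides \emph{uniform convergence of the distribution function} of $\|\bm{W}_{n'}\|_p$ to that of $\|\bm{Z}_{d'}+\bm{\theta}_{n'}\|_p$, not convergence of moments, so I must argue that an $O_\P$ bound on the Gaussian side passes to $\|\bm{W}_{n'}\|_p$. I would do this by setting $t_{n'}=K(d'+\sum_i g_p)^{1/p}$ and noting $\P(\|\bm{W}_{n'}\|_p>t_{n'})\le\P(\|\bm{Z}_{d'}+\bm{\theta}_{n'}\|_p>t_{n'})+o(1)$ uniformly, where the first term is at most $C/K^p$ by Markov and hence arbitrarily small uniformly in $n'$ for $K$ large, which is precisely the definition of $\|\bm{W}_{n'}\|_p=O_\P((d'+\sum_i g_p)^{1/p})$.
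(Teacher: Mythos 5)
Your proposal is correct, and while it shares the paper's skeleton, the crucial stochastic-boundedness step is handled by a genuinely different and more elementary argument. The initial reduction is the same in substance: you factor the error as $\bm{M}_{n'}\bm{W}_{n'}$ with $\bm{M}_{n'}=(\hat{\bm{\Sigma}}_{n'}^{-1/2}-\bm{\Sigma}_{n'}^{-1/2})\bm{\Sigma}_{n'}^{1/2}$ and extract the factor ${d'}^{1/2-1/p}$; the paper does this by citing the induced matrix-norm inequality $\|\bm{A}\|_p\le d^{1/2-1/p}\|\bm{A}\|_2$ from \cite{goldberg1987equivalence}, whereas your vector-norm chain $\|\bm{M}\bm{W}\|_p\le\|\bm{M}\bm{W}\|_2\le\|\bm{M}\|_2\|\bm{W}\|_2\le\|\bm{M}\|_2 d^{1/2-1/p}\|\bm{W}\|_p$ is a self-contained proof of the same fact, so this difference is cosmetic. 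The real divergence is in bounding $\|\bm{W}_{n'}\|_p$: the paper imports the distributional limit theorems from \cite{kp2021consistency} --- the CLT of their Lemma C.3 for $\|\bm{Z}_{d}+\bm{\theta}_n\|_p^p$ when $d^{-1/2}\sum_i g_p(\theta_{i,n})$ is bounded, and the law-of-large-numbers-type result (C.23) when it diverges --- transfers each via \eqref{eq:GAlemmaass}, and must therefore run a subsequence case analysis (within Part 1, splitting according to whether $d^{-1/2}\sum_i g_p(\theta_{i,n})$ stays bounded, even though Part 1 only assumes $d^{-1}\sum_i g_p(\theta_{i,n})$ bounded). You instead compute $\E\|\bm{Z}_{d'}+\bm{\theta}_{n'}\|_p^p=\sum_i\lambda_p(\theta_{i,n'})\asymp d'+\sum_i g_p(\theta_{i,n'})$ via \eqref{eq:lambdaprop}, apply Markov's inequality on the Gaussian side, and transfer the tail bound through \eqref{eq:GAlemmaass} --- which, being uniform in $t$, does support exactly this transfer, and you correctly flagged and resolved this as the non-mechanical point. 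Your route gives the single unified bound $\|\bm{W}_{n'}\|_p=O_\P\bigl((d'+\sum_i g_p(\theta_{i,n'}))^{1/p}\bigr)$ covering both regimes at once, needs no subsequence extraction and no external limit theorems; what the paper's heavier machinery buys is sharper distributional information ($S_{n,p}^p$ centred and scaled converges in law, or concentrates at the rate $\sum_i[\lambda_p-\lambda_p(0)]$), which the paper needs anyway elsewhere (in Theorems \ref{thm:pnorm-char_GMM} and \ref{thm:prim}), but which is strictly more than this lemma's $O_\P$ conclusion requires. For the lemma in isolation, your argument is the leaner one.
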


\begin{proof}
Let~$p\in[2,\infty)$ and write~$n$ instead of~$n'$. Note that
\begin{equation}\label{eq:pnormdiff}
\begin{aligned}
& ~ \enVert[4]{\frac{1}{\sqrt{n}}\sum_{i=1}^{n}\hat{\bm{\Sigma}}_{n}^{-1/2}\bm{Y}_{i,n}
-
\frac{1}{\sqrt{n}}\sum_{i=1}^{n}\bm{\Sigma}_{n}^{-1/2}\bm{Y}_{i,n}}_p\\
\leq 
&~\enVert[2]{(\hat{\bm{\Sigma}}_{n}^{-1/2}-\bm{\Sigma}_{n}^{-1/2})\bm{\Sigma}_{n}^{1/2}}_p\enVert[4]{\frac{1}{\sqrt{n}}\sum_{i=1}^{n}\bm{\Sigma}_{n}^{-1/2}\bm{Y}_{i,n}}_p\\
\leq 
&~ d^{\frac{1}{2}-\frac{1}{p}}\enVert[2]{\bm{\Sigma}_{n}^{1/2}}_2\enVert[2]{\hat{\bm{\Sigma}}_{n}^{-1/2}-\bm{\Sigma}_{n}^{-1/2}}_2\enVert[4]{\frac{1}{\sqrt{n}}\sum_{i=1}^{n}\bm{\Sigma}_{n}^{-1/2}\bm{Y}_{i,n}}_p \\
=
&~ d^{\frac{1}{2}-\frac{1}{p}} O_\P(a_n)\enVert[4]{\frac{1}{\sqrt{n}}\sum_{i=1}^{n}\bm{\Sigma}_{n}^{-1/2}\bm{Y}_{i,n}}_p
\end{aligned}	
\end{equation}
where the second inequality follows from (1.2) of Theorem 1 in~\cite{goldberg1987equivalence} and the equality follows from
Lemma~\ref{lem:covestim} together with the uniform boundedness assumption on the eigenvalues of~$\bm{\Sigma}_n$. To proceed, observe that by~\eqref{eq:lambdaprop}
\begin{align}\label{eq:lambdatheta}
\sum_{i = 1}^{d} \lambda_p(\theta_{i,n})
=
\sum_{i = 1}^{d} [\lambda_p(\theta_{i,n})-\lambda_p(0)]+d \lambda_p(0)
\leq
c_p\sum_{i = 1}^{d}g_p(\theta_{i,n})+d\lambda_p(0).
\end{align} 

Consider first the case where~$d^{-1/2}\sum_{i = 1}^{d} g_p(\theta_{i,n})$ is bounded, which is only possible in Part 1: Then, Lemma~C.3 in~\cite{kp2021consistency} gives
\begin{equation*}
\frac{\|\bm{Z}_{d}+\bm{\theta}_{n}\|_p^p - \sum_{i = 1}^{d} \lambda_p(\theta_{i,n})}{\sqrt{d \mathbb{V}ar|Z|^p}} \rightsquigarrow \mathsf{N}_1(0, 1),
\end{equation*}
which together with~\eqref{eq:GAlemmaass} yields [use ``$t=\del[0]{x\cdot\sqrt{d \mathbb{V}ar|Z|^p}+\sum_{i = 1}^{d} \lambda_p(\theta_{i,n})}^{1/p}$'' and~$x\in\R$] 
\begin{align*}
\frac{\enVert[1]{{n}^{-1/2}\sum_{i=1}^{n}\bm{\Sigma}_{n}^{-1/2}\bm{Y}_{i,n}}_p^p-\sum_{i = 1}^{d} \lambda_p(\theta_{i,n})}{\sqrt{d \mathbb{V}ar|Z|^p}}
\rightsquigarrow \mathsf{N}_1(0, 1).
\end{align*}
Therefore, by~\eqref{eq:lambdatheta}, and exploiting boundedness of~$d^{-1/2} \sum_{i = 1}^{d} g_p(\theta_{i,n})$,
\begin{align*}
\enVert[4]{{n}^{-1/2}\sum_{i=1}^{n}\bm{\Sigma}_{n}^{-1/2}\bm{Y}_{i,n}}_p
=
O_{\P}\del[4]{\sbr[3]{\sum_{i = 1}^{d} \lambda_p(\theta_{i,n})}^{1/p}+\sbr[2]{{d \mathbb{V}ar|Z|^p}}^{1/2p}}
=
O_\P(d^{1/p}),
\end{align*}
which together with~\eqref{eq:pnormdiff} implies Part 1 when~$d^{-1/2}\sum_{i = 1}^{d} g_p(\theta_{i,n})$ is bounded.

Consider next the case where~$d^{-1/2} \sum_{i = 1}^{d} g_p(\theta_{i,n})\to\infty$ (potentially along a subsequence $n''$ of $n'$, which we again denote by~$n$ for simplicity of notation), a condition that is always satisfied in Part 2: By (C.23) in~\cite{kp2021consistency},
\begin{align*}
\frac{\enVert[1]{\bm{Z}_{d}+\bm{\theta}_n}_p^p-\sum_{i=1}^{d}\lambda_p(\theta_{i,n})}{\sum_{i=1}^{d}g_p(\theta_{i,n})}
=
o_{\P}(1),
\end{align*}
which together with~\eqref{eq:GAlemmaass} yields
\begin{align*}
\frac{\enVert[1]{{n}^{-1/2}\sum_{i=1}^{n}\bm{\Sigma}_{n}^{-1/2}\bm{Y}_{i,n}}_p^p	-\sum_{i=1}^{d}\lambda_p(\theta_{i,n})}{\sum_{i=1}^{d}g_p(\theta_{i,n})}
=
o_{\P}(1).
\end{align*}
Hence,
\begin{equation}\label{eqn:bd2}
\enVert[4]{{n}^{-1/2}\sum_{i=1}^{n}\bm{\Sigma}_{n}^{-1/2}\bm{Y}_{i,n}}_p
=
O_{\P}\del[4]{\sbr[3]{\sum_{i = 1}^{d} \lambda_p(\theta_{i,n})}^{1/p}+\sbr[3]{\sum_{i=1}^{d}g_p(\theta_{i,n})}^{1/p}}.
\end{equation}
Together with~\eqref{eq:lambdatheta} this yields 
\begin{equation}\label{eqn:aux3}
\enVert[4]{{n}^{-1/2}\sum_{i=1}^{n}\bm{\Sigma}_{n}^{-1/2}\bm{Y}_{i,n}}_p
=
\begin{cases}
O_\P\del[0]{d^{1/p}}\quad &\text{if}\quad \frac{1}{d} \sum_{i = 1}^{d} g_p(\theta_{i,n})\text{ is bounded}\\
O_\P\del[1]{[\sum_{i = 1}^{d} g_p(\theta_{i,n})]^{1/p}}\quad &\text{if}\quad\frac{1}{d} \sum_{i = 1}^{d} g_p(\theta_{i,n})\to\infty.
\end{cases}
\end{equation}
Together with~\eqref{eq:pnormdiff}, the first case in \eqref{eqn:aux3} now also establishes Part 1 in the remaining case where~$d^{-1/2} \sum_{i = 1}^{d} g_p(\theta_{i,n})\to\infty$  (along subsequences). Furthermore, the second case in \eqref{eqn:aux3} establishes Part 2.
\end{proof}

\subsection{Gaussian approximation}
For any non-empty $A\subseteq \R^d$,~$\eps>0$, and $p \in [1, \infty]$, denote $$A^{\eps}=\cbr[0]{\bm{x}\in\R^d:\inf_{\bm{y}\in A}\|\bm{x}-\bm{y}\|_2\leq \eps} \quad \text{ and } \quad \mc{B}_p(\bm{x},\eps)=\cbr[1]{\bm{y}\in\R^d:\|\bm{y}-\bm{x}\|_p\leq \eps}.$$ We abbreviate~$\mc{B}_p(\eps)=\mc{B}_p(\bm{0}_d,\eps)$,  and set $$A^{-\eps}=\cbr[0]{\bm{x}\in\R^d:\mc{B}_2(\bm{x},\eps) \subseteq A}.$$ Recall that~$\mc{C}_n=\cbr[0]{C\subseteq \R^{d(n)}: C\text{ is convex and Borel measurable}}$.

\begin{lemma}\label{lem:convgauss}
Let~$n'$ be a subsequence of~$n$ and~$d'\to\infty$. Assume that the eigenvalues of~$\bm{\Sigma}_{n'}$ are (uniformly) bounded away from zero and from above and that $\enVert[0]{\hat{\bm{\Sigma}}_{n'}-\bm{\Sigma}_{n'}}_2=O_\P(a_{n'})$ with~${d'}^{3/4}a_{n'}\to 0$ as well as
\begin{align}\label{eq:convass}
\sup_{C\in\mc{C}_{n'}}\envert[3]{\P\del[3]{{n'}^{-1/2}\sum_{i=1}^{n'}\bm{\Sigma}_{n'}^{-1/2}\bm{Y}_{i,n'}\in C}-\P\del[1]{\bm{Z}_{d'}+\bm{\theta}_{n'}\in C}}
\to
0.
\end{align}
Then, if~$\frac{1}{d'} \sum_{i = 1}^{d'} g_2(\theta_{i,n'})=\frac{1}{d'}\|\bm{\theta}_{n'}\|_2^2$ is bounded,  it holds that
\begin{align}\label{eq:GAconv}
\sup_{C\in\mc{C}_{n'}}\envert[3]{\P\del[3]{{n'}^{-1/2}\sum_{i=1}^{n'}\hat{\bm{\Sigma}}_{n'}^{-1/2}\bm{Y}_{i,n'}\in C}-\P\del[1]{\bm{Z}_{d'}+\bm{\theta}_{n'}\in C}}
\to
0.
\end{align}
\end{lemma}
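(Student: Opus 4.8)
Throughout I write $n$ for $n'$ and $d$ for $d'$, and abbreviate $\bm{W}_n = n^{-1/2}\sum_{i=1}^n \bm{\Sigma}_n^{-1/2}\bm{Y}_{i,n}$ and $\hat{\bm{W}}_n = n^{-1/2}\sum_{i=1}^n \hat{\bm{\Sigma}}_n^{-1/2}\bm{Y}_{i,n}$, so that \eqref{eq:convass} states that $\bm{W}_n$ is uniformly close to $\bm{Z}_d+\bm{\theta}_n$ over convex sets and \eqref{eq:GAconv} is the same assertion for $\hat{\bm{W}}_n$. The plan is to transfer the approximation from $\bm{W}_n$ to $\hat{\bm{W}}_n$ by a sandwiching argument: the two vectors are close in Euclidean norm, enlarging or shrinking a convex set by a Euclidean ball preserves convexity, and the only loss incurred is the Gaussian mass of a thin shell around the boundary of a convex set, which can be bounded uniformly by Gaussian anti-concentration.

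First I would control $\|\hat{\bm{W}}_n-\bm{W}_n\|_2$. Since every Euclidean ball $\{\bm{x}:\|\bm{x}\|_2\le t\}$ is convex, \eqref{eq:convass} implies the one-dimensional approximation \eqref{eq:GAlemmaass} for $p=2$; and because $\tfrac1d\sum_{i=1}^d g_2(\theta_{i,n})=\tfrac1d\|\bm{\theta}_n\|_2^2$ is bounded by hypothesis, Part~1 of Lemma~\ref{eq:pnormerror} applied with $p=2$ yields $\|\hat{\bm{W}}_n-\bm{W}_n\|_2=O_\P(d^{1/2}a_n)$.

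Next I would fix a deterministic sequence $\eps_n>0$ with $d^{1/2}a_n\ll\eps_n\ll d^{-1/4}$; such a sequence exists precisely because $d^{3/4}a_n\to0$ (e.g.\ $\eps_n=d^{1/8}a_n^{1/2}$, for which $\eps_n/(d^{1/2}a_n)=(d^{3/4}a_n)^{-1/2}\to\infty$ and $d^{1/4}\eps_n=(d^{3/4}a_n)^{1/2}\to0$). By the previous step $r_n:=\P(\|\hat{\bm{W}}_n-\bm{W}_n\|_2>\eps_n)\to0$. For any convex Borel set $C$, on the event $\{\|\hat{\bm{W}}_n-\bm{W}_n\|_2\le\eps_n\}$ one has $\{\hat{\bm{W}}_n\in C\}\subseteq\{\bm{W}_n\in C^{\eps_n}\}$ and $\{\bm{W}_n\in C^{-\eps_n}\}\subseteq\{\hat{\bm{W}}_n\in C\}$, so
\begin{equation*}
\P(\bm{W}_n\in C^{-\eps_n})-r_n\le\P(\hat{\bm{W}}_n\in C)\le\P(\bm{W}_n\in C^{\eps_n})+r_n.
\end{equation*}
As $C^{\eps_n}=C\oplus\mc{B}_2(\eps_n)$ and $C^{-\eps_n}=\bigcap_{\|\bm{b}\|_2\le\eps_n}(C-\bm{b})$ are again convex, \eqref{eq:convass} lets me replace $\bm{W}_n$ by $\bm{Z}_d+\bm{\theta}_n$ in both outer bounds at the cost of $\rho_n:=\sup_{C\in\mc{C}_n}\envert[1]{\P(\bm{W}_n\in C)-\P(\bm{Z}_d+\bm{\theta}_n\in C)}\to0$.

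The crux, and the step that pins down the rate $d^{3/4}a_n\to0$, is to bound the two residual shell probabilities $\P(\bm{Z}_d+\bm{\theta}_n\in C^{\eps_n}\setminus C)$ and $\P(\bm{Z}_d+\bm{\theta}_n\in C\setminus C^{-\eps_n})$ uniformly in $C$. Each is at most $\P(\bm{Z}_d+\bm{\theta}_n\in C^{\eps_n}\setminus C^{-\eps_n})$, and translating by $-\bm{\theta}_n$ (which preserves convexity) turns this into the standard-Gaussian mass of the $\eps_n$-shell of the convex set $C-\bm{\theta}_n$; in particular the mean shift drops out entirely, so no control on $\|\bm{\theta}_n\|_2$ is needed here. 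At this point I would invoke Gaussian anti-concentration over convex sets: since the Gaussian surface area of any convex body in $\R^d$ is of order $d^{1/4}$ (Ball's reverse-isoperimetric bound), it follows that $\sup_{A\ \mathrm{convex}}\gamma_d(A^{\eps}\setminus A^{-\eps})\le c_0 d^{1/4}\eps$ for a universal constant $c_0$ and all $\eps>0$, where $\gamma_d$ is the standard Gaussian law on $\R^d$. This gives a shell contribution $c_0 d^{1/4}\eps_n\to0$, and collecting the three error terms yields
\begin{equation*}
\sup_{C\in\mc{C}_n}\envert[1]{\P(\hat{\bm{W}}_n\in C)-\P(\bm{Z}_d+\bm{\theta}_n\in C)}\le\rho_n+c_0 d^{1/4}\eps_n+r_n\to0,
\end{equation*}
which is \eqref{eq:GAconv}. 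The main obstacle is thus not the sandwiching itself but securing the correct dimension dependence $d^{1/4}$ in the shell bound; it is exactly this factor, combined with the $O_\P(d^{1/2}a_n)$ approximation error, that forces the condition $d^{3/4}a_n\to0$.
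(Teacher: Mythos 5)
Your proposal is correct and follows essentially the same route as the paper's proof: you bound $\|\hat{\bm{W}}_n-\bm{W}_n\|_2=O_\P(d^{1/2}a_n)$ via Part~1 of Lemma~\ref{eq:pnormerror} with $p=2$ (noting, as the paper does implicitly, that~\eqref{eq:convass} yields the required approximation over $2$-norm balls), sandwich each convex $C$ between the convex sets $C^{-\eps_n}$ and $C^{\eps_n}$, and absorb the shell via the $c_0 d^{1/4}\eps$ Gaussian anti-concentration bound for convex sets together with translation invariance, which is exactly how the condition $d^{3/4}a_n\to 0$ enters. The only cosmetic differences are that you take a deterministic intermediate sequence $d^{1/2}a_n\ll\eps_n\ll d^{-1/4}$ where the paper uses $\eps_n=M_\delta d^{1/2}a_n$ with an arbitrary $\delta\in(0,1)$, and that you invoke Ball's Gaussian surface-area bound directly where the paper cites Lemma~2.6 of \cite{bentkus2003dependence} (after \cite{nazarov2003maximal}), which packages the same estimate.
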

\begin{proof}
The quantity inside the supremum in~\eqref{eq:GAconv} vanishes for~$C=\emptyset$. Thus, consider a sequence~$C_n$ with~$C_n \in \mc{C}_n$ and $C_n \neq \emptyset$ for every $n$. Write~$n$ instead of~$n'$ and let
\begin{align*}
A_n:=\enVert[3]{n^{-1/2}\sum_{i=1}^n\hat{\bm{\Sigma}}_n^{-1/2}\bm{Y}_{i,n}-n^{-1/2}\sum_{i=1}^n\bm{\Sigma}_{n}^{-1/2}\bm{Y}_{i,n}}_{2}
\end{align*}
Since~$d^{-1}\sum_{i = 1}^{d} g_2(\theta_{i,n})$ is bounded, Part 1~of Lemma~\ref{eq:pnormerror} implies that for every~$\delta\in(0,1)$ there exists an~$M=M_\delta>0$ such that for~$\eps_n=Md^{1/2}a_n$ one has~$\P(A_n> \eps_n)\leq \delta$ for all~$n$. Therefore, we have
\begin{align*}
\cbr[4]{{n^{-1/2}\sum_{i=1}^n\hat{\bm{\Sigma}}_n^{-1/2}\bm{Y}_{i,n}\in C_n}}
\subseteq
\cbr[4]{{n^{-1/2}\sum_{i=1}^n\bm{\Sigma}_n^{-1/2}\bm{Y}_{i,n}\in C_n^{\eps_n}}}
\cup \cbr[1]{A_n> \eps_n}.
\end{align*}	
Denoting the supremum in~\eqref{eq:convass} by~$r_n$ it thus follows from convexity of~$C_n^{\eps_n}$ that
\begin{align*}
\P\del[2]{{n^{-1/2}\sum_{i=1}^n\hat{\bm{\Sigma}}_n^{-1/2}\bm{Y}_{i,n}\in C_n}}
\leq
\P\del[1]{\bm{Z}_d+\bm{\theta}_n \in C_n^{\eps_n}}+r_n +\delta.	
\end{align*}
Next, by~$C_n^{\eps_n}-\bm{\theta}_n=(C_n-\bm{\theta}_n)^{\eps_n}$, convexity of~$C_n-\bm{\theta}_n$,  and Lemma 2.6 in \cite{bentkus2003dependence} (cf.~also \cite{nazarov2003maximal}), it eventually holds that
\begin{align*}
\P\del[1]{\bm{Z}_d+\bm{\theta}_n \in C_n^{\eps_n}}
&=
\P\del[1]{\bm{Z}_d \in C_n^{\eps_n}-\bm{\theta}_n}\\
&=
\P\del[1]{\bm{Z}_d \in (C_n-\bm{\theta}_n)^{\eps_n}}\\
&\leq
\P\del[1]{\bm{Z}_d \in C_n-\bm{\theta}_n}+d^{1/4}\eps_n\\
&=
\P\del[1]{\bm{Z}_d+\bm{\theta}_n \in C_n}
+
d^{1/4}\eps_n.
\end{align*}
Therefore,
\begin{align*}
\limsup_{n\to\infty}\sbr[3]{\P\del[3]{{n^{-1/2}\sum_{i=1}^n\hat{\bm{\Sigma}}_n^{-1/2}\bm{Y}_{i,n}\in C_n}}-\P\del[1]{\bm{Z}_d+\bm{\theta}_n \in C_n}}
\leq
\delta.
\end{align*}
On the other hand, it also holds that
\begin{align*}
\cbr[4]{{n^{-1/2}\sum_{i=1}^n\bm{\Sigma}_n^{-1/2}\bm{Y}_{i,n}\in C_n^{-\eps_n}}}	
\subseteq
\cbr[4]{{n^{-1/2}\sum_{i=1}^n\hat{\bm{\Sigma}}_n^{-1/2}\bm{Y}_{i,n}\in C_n}}	
\cup
\cbr[1]{A_n> \eps_n}.
\end{align*}
Hence, by convexity of~$C_n^{-\eps_n}$ and recalling that~$r_n$ denotes the supremum in~\eqref{eq:convass} (with~$r_n=0$ for~$C_n^{-\eps_n}=\emptyset$)
\begin{align*}
\P\del[1]{\bm{Z}_d+\bm{\theta}_n \in C_n^{-\eps_n}}-r_n \leq \P\del[2]{{n^{-1/2}\sum_{i=1}^n\hat{\bm{\Sigma}}_n^{-1/2}\bm{Y}_{i,n}\in C_n}}+\delta	
\end{align*}
Next, by~$C_n^{-\eps_n}-\bm{\theta}_n=(C_n-\bm{\theta}_n)^{-\eps_n}$ (which remains true for~$C_n^{-\eps_n}=\emptyset$), convexity of~$C_n-\bm{\theta}_n$ and Lemma 2.6 \cite{bentkus2003dependence} it eventually holds that
\begin{align*}
\P\del[1]{\bm{Z}_d+\bm{\theta}_n \in C_n^{-\eps_n}}
&=
\P\del[1]{\bm{Z}_d \in C_n^{-\eps_n}-\bm{\theta}_n}\\
&=
\P\del[1]{\bm{Z}_d \in (C_n-\bm{\theta}_n)^{-\eps_n}}\\
&\geq
\P\del[1]{\bm{Z}_d \in C_n-\bm{\theta}_n}-d^{1/4}\eps_n\\
&=
\P\del[1]{\bm{Z}_d+\bm{\theta}_n \in C_n}-d^{1/4}\eps_n.
\end{align*}
Therefore,
\begin{align*}
\liminf_{n\to\infty}\sbr[3]{\P\del[3]{{n^{-1/2}\sum_{i=1}^n\hat{\bm{\Sigma}}_n^{-1/2}\bm{Y}_{i,n}\in C_n}}-
\P\del[1]{\bm{Z}_d+\bm{\theta}_n \in C_n}}
\geq 
-\delta.
\end{align*}
Thus,~\eqref{eq:GAconv} follows because the sequence~$C_n \neq \emptyset$ and~$\delta\in(0,1)$ were arbitrary.
\end{proof}

\section{Proof of Theorem \ref{thm:prim}}
It will be repeatedly used that by~\eqref{eq:lambdaprop}, for every~$p\in[2,\infty)$, there exists a~$c_p\in(0,\infty)$ such that
\begin{equation}\label{eqn:uplow}
c_p^{-1}\sum_{i=1}^dg_p(\theta_{i,n})\leq 
\sum_{i=1}^d[\lambda_p(\theta_{i,n})-\lambda_p(0)]\leq c_p\sum_{i=1}^dg_p(\theta_{i,n}).
\end{equation}
Fix~$p\in[2,\infty)$.

Assumption~\ref{ass:pnormsize}, in case we additionally have $\bm{\beta}^* \in \mathbf{B}^{(0)}$, and, more generally, \eqref{eq:pnormlocalhighlevel} of Assumption \ref{ass:pnormconsistency} are verified by combining Lemma~\ref{lem:convgauss} and 
\begin{equation*}
\frac{\|\bm{Z}_{d'}+\bm{\theta}_{n'}\|_p^p - \sum_{i = 1}^{d'} \lambda_p(\theta_{i,n'})}{\sqrt{d' \mathbb{V}ar|Z|^p}} \rightsquigarrow \mathsf{N}_1(0, 1),
\end{equation*}
for every subsequence $n'$ of $n$ along which $(1/\sqrt{d}) \sum_{i=1}^{d}[\lambda_p(\theta_{i,n})-\lambda_p(0)]$ (and hence also ${d}^{-1}\sum_{i=1}^{d}g_2(\theta_{i,n})$) is bounded --- the previous display following from Lemma~C.3 in~\cite{kp2021consistency}.

To verify~\eqref{eq:pnormconshighlevel} of Assumption~\ref{ass:pnormconsistency}, let~${d}^{-1/2} \sum_{i = 1}^{d} [\lambda_p(\theta_{i,n})-\lambda_p(0)]\to\infty$, %and hence~${d}^{-1/2} \sum_{i = 1}^{d} g_p(\theta_{i,n})\to\infty$
so that by (C.23) in \cite{kp2021consistency} together with \eqref{eqn:uplow}
\begin{align}\label{eq:op1}
\frac{\enVert[0]{\bm{Z}_{d}+\bm{\theta}_n}_p^p-\sum_{i=1}^{d}\lambda_p(\theta_{i,n})}{\sum_{i=1}^{d}[\lambda_p(\theta_{i,n})-\lambda_p(0)]}
=
o_{\P}(1).
\end{align}
We show that for any subsequence $n'$ of $n$ there exists a subsequence $n''$ of $n'$ along which~\eqref{eq:pnormconshighlevel} holds:
Pick a subsequence~$n'$ of~$n$. 

\underline{Case 1:} Suppose~$n'$ possesses a subsequence $n''$, say, along which $d^{-1} \sum_{i = 1}^{d} [\lambda_p(\theta_{i,n})-\lambda_p(0)]$ (and hence also~${d}^{-1}\sum_{i=1}^dg_2(\theta_{i,n})$) is bounded,~then~\eqref{eq:pnormconshighlevel} follows along~$n''$ by combining Lemma \ref{lem:convgauss} with~\eqref{eq:op1}. 

\underline{Case 2:} Assume~${d'}^{-1} \sum_{i = 1}^{d'} [\lambda_p(\theta_{i,n'})-\lambda_p(0)]\to\infty$. In this case, set $n'' = n'$ and write~$n$ instead of~$n'$. We first observe that for any~$\delta\in(0,1)$,
\begin{align*}
\cbr[4]{\frac{\enVert[1]{n^{-1/2}\sum_{i=1}^n\hat{\bm{\Sigma}}_n^{-1/2}\bm{Y}_{i,n}}_p^p-\sum_{i=1}^{d}\lambda_p(\theta_{i,n})}{\sum_{i = 1}^{d} [\lambda_p(\theta_{i,n})-\lambda_p(0)]}
\leq 
\delta}
=
\cbr[2]{n^{-1/2}\sum_{i=1}^n\hat{\bm{\Sigma}}_n^{-1/2}\bm{Y}_{i,n}\in \mc{B}_p(r_{n,\delta})},
\end{align*}
where~$r_{n,\delta}=\left(\delta\sum_{i = 1}^{d} [\lambda_p(\theta_{i,n})-\lambda_p(0)]+\sum_{i=1}^{d}\lambda_p(\theta_{i,n})\right)^{1/p}$. Letting
\begin{align*}
B_n=\enVert[3]{{n}^{-1/2}\sum_{i=1}^{n}\hat{\bm{\Sigma}}_{n}^{-1/2}\bm{Y}_{i,n}-{n}^{-1/2}\sum_{i=1}^{n}\bm{\Sigma}_{n}^{-1/2}\bm{Y}_{i,n}}_p,
\end{align*}
it follows from Lemma \ref{eq:pnormerror} that for every~$\rho\in(0,1)$ there exists an~$M=M_\rho$ such that for~$\eps_n=M{d}^{\frac{1}{2}-\frac{1}{p}}a_{n}\sbr[1]{\sum_{i = 1}^{d} g_p(\theta_{i,n})}^{\frac{1}{p}}$ one has~$\P\del[0]{B_n> \eps_n}\leq \rho$. Furthermore, note that since~$r_{n,\delta}^p\geq \delta c_p^{-1}\sum_{i=1}^dg_p(\theta_{i,n})$ it holds that
\begin{align}\label{eq:epsdivr}
\frac{\eps_n}{r_{n,\delta}}
=
O({d}^{\frac{1}{2}-\frac{1}{p}}a_{n})
=
o(1),
\end{align}
such that, in particular,~$r_{n,\delta}-\eps_n$ is eventually strictly positive. Next, observe that (eventually)
\begin{align*}
\cbr[3]{\frac{1}{\sqrt{n}}\sum_{i=1}^n\bm{\Sigma}_n^{-1/2}\bm{Y}_{i,n}\in \mc{B}_p(r_{n,\delta}-\eps_n)}
\subseteq
\cbr[3]{\frac{1}{\sqrt{n}}\sum_{i=1}^n\hat{\bm{\Sigma}}_n^{-1/2}\bm{Y}_{i,n}\in \mc{B}_p(r_{n,\delta})}\cup\cbr[0]{B_n > \eps_n},
\end{align*}
and note that by assumption
\begin{align*}
\liminf_{n\to\infty}\P\del[3]{\frac{1}{\sqrt{n}}\sum_{i=1}^n\bm{\Sigma}_n^{-1/2}\bm{Y}_{i,n}\in \mc{B}_p(r_{n,\delta}-\eps_n)}	
=
\liminf_{n\to\infty}\P\del[1]{\bm{Z}_d+\bm{\theta}_n\in \mc{B}_p(r_{n,\delta}-\eps_n)}.
\end{align*}
Furthermore, 
by Bernoulli's inequality (eventually)
\begin{align*}
(r_{n,\delta}-\eps_n)^p
=
r_{n,\delta}^p(1-\eps_n/r_{n,\delta})^p
\geq
r_{n,\delta}^p(1-p\eps_n/r_{n,\delta})
\end{align*}
so that the right-hand side of the penultimate display is no smaller than
\begin{align*}
&\liminf_{n\to\infty}\P\del[1]{\bm{Z}_d+\bm{\theta}_n\in \mc{B}_p(r_{n,\delta}(1-p\eps_n/r_{n,\delta})^{1/p})} \\ = &
\liminf_{n\to\infty}\P \del[4]{\frac{\enVert[0]{\bm{Z}_d+\bm{\theta}_n}_p^p-\sum_{i=1}^{d}\lambda_p(\theta_{i,n})}{\sum_{i = 1}^{d} [\lambda_p(\theta_{i,n})-\lambda_p(0)]}<\delta -\frac{pr^p_{n,\delta}\eps_n}{r_{n,\delta}\sum_{i = 1}^{d} [\lambda_p(\theta_{i,n})-\lambda_p(0)]}}. 
\end{align*}
Because~$r_{n,\delta}^p/\sum_{i = 1}^{d} [\lambda_p(\theta_{i,n})-\lambda_p(0)]$ is bounded (recall that ~${d}^{-1} \sum_{i = 1}^{d} [\lambda_p(\theta_{i,n})-\lambda_p(0)]\to\infty$) and by~\eqref{eq:epsdivr}
\begin{align*}
\frac{pr^p_{n,\delta}\eps_n}{r_{n,\delta}\sum_{i = 1}^{d} [\lambda_p(\theta_{i,n})-\lambda_p(0)]}
=
O\del[1]{\eps_n/r_{n,\delta}}
=
o(1).
\end{align*}
Hence, by~\eqref{eq:op1} the penultimate display equals one and we conclude that
\begin{align}\label{eq:Part1}
\liminf_{n\to\infty}\P\del[4]{\frac{\enVert[1]{n^{-1/2}\sum_{i=1}^n\hat{\bm{\Sigma}}_n^{-1/2}\bm{Y}_{i,n}}_p^p-\sum_{i=1}^{d}\lambda_p(\theta_{i,n})}{\sum_{i = 1}^{d} [\lambda_p(\theta_{i,n})-\lambda_p(0)]}
\leq 
\delta}
\geq 
1-\rho.	
\end{align}

Next, since~$r_{n,-\delta}=\left(-\delta\sum_{i = 1}^{d} [\lambda_p(\theta_{i,n})-\lambda_p(0)]+\sum_{i=1}^{d}\lambda_p(\theta_{i,n})\right)^{1/p}>0$ for~$\delta\in(0,1)$,
\begin{align*}
	\cbr[4]{\frac{\enVert[1]{n^{-1/2}\sum_{i=1}^n\hat{\bm{\Sigma}}_n^{-1/2}\bm{Y}_{i,n}}_p^p-\sum_{i=1}^{d}\lambda_p(\theta_{i,n})}{\sum_{i = 1}^{d} [\lambda_p(\theta_{i,n})-\lambda_p(0)]}
		\leq 
		-\delta}
	=
	\cbr[2]{n^{-1/2}\sum_{i=1}^n\hat{\bm{\Sigma}}_n^{-1/2}\bm{Y}_{i,n}\in \mc{B}_p(r_{n,-\delta})},
\end{align*}
which is contained in
\begin{align*}
	\cbr[3]{\frac{1}{\sqrt{n}}\sum_{i=1}^n\bm{\Sigma}_n^{-1/2}\bm{Y}_{i,n}\in \mc{B}_p(r_{n,-\delta}+\eps_n)}\cup\cbr[0]{B_n > \eps_n}.
\end{align*}
By assumption
\begin{equation}\label{eqn:compeq}
	\limsup_{n\to\infty}\P\del[3]{\frac{1}{\sqrt{n}}\sum_{i=1}^n\bm{\Sigma}_n^{-1/2}\bm{Y}_{i,n}\in \mc{B}_p(r_{n,-\delta}+\eps_n)}
	=
	\limsup_{n\to\infty}\P\del[1]{\bm{Z}_d+\bm{\theta}_n\in \mc{B}_p(r_{n,-\delta}+\eps_n)}.
\end{equation}
Furthermore, since~$r_{n,-\delta}^p\geq (1-\delta) c_p^{-1}\sum_{i=1}^dg_p(\theta_{i,n})$ it holds that
\begin{align}\label{eq:epsdivr2}
	\frac{\eps_n}{r_{n,-\delta}}
	=
	O({d}^{\frac{1}{2}-\frac{1}{p}}a_{n})
	=
	o(1).
\end{align}
Thus, by the mean-value theorem, it eventually holds that
\begin{align*}
	(r_{n,-\delta}+\eps_n)^p
	=	
	r_{n,-\delta}^p(1+\eps_n/r_{n,-\delta})^p
	\leq 
	r_{n,-\delta}^p(1+2p\eps_n/r_{n,-\delta}),
\end{align*}
such that the right-hand side in~\eqref{eqn:compeq} is no greater than
\begin{align*}
	&\limsup_{n\to\infty}\P\del[1]{\bm{Z}_d+\bm{\theta}_n\in \mc{B}_p(r_{n,-\delta}(1+2p\eps_n/r_{n,-\delta})^{1/p}}\\
	=
	&\limsup_{n\to\infty}\P\del[4]{\frac{\enVert[1]{\bm{Z}_d+\bm{\theta}_n}_p^p-\sum_{i=1}^{d}\lambda_p(\theta_{i,n})}{\sum_{i = 1}^{d} [\lambda_p(\theta_{i,n})-\lambda_p(0)]}\leq -\delta + \frac{2pr_{n,-\delta}^p\eps_n}{r_{n,-\delta}\sum_{i = 1}^{d} [\lambda_p(\theta_{i,n})-\lambda_p(0)]}}.
\end{align*}
Because~$r_{n,-\delta}^p/\sum_{i = 1}^{d} [\lambda_p(\theta_{i,n})-\lambda_p(0)]$ is bounded (recall that ~${d}^{-1} \sum_{i = 1}^{d} [\lambda_p(\theta_{i,n})-\lambda_p(0)]\to\infty$)  and by~\eqref{eq:epsdivr2}
\begin{align*}
	\frac{2pr_{n,-\delta}^p\eps_n}{r_{n,-\delta}\sum_{i = 1}^{d} [\lambda_p(\theta_{i,n})-\lambda_p(0)]}
	=
	O\del[1]{\eps_n/r_{n,-\delta}}
	=
	o(1).
\end{align*}
Hence, the expression in the penultimate display equals zero by~\eqref{eq:op1} and we conclude that
\begin{equation}\label{eq:Part2}
	\limsup_{n\to\infty}\P\del[4]{\frac{\enVert[1]{n^{-1/2}\sum_{i=1}^n\hat{\bm{\Sigma}}_n^{-1/2}\bm{Y}_{i,n}}_p^p-\sum_{i=1}^{d}\lambda_p(\theta_{i,n})}{\sum_{i = 1}^{d} [\lambda_p(\theta_{i,n})-\lambda_p(0)]} \leq -\delta}
	\leq 
	\rho.
\end{equation}
The statement in \eqref{eq:pnormconshighlevel} of Assumption~\ref{ass:pnormconsistency} now follows, because Equations~\eqref{eq:Part1} and~\eqref{eq:Part2} imply that for every~$\delta \in (0, 1)$ and every $\rho\in(0,1)$ it holds that
\begin{equation*}
\limsup_{n \to \infty} \P\del[4]{\left | \frac{\enVert[1]{n^{-1/2}\sum_{i=1}^n\hat{\bm{\Sigma}}_n^{-1/2}\bm{Y}_{i,n}}_p^p-\sum_{i=1}^{d}\lambda_p(\theta_{i,n})}{\sum_{i = 1}^{d} [\lambda_p(\theta_{i,n})-\lambda_p(0)]} \right| > \delta} \leq 2 \rho.
\end{equation*}

\section{Proof of Corollary~\ref{cor:fourmoments}}

Concerning Condition 1 in Assumption~\ref{as:suffc}, consider first the case where $\hat{\bm{\Sigma}}_n(\bm{\beta}_n^*)$ is the estimator~$\hat{\bm{\Sigma}}_{\eta,\delta}$ from Theorem 1 of~\cite{abdalla2022covariance} for~$\eta=0$,~$\delta=1/d$, and~$p = 4$ applied to the i.i.d.~sample
\begin{equation*}
	(\bm{Y}_{2,n}-\bm{Y}_{1,n})/\sqrt{2},\hdots,(\bm{Y}_{2 \lfloor n/2 \rfloor,n}-\bm{Y}_{2 \lfloor n/2 \rfloor-1,n})/\sqrt{2}
\end{equation*}
of size~$\lfloor n /2 \rfloor$ with covariance matrix~$\bm{\Sigma}_n$ and mean vector zero; cf.~also Section 1.2 in~\cite{mendelson2020robust}. We apply Theorem 1 of~\cite{abdalla2022covariance} to show that~$\|\hat{\bm{\Sigma}}_{0,d^{-1}}-\bm{\Sigma}_{n}\|_2=O_\P\del[1]{\sqrt{d/n}}$, which then verifies Condition 1 of Assumption~\ref{as:suffc} with~$a_n=\sqrt{d/n}$, because~$d/n^{2/5}\to 0$ by assumption. To this end, we only need to show that the condition in Equation (1) of~\cite{abdalla2022covariance} holds for $p = 4$ (note that we only have to verify that condition for $p = q = 4$ due to Lyapunov's inequality): By the Binomial theorem and Assumption~\ref{ass:boundedkurt}, for every~$t\in\R^d$ and writing $L = L(\bm{\beta}^*)$,
\begin{align*}
\E \langle \bm{Y}_{2,n}-\bm{Y}_{1,n},t\rangle^4 
=
\E \langle (\bm{Y}_{2,n}-\bm{\mu}_n)-(\bm{Y}_{1,n}-\bm{\mu}_n),t\rangle^4
\leq 
(2L^4+6)\del[1]{\E \langle \bm{Y}_{1,n}-\bm{\mu}_n,t\rangle^2}^2.
\end{align*}
In addition,
\begin{align*}
\E \langle \bm{Y}_{1,n}-\bm{\mu}_n,t\rangle^2
=
t'\bm{\Sigma}_nt
=
\frac{1}{2}t'\E \del[1]{\bm{Y}_{2,n}-\bm{Y}_{1,n}}\del[1]{\bm{Y}_{2,n}-\bm{Y}_{1,n}}'t
=
\E\langle (\bm{Y}_{2,n}-\bm{Y}_{1,n})/\sqrt{2},t\rangle^2,
\end{align*}
such that~$(\bm{Y}_{2,n}-\bm{Y}_{1,n})/\sqrt{2}$ satisfies the condition in their Equation (1) with~$\kappa(4)=[(2L^4+6)/4]^{1/4}$. This also shows that Condition 1 of Assumption~\ref{as:suffc} holds if $\hat{\bm{\Sigma}}_n(\bm{\beta}_n^*)$ is the estimator in Theorem 1.3 of~\cite{oliveira2022improved} (with their $\eta = 0$, $\alpha = 1/d$ and $p = 4$ and applied to the auxiliary sample as above).

Concerning Condition 2 of Assumption~\ref{as:suffc}, we use Theorem 2.1 in~\cite{fang2020large} to establish that
\begin{align*}
\sup_{C\in\mc{C}_n}\envert[3]{\P\del[2]{n^{-1/2}\sum_{i=1}^n\bm{\Sigma}_n^{-1/2}[\bm{Y}_{i,n}-\bm{\mu}_n] \in  C}-\P\del[1]{\bm{Z}_d \in  C}}\to 0;
\end{align*}
that is their~$\xi_i=n^{-1/2}\bm{\Sigma}_n^{-1/2}[\bm{Y}_{i,n}-\bm{\mu}_n]$ and their~$\Sigma=\mathbf{I}_d$. Thus, the Cauchy-Schwarz inequality yields that their
\begin{align*}
\E\|\Sigma^{-1/2}\xi_1\|_2^4
=
\E\|\xi_1\|_2^4
=
\E \del[3]{\sum_{j=1}^d\xi_{1,j}^2}^2
\leq
d\sum_{j=1}^d\E\xi_{1,j}^4.
\end{align*}
In addition, letting~$e_j$ be the~$j$th canonical basis vector of~$\R^d$, it follows from Assumption~\ref{ass:boundedkurt} that for~$j=1,\hdots,d$ and $L = L(\bm{\beta}^*)$ as above,
\begin{align*}
\E\xi_{1,j}^4
=
n^{-2}\E\langle (\bm{Y}_{i,n}-\bm{\mu}_n), \bm{\Sigma}_n^{-1/2} e_j\rangle^4	
\leq
L^4n^{-2}\del[2]{\E\langle \bm{\Sigma}_n^{-1/2}(\bm{Y}_{i,n}-\bm{\mu}_n),e_j\rangle^2}^2
=
L^4n^{-2}.
\end{align*}
Therefore,~$\sum_{i=1}^n\E\|\Sigma^{-1/2}\xi_i\|_2^4\leq \frac{L^4d^2}{n}$ and Theorem 2.1 in~\cite{fang2020large} together with~$\frac{d}{n^{2/5}}[\log(n)]^{2/5}\to 0$ yields that Condition 2 of Assumption~\ref{as:suffc} is satisfied, since
\begin{align*}
\sup_{C\in\mc{C}_n}\envert[3]{\P\del[2]{n^{-1/2}\sum_{i=1}^n\bm{\Sigma}_n^{-1/2}[\bm{Y}_{i,n}-\bm{\mu}_n] \in  C}-\P\del[1]{\bm{Z}_d \in  C}}
=
O\del[3]{\frac{d^{5/4}}{n^{1/2}}\sqrt{\log(n)}}
=
o(1).
\end{align*}

\section{Gaussian approximation over hyperrectangles with estimated covariance matrix and~$d\to\infty$}

We begin by formulating a version of Lemma~\ref{lem:convgauss} for  hyperrectangles~$\mc{H}_n$, cf.~Section~\ref{sec:suptest}. Although a Gaussian approximation result over~$\mc{H}_n$ follows directly from Lemma~\ref{lem:convgauss} and~$\mc{H}_n\subseteq \mc{C}_n$, an approximation result applying to~$\bm{\theta}_n$ further from zero, necessary for our purposes, can be obtained for~$\mc{H}_n$. The proof strategy is identical to that of Lemma~\ref{lem:convgauss}, but as other anti-concentration inequalities are employed we provide the proof for completeness. For any $A\subseteq \R^d$ and~$\eps>0$, let $$A^{\eps,\infty}=\cbr[0]{\bm{x}\in\R^d:\inf_{\bm{y}\in A}\|\bm{x}-\bm{y}\|_\infty\leq \eps}.$$ Furthermore, $$A^{-\eps,\infty}=\cbr[0]{\bm{x}\in\R^d:\mc{B}_\infty(\bm{x},\eps) \subseteq A} \quad \text{ and } \quad \mc{B}_{\infty}(\bm{x},\eps)=\cbr[1]{\bm{y}\in\R^d:\|\bm{y}-\bm{x}\|_{\infty} \leq \eps}.$$

\begin{lemma}\label{lem:hypergauss}
Let~$n'$ be a subsequence of~$n$ and~$d'\to\infty$. Assume that the eigenvalues of~$\bm{\Sigma}_{n'}$ are (uniformly) bounded away from zero and from above and that $\enVert[0]{\hat{\bm{\Sigma}}_{n'}-\bm{\Sigma}_{n'}}_2=O_\P(a_{n'})$ with~${d'}^{3/4}a_{n'}\to 0$ as well as
\begin{align}\label{eq:hyperass}
\sup_{C\in\mc{C}_{n'}}\envert[3]{\P\del[3]{{n'}^{-1/2}\sum_{i=1}^{n'}\bm{\Sigma}_{n'}^{-1/2}\bm{Y}_{i,n'}\in C}-\P\del[1]{\bm{Z}_{d'}+\bm{\theta}_{n'}\in C}}
\to
0.
\end{align}
Then, if~$\frac{1}{{d'}^{3/2}/\log(d')} \sum_{i = 1}^{d'} g_2(\theta_{i,n'})=\frac{1}{{d'}^{3/2}/\log(d')}\|\bm{\theta}_{n'}\|_2^2$ is bounded, it also holds that
\begin{align}\label{eq:GAhyper}
\sup_{H\in\mc{H}_{n'}}\envert[3]{\P\del[3]{{n'}^{-1/2}\sum_{i=1}^{n'}\hat{\bm{\Sigma}}_{n'}^{-1/2}\bm{Y}_{i,n'}\in H}-\P\del[1]{\bm{Z}_{d'}+\bm{\theta}_{n'}\in H}}
\to
0.
\end{align}
\end{lemma}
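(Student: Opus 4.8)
The plan is to reproduce the sandwiching scheme used in the proof of Lemma~\ref{lem:convgauss}, making two changes dictated by the fact that we now work over the class~$\mc{H}_n$. First, I would enlarge and shrink the sets in the supremum norm rather than the Euclidean norm: the point is that for a hyperrectangle~$H$ the enlargement~$H^{\eps,\infty}$ and the erosion~$H^{-\eps,\infty}$ are again hyperrectangles, hence convex, so that the convex-set approximation~\eqref{eq:hyperass} still applies to them. Second, and crucially, I would replace the Euclidean anti-concentration bound of~\cite{bentkus2003dependence} (which costs a factor~$d^{1/4}$) by Nazarov's anti-concentration inequality for hyperrectangles (cf.~\cite{nazarov2003maximal} and~\cite{chernozhukov2023high}), which for a standard Gaussian~$\bm{Z}_d$ only costs a factor of order~$\sqrt{\log(d)}$: for every hyperrectangle~$H$ and every~$\eps>0$ one has~$\P(\bm{Z}_d\in H^{\eps,\infty})-\P(\bm{Z}_d\in H)\le C\eps\sqrt{\log(d)}$ for a universal constant~$C$. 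It is precisely this improvement from~$d^{1/4}$ to~$\sqrt{\log(d)}$ that lets~$\bm{\theta}_n$ be taken further from the origin (up to~$\|\bm{\theta}_n\|_2^2=O(d^{3/2}/\log(d))$ rather than~$O(d)$).

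Concretely, as in Lemma~\ref{lem:convgauss} it suffices to fix an arbitrary sequence~$H_n\in\mc{H}_n$ with~$H_n\neq\emptyset$ and to control the supremum-norm discrepancy
\[
A_n:=\Big\|n^{-1/2}\sum_{i=1}^n\hat{\bm{\Sigma}}_n^{-1/2}\bm{Y}_{i,n}-n^{-1/2}\sum_{i=1}^n\bm{\Sigma}_n^{-1/2}\bm{Y}_{i,n}\Big\|_\infty .
\]
Since~$\|\cdot\|_\infty\le\|\cdot\|_2$, I would bound~$A_n$ by the Euclidean distance and invoke Lemma~\ref{eq:pnormerror} with~$p=2$ (its hypothesis~\eqref{eq:GAlemmaass} for~$p=2$ being implied by~\eqref{eq:hyperass}, as~$2$-norm balls are convex). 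To accommodate both regimes I would argue along subsequences: along any subsequence, pass to a further one on which~$d^{-1}\|\bm{\theta}_n\|_2^2$ converges in~$[0,\infty]$, so that Part~1 or Part~2 of Lemma~\ref{eq:pnormerror} applies and yields, in either case, $A_n=O_\P\big(a_n(d^{1/2}+\|\bm{\theta}_n\|_2)\big)$. Hence for each~$\delta\in(0,1)$ there is a deterministic~$\eps_n=M_\delta\,a_n(d^{1/2}+\|\bm{\theta}_n\|_2)$ with~$\P(A_n>\eps_n)\le\delta$.

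The two-sided inclusion then runs exactly as in Lemma~\ref{lem:convgauss}: on~$\{A_n\le\eps_n\}$ the event~$\{n^{-1/2}\sum\hat{\bm{\Sigma}}_n^{-1/2}\bm{Y}_{i,n}\in H_n\}$ is contained in~$\{n^{-1/2}\sum\bm{\Sigma}_n^{-1/2}\bm{Y}_{i,n}\in H_n^{\eps_n,\infty}\}$. Applying~\eqref{eq:hyperass} to the convex set~$H_n^{\eps_n,\infty}$, then translating by~$-\bm{\theta}_n$ (using~$H_n^{\eps_n,\infty}-\bm{\theta}_n=(H_n-\bm{\theta}_n)^{\eps_n,\infty}$) and Nazarov's inequality to the hyperrectangle~$H_n-\bm{\theta}_n$, gives an upper bound of the form~$\P(\bm{Z}_d+\bm{\theta}_n\in H_n)+C\eps_n\sqrt{\log(d)}+r_n+\delta$, where~$r_n\to0$ denotes the supremum in~\eqref{eq:hyperass}. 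The erosion~$H_n^{-\eps_n,\infty}$ yields the matching lower bound.

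The step requiring the most care — and the only genuinely new ingredient relative to Lemma~\ref{lem:convgauss} — is checking that the anti-concentration remainder vanishes, i.e.\ that~$\eps_n\sqrt{\log(d)}\to0$ under~$\|\bm{\theta}_n\|_2^2=O(d^{3/2}/\log(d))$. On the subsequence where~$d^{-1}\|\bm{\theta}_n\|_2^2$ stays bounded one has~$\eps_n\sqrt{\log(d)}=O(d^{1/2}a_n\sqrt{\log(d)})=o(d^{-1/4}\sqrt{\log(d)})\to0$, because~$d^{3/4}a_n\to0$; on the subsequence where it diverges,~$\|\bm{\theta}_n\|_2=O(d^{3/4}/\sqrt{\log(d)})$ gives~$\eps_n\sqrt{\log(d)}=O(d^{3/4}a_n)\to0$ for the same reason. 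Since~$\delta\in(0,1)$ and the sequence~$H_n$ were arbitrary, combining the upper and lower bounds along every subsequence establishes~\eqref{eq:GAhyper}.
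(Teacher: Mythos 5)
Your proposal matches the paper's proof essentially step for step: supremum-norm enlargements and erosions (which remain hyperrectangles, so~\eqref{eq:hyperass} applies to them), the~$p=2$ case of Lemma~\ref{eq:pnormerror} argued along subsequences to obtain~$A_n=O_\P\big([d^{1/2}\vee\|\bm{\theta}_n\|_2]a_n\big)$, Nazarov's hyperrectangle anti-concentration inequality (the paper invokes it via Lemma A.1 of \cite{chernozhukov2017central}, rewriting the two-sided rectangle as a one-sided one in~$2d$ dimensions) in place of the Bentkus bound, and the identical rate check that~$\eps_n\sqrt{\log(d)}\to 0$ follows from~$d^{3/4}a_n\to 0$ in both the bounded and divergent regimes of~$d^{-1}\|\bm{\theta}_n\|_2^2$. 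The only detail you gloss over is the degenerate case~$H_n^{-\eps_n,\infty}=\emptyset$ in the lower bound, which the paper dispatches with the elementary estimate~$\big|\P(\bm{Z}_d+\bm{\theta}_n\in H_n)-\P(\bm{Z}_d+\bm{\theta}_n\in H_n^{-\eps_n,\infty})\big|\leq\Phi(\eps_n)-\Phi(-\eps_n)$ --- a one-line fix.
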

\begin{proof}
Write~$n$ instead of~$n'$ and let
\begin{align*}
A_n &:=\enVert[3]{n^{-1/2}\sum_{i=1}^n\hat{\bm{\Sigma}}_n^{-1/2}\bm{Y}_{i,n}-n^{-1/2}\sum_{i=1}^n\bm{\Sigma}_{n}^{-1/2}\bm{Y}_{i,n}}_{\infty}\\
&\leq
\enVert[3]{n^{-1/2}\sum_{i=1}^n\hat{\bm{\Sigma}}_n^{-1/2}\bm{Y}_{i,n}-n^{-1/2}\sum_{i=1}^n\bm{\Sigma}_{n}^{-1/2}\bm{Y}_{i,n}}_2.
\end{align*}
By Lemma~\ref{eq:pnormerror} (with $p = 2$) it follows that $A_n=O_\P\del[1]{[d^{1/2}\vee \|\bm{\theta}_n\|_2]a_n}
$, where we used that~$\sum_{i = 1}^{d} g_2(\theta_{i,n})=\|\bm{\theta}_n\|_2^2$. Thus, for every~$\delta\in(0,1)$ there exists an~$M=M_\delta>0$ such that for~$\eps_n=M[d^{1/2}\vee \|\bm{\theta}_n\|_2]a_n$ one has~$\P(A_n> \eps_n)\leq \delta$ for all~$n\in\N$.  
Therefore, for any sequence~$H_n$ with~$H_n \in \mc{H}_n$ for every $n$ we have
\begin{align*}
\cbr[4]{{n^{-1/2}\sum_{i=1}^n\hat{\bm{\Sigma}}_n^{-1/2}\bm{Y}_{i,n}\in H_n}}
\subseteq
\cbr[4]{{n^{-1/2}\sum_{i=1}^n\bm{\Sigma}_n^{-1/2}\bm{Y}_{i,n}\in H_n^{\eps_n,\infty}}}
\cup \cbr[1]{A_n> \eps_n}.
\end{align*}	
Since~$H_n\in\mc{H}_n$, one has~$H_n=\prod_{j=1}^d[a_j,b_j]$ for~$-\infty\leq a_j\leq b_j\leq \infty,\ j=1,\hdots,d$, and so~$H_n^{\eps_n,\infty}=\prod_{j=1}^d[a_j-\eps_n,b_j+\eps_n]\in\mc{H}_n$. Thus, denoting the supremum in~\eqref{eq:hyperass} by~$r_n$,
\begin{align*}
\P\del[2]{{n^{-1/2}\sum_{i=1}^n\hat{\bm{\Sigma}}_n^{-1/2}\bm{Y}_{i,n}\in H_n}}
\leq
\P\del[1]{\bm{Z}_d+\bm{\theta}_n \in H_n^{\eps_n,\infty}}+r_n +\delta.	
\end{align*}
Next, by~$H_n^{\eps_n,\infty}-\bm{\theta}_n=\prod_{j=1}^d[a_j-\theta_{j,n}-\eps_n,b_j-\theta_{j,n}+\eps_n]$ it follows that
\begin{align*}
&\P\del[1]{\bm{Z}_d+\bm{\theta}_n \in H_n^{\eps_n,\infty}}\\
=	
&\P\del[2]{\bm{Z}_d \in \prod_{j=1}^d[a_j-\theta_{j,n}-\eps_n,b_j-\theta_{j,n}+\eps_n]}\\
=
&\P\del[2]{(\bm{Z}'_d,-\bm{Z}'_d)' \in \prod_{j=1}^d(-\infty,b_j-\theta_{j,n}+\eps_n]\times \prod_{j=1}^d(-\infty,-a_j+\theta_{j,n}+\eps_n]}.
\end{align*}
Lemma A.1 in~\cite{chernozhukov2017central} (which is attributed to~\cite{nazarov2003maximal}, cf.~also \cite{chernozhukov2017detailed}) now yields that the far right-hand side of the previous display can (eventually) be bounded from above by
\begin{align*}
&\P\del[2]{(\bm{Z}'_d,-\bm{Z}'_d)' \in \prod_{j=1}^d(-\infty,b_j-\theta_{j,n}]\times \prod_{j=1}^d(-\infty,-a_j+\theta_{j,n}]}+C\sqrt{\log(d)}\eps_n\\
=
&\P\del[1]{\bm{Z}_d\in H_n-\bm{\theta}_n}+C\sqrt{\log(d)}\eps_n\\
=
&
\P\del[1]{\bm{Z}_d+\bm{\theta}_n\in H_n}+C\sqrt{\log(d)}\eps_n.
\end{align*}
Therefore,
\begin{align*}
\limsup_{n\to\infty}\sbr[3]{\P\del[3]{{n^{-1/2}\sum_{i=1}^n\hat{\bm{\Sigma}}_n^{-1/2}\bm{Y}_{i,n}\in H_n}}-\P\del[1]{\bm{Z}_d+\bm{\theta}_n \in H_n}}
\leq
\delta.
\end{align*}
On the other hand, it also holds that
\begin{align*}
\cbr[4]{{n^{-1/2}\sum_{i=1}^n\bm{\Sigma}_n^{-1/2}\bm{Y}_{i,n}\in H_n^{-\eps_n,\infty}}}	
\subseteq
\cbr[4]{{n^{-1/2}\sum_{i=1}^n\hat{\bm{\Sigma}}_n^{-1/2}\bm{Y}_{i,n}\in H_n}}	
\cup
\cbr[1]{A_n> \eps_n}.
\end{align*}
Since~$H_n^{-\eps_n,\infty}=\prod_{j=1}^d[a_j+\eps_n,b_j-\eps_n]\in\mc{H}_n$, similar arguments to the above, noting also that $$|\P(\bm{Z}_d + \bm{\theta}_n \in H_n) - \P(\bm{Z}_d + \bm{\theta}_n \in H_n^{-\varepsilon_n, \infty})| \leq \Phi(\varepsilon_n) - \Phi(-\varepsilon_n) \quad \text{ if } H_n^{-\eps_n,\infty} = \emptyset,$$ imply 
\begin{align*}
\liminf_{n\to\infty}\sbr[3]{\P\del[3]{{n^{-1/2}\sum_{i=1}^n\hat{\bm{\Sigma}}_n^{-1/2}\bm{Y}_{i,n}\in H_n}}-
\P\del[1]{\bm{Z}_d+\bm{\theta}_n \in H_n}}
\geq 
-\delta.
\end{align*}
Thus,~\eqref{eq:GAhyper} follows since the sequence~$H_n$ and~$\delta\in(0,1)$ were arbitrary.
\end{proof}

\begin{remark}\label{rem:arbitraryhyper}
Under the assumptions of Lemma~\ref{lem:hypergauss} the supremum in~\eqref{eq:GAhyper} can be extended to also include all open rectangles of the form~$H_n^o=\prod_{j=1}^d(a_j,b_j)$ (where without loss of generality~$-\infty<a_j<b_j<\infty$) by the following squeezing argument: write~$H_n=\prod_{j=1}^d[a_j,b_j]$ and $\bm{S}_n=n^{-1/2}\sum_{i=1}^n\hat{\bm{\Sigma}}_n^{-1/2}\bm{Y}_{i,n}$. Using that~$\bm{Z}_d$ has no mass at~$H_n\setminus H_n^o$, one has (along any subsequence of~$n$)
\begin{align*}
\P\del[1]{\bm{S}_n\in H_n^o}-\P\del[1]{\bm{Z}_d+\bm{\theta}_n\in H_n^o}
\leq 
\P\del[1]{\bm{S}_n\in H_n}-\P\del[1]{\bm{Z}_d+\bm{\theta}_n\in  H_n};
\end{align*}
by Lemma~\ref{lem:hypergauss} the right-hand side tends to zero. Similarly, setting~$c_n=[\log(d)]^{-1}\wedge \min_{j=1\hdots,d}(b_j-a_j)/4$%
\begin{align*}
\P\del[1]{\bm{S}_n\in H_n^o}-\P\del[1]{\bm{Z}_d+\bm{\theta}_n\in H_n^o}
\geq
\P\del[2]{\bm{S}_n\in \prod_{j=1}^d[a_j+c_n,b_j-c_n]}-\P\del[1]{\bm{Z}_d+\bm{\theta}_n\in H_n}.
\end{align*}
Using Lemma A.1 in~\cite{chernozhukov2017central} as in the proof of Lemma~\ref{lem:hypergauss} it follows that there exists a constant~$C\in(0,\infty)$ such that
\begin{align*}
\P\del[1]{\bm{Z}_d+\bm{\theta}_n\in H_n}
\leq
\P\del[2]{\bm{Z}_d+\bm{\theta}_n\in \prod_{j=1}^d[a_j+c_n,b_j-c_n]}+C\sqrt{\log(d)}c_n,
\end{align*}
such that Lemma~\ref{lem:hypergauss} can again be applied yielding the claimed uniform convergence as~$H_n^o$ was an arbitrary open rectangle. As any rectangle can be squeezed between its counterpart with all endpoints excluded and included, respectively, Lemma~\ref{lem:hypergauss} also applies to arbitrary rectangles.
\end{remark}

\section{Proof of Theorem~\ref{thm:supnorm}}
The size control in part a) follows by combining Lemma~\ref{lem:hypergauss} (using ``$n'=n$'' and~$\bm{\theta}_n=\bm{0}_d$) and Lemma A.2 in~\cite{kp2021consistency}, cf.~also Remark~\ref{rem:arbitraryhyper} above (which is invoked tacitly in the remainder of the proof). 

Regarding part b), suppose that $b_n:=\sum_{i = 1}^d \frac{\overline{\Phi}\left(\mathfrak{c}_d - |\theta_{i,d}|\right)}{\Phi\left(\mathfrak{c}_d - |\theta_{i,d}|\right)} \to \infty$. By Theorem 3.5 in~\cite{kp2021consistency} one has~$\P(\|\bm{Z}_d+\bm{\theta}_n\|_\infty\geq  \kappa_{n,\infty})\to 1$. Let~$n'$ be a subsequence of~$n$. We show that it possesses a further subsequence along which~$\P\del[1]{S_{n,\infty}\geq \kappa_{n,\infty}}\to 1$. If~$\frac{1}{{d}^{3/2}/\log(d)}\|\bm{\theta}_{n}\|_2^2$ is bounded along~$n'$, Lemma~\ref{lem:hypergauss} implies~$\P\del[1]{S_{n',\infty}\geq \kappa_{n',\infty}}\to 1$. 

If, on the other hand,~$\frac{1}{{d}^{3/2}/\log(d)}\|\bm{\theta}_{n}\|_2^2$ is unbounded there exists a further subsequence~$n''$ along which~$\frac{1}{{d}^{3/2}/\log(d)}\|\bm{\theta}_{n}\|_2^2\to \infty$.  Furthermore, note that~$\P\del[1]{S_{n,\infty}\geq \kappa_{n,\infty}}$ equals
\begin{align}\label{eq:supteststart}
\P\del[4]{\enVert[3]{n^{-1/2}\sum_{i=1}^n\hat{\bm{\Sigma}}_n^{-1/2}(\bm{Y}_{i,n}-\bm{\mu}_n)+n^{1/2}\hat{\bm{\Sigma}}_n^{-1/2}\bm{\mu}_n}_\infty \geq \kappa_{n,\infty}}.
\end{align}
Because~$\bm{Y}_{i,n}-\bm{\mu}_n$ is centered and has the same covariance matrix as~$\bm{Y}_{i,n}$, we can apply Lemma~\ref{lem:hypergauss} (with~$\bm{\theta}_n=\bm{0}_d$ there) in conjunction with part a) to conclude that (along the entire sequence~$n$)
\begin{align}\label{eq:suptestsmallterm}
\enVert[3]{n^{-1/2}\sum_{i=1}^n\hat{\bm{\Sigma}}_n^{-1/2}(\bm{Y}_{i,n}-\bm{\mu}_n)}_\infty
=
O_\P\del[1]{\sqrt{\log(d)}}. 
\end{align}
Next, 
\begin{align*}
\enVert[1]{n^{1/2}\hat{\bm{\Sigma}}_n^{-1/2}\bm{\mu}_n}_\infty
\geq
\frac{\enVert[1]{n^{1/2}\hat{\bm{\Sigma}}_n^{-1/2}\bm{\mu}_n}_2}{\sqrt{d}}
\geq
\frac{\enVert[1]{n^{1/2}\bm{\Sigma}_n^{-1/2}\bm{\mu}_n}_2-\enVert[1]{n^{1/2}[\hat{\bm{\Sigma}}_n^{-1/2}-\bm{\Sigma}_n^{-1/2}]\bm{\mu}_n}_2}{\sqrt{d}}
\end{align*}
and the numerator on the far right-hand side equals~$\enVert[1]{n^{1/2}\bm{\Sigma}_n^{-1/2}\bm{\mu}_n}_2=\|\bm{\theta}_n\|_2$ (which is eventually positive along~$n''$) times
\begin{align*}
1-\frac{\enVert[0]{n^{1/2}[\hat{\bm{\Sigma}}_n^{-1/2}-\bm{\Sigma}_n^{-1/2}]\bm{\mu}_n}_2}{\enVert[1]{n^{1/2}\bm{\Sigma}_n^{-1/2}\bm{\mu}_n}_2}
\geq
1-\frac{\enVert[0]{\hat{\bm{\Sigma}}_n^{-1/2}-\bm{\Sigma}_n^{-1/2}}_2\|\bm{\mu}_n\|_2}{c \|\bm{\mu}_n\|_2}
=1-O_\P(a_n),
\end{align*}
for some real $c > 0$, due to the uniform bound on the eigenvalues of $\bm{\Sigma}_n$,
the last equality following from Lemma~\ref{lem:covestim}. The previous two displays thus yield that along~$n''$ there exists an~$M\in(0,\infty)$ such that
\begin{align}\label{eq:suptestlargeterm}
\enVert[1]{n^{1/2}\hat{\bm{\Sigma}}_n^{-1/2}\bm{\mu}_n}_\infty
\geq
\frac{\|\bm{\theta}_n\|_2}{\sqrt{d}}	\del[1]{1-O_\P(a_n)}
\geq
M\frac{d^{1/4}}{\sqrt{\log(d)}}	\del[1]{1-O_\P(a_n)}
\end{align}
Finally,~\eqref{eq:supteststart} and hence~$\P\del[1]{S_{n,\infty}(\bm{\beta}_n^*)\geq \kappa_{n,\infty}}$ is no smaller than
\begin{align*}
\P\del[4]{\enVert[1]{n^{1/2}\hat{\bm{\Sigma}}_n^{-1/2}\bm{\mu}_n}_\infty-\enVert[3]{n^{-1/2}\sum_{i=1}^n\hat{\bm{\Sigma}}_n^{-1/2}(\bm{Y}_{i,n}-\bm{\mu}_n)}_\infty\geq  \kappa_{n,\infty}}\to 1	
\end{align*}
along~$n''$; the convergence following from~\eqref{eq:suptestsmallterm}, \eqref{eq:suptestlargeterm},~$\kappa_{n,\infty}=O(\sqrt{\log(d)})$, and~$a_n\to 0$.

Suppose now that~$\P\del[1]{S_{n,\infty}(\bm{\beta}_n^*)\geq \kappa_{n,\infty}}\to 1$ and choose a subsequence~$n'$. We show that~$n'$ possesses a further subsequence along which~$b_n\to\infty$. If~$\frac{\log(d)}{d^{3/2}}\|\bm{\theta}_n\|_2^2$ is bounded along~$n'$, then Lemma~\ref{lem:hypergauss} and an application of Theorem 3.5 in \cite{kp2021consistency} along~$n'$ imply that~$b_{n'}\to\infty$.
If, on the other hand,~$\frac{\log(d)}{d^{3/2}}\|\bm{\theta}_n\|_2^2$ is unbounded along~$n'$ then there is a further subsequence~$n''$ along which~$\frac{\log(d)}{d^{3/2}}\|\bm{\theta}_n\|_2^2\to\infty$. Since~$\|\bm{\theta}_n\|_\infty\geq d^{-1/2}\|\bm{\theta}_n\|_2$ and~$\mathfrak{c}_d=O(\sqrt{\log(d)})$, this implies along~$n''$ that~$$b_n  \geq \frac{\overline{\Phi}\left(\mathfrak{c}_d - \|\bm{\theta}_n\|_\infty\right)}{\Phi\left(\mathfrak{c}_d - \|\bm{\theta}_n\|_\infty\right)} \to\infty.$$

\section{Proof of Theorem~\ref{thm:domtest}}
For~$p\in[2,\infty]$ and~$r\in(0,\infty)$ define $\mathbb{B}_p(r)=\cbr[0]{\bm{x}\in\R^d:\|\bm{x}\|_p<r}$. Concerning the asymptotic size control in Part 1, note that by convexity of~$\bigcap_{p\in\mathfrak{P}_n}\mathbb{B}_p(c_n\kappa_{n,p})$,  Lemma~\ref{lem:convgauss} yields 
\begin{align*}
\E\psi_n
=
1-\P\del[3]{\hat{\bm{\Sigma}}_n^{-1/2}\bm{H}_n\in\bigcap_{p\in\mathfrak{P}_n}\mathbb{B}_p(c_n\kappa_{n,p})}
= 1-\P\del[2]{\bm{Z}_d\in \bigcap_{p\in\mathfrak{P}_n}\mathbb{B}_p(c_n\kappa_{n,p})}+r_n;
\end{align*} 
for a sequence~$r_n\to 0$. Finally, by~\eqref{eq:kappachoice},
\begin{align*}
1-\P\del[2]{\bm{Z}_d\in \bigcap_{p\in\mathfrak{P}_n}\mathbb{B}_p(c_n\kappa_{n,p})}
=
\P\del[3]{\max_{p \in \mathfrak{P}_n} \kappa_{n,p}^{-1}
\| \bm{Z}_{d} \|_{p} \geq c_n}
=
\alpha.
\end{align*}
Concerning Part 2,~first let~$p\in[2,\infty)$ and consider a test~$\mathds{1}\cbr[1]{S_{n,p}\geq \bar{\kappa}_{n,p}}$ as in the statement of the theorem (where we drop the dependence of $S_{n,p}$ on~$\bm{\beta}^*_n$ and of $\overline{\kappa}_{n,p}$ on $\overline{\alpha}_p$ for notational convenience). Observe that since~$m_n\to\infty$ and~$p_n$ is strictly increasing, there exists a~$J\in\mathbb{N}$ such that eventually~$p\leq p_J\leq p_{m_n}<\infty$ and
\begin{align*}
\psi_n
\geq
\mathds{1}\cbr[1]{S_{n,p_J}\geq c_n\kappa_{n,p_J}}
\geq
\mathds{1}\cbr[1]{S_{n,p_J}\geq \kappa_{n,p_J}}.
\end{align*}
By Lemma~\ref{lem:convgauss}, the test~$ \mathds{1}\cbr[1]{S_{n,p_J}\geq \kappa_{n,p_J}}$ has asymptotic null rejection probability equal to $\lim_{n \to \infty} \alpha_{n,p_J}\in(0,\alpha_2)$. Hence, by Theorem~\ref{thm:pnorm-char_GMM} (see also~\eqref{eq:dom}), if~$\P\del[1]{S_{n,p}\geq \bar{\kappa}_{n,p}}\to 1$ it also follows that~$\P\del[1]{S_{n,p_J}\geq \kappa_{n,p_J}}\to 1$, so that~$\E\psi_n\to 1$ follows from the previous display.

Next, let~$p=\infty$ and consider a test~$\mathds{1}\cbr[1]{S_{n,\infty}\geq \bar{\kappa}_{n,\infty}}$ as in the statement of the theorem (again dropping some dependencies for notational convenience). Clearly,
\begin{align*}
\psi_n
\geq
\mathds{1}\cbr[1]{S_{n,\infty}\geq c_n\kappa_{n,\infty}}
\geq
\mathds{1}\cbr[1]{S_{n,\infty}\geq \kappa_{n,\infty}}.
\end{align*}
By Lemma~\ref{lem:hypergauss}, the test~$\mathds{1}\cbr[1]{S_{n,\infty}\geq \kappa_{n,\infty}}$ has asymptotic null rejection frequency~$\alpha_\infty\in(0,1)$, cf.~also Remark~\ref{rem:arbitraryhyper}. Hence, by Theorem~\ref{thm:supnorm}, if $\P\del[1]{S_{n,\infty}\geq \bar{\kappa}_{n,\infty}}\to 1$ then also $\P\del[1]{S_{n,\infty}\geq \kappa_{n,\infty}}\to 1$ and~$\E\psi_n\to 1$ follows from the previous display.

\section{Proof of Theorem \ref{thm:boundedloss}}

Also in this proof, we do not signify the dependence of several quantities on~$\bm{\beta}^*$. Let~$n$ be large enough to ensure that~$p\in\mathfrak{P}_n$. The definition of~$\psi_n$ in~\eqref{eq:psidef} along with~$c_n\in(0,1]$ implies that 		
\begin{equation}\label{eqn:pjupbd}
\begin{aligned}
\P\del[1]{S_{n,p} \geq\bar{\kappa}_{n,p}(\alpha)}
- 
\E\psi_n
\leq  
\P\del[1]{S_{n,p} \geq\bar{\kappa}_{n,p}(\alpha)}			-
\P\del[1]{S_{n,p} \geq\kappa_{n,p}}.
\end{aligned}
\end{equation}
By the definition of~$\kappa_{n,p}$ in the construction of~$\psi_n$ and Lemma~\ref{lem:convgauss} it holds under the null that~$\lim_{n\to\infty}\P\del[1]{S_{n,p} \geq\kappa_{n,p}}=\alpha_p\in(0,1)$.	For clarity, we thus write~$\kappa_{n,p}(\alpha_p)$ in what follows. Next, denote by~$\mathfrak{s}$ the limit superior of the left-hand side in~\eqref{eqn:pjupbd} and let~$n'$ be a subsequence satisfying
\begin{equation*}\label{eqn:limsuprepl}
\lim_{n'\to\infty}  \left[\P \del[1]{S_{n',p}\geq \bar{\kappa}_{n',p}(\alpha)}-\E\psi_{n'}\right] =\mathfrak{s}.
\end{equation*}
We shall also use that the right hand side of~\eqref{eqn:pjupbd} equals
\begin{align}
&\sbr[1]{\P\del[1]{S_{n,p} \geq\bar{\kappa}_{n,p}(\alpha)}-\P\del[1]{\|\bm{Z}_d+\bm{\theta}_n\|_p \geq\bar{\kappa}_{n,p}(\alpha)}}\notag\\
-&
\sbr[1]{\P\del[1]{S_{n,p} \geq\kappa_{n,p}(\alpha_p)}-\P\del[1]{\|\bm{Z}_d+\bm{\theta}_n\|_p \geq\kappa_{n,p}(\alpha_p)}}\notag\\
+&			
\sbr[1]{\P\del[1]{\|\bm{Z}_d+\bm{\theta}_n\|_p \geq\bar{\kappa}_{n,p}(\alpha)}-\P\del[1]{\|\bm{Z}_d+\bm{\theta}_n\|_p \geq\kappa_{n,p}(\alpha_p)}}\label{eq:bounddecomp}.
\end{align} 

Consider first the case of~$p\in \mathfrak{P}_n\cap [2,\infty)$: If necessary, pass to a further subsequence~$n''$ along which~$d^{-1/2}\sum_{i=1}^dg_p(\theta_{i,n})$ converges to some~$b\in[0,\infty]$. If~$b=\infty$, applying Theorems~\ref{thm:pnorm-char_GMM} (Part 3) and~\ref{thm:prim} along this subsequence\footnote{Inspection of the proof of these theorems shows that they remain valid along subseqeunces satisfying the conditions of the theorem.} yields
\begin{align*}
\mathfrak{s}
\leq
\lim_{n'\to\infty}\sbr[2]{\P\del[1]{S_{n',p} \geq\bar{\kappa}_{n',p}(\alpha)}			-
\P\del[1]{S_{n',p} \geq\kappa_{n',p}(\alpha_p)}}	
=
1-1
=
0.
\end{align*}
If, on the other hand,~$b\in[0,\infty)$, then~$d^{-1}\sum_{i=1}^dg_p(\theta_{i,n})\to 0$. Hence, also~$d^{-1}\sum_{i=1}^dg_2(\theta_{i,n})\to 0$ and so~\eqref{eq:bounddecomp} and Lemma~\ref{lem:convgauss} together with the argument following (C.61) in the proof of Theorem C.10 in~\cite{kp2021consistency} yield the desired bound on~$\mathfrak{s}$.

Next, consider the case where~$p=\infty$: If necessary, pass to a further subsequence~$n''$ along which~$\sum_{i = 1}^d \frac{\overline{\Phi}\left(\mathfrak{c}_d - |\theta_{i,d}|\right)}{\Phi\left(\mathfrak{c}_d - |\theta_{i,d}|\right)}$ converges to some~$c\in[0,\infty]$. If~$c=\infty$, applying Part 2 of Theorem~\ref{thm:supnorm} along this subsequence yields
\begin{align*}
\mathfrak{s}
\leq
\lim_{n'\to\infty}\sbr[2]{\P\del[1]{S_{n',\infty} \geq\bar{\kappa}_{n',\infty}(\alpha)}			-
\P\del[1]{S_{n',\infty} \geq\kappa_{n',\infty}(\alpha_\infty)}}	
=
1-1
=
0.
\end{align*}
If, on the other hand~$c\in[0,\infty)$, then~$\frac{1}{{d'}^{3/2}/\log(d')} \sum_{i = 1}^{d'} g_2(\theta_{i,n'})=\frac{1}{{d'}^{3/2}/\log(d')}\|\bm{\theta}_{n'}\|_2^2$ is bounded (cf.~the argument at the end of the proof of Theorem~\ref{thm:supnorm}). Therefore, by~\eqref{eq:bounddecomp} and Lemma~\ref{lem:hypergauss} it suffices to study the limiting behaviour of
\begin{align*}
	&\P\del[1]{\|\bm{Z}_d+\bm{\theta}_n\|_\infty \leq\kappa_{n,\infty}(\alpha_\infty)}-\P\del[1]{\|\bm{Z}_d+\bm{\theta}_n\|_\infty \leq\bar{\kappa}_{n,\infty}(\alpha)}\\
	=&
	\P\del[1]{\|\bm{Z}_d+\bm{\theta}_n\|_\infty \leq\bar{\kappa}_{n,\infty}(\alpha)+[\kappa_{n,\infty}(\alpha_\infty)-\bar{\kappa}_{n,\infty}(\alpha)]}-\P\del[1]{\|\bm{Z}_d+\bm{\theta}_n\|_\infty \leq\bar{\kappa}_{n,\infty}(\alpha)},
\end{align*}
along~$n'$. By~\eqref{eq:sizesup}
\begin{align*}
\kappa_{n,\infty}(\alpha_\infty)-\bar{\kappa}_{n,\infty}(\alpha)
=
\frac{\log\del[0]{-\log(1-\alpha)/2}}{\sqrt{2\log(d)}}-\frac{\log\del[0]{-\log(1-\alpha_\infty)/2}}{\sqrt{2\log(d)}}+o\del[2]{\frac{1}{\sqrt{\log(d)}}}.
\end{align*}
Therefore, by Theorem 1 in~\cite{chernozhukov2017detailed} and arguments similar to those in the proof of Lemma~\ref{lem:hypergauss} the limit superior of the right-hand side of the penultimate display along~$n'$ is no greater than~$\log\del[0]{-\log(1-\alpha)/2}-\log\del[0]{-\log(1-\alpha_\infty)/2}$, which is the claimed upper bound.

\section{Sample split}\label{app:samplesplit}
To state and prove the asymptotic size control of the proposed sample split, let~$I \subseteq \cbr[0]{1,\hdots,D}$ be non-empty. For any~$D\times 1$ vector~$\bm{a}$ and any~$D\times D$ matrix $\bm{A}$, let~$\bm{a}_I$ and~$\bm{A}_{I}$, respectively, denote the sub-vector (sub-matrix) with rows (and columns) indexed by~$I$. In this section~$h_{n}=(h_{1,n},\hdots,h_{D,n})'$ denotes the total number,~$D$, of moment restriction functions. In addition,
\begin{align*}
\bm{H}(\bm{\beta}_n^*,N)&:=\frac{1}{\sqrt{|N|}}\sum_{i\in N}h_{n}(\bm{X}_{i,n},\bm{\beta}_n^*)\qquad \text{where }N\subseteq\cbr[0]{1,\hdots,n},\ N\neq \emptyset\\ 
\hat{\bm{\Sigma}}(\bm{\beta}_n^*,N,I)&\text{\hspace{0.6cm}is a symmetric, positive semi-definite estimator of the covariance matrix}\\
&\hspace{0.6cm}\text{of~$\bm{H}_I(\bm{\beta}_0,N)$ using observations indexed by~$N$ only.}\\
S_{p}(\bm{\beta}_n^*,N,I)&:=\enVert[2]{\sbr[1]{\hat{\bm{\Sigma}}(\bm{\beta}_n^*,N,I)}^{-1/2}\bm{H}_{I}(\bm{\beta}_n^*,N)}_p,\qquad p\in[2,\infty]\\
\psi(\bm{\beta}_n^*,N,I)&:=\mathds{1}\cbr[2]{\max_{p\in\mathfrak{P}_{|N|}} \kappa_{|N|,p}^{-1}
S_{p}(\bm{\beta}_n^*,N,I)\geq c_{|N|}},
\end{align*}
with all other quantities as in the construction of~$\psi_n(\bm{\beta}_n^*)$ in Section~\ref{sec:domtestconstruct}, but with the~$|N|$ replacing~$n$. Thus, the test (statistics) based on the observations in~$N_2$ and the selected moments~$S$ (depending on observations in~$N_1$ only) are $S_{p}(\bm{\beta}_n^*, N_2, S)$,~$p\in[2,\infty]$, and~$\psi(\bm{\beta}_n^*,N_2,S)$. The allowed growth rate of~$d=d(n_2)$ is now dependent on size~$n_2$ of the second fold. Finally, let~$J_{n_2} \subseteq J_{n_2}^* := \cbr[1]{I\subseteq \cbr[0]{1,\hdots,D}:|I|=d}$ be non-empty.

For concreteness, we work under the appropriate ``sample split version'' of the assumptions of Theorem~\ref{thm:prim}. These essentially impose the assumptions of that theorem to be valid under the null only, but uniformly over all possible sets of selected moments (of size~$d$) replacing the sample size~$n$ by the sample size~$n_2$ in the second step (recall that~$d$ is now the number \emph{selected} moment restrictions out of a total of~$D$ moment restrictions). Under independent sampling, appropriate versions of all other asymptotic size control results that we have stated can also be adjusted to incorporate a sample split. Recall that~$d=d(n_2)$.
\begin{theorem}\label{thm:dimred}
Let~$\bm{\beta}^* \in\mathbf{B}^{(0)}$,~$n_2\to\infty$, $d\to\infty$ and~$\bm{\Sigma}_{n,I}(\bm{\beta}_n^*)$ be invertible for all~$I\in J_{n_2}$ with~$\max_{I\in J_{n_2}}\|\bm{\Sigma}_{n,I}(\bm{\beta}_n^*)\|_2$ as well as $\max_{I\in J_{n_2}}\|\bm{\Sigma}^{-1}_{n,I}(\bm{\beta}_n^*)\|_2$ (uniformly) bounded from above. Furthermore,~$\bm{X}_{1,n},\hdots,\bm{X}_{n,n}$ is i.i.d.~for each~$n\in\N$ and, as~$n_2\to\infty$, 
\begin{enumerate}
\item $\max_{I\in J_{n_2}}\|\hat{\bm{\Sigma}}(\bm{\beta}_n^*,N_2,I)-\bm{\Sigma}_{n_,I}(\bm{\beta}_n^*)\|_2=O_\P(a_{n_2})$ with~$d^{3/4}a_{n_2}\to 0$ and
\item $\sup_{C\in\mc{C}_{n_2}}\max_{I\in J_{n_2}}\envert[1]{\P\del[1]{[\bm{\Sigma}(\bm{\beta}_n^*,N_2,I)]^{-1/2}\bm{H}_I(\bm{\beta}_n^*,N_2)\in  C}-\P\del[1]{\bm{Z}_d \in  C}}\to 0$.
\end{enumerate}
In addition, let~$\alpha\in(0,1)$ and~$S = S\del[1]{n_1,n_2,\cbr[0]{\bm{X}_{i,n}}_{i\in N_1}} \in J_{n_2}$ be measurable. Then:
\begin{enumerate}
\item For~$p\in[2,\infty]$ a sequence of real numbers~$(\kappa_{n_2,p})_{n_2\in\N}$ satisfies
\begin{align*}
\P\del[1]{S_{p}(\bm{\beta}_n^*,N_2,S)\geq \kappa_{n_2,p}}\to \alpha, 
\end{align*}
if and only if~$\kappa_{n_2,p}$ is as in~\eqref{eq:sizepnorm} if~$p\in[2,\infty)$ or as in~\eqref{eq:sizesup} if~$p=\infty$.	
\item $\E\psi(\bm{\beta}_n^*,N_2,S)\to \alpha$. 
\end{enumerate}
\end{theorem}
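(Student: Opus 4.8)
The plan is to reduce the random moment selection to a uniform-over-$I$ statement by exploiting independence. Because sampling is i.i.d.\ and $N_1, N_2$ are disjoint, for each fixed $I \in J_{n_2}$ the statistic $S_p(\bm{\beta}_n^*, N_2, I)$ (and likewise the test $\psi(\bm{\beta}_n^*, N_2, I)$) is a function of $\cbr[0]{\bm{X}_{i,n}}_{i \in N_2}$ only, hence independent of the selected set $S = S\del[1]{n_1, n_2, \cbr[0]{\bm{X}_{i,n}}_{i \in N_1}}$. Writing $q_{n_2}(I)$ for the rejection probability of the test in question evaluated at the fixed set $I$, conditioning on $\sigma\del[1]{\cbr[0]{\bm{X}_{i,n}}_{i \in N_1}}$ gives
\begin{align*}
\P\del[1]{S_p(\bm{\beta}_n^*, N_2, S) \geq \kappa_{n_2, p}} = \E\sbr[1]{q_{n_2}(S)}, \qquad \inf_{I \in J_{n_2}} q_{n_2}(I) \leq \E\sbr[1]{q_{n_2}(S)} \leq \sup_{I \in J_{n_2}} q_{n_2}(I),
\end{align*}
and the same identity and bounds for $\E\psi(\bm{\beta}_n^*, N_2, S)$. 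Since $S \in J_{n_2}$ almost surely, it therefore suffices to prove that $\sup_{I \in J_{n_2}} \envert[0]{q_{n_2}(I) - \alpha} \to 0$ for each of the tests under consideration.

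The bulk of the work is to upgrade Lemma~\ref{lem:convgauss} and Lemma~\ref{lem:hypergauss} to hold uniformly over $I \in J_{n_2}$. As $\bm{\beta}^* \in \mathbf{B}^{(0)}$, every moment in any $I$ holds under the null, so the noncentrality vector $\bm{\theta}_n$ vanishes for every $I$, the approximating law is the centered $\bm{Z}_d$, and the boundedness premises of Lemmas~\ref{lem:convgauss} and~\ref{lem:hypergauss} hold trivially (both equal zero). First I would record the uniform analogue of Lemma~\ref{lem:covestim}, namely
\begin{align*}
\max_{I \in J_{n_2}} \enVert[1]{[\hat{\bm{\Sigma}}(\bm{\beta}_n^*, N_2, I)]^{-1/2} - [\bm{\Sigma}_{n,I}(\bm{\beta}_n^*)]^{-1/2}}_2 = O_\P(a_{n_2}),
\end{align*}
which follows verbatim from its proof once the uniform eigenvalue bounds and Assumption 1 are used to guarantee that all $\hat{\bm{\Sigma}}(\bm{\beta}_n^*, N_2, I)$ are simultaneously invertible with probability tending to one. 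Feeding this into the proof of Lemma~\ref{eq:pnormerror} (with $p = 2$ and $\bm{\theta}_n = \bm{0}_d$) yields a single-constant bound $\max_{I \in J_{n_2}} A_n(I) = O_\P\del[1]{d^{1/2} a_{n_2}}$ for the $\ell_2$-approximation errors $A_n(I)$. The remaining steps of Lemma~\ref{lem:convgauss} and Lemma~\ref{lem:hypergauss} are then uniform: the anti-concentration inequalities invoked there (Lemma 2.6 in~\cite{bentkus2003dependence}, respectively Lemma A.1 in~\cite{chernozhukov2017central}) depend on $d$ but not on $I$, and the sandwiching-of-events argument uses only the single event $\cbr[0]{\max_{I} A_n(I) > \eps_{n_2}}$ with one radius $\eps_{n_2} = M d^{1/2} a_{n_2}$, for which $d^{1/4} \eps_{n_2} = M d^{3/4} a_{n_2} \to 0$. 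This produces
\begin{align*}
\sup_{C \in \mc{C}_{n_2}} \max_{I \in J_{n_2}} \envert[1]{\P\del[1]{[\hat{\bm{\Sigma}}(\bm{\beta}_n^*, N_2, I)]^{-1/2} \bm{H}_I(\bm{\beta}_n^*, N_2) \in C} - \P\del[1]{\bm{Z}_d \in C}} \to 0
\end{align*}
together with its hyperrectangle counterpart.

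With these uniform approximations, Part 1 follows by transcribing the proofs of Part 1 of Theorem~\ref{thm:pnorm-char_GMM} (for $p \in [2, \infty)$) and of part a) of Theorem~\ref{thm:supnorm} (for $p = \infty$). The uniform convex- (respectively rectangle-) set approximation transfers, uniformly in $I$, the distributional statements underlying those proofs: the asymptotic normality of $\del[1]{S_p(\bm{\beta}_n^*, N_2, I)^p - d\lambda_p(0)}/(\sqrt{d}\sigma_p)$ via Lemma~C.3 of~\cite{kp2021consistency} when $p \in [2,\infty)$, and the Gumbel-type limit of $S_\infty(\bm{\beta}_n^*, N_2, I)$ via Lemma~A.2 of~\cite{kp2021consistency} when $p = \infty$. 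By the reduction of the first paragraph these persist for the random $S$, and the Polya-type arguments of those proofs then yield the stated ``if and only if'' characterization of $\kappa_{n_2, p}$. For Part 2, the non-rejection event of $\psi(\bm{\beta}_n^*, N_2, I)$ is $\cbr[1]{[\hat{\bm{\Sigma}}(\bm{\beta}_n^*, N_2, I)]^{-1/2} \bm{H}_I(\bm{\beta}_n^*, N_2) \in \bigcap_{p \in \mathfrak{P}_{n_2}} \mathbb{B}_p(c_{n_2} \kappa_{n_2, p})}$, whose indexing convex set is identical for all $I$ and has exact Gaussian measure $1 - \alpha$ by~\eqref{eq:kappachoice}; the uniform convex-set approximation then gives $\E\psi(\bm{\beta}_n^*, N_2, I) \to \alpha$ uniformly in $I$, and the reduction concludes $\E\psi(\bm{\beta}_n^*, N_2, S) \to \alpha$.

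The principal obstacle is entirely the uniformity bookkeeping: one must verify that each $O_\P(\cdot)$ term along the chain Lemma~\ref{lem:covestim} $\to$ Lemma~\ref{eq:pnormerror} $\to$ Lemmas~\ref{lem:convgauss}/\ref{lem:hypergauss} can be controlled by a single constant valid simultaneously across the (possibly very large) family $J_{n_2}$ --- which is precisely what the $\max_{I \in J_{n_2}}$ appearing in Assumptions 1 and 2 is designed to provide. Everything else is a direct replay of the already-established non-split size-control results combined with the elementary independence identity of the first paragraph.
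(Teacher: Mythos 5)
Your proposal is correct and takes essentially the same route as the paper's proof: the paper likewise exploits $N_1\cap N_2=\emptyset$ and independent sampling to write $\E\varphi\del[1]{\cbr[0]{\bm{X}_{i,n}}_{i\in N_2},S}=\sum_{I\in J_{n_2}}\E\varphi\del[1]{\cbr[0]{\bm{X}_{i,n}}_{i\in N_2},I}\P(S=I)$, reduces the claim to the uniform-in-$I$ statement $\max_{I\in J_{n_2}}\envert[1]{\E\varphi\del[1]{\cbr[0]{\bm{X}_{i,n}}_{i\in N_2},I}-\alpha}\to 0$, and then invokes Theorems~\ref{thm:pnorm-char_GMM}, \ref{thm:supnorm} and~\ref{thm:domtest} with the sufficient conditions of Theorem~\ref{thm:prim} holding uniformly over $J_{n_2}$ on the second subsample with sample size $n_2$. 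Your explicit verification that the uniformity propagates through Lemmas~\ref{lem:covestim}, \ref{eq:pnormerror}, \ref{lem:convgauss} and~\ref{lem:hypergauss} (via the single radius $\eps_{n_2}=Md^{1/2}a_{n_2}$ and $I$-independent anti-concentration bounds) is precisely the bookkeeping the paper asserts but leaves implicit.
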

Theorem~\ref{thm:dimred} shows that \emph{irrespectively} of how the set of  moments~$S$ to be tested in the second step is selected --- as long as this selection depends (measurably) on observations in~$N_1$ only --- all tests studied have the desired asymptotic size if~$d$ satisfies the growth conditions of our previous theorems relatively to the second step sample size~$n_2$. Thus, one can use \emph{any} state-of-the-art procedure to select the moments to be tested (some examples are given in Section~\ref{ex:momsel} below). We also reiterate that no conditions need to be imposed on the total number of available moment conditions,~$D$.

\begin{proof}[Proof of Theorem~\ref{thm:dimred}]
Fix~$\alpha\in(0,1)$ and let~$\varphi_{n_2}:\mc{X}_n^{n_2}\times J_{n_2}\to [0,1]$ (measurable) be a test. Then, by~$N_1\cap N_2=\emptyset$ and independent sampling,
\begin{align*}
\E\varphi\del[1]{\cbr[0]{\bm{X}_{i,n}}_{i\in N_2},S}
&=
\sum_{I\in J_{n_2}}\E\sbr[1]{\varphi_{n_2}\del[1]{\cbr[0]{\bm{X}_{i,n}}_{i\in N_2},I}\mathds{1}(S=I)}\\
&=
\sum_{I\in J_{n_2}}\E\varphi_{n_2}\del[1]{\cbr[0]{\bm{X}_{i,n}}_{i\in N_2},I}\P(S=I).
\end{align*}	
Thus, to show that~$\lim_{n\to\infty}\E\varphi_{n_2}\del[1]{\cbr[0]{\bm{X}_{i,n}}_{i\in N_2},S}=\alpha$,
it suffices to establish (as~$n_2 \to \infty$)
\begin{align}\label{eq:auxunif}
\max_{I\in J_{n_2}}\envert[1]{\E\varphi_{n_2}\del[1]{\cbr[0]{\bm{X}_{i,n}}_{i\in N_2},I}-\alpha}\to 0	.
\end{align}
We now argue that~\eqref{eq:auxunif} holds for~$\varphi_{n_2}$ a $p$-norm based tests with~$p\in[2,\infty)$. Thus, let 
\begin{align*}
\varphi_{n_2}\del[1]{\cbr[0]{\bm{X}_{i,n}}_{i\in N_2},I}=
\mathds{1}\del[1]{S_{p}(\bm{\beta}_n^*,N_2,I)\geq \kappa_{n_2,p}}.
\end{align*}
Then~\eqref{eq:auxunif} follows from Part 1~of Theorem~\ref{thm:pnorm-char_GMM} since the sufficient conditions of Theorem~\ref{thm:prim} are satisfied on the second subsample~$N_2$ uniformly in~$J_{n_2}$ with ``$\cbr[0]{1,\hdots,n}=N_2$'' and sample size ``$n=n_2=|N_2|$''. Also,~``$d=d(n_2)$''.
The cases of~$p=\infty$ and~$\psi(\bm{\beta}_n^*,N_2,S)$ are handled similarly via Theorems~\ref{thm:supnorm} and~\ref{thm:domtest}, respectively.
\end{proof}

\subsection{Choosing the set of moment restrictions in order to maximize power}\label{ex:momsel}
\begin{itemize}
\item \emph{Maximizing the test statistic in the first step:}
If one intends to use~$S_{p}(\bm{\beta}_n^*,N_2,S)$ in the second step, then one may choose
\begin{align*}
S\in \argmax_{I\in J_{n_2}}S_{p}(\bm{\beta}_n^*,N_1,I),
\end{align*} 
with~$d$ satisfying the growth conditions of Theorem~\ref{thm:dimred}. Since choosing~$S$ in this fashion may be computationally burdensome for very large~$D$, one can alternatively add moments one-by-one in a step-up approach (one could also use a step-down procedure) until arriving at~$d$ moments, in each step adding the moment that maximizes the value of the test statistic given the previously selected moments. A similar procedure can be used for~$\psi(\bm{\beta}_n^*,N_2,S)$. In this case, one chooses 
\begin{equation*}
S \in \argmax_{I\in J_{n_2}} \left[ \max_{p\in\mathfrak{P}_{n_2}} \kappa_{n_1,p}^{-1}
S_{p}(\bm{\beta}_n^*,N_1,I)\right].
\end{equation*}
\item \emph{The~$d$ largest scaled moments:} Let~$S$ be the indices of the~$d$ largest elements of~$$\cbr[3]{\envert[2]{\sbr[1]{\hat{\bm{\Sigma}}(\bm{\beta}_n^*,N_1,\cbr[0]{j})}^{-1/2}\bm{H}_{\cbr[0]{j}}(\bm{\beta}_n^*,N_1)},\ j=1,\hdots, D},\qquad N_1\neq \emptyset.$$
This amounts to testing those~$S$ individual moments that most strongly indicate a violation of~\eqref{eq:GMMlarge} (taking into account the scale of these, but ignoring their correlation structure). It is computationally cheaper than the method in the previous point.
\item One can use \emph{any} state-of-the-art procedure to select the moments to be tested tailored to the specific problem at hand. For example, in the context of hypothesis testing in the presence of many instruments, cf.~Sections~\ref{sec:IVexample} and~\ref{ex:IVdom}, one can use the Lasso in the first step with a choice of penalty parameter ensuring that the number of selected instruments~$d$ satisfies the growth conditions imposed in the previous sections.
\end{itemize}

\section{Further numerical evidence}\label{sec:furtherplots}
Figures \ref{fig:IV} and~\ref{fig:Gauss} below contain the raw power for the simulations in Sections~\ref{sec:simIV} and~\ref{sec:simGauss}, respectively.

\begin{figure}

\begin{center}
	\footnotesize $n=5{,}000$ and~$d=100$
\end{center}

\vspace{-0.5cm}
\includegraphics[width=5.2cm]{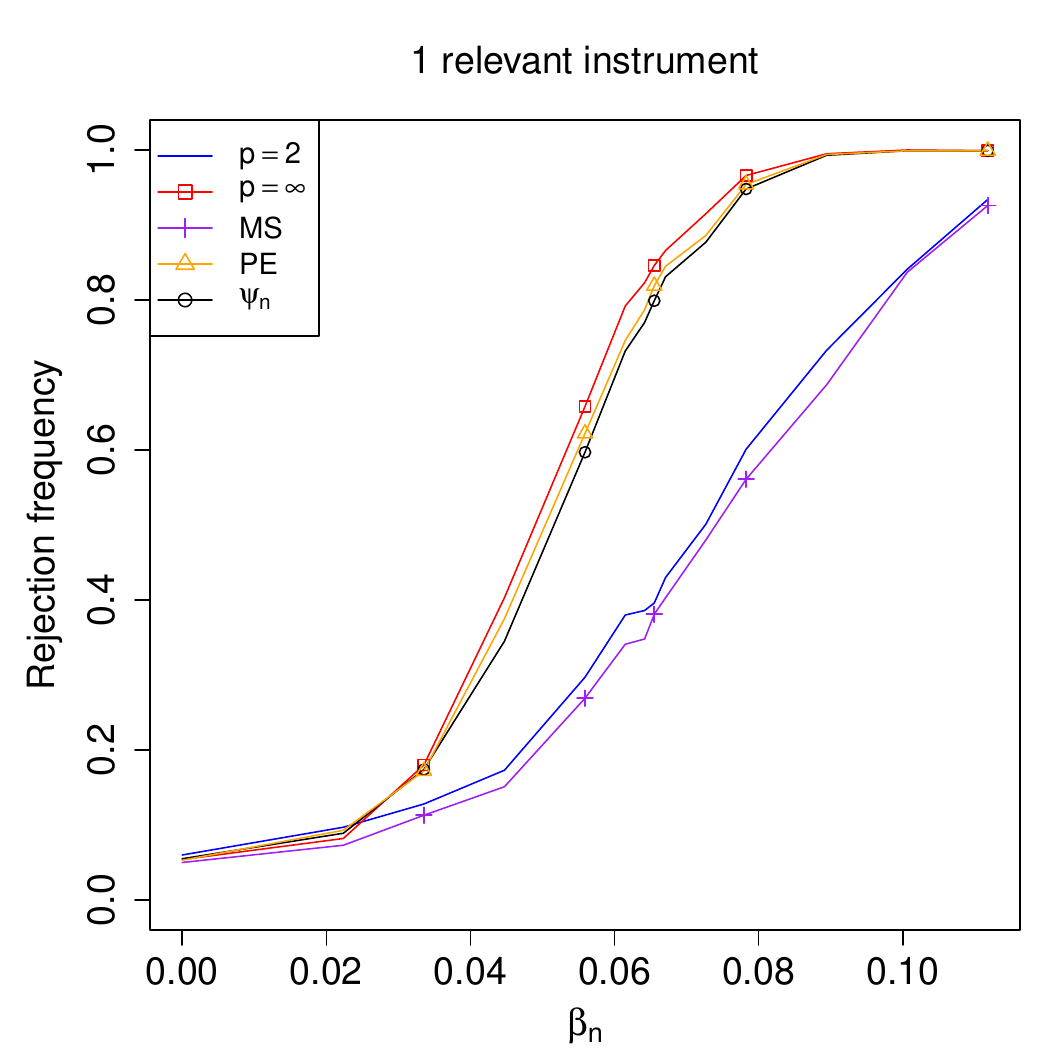}
\hspace{-0.5cm}
\includegraphics[width=5.2cm]{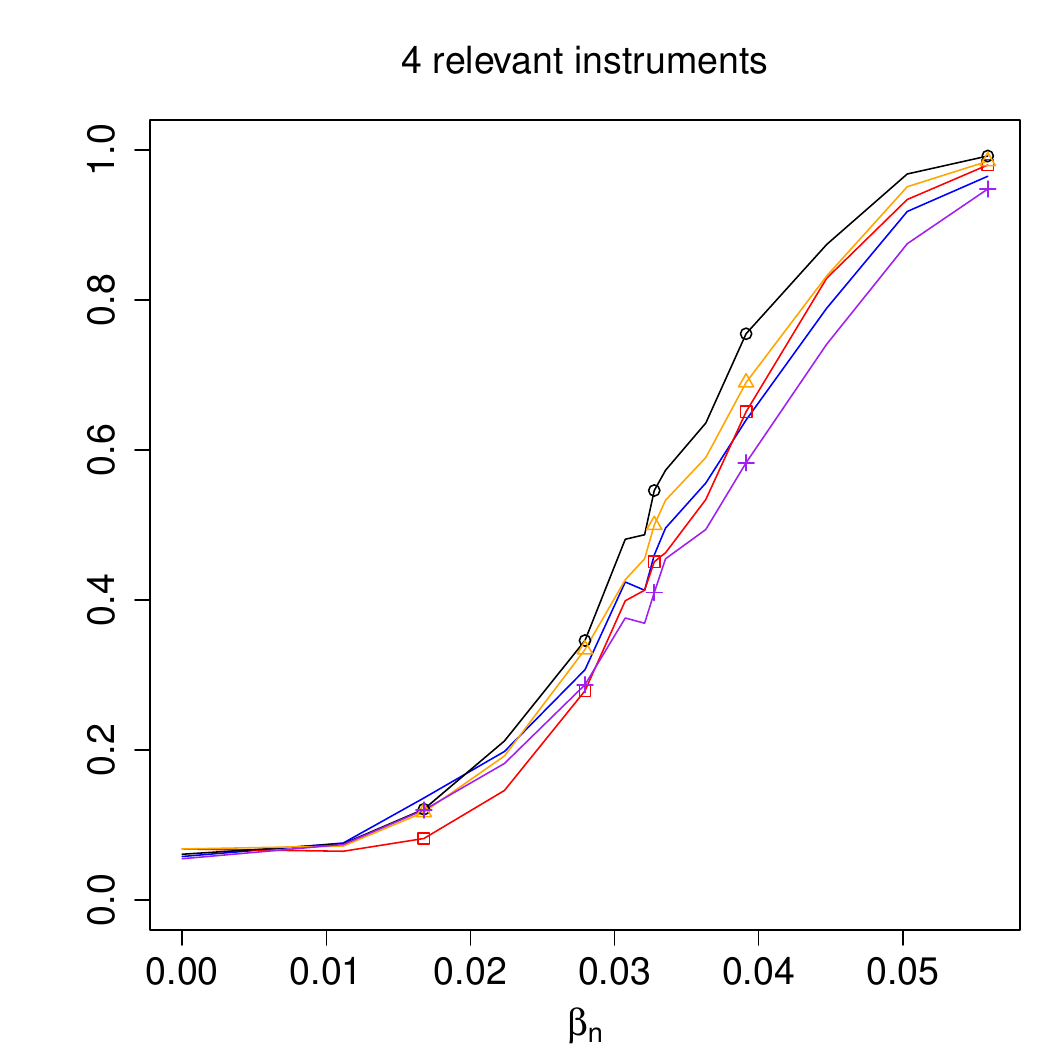}
\hspace{-0.5cm}
\includegraphics[width=5.2cm]{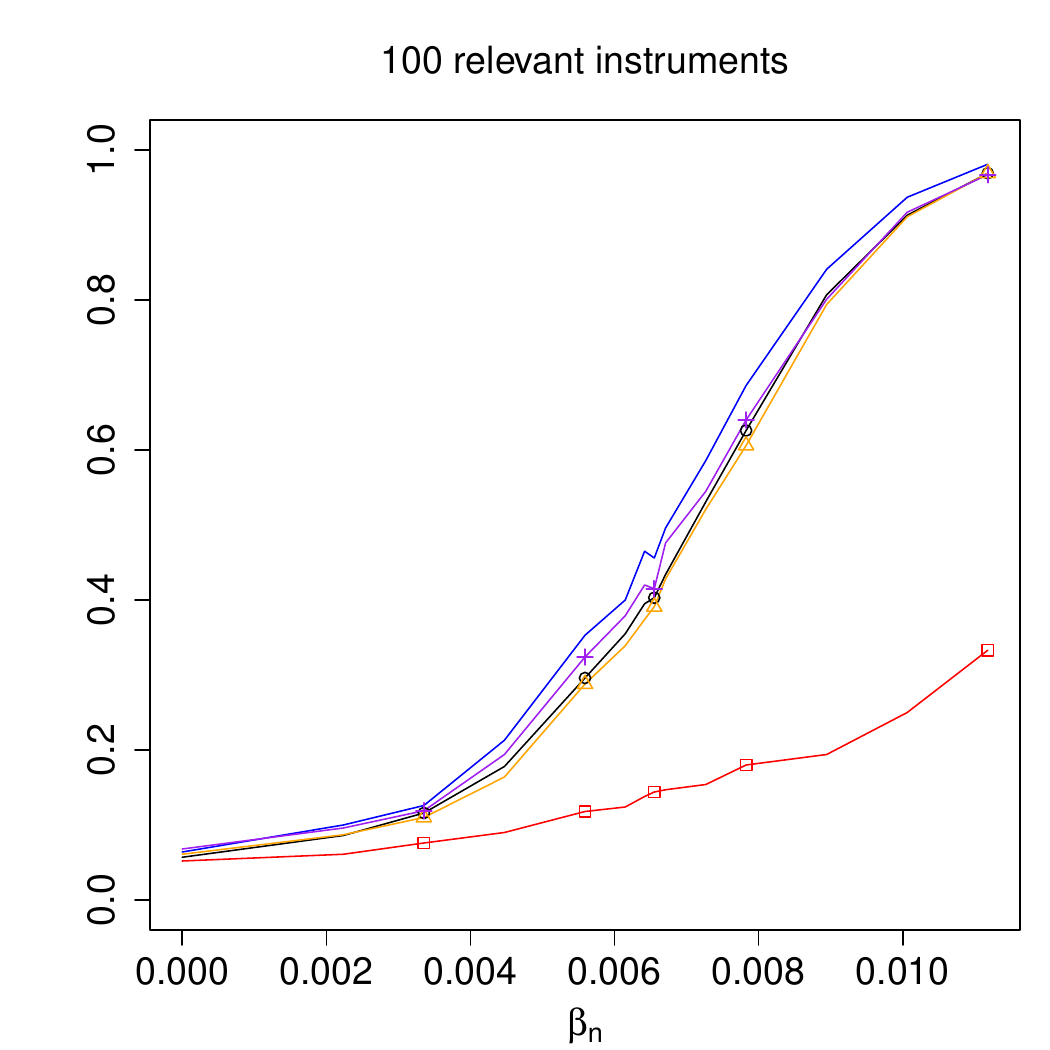}
\vspace{-0.2cm}

\begin{center}
	\footnotesize $n=25{,}000$ and~$d=500$
\end{center}

\vspace{-0.5cm}
\includegraphics[width=5.2cm]{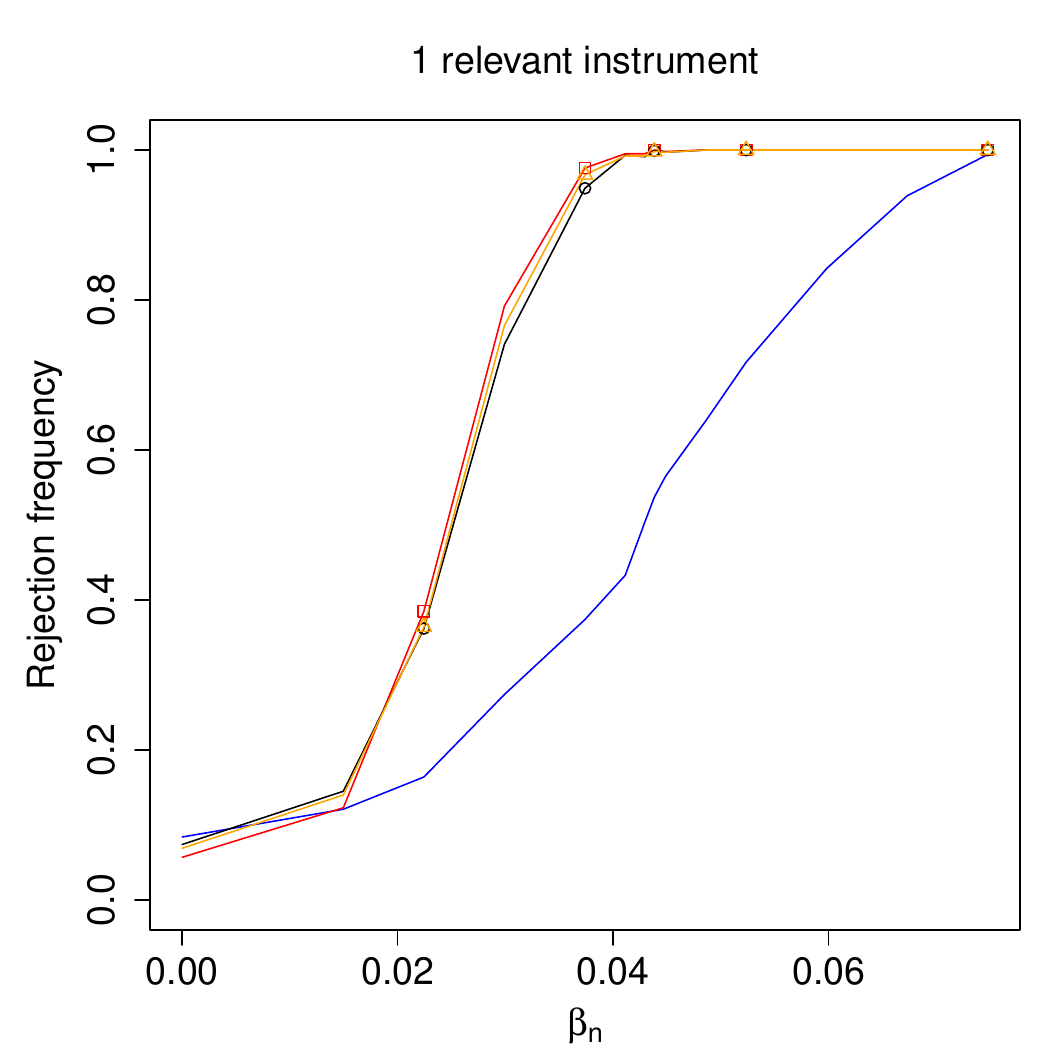}
\hspace{-0.5cm}
\includegraphics[width=5.2cm]{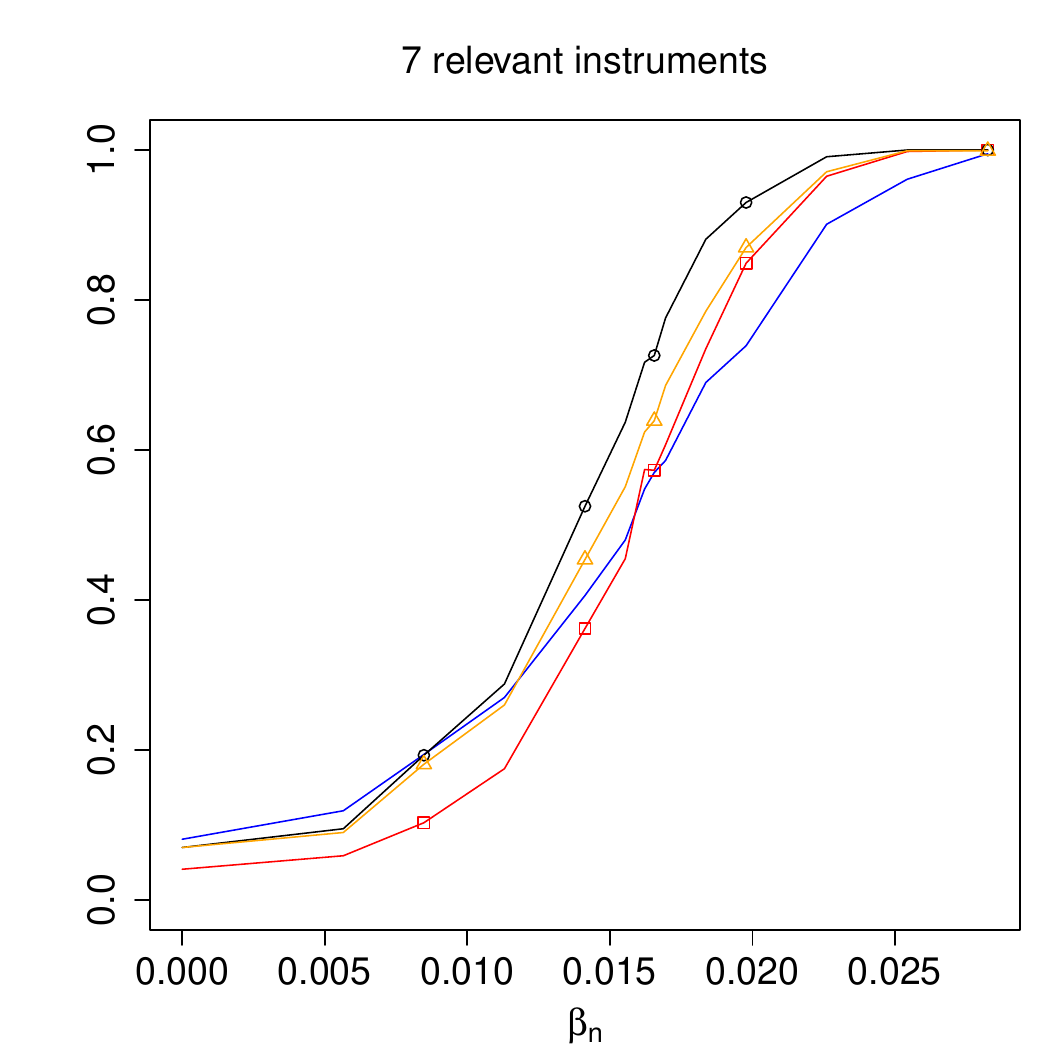}
\hspace{-0.5cm}
\includegraphics[width=5.2cm]{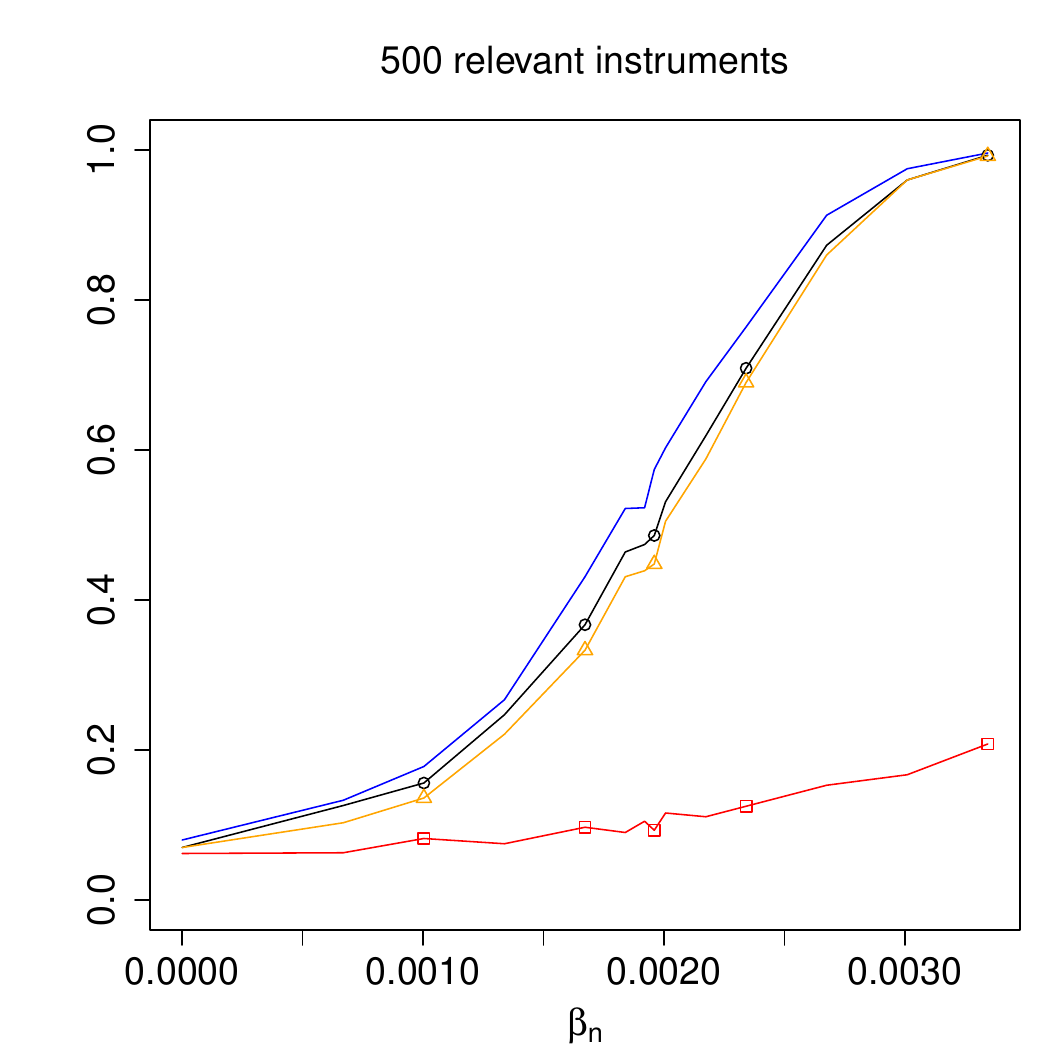}
\vspace{-0.2cm}

\begin{center}
	\footnotesize  $n=100{,}000$ and~$d=1{,}000$
\end{center}

\vspace{-0.5cm}
\includegraphics[width=5.2cm]{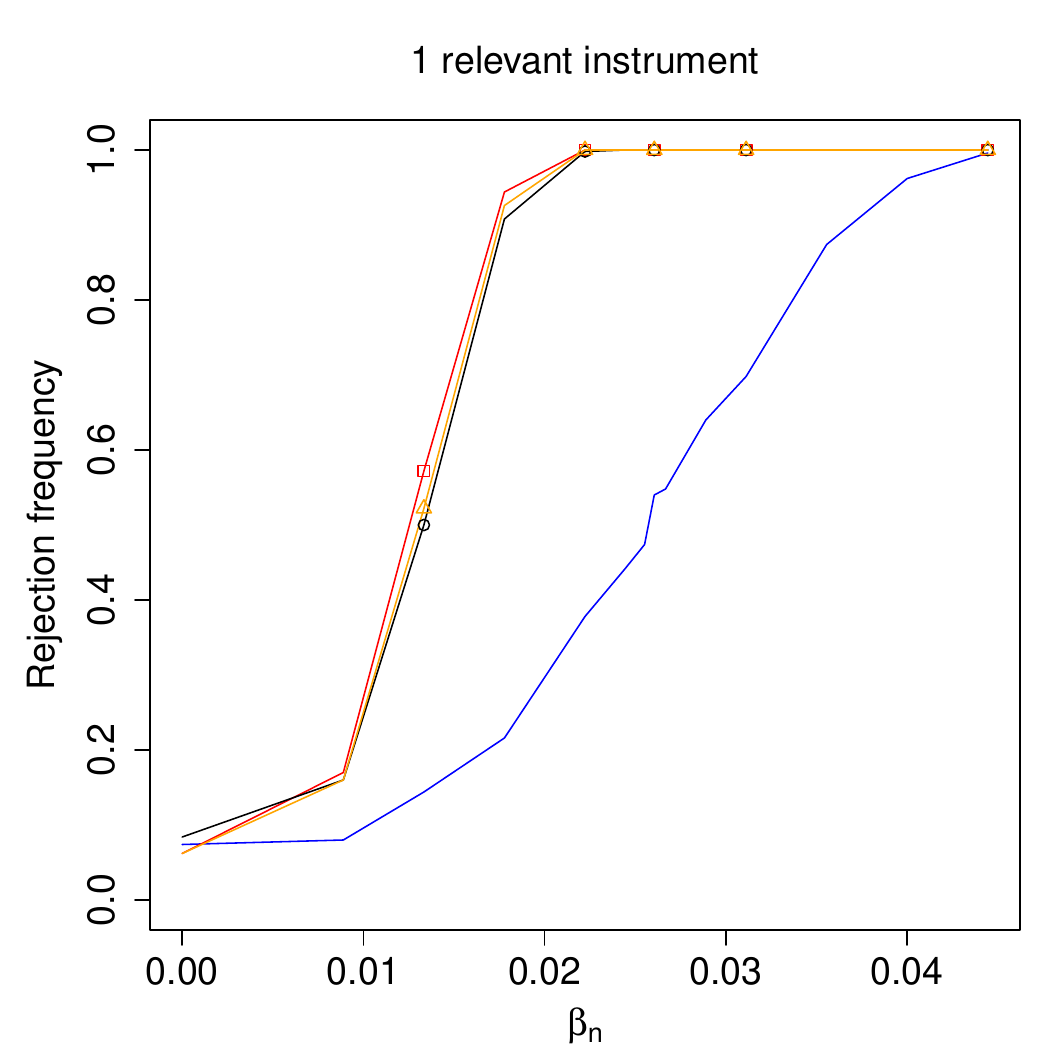}
\hspace{-0.5cm}
\includegraphics[width=5.2cm]{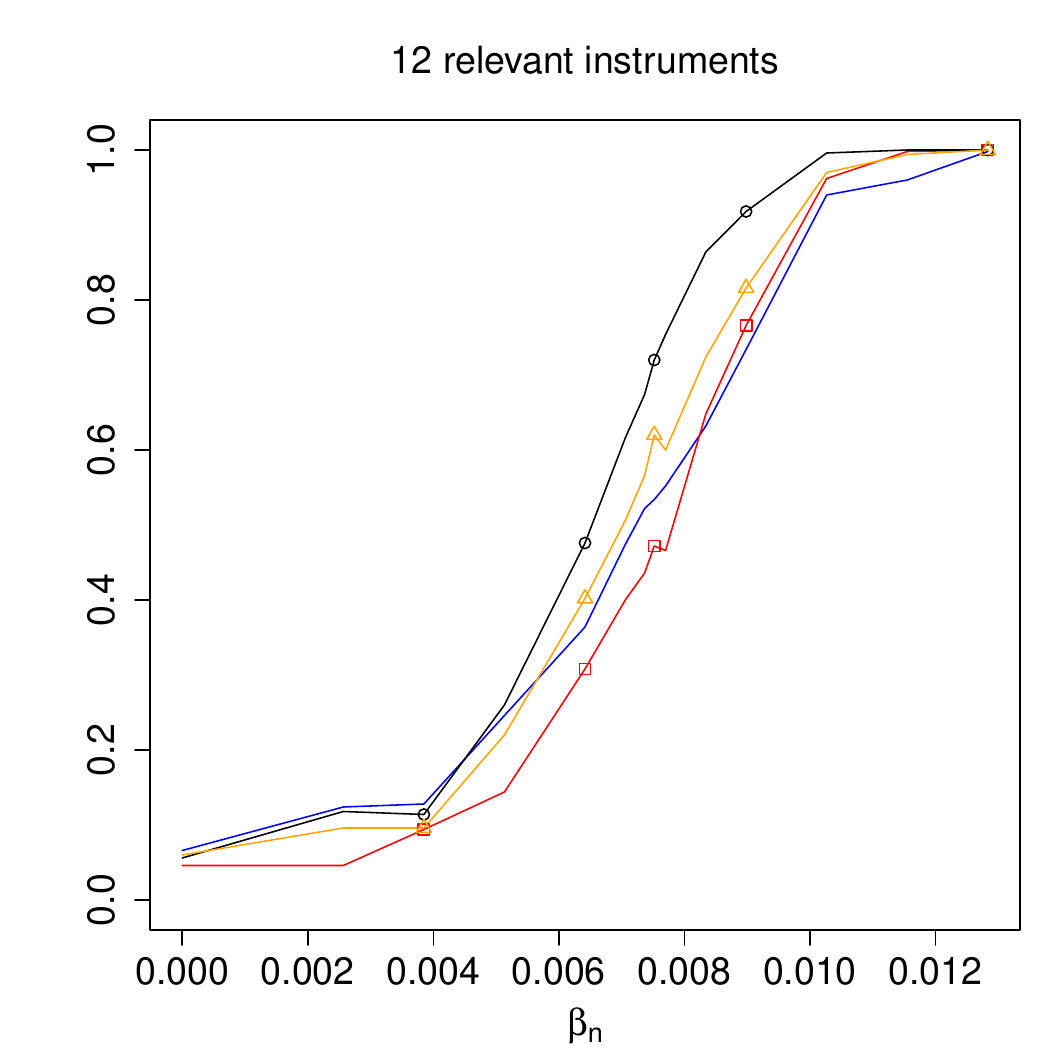}
\hspace{-0.5cm}
\includegraphics[width=5.2cm]{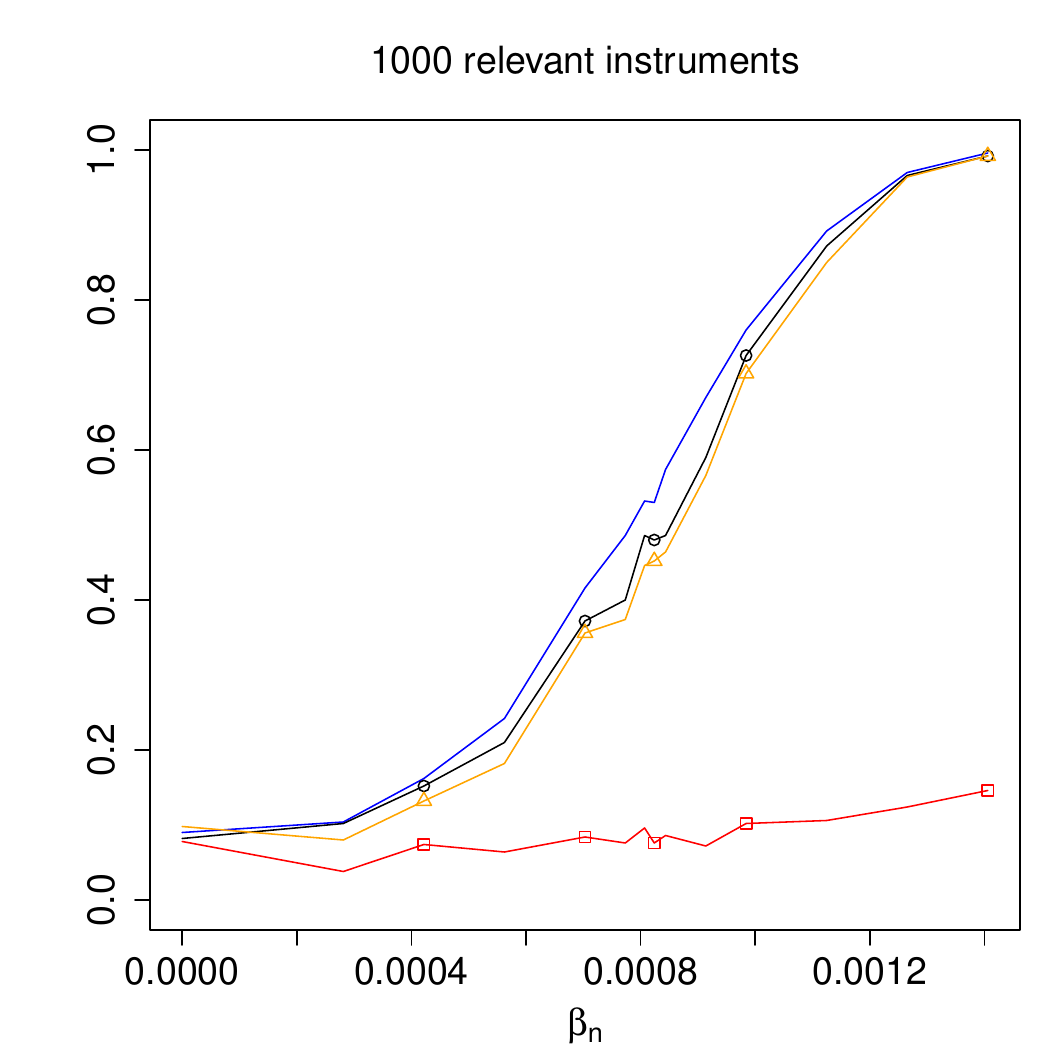}
\vspace{-0.6cm}
\caption{\footnotesize Rejection frequencies for~$(n,d)=(5{,}000,100)$ [first row], $(25{,}000, 500)$ [second row], $(100{,}000, 1{,}000)$ [third row] for sparse [first column], semi-sparse [second column] and dense [third column] alternatives. MS is the jackknifed Anderson-Rubin test of~\cite{mikusheva2020inference} and PE is the power enhancement principle. Full implementation details are given in the body text.}
\label{fig:IV}	
\end{figure}

\begin{figure}
\begin{center}
\includegraphics[width=5.2cm]{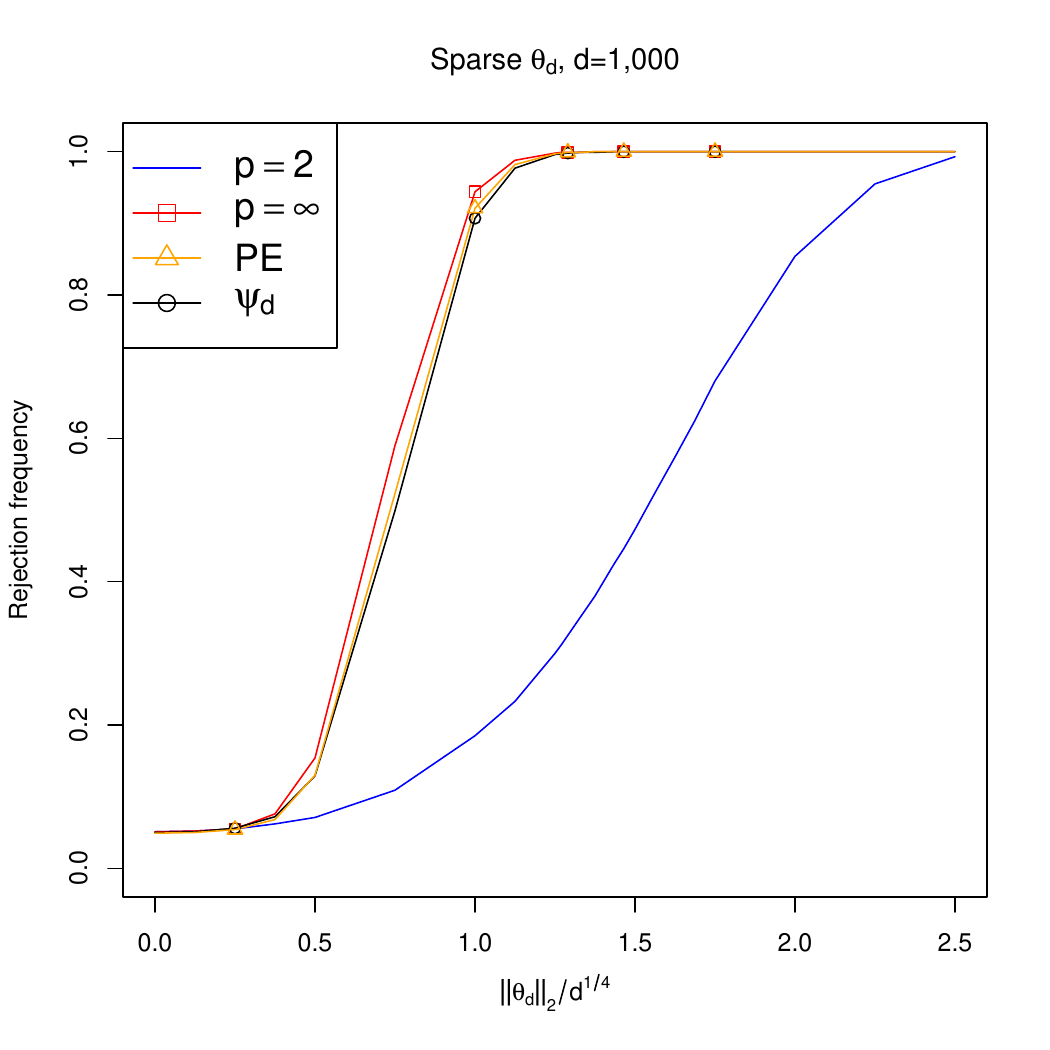}
\hspace{-0.6cm}
\includegraphics[width=5.2cm]{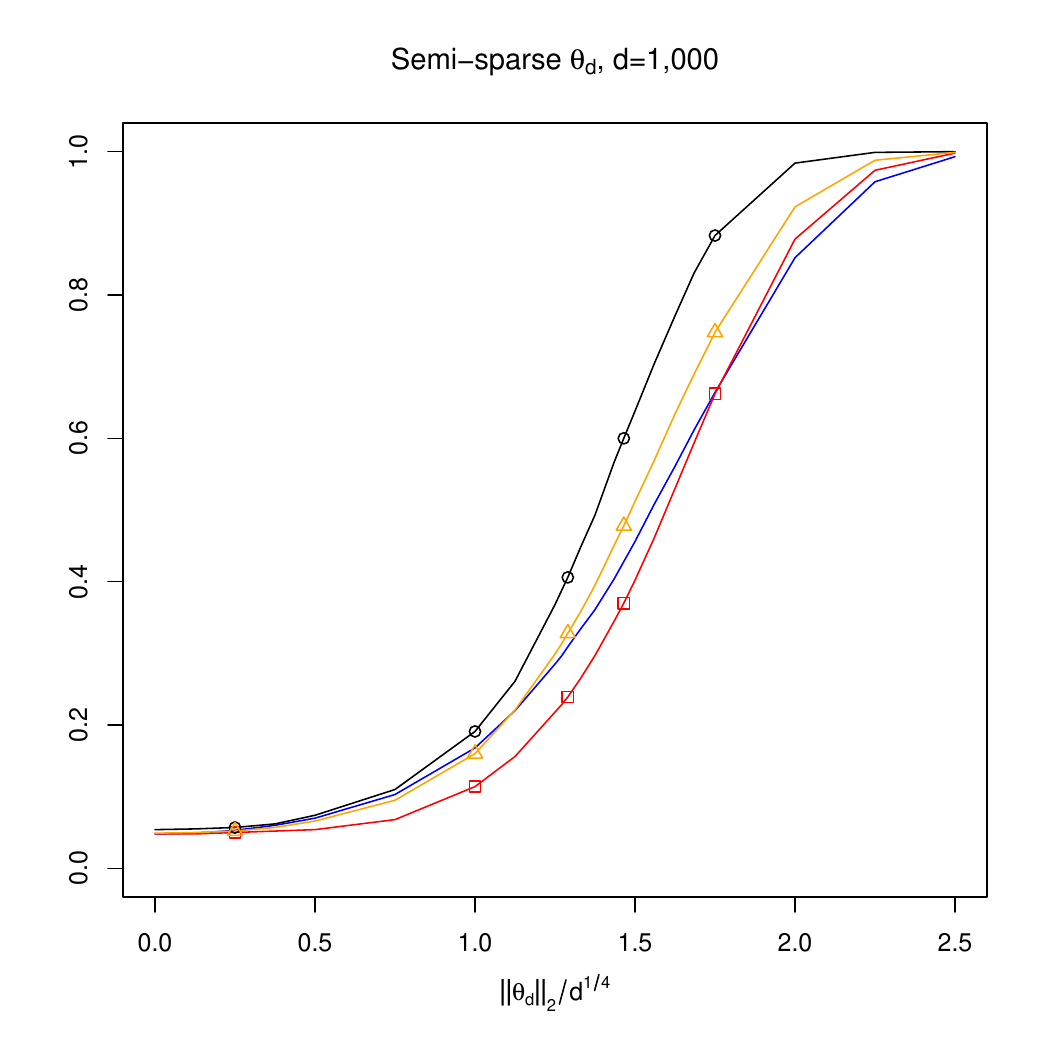}
\hspace{-0.6cm}
\includegraphics[width=5.2cm]{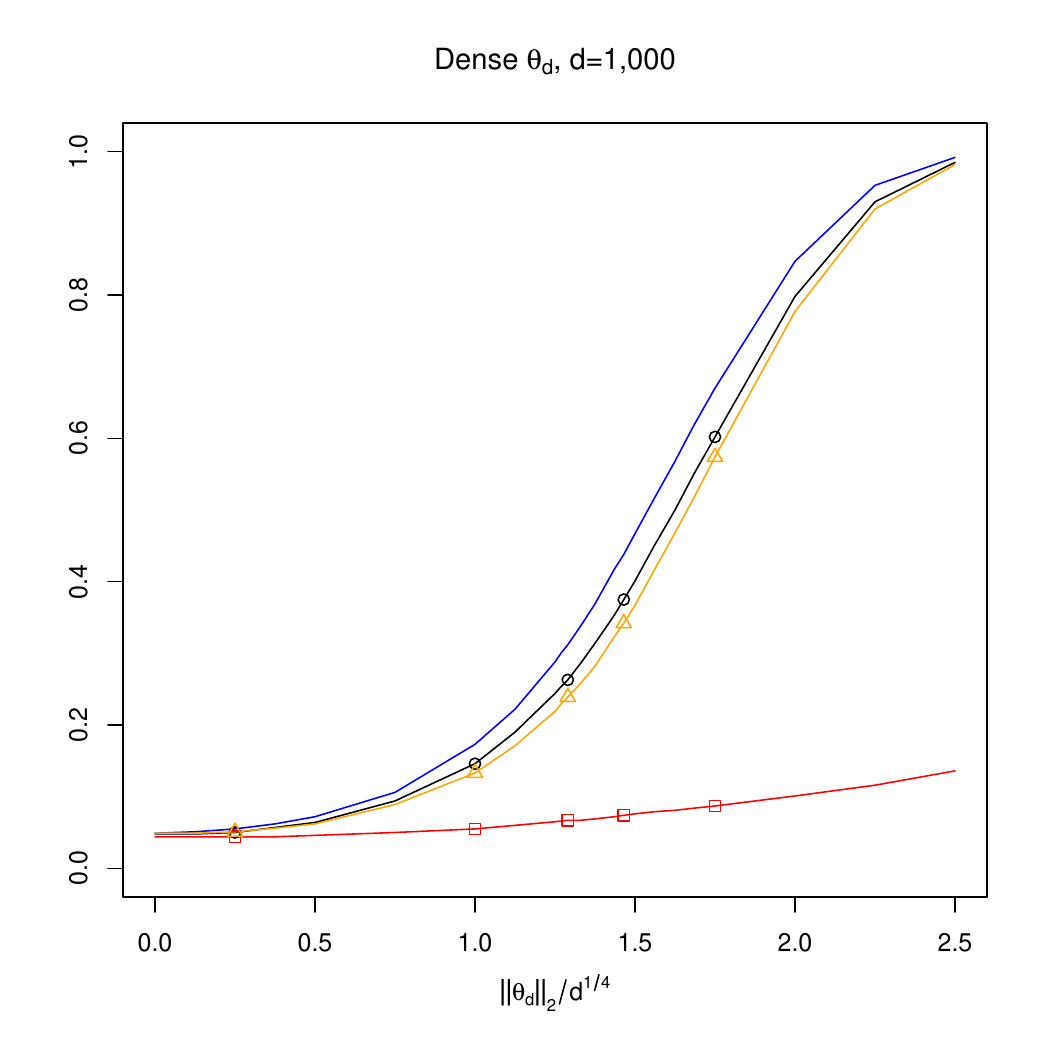}\null 
\vspace{-0.3cm}
\includegraphics[width=5.2cm]{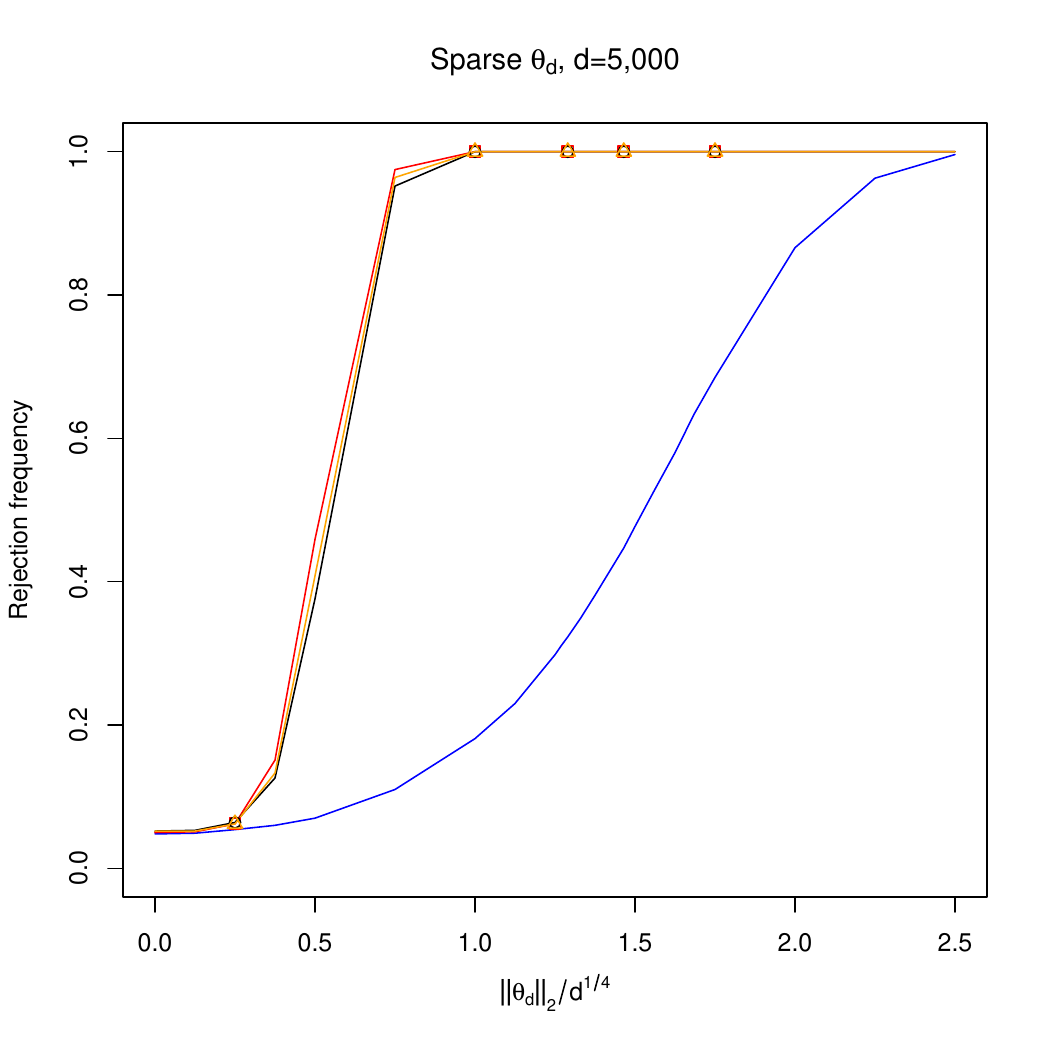}
\hspace{-0.6cm}
\includegraphics[width=5.2cm]{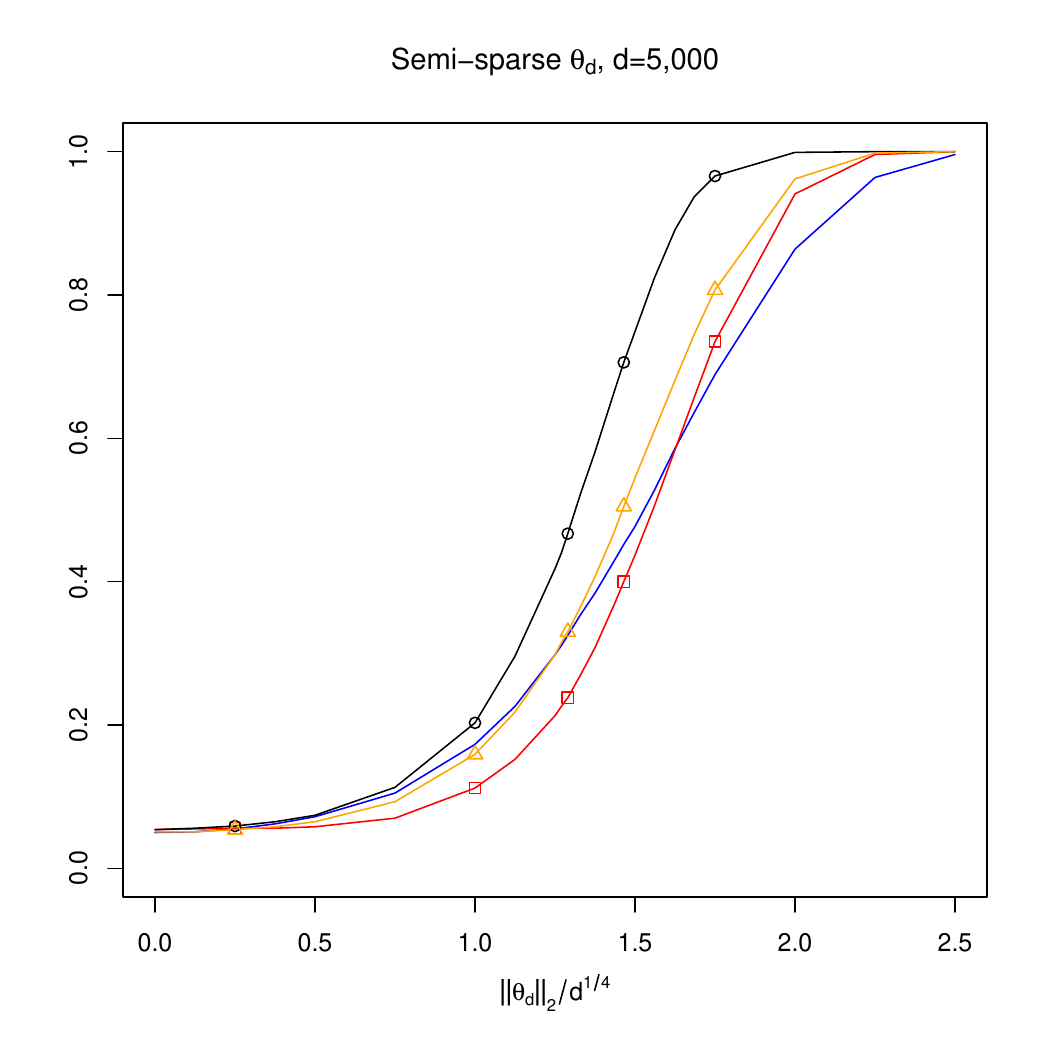}
\hspace{-0.6cm}
\includegraphics[width=5.2cm]{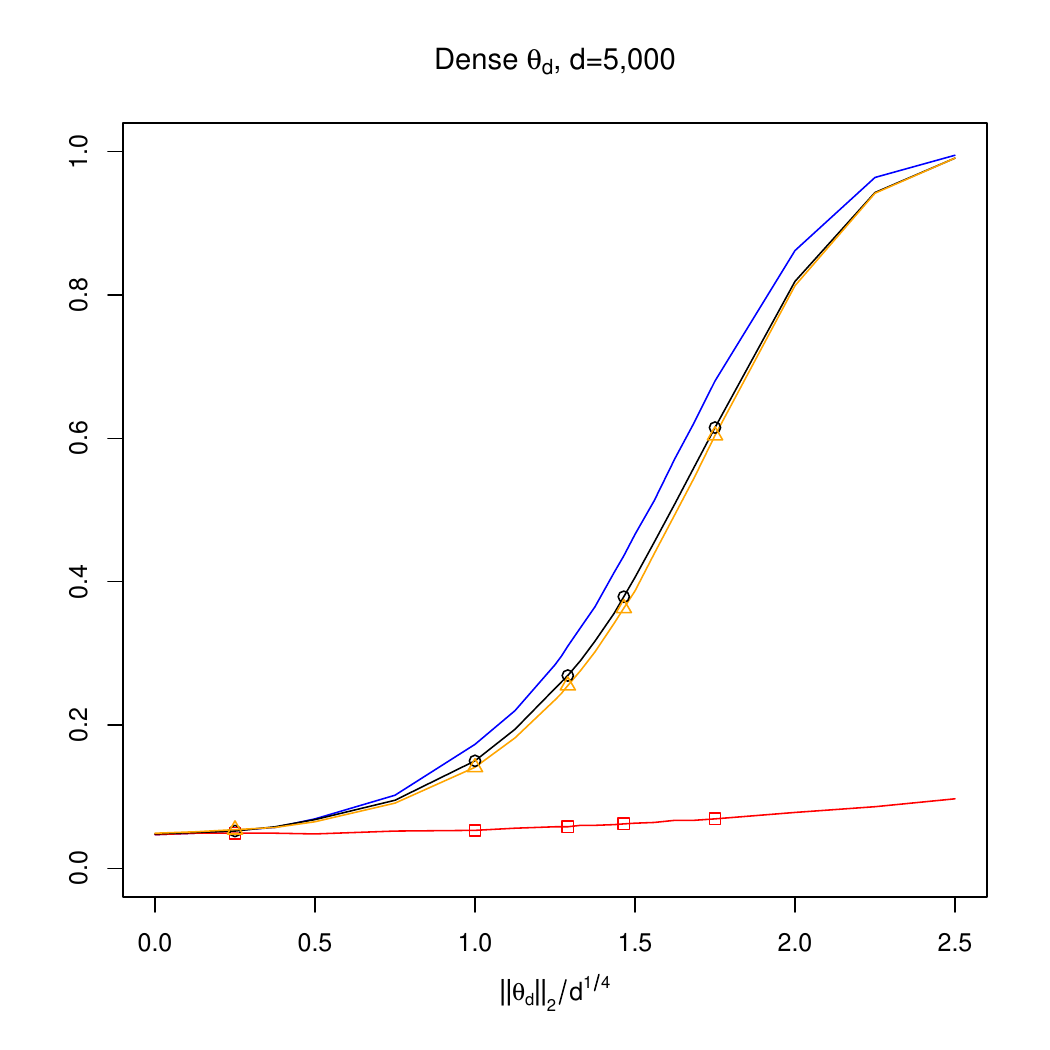}\null
\vspace{-0.3cm}
\includegraphics[width=5.2cm]{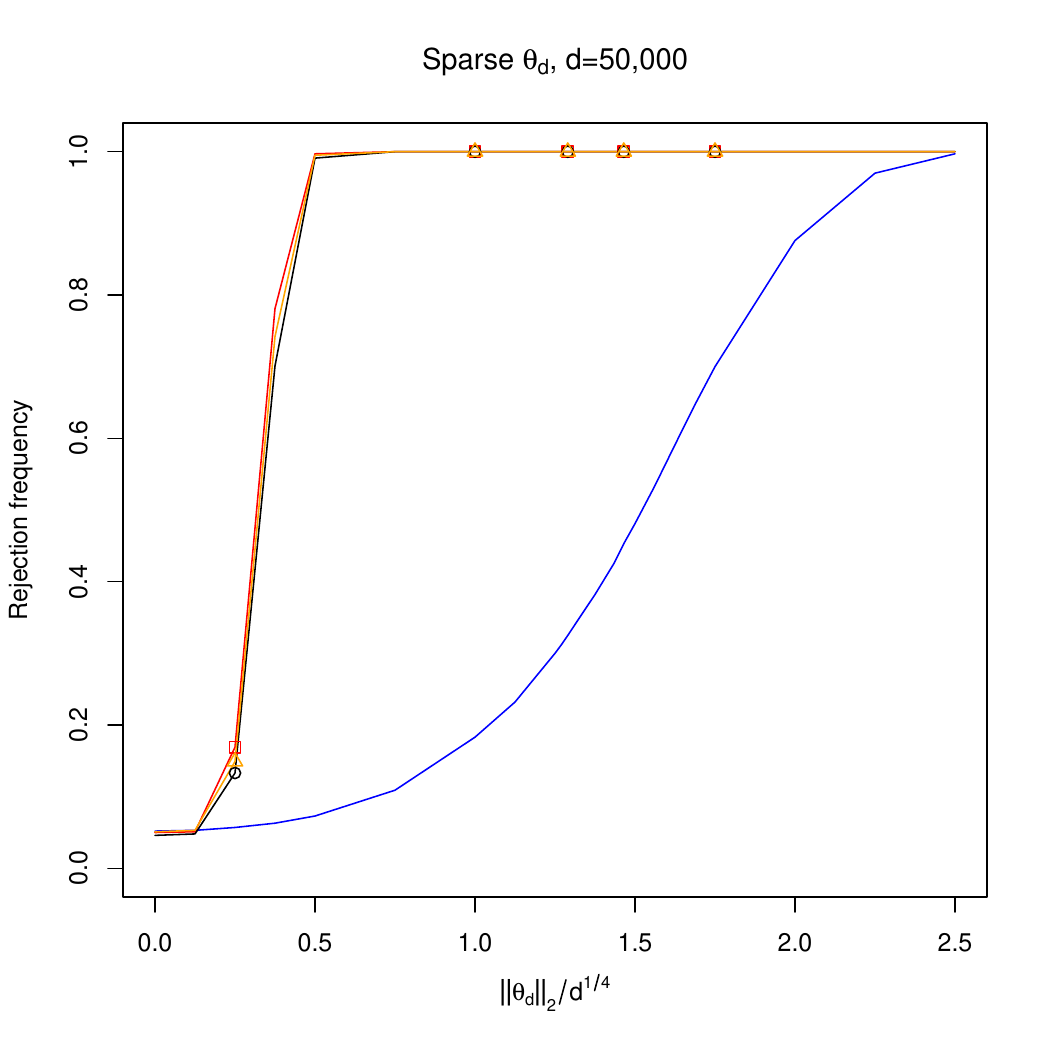}
\hspace{-0.6cm}
\includegraphics[width=5.2cm]{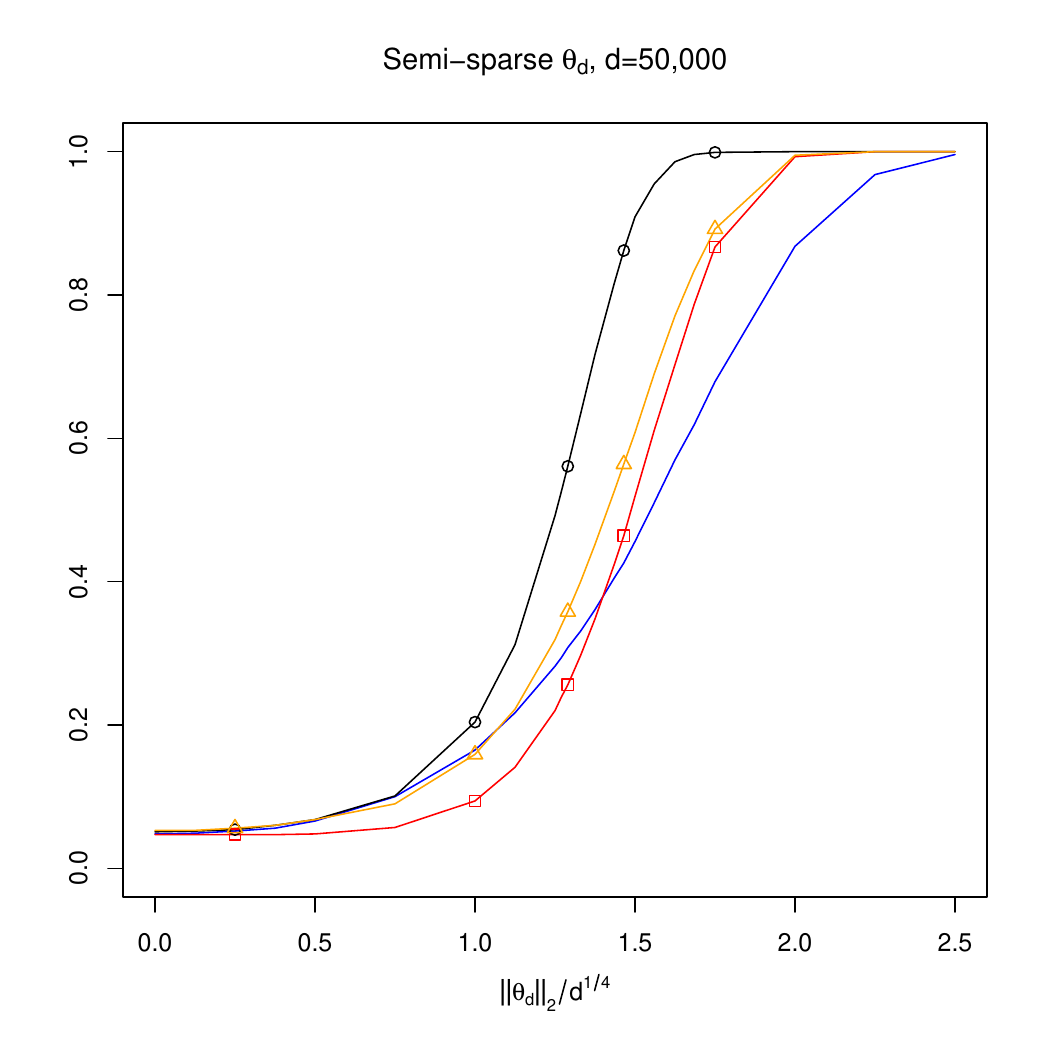}
\hspace{-0.6cm}
\includegraphics[width=5.2cm]{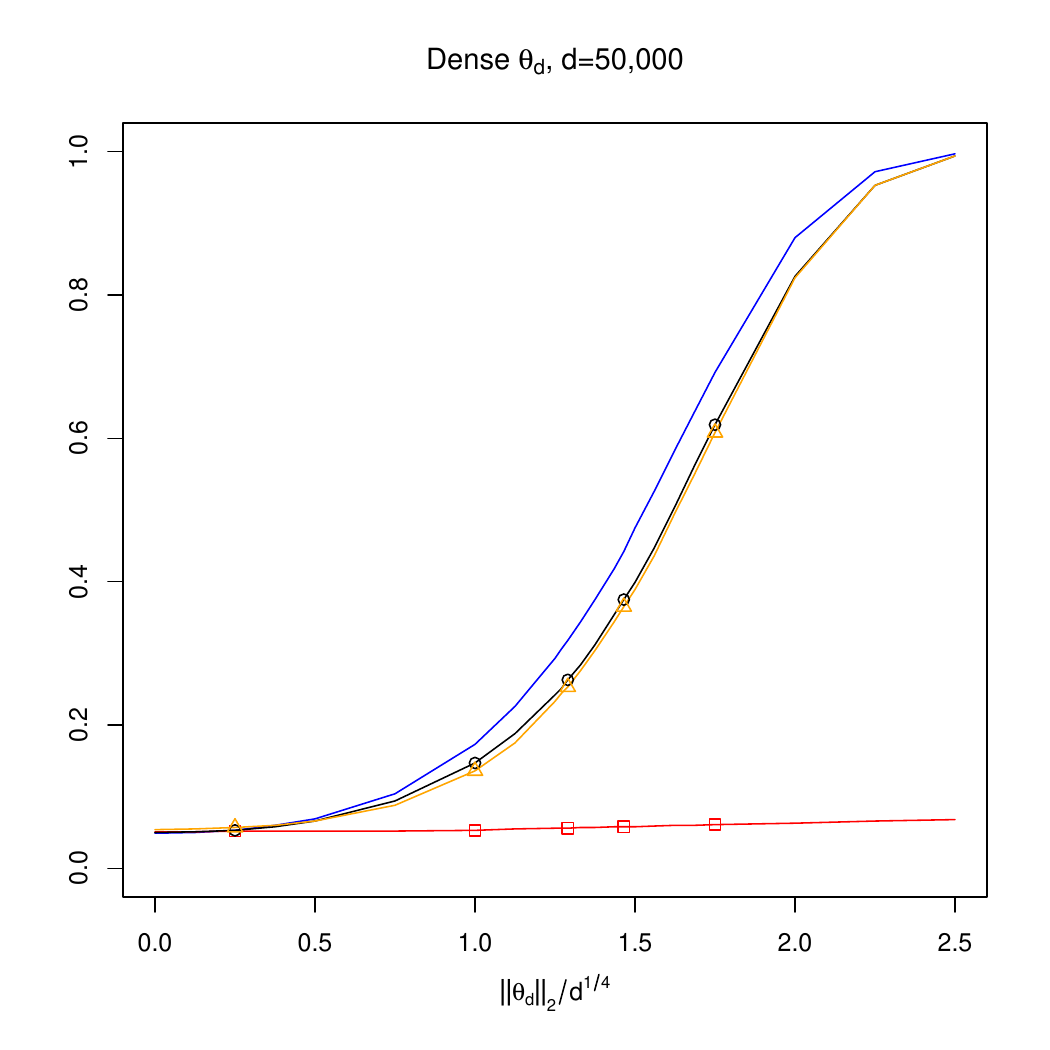}
\null 
\vspace{-0.3cm}
\includegraphics[width=5.2cm]{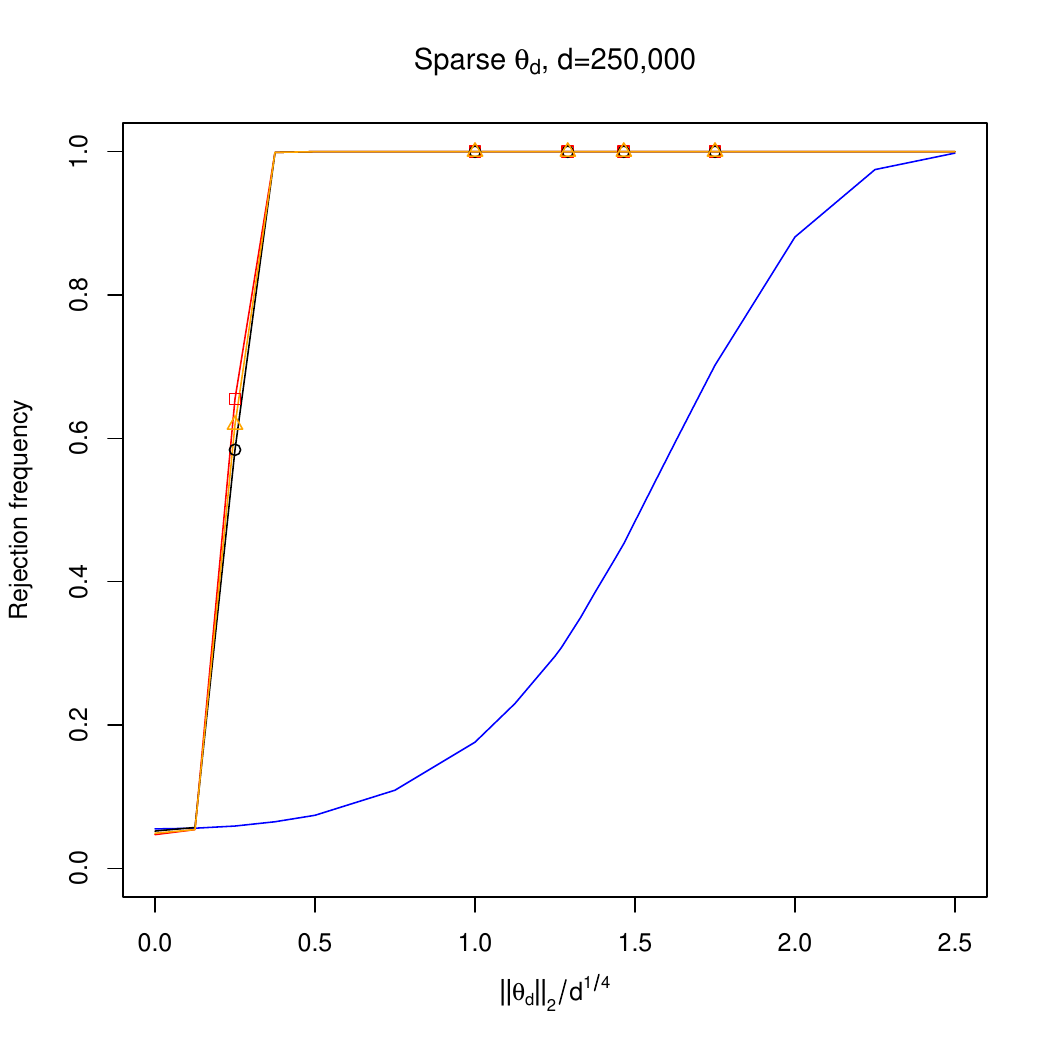}
\hspace{-0.6cm}
\includegraphics[width=5.2cm]{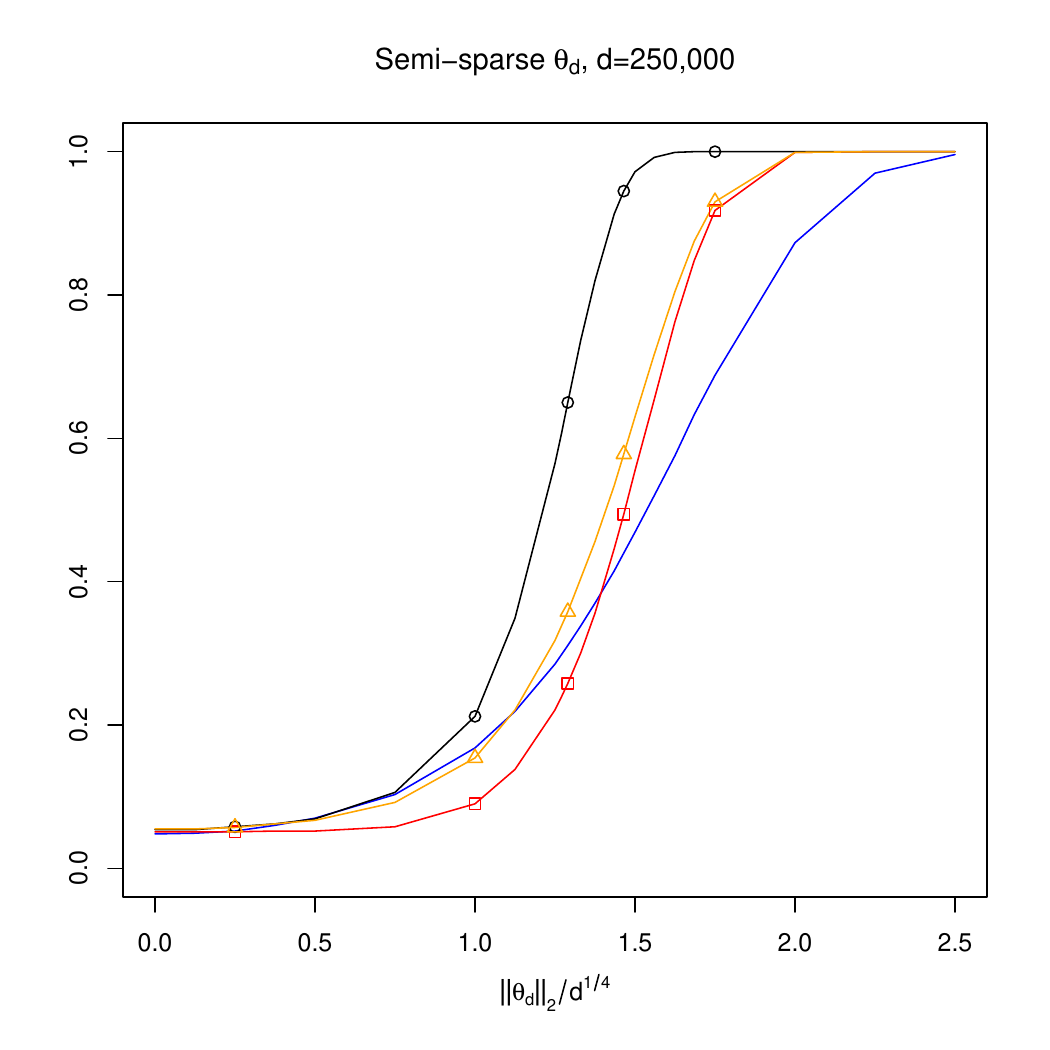}
\hspace{-0.6cm}
\includegraphics[width=5.2cm]{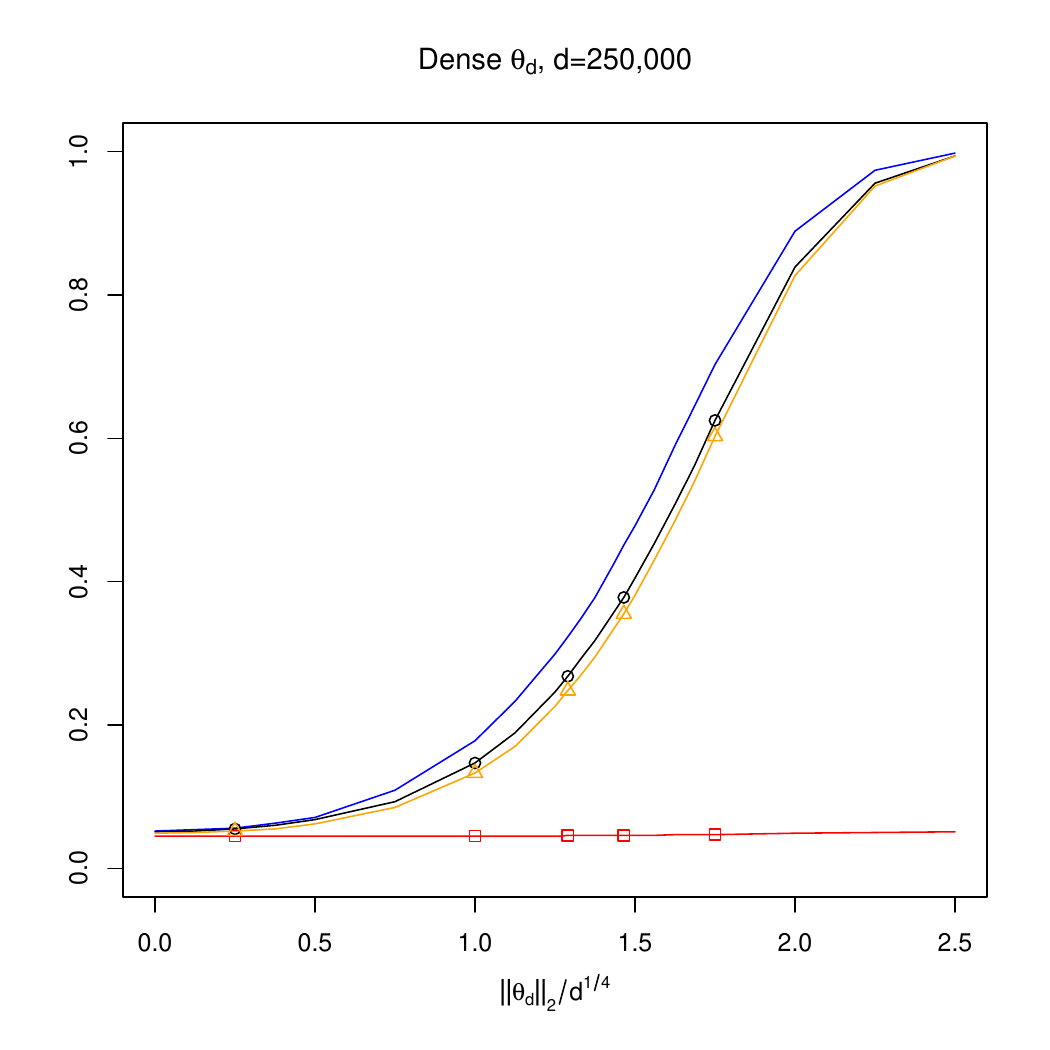}

\end{center}
\caption{\footnotesize Rejection frequencies as a function of~$\enVert[0]{\bm{\theta}_d}_2/d^{1/4}$ for~$d=1{,}000$ (first row), $5{,}000$ (second row), $50{,}000$ (third row) and $250{,}000$ (fourth row) for sparse (first column), semi-sparse (second column) and dense (third column) alternatives. The tests considered are~$\mathds{1}\del[0]{\enVert[0]{\bm{Z}_d}_p\geq \kappa_{d,p}}$ for~$p\in\cbr[0]{2,\infty}$, the PE based test, and~$\psi_d$. Full implementation details are given in the body text.}
\label{fig:Gauss}
\end{figure}

\end{appendix}
\newpage

\bibliographystyle{ecta} % (uses file "plain.bst")
\bibliography{ref}		% expects file "myrefs.bib"	

\newpage

\end{document}